\newtheorem{theorem}{Theorem}
\newtheorem{lemma}{Lemma}
\newtheorem{corollary}{Corollary}
\newtheorem{assumption}{Assumption}
\title{Asymptotically Exact Inference in Conditional Moment Inequality
  Models}
\author{Timothy B. Armstrong\thanks{email: timothya@stanford.edu.  Thanks
    to Han Hong and Joe Romano for guidance and many useful discussions,
    and to Liran Einav, Azeem Shaikh, Tim Bresnahan, Guido Imbens, Raj
    Chetty, Whitney Newey, Victor Chernozhukov, Jerry Hausman, Andres
    Santos, Elie Tamer, Vicky Zinde-Walsh, Alberto Abadie, Karim Chalak,
    Xu Cheng,
    Stefan Hoderlein,
    Don Andrews, Peter Phillips, Taisuke Otsu, Ed Vytlacil, Xiaohong Chen,
    Yuichi Kitamura
    and participants at seminars at Stanford and MIT
    for helpful comments and criticism.  All remaining errors are
    my own.  This paper was written with generous support from a
    fellowship from the endowment in memory of
    B.F. Haley and E.S. Shaw through the Stanford Institute for Economic
    Policy Research.
}
\\
Stanford University}
\date{November 10, 2011}
\begin{document}

\maketitle

\vspace{-.2in}

\begin{center}
JOB MARKET PAPER
\end{center}

\begin{abstract}
This paper derives the rate of convergence and asymptotic distribution for
a class of Kolmogorov-Smirnov style test statistics for conditional
moment inequality models %
for
parameters on the boundary of the identified set
under general conditions.
In contrast to other moment inequality settings, the rate of
convergence is faster than root-$n$, and the asymptotic distribution
depends entirely on nonbinding moments.
The results require the development of new techniques that draw
a connection between moment selection, irregular identification, bandwidth
selection and nonstandard M-estimation.
Using these results, I propose tests that are more powerful than existing
approaches for choosing critical values for this test statistic.
I
quantify the power improvement by
showing that the new tests
can detect alternatives that converge to points on the identified set
at a faster rate than those detected by existing approaches.
A monte carlo study
confirms that the tests and the asymptotic approximations they use
perform well in finite samples.  %
In an application to a regression of prescription drug expenditures on
income with interval data from the Health and Retirement Study, confidence
regions based on the new tests are substantially tighter than those based
on existing methods.

\end{abstract}

\section{Introduction}\label{introduction_sec}

Theoretical restrictions used for estimation of economic models often take
the form of moment inequalities.
Examples include models of consumer demand and strategic interactions
between firms,
bounds on treatment effects using instrumental variables restrictions,
and various forms of censored and
missing data
\citep[%
  see, among many others,][and papers cited
therein]{,manski_nonparametric_1990,manski_inference_2002,pakes_moment_2006,ciliberto_market_2009,chetty_bounds_2010}.
For these models, the restriction often takes the form
of moment inequalities conditional on some observed variable.  That is,
given a sample
$(X_1,W_1),\ldots (X_n,W_n)$, we are
interested in testing a null hypothesis of the form
$E(m(W_i,\theta)|X_i)\ge 0$ with probability one, where the inequality is
taken elementwise if $m(W_i,\theta)$ is a vector. 
Here, $m(W_i,\theta)$ is a known function of an observed random
variable $W_i$, which may include $X_i$, and 
a parameter $\theta\in\mathbb{R}^{d_\theta}$, and the moment inequality
defines the identified set $\Theta_0\equiv \{\theta|
E(m(W_i,\theta)|X_i)\ge 0 \text{ a.s.}\}$ of parameter values that cannot
be ruled out by the data and the restrictions of the model.

In this paper, I consider inference in models defined by conditional
moment inequalities.  I focus on %
test
statistics that exploit the equivalence between the null hypothesis
$E(m(W_i,\theta)|X_i)\ge 0$ almost surely and
$Em(W_i,\theta)I(s<X_i<s+t)\ge 0$ for all
$(s,t)$.  Thus, we can use $\inf_{s,t} \frac{1}{n}\sum_{i=1}^n
m(W_i,\theta)I(s<X_i<s+t)$, or the infimum of some
weighted version of the unconditional moments indexed by $(s,t)$.
Following the terminology commonly used in the literature, I refer to
these as Kolmogorov-Smirnov (KS) style test statistics.
The main contribution of this paper %
is to
derive the rate of convergence and asymptotic distribution of this test
statistic for parameters on the boundary of the identified set under a
general set of conditions.  The asymptotic distributions derived in this
paper and the methods used to derive them fall into a different category
than other asymptotic distributions derived in the conditional moment
inequalities and goodness-of-fit testing literatures.  Rather, the
asymptotic distributions and rates of convergence derived here
resemble more closely those of maximized objective functions
for nonstandard M-estimators
\citep[see, for example,][]{kim_cube_1990},
but
require new methods to derive.
The results draw a connection between moment selection, bandwidth
selection, irregular identification and nonstandard M-estimation.

While asymptotic distribution results are available for this statistic in
some cases \citep{andrews_inference_2009,kim_kyoo_il_set_2008}, the
existing results give only a conservative upper bound of
$\sqrt{n}$ on
the rate of convergence of this test statistic in a large class of
important cases.  For example, in the interval regression model, the
asymptotic distribution of this test statistic for parameters on the
boundary of the identified set and the proper scaling needed to achieve
it have so far been unknown in the generic case
(see Section \ref{overview_sec} for the definition of this model).
In these cases, results available in the literature do not give an
asymptotic distribution result, but state
only that the test statistic converges in probability to zero when scaled
up by $\sqrt{n}$.  This paper derives the scaling that leads to a
nondegenerate asymptotic distribution and characterizes this
distribution.
Existing results can be used for conservative inference in these cases
(along with tuning parameters to prevent the critical value from going to
zero),
but lose power relative to procedures that use the results derived in this
paper to choose critical values based on the asymptotic distribution of
the test statistic on the boundary of the identified set.

To quantify this power improvement, I
show that using the asymptotic distributions derived in this paper gives
power against %
sequences of parameter values that approach points on the boundary of the
identified set
at a faster rate than those detected
using root-$n$ convergence to a
degenerate distribution.
Since local power results have not been available for the conservative
approach based on root-$n$ approximations in this setting, %
making this comparison
involves
deriving new local power results for the existing tests in addition to
the new tests.
The increase in power is %
substantial.  In the leading case considered in Section \ref{inf_dist_sec},
I find that the methods developed in this paper give power against local
alternatives that approach the
identified set at a $n^{-2/(d_X+4)}$ rate (where $d_X$ is the dimension of
the conditioning variable), while using conservative $\sqrt{n}$
approximations only gives power against $n^{-1/(d_X+2)}$ alternatives.
The power improvements are not completely free, however, as the new tests
require smoothness conditions not needed for existing approaches.  In
another paper \citep{armstrong_weighted_2011}, I propose a modification of
this test statistic that achieves a similar power improvement (up to a
$\log n$ term) without sacrificing the robustness of the conservative
approach.  See Section \ref{discussion_sec} for more on these tradeoffs.

To examine how well these asymptotic approximations describe sample sizes
of practical importance, I perform a monte carlo study.
Confidence regions based on the tests proposed in this paper
have close to the nominal
coverage in the monte carlos, and shrink to the identified set at a faster
rate than those based on existing tests.
In addition, I provide an empirical illustration
examining the relationship between
out of pocket prescription spending and %
income in a data set
in which out of pocket prescription spending is sometimes missing or
reported as an interval.
Confidence regions for this application constructed using the methods in
this paper are
substantially tighter than those that use existing methods %
(these confidence regions are reported in Figures \ref{cr95_est_fig} and
\ref{cr95_cons_fig} and Table \ref{ci_table}; see Section
\ref{application_sec} for the details of the empirical illustration).

While the asymptotic distribution results in this paper are technical in
nature, the key insights can be described at an intuitive level.  I
provide a nontechnical
exposition of these ideas in Section
\ref{overview_sec}.  Together with the statements of the asymptotic
distribution results in Section \ref{inf_dist_sec} and the local power
results in Section \ref{local_alt_sec}, this provides a general picture of
the results of the paper.
The rest of this section discusses the relation of these results to the
rest of the literature, and introduces notation and definitions.
Section \ref{inf_dist_alpha_sec} generalizes
the asymptotic distribution results of Section \ref{inf_dist_sec}, and
Sections \ref{inference_sec} and \ref{rate_test_sec} deal with estimation
of the asymptotic distribution for feasible inference.
Section \ref{mc_sec} presents monte carlo results.
Section \ref{application_sec} presents the empirical illustration.
In Section \ref{discussion_sec}, I discuss some implications of these
results beyond the immediate application to constructing asymptotically
exact tests.  Section \ref{conclusion_sec} concludes.  Proofs are in the
appendix.

\subsection{Related Literature}

The results in this paper relate to recent work on testing conditional
moment inequalities, including papers by \citet{andrews_inference_2009},
\citet{kim_kyoo_il_set_2008}, \citet{khan_inference_2009},
\citet{chernozhukov_intersection_2009},
\citet{lee_testing_2011},
\citet{ponomareva_inference_2010},
\citet{menzel_estimation_2008} and \citet{armstrong_weighted_2011}.
The results on the local power of asymptotically exact and conservative KS
statistic
based procedures derived in this paper are useful for comparing confidence
regions based on KS statistics to other methods of inference on the
identified set proposed in these papers.  \citet{armstrong_weighted_2011}
derives local power
results for some common alternatives to the KS statistics based on
integrated moments considered in this paper (the confidence
regions considered in that paper satisfy the stronger criterion of
containing the entire identified set, rather than individual points, with
a prespecified probability).  I compare the local power calculations in
this paper with those results in Section \ref{discussion_sec}.

Out of these existing approaches to inference on conditional moment
inequalities, the papers that are most closely related
to this one are
those by \citet{andrews_inference_2009} and \citet{kim_kyoo_il_set_2008}, both
of which consider statistics based on integrating the conditional
inequality.  As discussed above, the main contributions of the present
paper relative to these papers are (1) deriving the rate of
convergence and nondegenerate asymptotic distribution of this statistic
for parameters on the boundary of the identified set in the common case
where the results in these papers reduce to a statement that the statistic
converges to zero at a root-$n$ scaling and (2) deriving local power
results that show how much power is gained by using critical values based
on these new results.
\citet{armstrong_weighted_2011} uses a statistic similar to the one
considered here, but proposes an increasing sequence of weightings ruled
out by the assumptions of the rest of the literature (including the
present paper).  This leads to %
almost the same power
improvement %
as the methods in this paper even
when conservative critical values are used.
\citet{khan_inference_2009} propose a statistic
similar to one considered here for a model defined by conditional moment
inequalities, but consider point estimates and confidence intervals based
on these estimates under conditions that lead to point identification.
\citet{galichon_test_2009} propose a similar statistic for a class of
partially identified models under a different setup.  Statistics based on
integrating conditional moments have been used widely in other contexts as
well, and go back at least to \citet{bierens_consistent_1982}.

The literature on models defined by finitely many unconditional moment
inequalities is more developed, but still recent.
Papers in this literature include
\citet{andrews_confidence_2004}, %
\citet{andrews_inference_2008}, %
\citet{andrews_validity_2009}, %
\citet{andrews_inference_2010}, %
\citet{chernozhukov_estimation_2007}, %
\citet{romano_inference_2010}, %
\citet{romano_inference_2008}, %
\citet{bugni_bootstrap_2010},
\citet{beresteanu_asymptotic_2008},
\citet{moon_bayesian_2009},
\citet{imbens_confidence_2004}
and \citet{stoye_more_2009}.
While most of this
literature does not apply directly to the problems considered in this
paper when the conditioning variable is continuous, ideas from these
papers have been used in the literature on conditional moment inequality
models and other problems involving inference on sets.  Indeed, some of
these results are stated in a broad enough way to apply to the general
problem of inference on partially identified models.

\subsection{Notation and Definitions}

Throughout this paper, I use the terms asymptotically exact and
asymptotically conservative to refer to the behavior of tests for a fixed
parameter value under a fixed probability distribution.  I refer to a test
as asymptotically exact for testing a parameter $\theta$ under a data
generating process $P$ such that the null hypothesis holds if
the probability of rejecting $\theta$ converges to the nominal
level as the number of observations increases to infinity under $P$.  I
refer to a test as asymptotically conservative for testing a parameter
$\theta$ under a data generating process $P$ if the probability of falsely
rejecting $\theta$ is asymptotically strictly less than the nominal level
under $P$.  While this contrasts with a definition where a test is
conservative only if the size of the test is less than the nominal size
taken as the supremum of the
probability of rejection over a composite null of
all possible values of $\theta$ and $P$ such that $\theta$ is in the
identified set under $P$, it facilitates discussion of results like the
ones in this paper (and other papers that deal with issues related to
moment selection) that characterize the behavior of tests for different
values of $\theta$ in the identified set.

I use the following notation in the rest of the paper.  For observations
$(X_1,W_1),\ldots,(X_n,W_n)$ and a
measurable function $h$ on the sample space, $E_nh(X_i,W_i)\equiv
\frac{1}{n}\sum_{i=1}^n h(X_i,W_i)$ denotes the
sample mean.  I use double
subscripts to denote elements of vector observations so that $X_{i,j}$
denotes the $j$th component of the $i$th observation $X_i$.
Inequalities
on Euclidean space refer to the partial ordering of elementwise
inequality.  For a vector valued function
$h:\mathbb{R}^\ell\to\mathbb{R}^m$, the infimum of $h$ over a set
$T$ is defined to be the vector consisting of the infimum of each element:
$\inf_{t\in T} h(t)\equiv (\inf_{t\in T} h_1(t),\ldots,\inf_{t\in T}
h_m(t))$.  I use $a\wedge b$ to denote the elementwise minimum and $a\vee
b$ to denote the elementwise maximum of $a$ and $b$.  The notation
$\lceil x\rceil$ denotes the least integer greater than or equal to $x$.

\section{Overview of Results}\label{overview_sec}

The asymptotic distributions derived in this paper arise when the
conditional moment inequality binds only on a probability zero set.
In contrast to inference with finitely many unconditional moment
inequalities, in which at least one moment inequality will bind on the
boundary of the identified set and limiting distributions of test
statistics are degenerate only on the interior of the identified set, this
lack of nondegenerate binding moments holds even on the boundary of the
identified set in typical applications.
This leads to a faster than
root-$n$ rate of convergence to an asymptotic distribution that depends
entirely on moments that are close to, but not quite binding.

To see why this case is typical in applications, consider an application
of moment inequalities to regression with interval data.  In the interval
regression model,
$E(W_i^*|X_i)=X_i'\beta$, and $W_i^*$ is unobserved, but known to be between
observed variables $W_i^H$ and $W_i^L$, so that $\beta$ satisfies the moment
inequalities
\begin{align*}
E(W_i^L|X_i)\le X_i'\beta\le E(W_i^H|X_i).
\end{align*}
Suppose that
the distribution of $X_i$ is absolutely continuous with respect to the
Lebesgue measure.  Then, to have one of these inequalities bind on a
positive probability set, $E(W_i^L|X_i)$ or $E(W_i^H|X_i)$ will have to be
linear on this set.  Even if this is the case, this only means that the
moment inequality will bind on this set for one value of $\beta$, and
the moment inequality will typically not bind when applied to
nearby values of $\beta$ on the boundary of the identified set.  Figures
\ref{int_reg_smooth_fig} and \ref{int_reg_rootn_fig} illustrate this for
the case where the conditioning variable is one dimensional.  Here, the
horizontal axis is the nonconstant part of $x$, and the vertical axis
plots the conditional mean of the $W_i^H$ along with regression functions
corresponding to points in the identified set.  Figure
\ref{int_reg_smooth_fig} shows a case where the KS statistic converges at
a faster than root-$n$ rate.  In Figure \ref{int_reg_rootn_fig}, the
parameter $\beta_1$ leads to convergence at exactly a root-$n$ rate, but
this is a knife edge case, since the KS statistic for testing
$\beta_2$ will converge at a faster rate.

This paper derives asymptotic distributions under conditions that
generalize these cases to arbitrary moment functions $m(W_i,\theta)$.
In this broader setting,
KS statistics converge at a faster than root-$n$ rate on the boundary of
the identified set under general conditions when the model is set
identified and at least one conditioning variable is continuously
distributed.
In interval
quantile regression, contact sets
for the conditional median translate to contact sets for the conditional
mean of the moment function, leading to faster than root-$n$ rates of
convergence in similar settings.  Bounds in selection models, such as
those
proposed by \citet{manski_nonparametric_1990}, lead to a similar
setup to the interval regression model, as do some of the structural
models considered by \citet{pakes_moment_2006}, with the intervals
depending on a first stage parameter estimate.
See \citet{armstrong_weighted_2011} for primitive conditions for a set of
high-level conditions similar to the ones used in this paper for some of
these models.

While the results hold more generally, the rest of this section describes
the results in the context of the interval regression example in a
particular case.
Consider deriving the rate of convergence and nondegenerate asymptotic
distribution of the KS statistic for a parameter $\beta$ like the one
shown in Figure \ref{int_reg_smooth_fig}, but with $X_i$ possibly
containing more than one covariate.
Since
the lower bound never binds, it is intuitively clear that the KS statistic
for the lower bound will converge to zero at a
faster rate than the KS statistic for the upper bound, so consider the KS
statistic for the upper bound given by $\inf_{s,t} E_nY_iI(s<X_i<s+t)$ where
$Y_i=W^H_i-X_i'\beta$.  If $E(W_i^H|X_i=x)$ is tangent to $x'\beta$ at a
single point $x_0$, and $E(W_i^H|X_i=x)$ has a positive second derivative
matrix $V$ at this
point, we will have $E(Y_i|X_i=x)\approx (x-x_0)'V(x-x_0)$ near $x_0$, so
that, for $s$ near $x_0$ and $t$ close to zero,
$EY_iI(s<X_i<s+t)\approx f_X(x_0)\int_{s_1}^{s_1+t_1}\cdots
\int_{s_{d_X}}^{s_{d_X}+t_{d_X}}
(x-x_0)'V(x-x_0) \, dx_{d_X}\cdots dx_1$ (here, if the regression contains
a constant, the conditioning variable $X_i$ is redefined to be the
nonconstant part of the regressor, so that $d_X$ refers to the dimension
of the nonconstant part of $X_i$).

Since $EY_iI(s<X_i<s+t)=0$ only when $Y_iI(s<X_i<s+t)$ is degenerate, the
asymptotic behavior of the KS statistic should depend on indices $(s,t)$
where the moment inequality is not quite binding, but close enough to
binding that sampling error makes $E_nY_iI(s<X_i<s+t)$ negative some of the
time.  To determine on which indices $(s,t)$ we should expect this to
happen, split up the process in the KS statistic into a mean zero process
and a drift term: $(E_n-E)Y_iI(s<X_i<s+t)+EY_iI(s<X_i<s+t)$.  In order for this to
be strictly negative some of the time, there must be non-negligible
probability that the mean zero process is greater in absolute value than
the drift term.  That is, we must have $sd((E_n-E)Y_iI(s<X_i<s+t))$ of at
least the same order of magnitude as $EY_iI(s<X_i<s+t)$.  The idea is similar
to rate of convergence arguments for M-estimators with possibly
nonstandard rates of convergence, such as those considered by
\citet{kim_cube_1990}.  We have
$sd((E_n-E)Y_iI(s<X_i<s+t))=\mathcal{O}(\sqrt{\prod_i
  t_i}/\sqrt{n})$ for small $t$, and some calculations show that, for $s$
close to $x_0$, $EY_iI(s<X_i<s+t)\approx f_X(x_0)\int_{s_1}^{s_1+t_1}\cdots
\int_{s_{d_X}}^{s_{d_X}+t_{d_X}} (x-x_0)'V(x-x_0) \, dx_{d_X}\cdots dx_1\ge C
\|(s-x_0,t)\|^2\prod_i t_i$ for some $C>0$.  Thus, we expect the asymptotic
distribution to depend on $(s,t)$ such that $\sqrt{\prod_i t_i}/\sqrt{n}$
is of the same or greater order of magnitude than $\|(s-x_0,t)\|^2\prod_i
t_i$, which corresponds to $\|(s-x_0,t)\|^2\sqrt{\prod_i t_i}$ less than
or equal to $\mathcal{O}(1/\sqrt{n})$.

To get the main intuition for the rate of convergence, let us first
suppose that $s-x_0$ is of the same order of magnitude as $t$, and the
components $t_i$ of $t$ are of the same order of magnitude, and show
separately that cases where components of $(s,t)$ converge at different
rates do not matter for the asymptotic distribution. %
If $s-x_0$ and
all components $t_i$ are to converge to zero at the same rate $h_n$, we
must have $\|(s-x_0,t)\|=\mathcal{O}(h_n)$ and $\prod_i
t_i=\mathcal{O}(h_n^{d_X})$, so that, if $\|(s-x_0,t)\|^2\sqrt{\prod_i t_i}\le
\mathcal{O}(1/\sqrt{n})$,  we will have $\mathcal{O}(1/\sqrt{n})\ge
h_n^2\sqrt{h_n^{d_X}}=h_n^{2+{d_X}/2}$ so that $h_n\le
\mathcal{O}(1/n^{1/(2(2+d_X/2))})=\mathcal{O}(n^{-1/(4+d_X)})$.
Then, for $(s,t)$ with $t$ in an $h_n$-neighborhood of zero, we will have
$(E_n-E)Y_iI(s<X_i<s+t)=\mathcal{O}_P(\sqrt{\prod_i t_i}/\sqrt{n})
=\mathcal{O}_P(n^{-({d_X}+2)/({d_X}+4)})$.

Next suppose that $s$ or converges to $x_0$ more slowly than
$h_n=n^{-1/({d_X}+4)}$ or that one of the components of $t$ converges to zero
more slowly than $h_n$.  In this case, we will have $\|(s-x_0,t)\|$
greater than some sequence $k_n$ with $k_n/h_n\to\infty$, so that, to have
$\|(s-x_0,t)\|^2\sqrt{\prod_i t_i}\le \mathcal{O}(1/\sqrt{n})$, we would
have to have $\sqrt{\prod_i t_i}\le \mathcal{O}(1/(k_n^2\sqrt{n}))$ so
that $(E_n-E)Y_i(s<X_i<s+t)$ will be of order less than $1/(k_n^2n)$, which
goes to zero at a faster rate than the $n^{-({d_X}+2)/({d_X}+4)}$ rate
that we get when the components of $(s,t)$ converge at the same rate.

Thus, we should expect that the values of $(s,t)$ that matter for the
asymptotic distribution of the KS statistic are those with $(s-x_0,t)$ of
order $n^{-1/({d_X}+4)}$, and that the KS statistic will converge in
distribution when scaled up by $n^{-({d_X}+2)/({d_X}+4)}$ to the infimum
of the limit of a sequence of local objective functions indexed by $(s,t)$
with $(s-x_0,t)$ in a sequence of $n^{-1/({d_X}+4)}$ neighborhoods of zero.
Formalizing this argument requires showing that this intuition holds
uniformly in $(s,t)$.  The formal proof uses a ``peeling'' argument along
the lines of \citet{kim_cube_1990},
but a different type of argument is needed for regions where, even though
$\|(s-x_0,t)\|$ is far from zero, some components of $t$ are small enough
that $E_nY_iI(s<X_i<s+t)$ may be slightly negative because the region
$\{s<X_i<s+t\}$ is small and happens to catch a few observations with
$Y_i<0$.  The proof formalizes the intuition that these regions cannot
matter for the asymptotic distribution,
since $\prod_i t_i$ must be much
smaller than when $s$ is close to $x_0$ and the components of $t$ are of
the same order of magnitude as each other.

These results can be used for inference once the asymptotic distribution
is estimated.
In Section \ref{inference_sec}, I
describe two procedures for estimating this asymptotic distribution.  The
first is a generic subsampling procedure that uses only the fact that the
statistic converges to a nondegenerate distribution at a known rate.  The
second is based on estimating a finite dimensional set of objects that
allows this distribution to be simulated.

Both procedures rely on the conditional mean having a positive definite
second derivative matrix near its minimum.  To form tests that are
asymptotically valid under more general conditions, I propose pre-tests
for these conditions, and embed these tests in a procedure that uses the
asymptotic approximation to the null distribution for which the pre-test
finds evidence.  I describe these pre-tests in Section
\ref{rate_test_sec}, but, before doing this, I extend the results of
Section \ref{inf_dist_sec} to a broader class of shapes of the conditional
mean in Section \ref{inf_dist_alpha_sec}.  These results are useful for
the pre-tests in Section \ref{subsamp_rate_subsec}, which adapt methods
from \citet{politis_subsampling_1999} for estimating rates of convergence
to this setting.  Section \ref{deriv_rate_subsec} describes another
pre-test for the conditions of Section \ref{inf_dist_sec}, this one based
on estimating the second derivative and testing for positive definiteness.
The pre-tests are valid under regularity conditions governing the
smoothness of the conditional mean.

One of the appealing features of using asymptotically exact critical
values over conservative ones is the potential for more power against
parameters outside of the identified set.
In Section \ref{local_alt_sec}, I consider power against local
alternatives.  I describe the intuition for the results in more detail in
that section, but the main idea is that, for a sequence of alternatives
$\theta_n$ converging to a point $\theta$ on the identified set that under
which the argument described above goes through, the drift process has an
additional term $E(m(W_i,\theta_n)-m(W_i,\theta))I(s<X<s+t)$, where
$s-x_0$ and $t$ are of order $h_n$.  The exact asymptotics will detect
$\theta_n$ when this term is of order $n^{-(d_X+2)/(d_X+4)}$, while
conservative asymptotics will have power only when $\theta_n$ is large
enough so that this term is of order $n^{-1/2}$.  This leads to power
against local alternatives of order $n^{-2/(d_X+4)}$ for the
asymptotically exact critical values, and $n^{-1/(d_X+2)}$ when the
conservative $\sqrt{n}$ approximation is used.

\section{Asymptotic Distribution of the KS Statistic}\label{inf_dist_sec}

Given iid observations $(X_1,W_1),\ldots,(X_n,W_n)$, of random variables
$X_i\in\mathbb{R}^{d_X}$, $W_i\in\mathbb{R}^{d_W}$, we wish to test the null
hypothesis that $E(m(W_i,\theta)|X_i)\ge 0$ almost surely, where
$m:\mathbb{R}^{d_W}\times \Theta\to\mathbb{R}^{d_Y}$ is a known measurable
function and $\theta\in\Theta\subseteq\mathbb{R}^{d_\theta}$ is a fixed
parameter value.  I use the notation $\bar m(\theta,x)$ to denote a
version of $E(m(W_i,\theta)|X=x)$ (it will be clear from context which
version is meant when this matters).
In some cases when it is clear which parameter value is
being tested, I will define $Y_i=m(W_i,\theta)$ for notational
convenience.  Defining $\Theta_0$ to
be the identified set of values of $\theta$ in $\Theta$ that satisfy
$E(m(W_i,\theta)|X_i)\ge 0$ almost surely, these tests can then be
inverted to obtain a confidence region that, for every
$\theta_0\in\Theta_0$, contains $\theta_0$ with a prespecified
probability \citep{imbens_confidence_2004}.  The tests considered here
will be based on asymptotic approximations, so that these statements
will only hold asymptotically.

The results in this paper allow for asymptotically
exact inference using KS style statistics in cases where the $\sqrt{n}$
approximations for these statistics are degenerate.  This includes the
case described in the introduction
in which one component of $E(m(W_i,\theta)|X_i)$
is tangent to zero at a single point and the rest are bounded away from
zero.
While this case captures the essential intuition for the results in this
paper, I state the results in a slightly more general way in order to make
them more broadly applicable.
I allow each component of $E(m(W_i,\theta)|X)$ to be tangent to zero at
finitely many points, which may be different for each component.  This
is relevant in the interval regression example for parameters for which
the regression line is tangent to $E(W^H_i|X)$ and $E(W^L_i|X)$ at
different points.  In the case of an interval regression on a scalar and a
constant, the points in the identified set corresponding to the largest
and smallest values of the slope parameter will typically have this
property.

I consider KS style statistics that are a function of
$\inf_{s,t}E_nm(W_i,\theta)I(s<X_i<s+t)
=(\inf_{s,t}E_nm_1(W_i,\theta)I(s<X_i<s+t),\ldots,
\inf_{s,t}E_nm_{d_Y}(W_i,\theta)I(s<X_i<s+t))$.
Fixing some function $S:\mathbb{R}^{d_Y}\to\mathbb{R}_+$,
we can then reject for large values of
$S(\inf_{s,t}E_nm(W_i,\theta)I(s<X_i<s+t))$ (which correspond to more
negative values of the components of
$\inf_{s,t}E_nm(W_i,\theta)I(s<X_i<s+t)$ for typical choices of $S$).
Note that this is different in
general than taking $\sup_{s,t} S(E_nm(W_i,\theta)I(s<X_i<s+t))$,
although similar ideas will apply here.
Also, the moments $E_nm(W_i,\theta)I(s<X_i<s+t)$ are not weighted, but the
results could be extended to allow for a weighting function $\omega(s,t)$,
so that the infimum is over $\omega(s,t)E_nm(W_i,\theta)I(s<X_i<s+t)$ as
long as $\omega(s,t)$ is smooth and bounded away from zero and infinity.
The condition that the weight function be bounded uniformly in the sample
size, which is also imposed by \citet{andrews_inference_2009} and
\citet{kim_kyoo_il_set_2008}, turns out to be important
\citep[see][]{armstrong_weighted_2011}.

I formalize the notion that $\theta$ is at a point in the identified set
such that one or more of the components of $E(m(W_i,\theta)|X_i)$ is
tangent to zero at a finite number of of points in the following
assumption.

\begin{assumption}\label{smoothness_assump_multi}
For some version of $E(m(W_i,\theta)|X_i)$, the conditional mean of each
element of $m(W_i,\theta)$ takes its minimum only on a finite set
$\{x|E(m_j(W_i,\theta)|X=x)=0 \text{ some
  $j$}\}=\mathcal{X}_0=\{x_1,\ldots,x_\ell\}$.  For each $k$ from $1$
to $\ell$, let $J(k)$ be the set of indices $j$ for which
$E(m_j(W_i,\theta)|X=x_k)=0$.
Assume that there exist neighborhoods $B(x_k)$ of each
$x_k\in\mathcal{X}_0$ such that, for each $k$ from $1$ to $\ell$,
the following assumptions hold.
\begin{itemize}
\item[i.)] %
  $E(m_j(W_i,\theta)|X_i)$ is bounded away
  from zero outside of $\cup_{k=1}^\ell B(x_k)$ for all $j$ and, for
  $j\notin J(k)$, $E(m_j(W_i,\theta)|X_i)$ is bounded away from zero on
  $B(x_k)$.

\item[ii.)] %
  For $j\in J(k)$,
  $x\mapsto E(m_j(W_i,\theta)|X=x)$ has continuous
  second derivatives inside of the closure of $B(x_k)$ and a
  positive definite second derivative matrix $V_j(x_k)$ at each $x_k$.

\item[iii.)] $X$ has a continuous density $f_X$ on $B(x_k)$.

\item[iv.)] %
  Defining $m_{J(k)}(W_i,\theta)$ to have $j$th component
  $m_j(W_i,\theta)$ if $j\in J(k)$ and $0$ otherwise,
  $x\mapsto E(m_{J(k)}(W_i,\theta)m_{J(k)}(W_i,\theta)'|X_i=x)$ is
  finite and continuous on $B(x_k)$ for some version of this conditional
  second moment matrix.
\end{itemize}
\end{assumption}

Assumption \ref{smoothness_assump_multi} is the main substantive
assumption distinguishing the case considered here from the case where the
KS statistic converges at a $\sqrt{n}$ rate.  In the $\sqrt{n}$ case, some
component of $E(m(W_i,\theta)|X_i)$ is equal to zero on a positive
probability set.  Assumption \ref{smoothness_assump_multi} states
that any component of $E(m(W_i,\theta)|X_i)$ is equal to zero only on a
finite set, and that $X_i$ has a density in a neighborhood of this set, so
that this finite set has probability zero.  Note that the assumption that
$X_i$ has a density at certain points means that the moment inequalities
must be defined so that $X_i$ does not contain a constant.  Thus, the
results stated below hold in the interval regression example with $d_X$
equal to the number of nonconstant regressors.

Unless otherwise stated, I assume that the contact set $\mathcal{X}_0$ in
Assumption \ref{smoothness_assump_multi} is nonempty.  If Assumption
\ref{smoothness_assump_multi} holds with $\mathcal{X}_0$ empty so that the
conditional mean $\bar m(\theta,x)$ is bounded from below away from zero,
$\theta$ will typically be on the interior of the identified set (as long
as the conditional mean stays bounded away from zero when $\theta$ is
moved a small amount).  For such values of $\theta$, KS statistics will
converge at a faster rate (see Lemma \ref{large_s_lemma2_multi} in the
appendix), leading
to conservative inference even if the rates of convergence derived under
Assumption \ref{smoothness_assump_multi}, which are faster than
$\sqrt{n}$, are used.

In addition to imposing that the minimum of the components of the
conditional mean $\bar m(\theta,x)$ over $x$ are taken on a probability
zero set, Assumption \ref{smoothness_assump_multi} requires that this set
be finite, and that $\bar m(\theta,x)$ behave quadratically in $x$ near
this set.  I state results under this condition first, since it is easy to
interpret as arising from a positive definite second derivative matrix at
the minimum, and is likely to provide a good description of many situations
encountered in practice.  In Section \ref{inf_dist_alpha_sec}, I
generalize these results to other shapes of the conditional mean.  This is
useful for the tests for rates of convergence in Section
\ref{rate_test_sec}, since the rates of convergence turn out to be well
behaved enough to be estimated using adaptations of existing methods.

The next assumption is a
regularity condition that bounds $m_j(W_i,\theta)$ by a nonrandom
constant.  This assumption will hold naturally in models based on
quantile restrictions.  In the interval regression example, it requires
that the data have finite support.  This assumption could be replaced with
an assumption that $m(W_i,\theta)$ has exponentially decreasing tails, or
even a finite $p$th moment for some potentially large $p$ that would
depend on $d_X$ without much modification of the proof, but the finite
support condition is simpler to state.

\begin{assumption}\label{bdd_y_assump_multi}
For some nonrandom $\overline Y<\infty$, $|m_j(W_i,\theta)|\le \overline
Y$ with probability one for each $j$.
\end{assumption}

Finally, I make the following assumption on the function $S$.  Part of
this assumption could be replaced by weaker smoothness conditions, but the
assumption covers $x\mapsto \|x\|_{-}\equiv \|x\wedge 0\|$ for any norm
$\|\cdot\|$ as stated, which should suffice for practical purposes.

\begin{assumption}\label{S_assump}
$S:\mathbb{R}^{d_Y}\to\mathbb{R}_+$ is continuous %
and satisfies $S(ax)=aS(x)$ for any nonnegative
scalar $a$.
\end{assumption}

The following theorem gives the asymptotic distribution and rate of
convergence for $\inf_{s,t}E_nm(W_i,\theta)I(s<X_i<s+t)$ under these
conditions.  The distribution of
$S(\inf_{s,t}E_nm(W_i,\theta)I(s<X_i<s+t))$ under mild conditions on $S$
then follows as an easy corollary.

\begin{theorem}\label{inf_dist_thm_multi}
Under Assumptions \ref{smoothness_assump_multi} and
\ref{bdd_y_assump_multi},
\begin{align*}
n^{(d_X+2)/(d_X+4)} \inf_{s,t} E_nm(W_i,\theta)I(s<X_i<s+t)
\stackrel{d}{\to} Z
\end{align*}
where $Z$ is a random vector on $\mathbb{R}^{d_Y}$ defined as follows.
Let $\mathbb{G}_{P,x_k}(s,t)$, $k=1,\ldots,\ell$ be independent mean
zero Gaussian processes with sample paths in the space
$C(\mathbb{R}^{2d_X},\mathbb{R}^{d_Y})$ of continuous functions from
$\mathbb{R}^{2d_X}$ to $\mathbb{R}^{d_Y}$ and covariance kernel
\begin{align*}
cov(\mathbb{G}_{P,x_k}(s,t),\mathbb{G}_{P,x_k}(s',t'))
=E(m_{J(k)}(W_i,\theta)m_{J(k)}(W_i,\theta)'|X_i=x_k)f_X(x_k)
\int_{s\vee s'<x<(s+t)\wedge(s'+t')}\, dx
\end{align*}
where $m_{J(k)}(W_i,\theta)$ is defined to have $j$th element equal to
$m_j(W_i,\theta)$ for $j\in J(k)$ and equal to zero for $j\notin J(k)$.
For $k=1,\ldots,\ell$, let $g_{P,x_k}:\mathbb{R}^{2d_X}\to\mathbb{R}^{d_Y}$
be defined by
\begin{align*}
g_{P,x_k,j}(s,t)=\frac{1}{2}f_X(x_k)\int_{s_1}^{s_1+t_1}\cdots
\int_{s_{d_X}}^{s_{d_X}+t_{d_X}} x'V_j(x_k)x\, dx_{d_X}\cdots dx_1
\end{align*}
for $j\in J(k)$ and $g_{x_k,j}(s,t)=0$ for $j\notin J(k)$.  Define $Z$ to
have $j$th element
\begin{align*}
Z_j=\min_{k \text{ s.t. } j\in J(k)} \inf_{(s,t)\in\mathbb{R}^{2d_X}}
      \mathbb{G}_{P,x_k,j}(s,t)+g_{P,x_k,j}(s,t).
\end{align*}
\end{theorem}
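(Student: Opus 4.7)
The plan is to localize the process near each contact point $x_k$ by the change of variables $s = x_k + h_n u$, $t = h_n v$ with $h_n = n^{-1/(d_X+4)}$, and to show that, after multiplication by $n^{(d_X+2)/(d_X+4)} = n h_n^{d_X+2}$, the process in the infimum converges jointly (across $k$) to $\mathbb{G}_{P,x_k,j} + g_{P,x_k,j}$, while the infimum over $(s,t)$ in any region outside shrinking neighborhoods of the contact set $\mathcal{X}_0$ is asymptotically larger (less negative) than the one in those neighborhoods. The decomposition
\begin{align*}
E_n m_j(W_i,\theta)I(s<X_i<s+t)
= (E_n-E)m_j(W_i,\theta)I(s<X_i<s+t) + E m_j(W_i,\theta)I(s<X_i<s+t)
\end{align*}
splits the object into a centered (noise) process and a drift. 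For the drift, Assumption \ref{smoothness_assump_multi}(ii,iii) gives $\bar m_j(\theta,x) = \tfrac{1}{2}(x-x_k)'V_j(x_k)(x-x_k) + o(\|x-x_k\|^2)$ near each $x_k$ with $j\in J(k)$, and a direct change of variables shows that $n^{(d_X+2)/(d_X+4)}$ times the drift converges to $g_{P,x_k,j}(u,v)$ uniformly on compacta.

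For the centered part, I would verify the standard entropy/bracketing conditions to apply an empirical process CLT (e.g.\ along the lines of VC-type indicator classes indexed by rectangles) to show that the rescaled process $n h_n^{(d_X+2)/2}(E_n-E)m_j(W_i,\theta)I(x_k+h_n u < X_i < x_k + h_n(u+v))$ converges weakly in $\ell^\infty(K)$ for any compact $K\subset \mathbb{R}^{2d_X}$ to a mean-zero Gaussian process whose covariance kernel matches that of $\mathbb{G}_{P,x_k,j}$; the kernel comes from computing $n\,\mathrm{Cov}$ of the indicator products using $P(X_i\in h_n A) \approx f_X(x_k) h_n^{d_X}|A|$ and Assumption \ref{smoothness_assump_multi}(iv) for the conditional second moment, while Assumption \ref{bdd_y_assump_multi} yields the envelope. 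Independence of the $\mathbb{G}_{P,x_k}$ across $k$ follows because localized windows around distinct contact points have disjoint supports for large $n$. Joint convergence across components $j$ then gives, for each fixed compact $K$ and contact point $x_k$, weak convergence of the rescaled local process to $\mathbb{G}_{P,x_k} + g_{P,x_k}$.

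The main obstacle is converting this local weak convergence into convergence of the unrestricted infimum, which amounts to two uniform controls. First, a peeling argument along the lines of \citet{kim_cube_1990}, decomposing $\{(s,t)\}$ into dyadic shells in $\|(s-x_k,t)\|$, shows that on shells with $\|(s-x_k,t)\|\gg h_n$ but $s-x_k$ and the components of $t$ of the same order, the drift $\succeq C\|(s-x_k,t)\|^2\prod_i t_i$ dominates the standard deviation $O(\sqrt{\prod_i t_i/n})$ of the noise by a factor that grows with the shell index, and maximal inequalities then give a vanishing probability that the process is more negative than the local infimum. Second, and more delicate, one must rule out the degenerate regions where some components of $t$ are very small (so $\prod_i t_i$ is small and a few negative $Y_i$'s can push $E_n m_j I(\cdot)$ slightly negative even far from $x_k$); here I would further peel in each coordinate of $t$ separately and use Bernstein- or Talagrand-type inequalities with the sharper variance $\prod_i t_i f_X(\cdot)$ to show that the minimum over these thin rectangles is $o_P(n^{-(d_X+2)/(d_X+4)})$.

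Finally, I would verify a tightness condition on the limit (for instance by showing $g_{P,x_k,j}(u,v)/\|(u,v)\|^{d_X+4} \to \infty$ as $\|(u,v)\|\to\infty$ uniformly, so that the infimum of $\mathbb{G}_{P,x_k,j}+g_{P,x_k,j}$ is a.s.\ attained on a bounded set) and combine it with local weak convergence via the argmin/argmax continuous mapping theorem (of the type in \citet{kim_cube_1990}) to conclude that $\inf_{(s,t)} n^{(d_X+2)/(d_X+4)} E_n m_j(W_i,\theta)I(s<X_i<s+t)$ converges jointly across $j$ to $\min_{k:\, j\in J(k)} \inf_{(u,v)} \mathbb{G}_{P,x_k,j}(u,v)+g_{P,x_k,j}(u,v)$, which is $Z_j$. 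The min over $k$ appears because components with $j\in J(k)$ for multiple $k$ receive independent contributions from disjoint local windows, and the overall infimum picks the smaller of them.
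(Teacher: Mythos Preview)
Your outline matches the paper's strategy closely: localize with $h_n=n^{-1/(d_X+4)}$, establish joint weak convergence of the rescaled noise plus drift on compacta via an empirical-process CLT and a Taylor expansion, then control the infimum away from the local windows by peeling. Two places where your description diverges from what the paper actually does are worth flagging. First, the tightness claim ``$g_{P,x_k,j}(u,v)/\|(u,v)\|^{d_X+4}\to\infty$'' is false: $g$ grows like $\|(u,v)\|^2\prod_i v_i\le C\|(u,v)\|^{d_X+2}$, and in fact $g$ need not blow up at all along thin-rectangle directions, which is exactly why a naive argmax/argmin continuous mapping theorem does not apply. The paper instead proves a bespoke lemma (convergence of infima from convergence on compacta plus a two-sided tail condition) and verifies the tail condition by peeling each shell $\{k\le\|(s,t)\|\le k+1\}$ not in the individual coordinates of $t$ but in the value of $\prod_i t_i$, combined with a grid-in-$s$ exponential bound for the supremum over $\{\prod_i t_i\le\varepsilon\}$. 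Second, the region where $(s,t)$ is bounded away from every contact point (fixed $\eta$, not shrinking) is handled by a separate argument: there the drift is bounded below by a constant times $P(s<X_i<s+t)$ (a geometric box argument), and a ratio-type maximal inequality from \citet{pollard_convergence_1984} shows the infimum there is $o_P(n^{-(d_X+2)/(d_X+4)})$. Your proposal folds this into the peeling, but the drift lower bound $C\|(s-x_k,t)\|^2\prod_i t_i$ you invoke is only valid inside a fixed neighborhood of $x_k$, so you would need this additional piece.
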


The asymptotic distribution of $S(\inf_{s,t}E_nm(W_i,\theta)I(s<X_i<s+t))$
follows immediately from this theorem.

\begin{corollary}\label{inf_dist_S_cor}
Under Assumptions \ref{smoothness_assump_multi}, \ref{bdd_y_assump_multi},
and \ref{S_assump},
\begin{align*}
n^{(d_X+2)/(d_X+4)} S(\inf_{s,t}E_nm(W_i,\theta)I(s<X_i<s+t))
\stackrel{d}{\to} S(Z)
\end{align*}
for a random variable $Z$ with the distribution given in Theorem
\ref{inf_dist_thm_multi}.
\end{corollary}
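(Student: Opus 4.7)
The plan is to derive this corollary as a direct consequence of Theorem \ref{inf_dist_thm_multi} combined with the continuous mapping theorem, exploiting the two properties of $S$ imposed by Assumption \ref{S_assump}: positive homogeneity of degree one and continuity.

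First, I would use positive homogeneity to move the scaling factor inside. Since $n^{(d_X+2)/(d_X+4)} \geq 0$, the identity $S(ax)=aS(x)$ for $a\geq 0$ gives
\begin{align*}
n^{(d_X+2)/(d_X+4)} S\!\left(\inf_{s,t} E_n m(W_i,\theta)I(s<X_i<s+t)\right)
= S\!\left(n^{(d_X+2)/(d_X+4)} \inf_{s,t} E_n m(W_i,\theta)I(s<X_i<s+t)\right).
\end{align*}
Theorem \ref{inf_dist_thm_multi} states that the argument of $S$ on the right-hand side converges in distribution to $Z$ as a random vector in $\mathbb{R}^{d_Y}$.

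Next, I would invoke the continuous mapping theorem: since $S:\mathbb{R}^{d_Y}\to\mathbb{R}_+$ is continuous on all of $\mathbb{R}^{d_Y}$, convergence in distribution of the vector-valued statistic to $Z$ implies convergence in distribution of its image under $S$ to $S(Z)$. This gives the claimed conclusion.

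There is essentially no substantive obstacle here, since the heavy analytical work (establishing the rate $n^{(d_X+2)/(d_X+4)}$ and the limit distribution of the vector of infima) is entirely carried out in Theorem \ref{inf_dist_thm_multi}. The only thing to verify at this stage is the algebraic manipulation via homogeneity and the hypotheses of the continuous mapping theorem, both of which are immediate from Assumption \ref{S_assump}. The corollary is really just recording that typical choices of $S$, such as $x\mapsto \|x\wedge 0\|$ mentioned after Assumption \ref{S_assump}, inherit the limit law from the vector statistic without any further distributional argument.
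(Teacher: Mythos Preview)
Your proof is correct and matches the paper's approach: the paper states that the corollary ``follows immediately'' from Theorem~\ref{inf_dist_thm_multi}, and the immediate argument is precisely the combination of positive homogeneity to move the scalar inside $S$ and the continuous mapping theorem applied via the continuity of $S$.
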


These results will be useful for constructing asymptotically exact level
$\alpha$ tests if the asymptotic distribution does not have an atom at the
$1-\alpha$ quantile, and if the quantiles of the asymptotic distribution
can be estimated.  In the next section, I show that the asymptotic
distribution is atomless under mild conditions and propose two methods for
estimating the asymptotic distribution.  The first is a generic
subsampling procedure.  The second is a procedure based on estimating a
finite dimensional set of objects that determine the asymptotic
distribution.  This provides feasible methods for constructing
asymptotically exact confidence intervals under Assumption
\ref{smoothness_assump_multi}.  However, while, in many cases, this
assumption characterizes the distribution of $(X_i,m(W_i,\theta))$ for
most or all values of $\theta$ on the boundary of the identified set, it
is not an assumption that one would want to impose a priori.  Thus, these
tests should be embedded in a procedure that tests between this case and
cases where $E(m(W_i,\theta)|X)=0$ on a positive probability set, or where
$E(m(W_i,\theta)|X)$ is still equal to $0$ only at finitely many points, but
behaves like $x^4$ or the absolute value function or something
else near these points rather than a quadratic function.
In Section \ref{inf_dist_alpha_sec}, I generalize Theorem
\ref{inf_dist_thm_multi} to handle a wider set of shapes of the
conditional mean, with different rates of convergence for different cases.
In Section
\ref{rate_test_sec},
I propose procedures for testing for Assumption
\ref{smoothness_assump_multi} under mild smoothness conditions.  Combining
one of these preliminary tests with inference that is valid in the
corresponding case gives a procedure that is asymptotically valid under
more general conditions.  These include tests based on estimating the rate
of convergence directly, which use the results of Section
\ref{inf_dist_alpha_sec}.

\section{Inference}\label{inference_sec}

To ensure that the asymptotic distribution is continuous, we need to
impose additional assumptions to rule out cases where components of
$m(W_j,\theta)$ are degenerate.  The next assumption rules out these
cases.

\begin{assumption}\label{inv_mat_assump}
For each $k$ from $1$ to $\ell$, letting $j_{k,1},\ldots,j_{k,|J(k)|}$ be the
elements in $J(k)$, the matrix with $q,r$th element given by
$E(m_{j_{k,q}}(W_i,\theta)m_{j_{k,r}}(W_i,\theta)|X_i=x_k)$
is invertible.
\end{assumption}

This assumption simply says that the binding components of $m(W_i,\theta)$
have a nonsingular conditional covariance matrix at the point where they
bind.  A sufficient condition for this is for the conditional covariance
matrix of $m(W_i,\theta)$ given $X_i$ to be nonsingular at these points.

I also make the following assumption on the function $S$, which translates
continuity of the distribution of $Z$ to continuity of the distribution of
$S(Z)$.

\begin{assumption}\label{abs_cont_S_assump}
For any Lebesgue measure zero set $A$, $S^{-1}(A)$ has Lebesgue measure
zero.
\end{assumption}

Under these conditions, the asymptotic distribution in Theorem
\ref{inf_dist_thm_multi} is continuous.  In addition to showing that the
rate derived in that theorem is the exact rate of convergence (since the
distribution is not a point mass at zero or some other value), this shows
that inference based on this asymptotic approximation will be
asymptotically exact.

\begin{theorem}\label{abs_cont_thm}
Under Assumptions \ref{smoothness_assump_multi}, \ref{bdd_y_assump_multi},
and \ref{inv_mat_assump}, the asymptotic distribution in Theorem
\ref{inf_dist_thm_multi} is continuous.  If Assumptions
\ref{S_assump} and \ref{abs_cont_S_assump} hold as well, the asymptotic
distribution in Corollary \ref{inf_dist_S_cor} is continuous.
\end{theorem}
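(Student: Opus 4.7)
The strategy is to show that $Z$ has an absolutely continuous distribution on $\mathbb{R}^{d_Y}$. Once this is established, the corollary follows immediately: by Assumption \ref{abs_cont_S_assump}, $S^{-1}(\{c\})$ is Lebesgue null in $\mathbb{R}^{d_Y}$ for any $c \in \mathbb{R}$, so it has $Z$-measure zero, giving $P(S(Z) = c) = 0$.

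My plan for absolute continuity of $Z$ breaks into three steps. First, for each $k$ and each $j \in J(k)$, I would verify that $\xi_{k,j} := \inf_{(s,t) \in \mathbb{R}^{2d_X}} \mathbb{G}_{P,x_k,j}(s,t) + g_{P,x_k,j}(s,t)$ is a.s. finite. Since $V_j(x_k)$ is positive definite, a direct computation gives $g_{P,x_k,j}(s,t) \geq c_0 \|(s-x_k, t)\|^2 \prod_i t_i$ for an appropriate $c_0>0$, while $\text{sd}(\mathbb{G}_{P,x_k,j}(s,t))$ is of order $\sqrt{\prod_i t_i}$. A peeling argument over shells in $(s,t)$-space, analogous to the one sketched in Section \ref{overview_sec} for the proof of Theorem \ref{inf_dist_thm_multi}, confines the argmin to a bounded region and yields a.s. finiteness.

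Second, I would invoke a classical result on absolute continuity of functionals of Gaussian measures to establish that each $\xi_{k,j}$ (and hence each $Z_j = \min_{k: j\in J(k)} \xi_{k,j}$) has an absolutely continuous marginal distribution. The cleanest route is through a Cameron-Martin shift: perturbing $\mathbb{G}_{P,x_k}$ by an element of its reproducing kernel Hilbert space yields an equivalent probability measure under which the law of $\xi_{k,j}$ is translated. The non-degeneracy supplied by Assumption \ref{inv_mat_assump}, which makes the pointwise conditional covariance matrix of the binding components invertible at each $x_k$, provides a rich enough RKHS to effect such shifts, and quasi-invariance of the infimum functional under these shifts rules out atoms. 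Third, I would promote marginal absolute continuity of each $Z_j$ to joint absolute continuity of the vector $Z$. Two structural inputs help here: the processes $\mathbb{G}_{P,x_k}$ are independent across $k$, and within each $k$ the invertible conditional covariance supplies $|J(k)|$ independent shift directions in the RKHS of the vector process. Together these give sufficiently many independent directions to perturb each coordinate of $Z$ independently, which yields joint absolute continuity.

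The main obstacle is the third step: continuous marginals do not imply a continuous joint law in general, so one must exploit the independence of the $\mathbb{G}_{P,x_k}$ across $k$ together with the invertibility of the within-$k$ covariance carefully to get enough independent perturbation directions for all $d_Y$ coordinates simultaneously. An alternative and possibly cleaner route would be to invoke a general theorem on absolute continuity of vector-valued Gaussian functionals (in the spirit of Davydov, Lifshits, and Smorodina) that handles the joint distribution in one stroke, bypassing the need to construct explicit perturbations.
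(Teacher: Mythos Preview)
Your Steps 2 and 3 both have real gaps. In Step~2, the Cameron--Martin shift you describe does not work as stated: the process $\mathbb{G}_{P,x_k,j}$ vanishes identically at $t=0$, so constant functions are not in its RKHS, and an RKHS shift $h$ moves the infimum by $h$ evaluated at the \emph{random} argmin rather than by a deterministic constant. The Davydov--Lifshits--Smorodina route you mention as an alternative does give absolute continuity of each $\xi_{k,j}$ away from the supremum of its support (the infimum is a concave functional), but since $\mathbb{G}_{P,x_k,j}(s,0)+g_{P,x_k,j}(s,0)=0$, that supremum is $0$, and you never address the possible atom there. In Step~3 you correctly flag the marginal-to-joint problem but do not solve it; coordinates $Z_j,Z_{j'}$ with $j,j'\in J(k)$ for a common $k$ are driven by the same $\mathbb{G}_{P,x_k}$ and may achieve their infima at the same random $(s,t)$, so ``enough independent shift directions'' is a hope, not an argument.

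The paper avoids both issues by a different decomposition. Fixing $m$, it works directly with the vector process $\mathbb{H}(s,t)=(\mathbb{G}_{P,x_m,j}(s,t)+g_{P,x_m,j}(s,t))_{j\in J(m)}$ and invokes Proposition~3.2 of Pitt (1979): the vector infimum of $\mathbb{H}$ over any region with $\det\operatorname{var}\mathbb{H}(s,t)\ge\varepsilon$ is jointly absolutely continuous. Under Assumption~\ref{inv_mat_assump}, $\det\operatorname{var}\mathbb{H}(s,t)$ is a positive constant times $(\prod_i t_i)^{|J(m)|}$, so any mass on a Lebesgue-null set must come from $\prod_i t_i<\varepsilon$; sample-path continuity then forces such a value to lie near $\mathbb{H}(0)=0$, and the tail bound of Theorem~\ref{inf_bound_G} handles large $\|(s,t)\|$. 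This localizes any singular part to an atom at $0$, which is ruled out by an explicit construction: one picks a grid of disjoint boxes $(s_{i,k},t_k)$, $i=0,\ldots,k$, with $g_{P,x_m,j}(s_{i,k},t_k)\le K/k$, notes that the $\mathbb{G}_{P,x_m,j}(s_{i,k},t_k)$ are i.i.d.\ normal, and lets $k\to\infty$. Independence across $m$ then yields the full vector $Z$. The paper itself remarks (after Lemma~\ref{inc_cdf_lemma}) that the Davydov--Lifshits--Smorodina approach could replace the Pitt step but would still require this separate atom-at-zero argument.
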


Thus, an asymptotically exact test of $E(m(W_i,\theta)|X_i)\ge 0$ can be
obtained by comparing the quantiles of
$S(\inf_{s,t}E_nm(W_i,\theta)I(s<X_i<s+t))$ to the quantiles of any
consistent estimate of the distribution of $S(Z)$.  I propose two methods
for estimating this distribution.  The first is a generic subsampling
procedure.  The second method uses the fact that the distribution of $Z$
in Theorem \ref{inf_dist_thm_multi} depends on the data generating process
only through finite dimensional parameters to simulate an estimate of the
asymptotic distribution.

Subsampling is a
generic procedure for estimating the distribution of a statistic using
versions of the statistic formed with a smaller sample size
\citep{politis_subsampling_1999}.  Since many independent smaller samples
are available, these can be used to estimate the distribution of the
original statistic as long as the distribution of the scaled statistic is
stable as a function of the sample size.  To describe the subsampling
procedure, let $T_n(\theta)=\inf_{s,t}E_nm(W_i,\theta)I(s<X_i<s+t)$. For
any set of indices $\mathcal{S}\subseteq \{1,\ldots,n\}$, define
$T_{\mathcal{S}}(\theta)=\inf_{s,t}\frac{1}{|\mathcal{S}|}
\sum_{i\in \mathcal{S}}m(W_i,\theta)I(s<X_i<s+t)$.  The subsampling
estimate of $P(S(Z)\le t)$ is, for some subsample size $b$,
\begin{align*}
\frac{1}{{n\choose b}}\sum_{|\mathcal{S}|=b} I\left(b^{(d_X+2)/(d_X+4)}
S(T_{\mathcal{S}}(\theta))\le t\right).
\end{align*}
One can also estimate the null distribution using the centered subsampling estimate
\begin{align*}
\frac{1}{{n\choose b}}\sum_{|\mathcal{S}|=b} I\left(b^{(d_X+2)/(d_X+4)}
[S(T_{\mathcal{S}}(\theta))-S(T_n(\theta))]\le t\right).
\end{align*}
For some nominal level $\alpha$, let $\hat q_{b,1-\alpha}$ be the
$1-\alpha$ quantile of either of these subsampling distributions.  We reject
the null hypothesis that $\theta$ is in the identified set at level $\alpha$ if
$n^{(d_X+2)/(d_X+4)}S(T_n(\theta))> \hat q_{b,1-\alpha}$ and fail to reject
otherwise.  The following theorem states that this procedure is
asymptotically exact.  The result follows immediately from general results
for subsampling in \citet{politis_subsampling_1999}.

\begin{theorem}\label{subsamp_thm}
Under Assumptions \ref{smoothness_assump_multi}, \ref{bdd_y_assump_multi},
\ref{S_assump}, \ref{inv_mat_assump} and \ref{abs_cont_S_assump}, the
probability of rejecting using the subsampling procedure described
above with nominal level $\alpha$ converges to $\alpha$ as long as
$b\to \infty$ and $b/n\to 0$.
\end{theorem}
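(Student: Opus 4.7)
The plan is to verify the hypotheses of the standard subsampling consistency theorem (\citealp{politis_subsampling_1999}, Theorem 2.2.1) and then read off the conclusion. That theorem requires (a) a rate $\tau_n$ such that $\tau_n U_n \stackrel{d}{\to} J$ for some nondegenerate limit $J$, (b) continuity of $J$ at the quantile of interest, and (c) a subsample size satisfying $b\to\infty$, $b/n\to 0$, and $\tau_b/\tau_n\to 0$. I take $U_n=S(T_n(\theta))$ and $\tau_n=n^{(d_X+2)/(d_X+4)}$; Corollary \ref{inf_dist_S_cor} supplies $\tau_n U_n \stackrel{d}{\to} S(Z)$ under Assumptions \ref{smoothness_assump_multi}, \ref{bdd_y_assump_multi} and \ref{S_assump}, while Theorem \ref{abs_cont_thm} (using Assumptions \ref{inv_mat_assump} and \ref{abs_cont_S_assump}) shows that $S(Z)$ has a continuous distribution function, so continuity at the $1-\alpha$ quantile is automatic. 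The rate condition $\tau_b/\tau_n=(b/n)^{(d_X+2)/(d_X+4)}\to 0$ is immediate from $b/n\to 0$.

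Since each subsample $\{(X_i,W_i):i\in\mathcal{S}\}$ of size $b$ is iid from the same distribution as the full sample, the subsample statistic $b^{(d_X+2)/(d_X+4)} S(T_{\mathcal{S}}(\theta))$ has the same marginal distribution as the rescaled statistic computed from an independent sample of size $b$, which by Corollary \ref{inf_dist_S_cor} also converges in distribution to $S(Z)$. Applying the subsampling CDF consistency result of \citet{politis_subsampling_1999} yields convergence in probability of the uncentered subsampling CDF to the CDF of $S(Z)$ at every continuity point, and hence $\hat q_{b,1-\alpha}\stackrel{p}{\to} q_{1-\alpha}$, the $1-\alpha$ quantile of $S(Z)$. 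Combined with $\tau_n U_n\stackrel{d}{\to}S(Z)$, a Slutsky-style argument using continuity of the limit gives the rejection probability $P\bigl(\tau_n S(T_n(\theta))>\hat q_{b,1-\alpha}\bigr)\to P\bigl(S(Z)>q_{1-\alpha}\bigr)=\alpha$.

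For the centered version, the difference between the centered and uncentered rescaled subsample statistics is $b^{(d_X+2)/(d_X+4)} S(T_n(\theta)) = (b/n)^{(d_X+2)/(d_X+4)}\cdot \tau_n S(T_n(\theta))$; the first factor is deterministic and tends to zero while the second is $O_P(1)$ by Corollary \ref{inf_dist_S_cor}, so the shift is $o_P(1)$ and the centered subsampling quantile is asymptotically equivalent to the uncentered one. The only substantive input beyond direct citation is continuity of the limit at the relevant quantile, but this has already been established in Theorem \ref{abs_cont_thm}; this is why the author remarks that the conclusion follows ``immediately'' from the general subsampling theory. The main obstacle in the chain of reasoning has thus been handled earlier, namely in showing that the limit distribution in Theorem \ref{inf_dist_thm_multi} is nondegenerate and atomless.
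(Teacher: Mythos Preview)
Your proposal is correct and takes essentially the same approach as the paper: both defer to the general subsampling consistency results in \citet{politis_subsampling_1999} after observing that Corollary~\ref{inf_dist_S_cor} gives convergence to $S(Z)$ at the rate $n^{(d_X+2)/(d_X+4)}$ and Theorem~\ref{abs_cont_thm} gives continuity of the limit. The paper's proof is a one-line citation, so your write-up simply spells out the verification of the standard hypotheses (including the centered case) that the author left implicit.
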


While subsampling is valid under general conditions, subsampling estimates
may be less precise than estimates based on %
knowledge of how the asymptotic distribution relates to the data
generating process.  One possibility is to note that the asymptotic
distribution in Theorem \ref{inf_dist_thm_multi} depends on the underlying
distribution only through the set $\mathcal{X}_0$ and, for points $x_k$ in
$\mathcal{X}_0$, the density $f_X(x_k)$, the conditional second moment
matrix $E(m_{J(k)}(W_i,\theta)m_{J(k)}(W_i,\theta)'|X=x_k)$, and the
second derivative matrix $V(x_k)$ of the conditional mean.  Thus, with
consistent estimates of these objects, we can estimate the distribution in
Theorem \ref{inf_dist_thm_multi} by replacing these objects with their
consistent estimates and simulating from the corresponding distribution.

In order to accommodate different methods of estimating $f_X(x_k)$,
$E(m_{J(k)}(W_i,\theta)m_{J(k)}(W_i,\theta)'|X=x_k)$, and $V(x_k)$, I
state the consistency of these estimators as a high level condition, and
show that the procedure works as long as these estimators are consistent.
Since these objects only appear as
$E(m_{J(k)}(W_i,\theta)m_{J(k)}(W_i,\theta)'|X=x_k) f_X(x_0)$ and
$f_X(x_k)V(x_k)$ in the asymptotic distribution, we actually only need
consistent estimates of these objects.

\begin{assumption}\label{avar_est_assump}
The estimates $\hat M_k(x_k)$, $\hat f_X(x_k)$, and $\hat V(x_k)$ satisfy
$\hat f_X(x_k)\hat V(x_k)\stackrel{p}{\to} f_X(x_k)V(x_k)$ and
$\hat M_k(x_k)\hat f_X(x_k)\stackrel{p}{\to}
E(m_{J(k)}(W_i,\theta)m_{J(k)}(W_i,\theta)'|X=x_k)f_X(x_k)$.
\end{assumption}

For $k$ from $1$ to $\ell$, let $\hat {\mathbb{G}}_{P,x_k}(s,t)$ and $\hat
g_{P,x_k}(s,t)$ be the random process and mean function defined in the same
way as $\mathbb{G}_{P,x_k}(s,t)$ and $g_{P,x_k}(s,t)$, but with the
estimated quantities replacing the true quantities.
We estimate the distribution of $Z$ defined to have $j$th element
\begin{align*}
Z_j=\min_{m \text{ s.t. } j\in J(k)} \inf_{(s,t)\in\mathbb{R}^{2d}}
      \mathbb{G}_{P,x_k,j}(s,t)+g_{P,x_k,j}(s,t)
\end{align*}
using the distribution of $\hat Z$ defined to have $j$th element
\begin{align*}
\hat Z_j=\min_{k \text{ s.t. } j\in J(k)} \inf_{\|(s,t)\|\le B_n}
      \hat{\mathbb{G}}_{P,x_k,j}(s,t)+\hat g_{P,x_k,j}(s,t)
\end{align*}
for some sequence $B_n$ going to infinity.  The convergence of the
distribution $\hat Z$ to the distribution of $Z$ is in the sense of
conditional weak convergence in probability often used in proofs of the
validity of the bootstrap
\citep[see, for example,][]{lehmann_testing_2005}.
From this, it follows that tests that replace the quantiles of $S(Z)$ with
the quantiles of $S(\hat Z)$ are asymptotically exact under the conditions
that guarantee the continuity of the limiting distribution.

\begin{theorem}\label{Z_hat_thm}
Under Assumption \ref{avar_est_assump}, %
$\rho(\hat Z,Z)\stackrel{p}{\to} 0$ where $\rho$ is any metric on
probability distributions that metrizes weak convergence.
\end{theorem}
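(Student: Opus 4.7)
The plan is to show that for each $k=1,\ldots,\ell$, the process $(s,t)\mapsto \hat{\mathbb{G}}_{P,x_k}(s,t)+\hat g_{P,x_k}(s,t)$ converges weakly, conditional on the data and in probability, to $\mathbb{G}_{P,x_k}(s,t)+g_{P,x_k}(s,t)$ uniformly on compact subsets of $\mathbb{R}^{2d_X}$, and then to transfer this to convergence of the infima (and minima over $k$) via a continuous mapping argument after showing that the unbounded-domain infimum is attained in a compact set with high probability. Since the processes $\mathbb{G}_{P,x_k}$ (and likewise $\hat{\mathbb{G}}_{P,x_k}$) are independent across $k$, it suffices to work one $k$ at a time and combine at the end via the continuous mapping theorem applied to $(z_1,\ldots,z_\ell)\mapsto \min_{k:j\in J(k)}z_k$.

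The drift part is routine: $g_{P,x_k,j}(s,t)$ depends linearly on the single product $f_X(x_k)V_j(x_k)$, and Assumption \ref{avar_est_assump} gives $\hat f_X(x_k)\hat V_j(x_k)\stackrel{p}{\to} f_X(x_k)V_j(x_k)$, so $\sup_{(s,t)\in K}|\hat g_{P,x_k,j}(s,t)-g_{P,x_k,j}(s,t)|\stackrel{p}{\to} 0$ for every compact $K$. For the Gaussian part, the covariance kernel depends linearly on $E(m_{J(k)}m_{J(k)}'|X=x_k)f_X(x_k)$, so the estimated kernel converges uniformly on compacts in probability. Standard results on Gaussian processes (convergence of continuous kernels plus a modulus-of-continuity bound giving tightness in $C(K,\mathbb{R}^{d_Y})$) then yield the required conditional weak convergence of $\hat{\mathbb{G}}_{P,x_k}$ to $\mathbb{G}_{P,x_k}$ on each compact $K$; Assumption \ref{inv_mat_assump} guarantees the limiting covariance is nondegenerate so that this limit is well defined.

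The main obstacle is the unbounded domain: the population infimum is over all of $\mathbb{R}^{2d_X}$, while $\hat Z_j$ truncates to $\|(s,t)\|\le B_n$. I would handle this by noting that $g_{P,x_k,j}(s,t)$ grows like $\|(s,t)\|^{d_X+2}$ for $(s,t)$ of order $\|(s,t)\|$ (using positive definiteness of $V_j(x_k)$ as in the heuristic calculation in Section \ref{overview_sec}), while the Gaussian increment $\mathbb{G}_{P,x_k,j}(s,t)$ has standard deviation of order $\|(s,t)\|^{d_X/2}$, so by Borell--TIS or standard Gaussian maximal inequalities the infimum of $\mathbb{G}_{P,x_k,j}+g_{P,x_k,j}$ is attained in a bounded set almost surely. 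Consequently, for any $\varepsilon>0$ there is $M$ so that with probability at least $1-\varepsilon$ the infimum is achieved in $\{\|(s,t)\|\le M\}$, and the same bound applies eventually to $\hat{\mathbb{G}}_{P,x_k,j}+\hat g_{P,x_k,j}$ (since $\hat f_X(x_k)\hat V_j(x_k)$ is eventually positive definite and close to $f_X(x_k)V_j(x_k)$). Taking $B_n\to\infty$ makes the truncation non-binding in the limit, and the infimum functional is continuous on $C(K,\mathbb{R}^{d_Y})$ for each compact $K$, so continuous mapping together with the previously established joint conditional weak convergence yields $\hat Z\to Z$ in conditional distribution in probability, giving $\rho(\hat Z,Z)\stackrel{p}{\to}0$ for any $\rho$ metrizing weak convergence.
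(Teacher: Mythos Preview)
Your approach is structurally the same as the paper's: establish conditional weak convergence of the estimated process to the limiting process on compacts, then control the tail so the full-domain infimum is determined by a compact set. Two points deserve attention.

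First, the tail argument as you state it has a gap. The drift $g_{P,x_k,j}(s,t)$ does \emph{not} grow like $\|(s,t)\|^{d_X+2}$ uniformly over $\|(s,t)\|>M$: the correct lower bound is a constant times $\|(s,t)\|^2\prod_i t_i$, and $\prod_i t_i$ can be arbitrarily small even when $\|(s,t)\|$ is large. Likewise the variance of $\mathbb{G}_{P,x_k,j}(s,t)$ is proportional to $\prod_i t_i$, not to $\|(s,t)\|^{d_X}$. A single Borell--TIS bound therefore does not directly deliver the tail control; one must partition by both $\|(s,t)\|$ and $\prod_i t_i$, exactly as in Theorem \ref{inf_bound_G} (via Lemma \ref{tail_bnd_G_multi}). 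The paper avoids redoing this work for the estimated process: once the estimates are close enough, $\hat{\mathbb{G}}_{P,x_k,j}+\hat g_{P,x_k,j}$ is stochastically bounded from below by the process built with $2E(m_{J(k)}m_{J(k)}'|x_k)f_X(x_k)$ and $\tfrac{1}{2}V_j(x_k)f_X(x_k)$, and Theorem \ref{inf_bound_G} applied to that auxiliary process supplies the required uniform-in-$n$ tail bound (condition (ii) of Lemma \ref{inf_dist_lemma_multi}).

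Second, the paper packages ``conditional weak convergence in probability'' via the standard subsequence device: given any subsequence, pass to a further subsequence along which the estimates in Assumption \ref{avar_est_assump} converge almost surely, and argue deterministically along that subsequence. This is equivalent to what you sketch but keeps the bookkeeping cleaner. Finally, Assumption \ref{inv_mat_assump} is not actually invoked in Theorem \ref{Z_hat_thm}; the limiting Gaussian process and its infimum are well defined regardless of whether the covariance is nonsingular.
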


\begin{corollary}\label{Z_hat_cor}
Let $\hat q_{1-\alpha}$ be the $1-\alpha$ quantile of $S(\hat Z)$.  Then,
under Assumptions \ref{smoothness_assump_multi}, \ref{bdd_y_assump_multi},
\ref{S_assump}, \ref{inv_mat_assump}, \ref{abs_cont_S_assump}, and
\ref{avar_est_assump}, the test that rejects when
$n^{(d_X+2)/(d_X+4)}S(T_n(\theta))> \hat q_{1-\alpha}$ and fails to reject
otherwise is an asymptotically exact level $\alpha$ test.
\end{corollary}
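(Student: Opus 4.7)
The plan is to chain together the three main ingredients already established earlier in the paper: Corollary \ref{inf_dist_S_cor} (weak convergence of the scaled test statistic to $S(Z)$), Theorem \ref{abs_cont_thm} (continuity of the distribution of $S(Z)$), and Theorem \ref{Z_hat_thm} (consistency of the simulated distribution for $Z$). The conclusion will follow from a standard Slutsky-style argument once we pass from conditional weak convergence of $\hat Z$ to convergence in probability of the estimated quantile $\hat q_{1-\alpha}$ to the true quantile $q_{1-\alpha}$ of $S(Z)$.

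First I would apply the continuous mapping theorem to Theorem \ref{Z_hat_thm}: since $S$ is continuous by Assumption \ref{S_assump}, the conditional law of $S(\hat Z)$ given the data converges in probability (with respect to any metric for weak convergence) to the law of $S(Z)$. Next I would invoke Theorem \ref{abs_cont_thm} together with Assumption \ref{abs_cont_S_assump} to conclude that $S(Z)$ has a continuous distribution function $F$; in particular $F$ is strictly increasing at the $1-\alpha$ level (or at the very least, we can check it has no atom at $q_{1-\alpha}$, which is all that is needed).

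The key intermediate step is to show $\hat q_{1-\alpha} \stackrel{p}{\to} q_{1-\alpha}$, where $q_{1-\alpha}$ denotes the $1-\alpha$ quantile of $S(Z)$. This is a standard consequence of weak convergence of $S(\hat Z)$ to $S(Z)$ combined with continuity of $F$ at $q_{1-\alpha}$: for any $\varepsilon > 0$, the conditional probabilities $\widehat F_n(q_{1-\alpha} \pm \varepsilon)$ converge in probability to $F(q_{1-\alpha} \pm \varepsilon)$, which straddle $1-\alpha$ strictly, forcing $\hat q_{1-\alpha}$ to lie in $[q_{1-\alpha}-\varepsilon, q_{1-\alpha}+\varepsilon]$ with probability tending to one. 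Finally, combining $n^{(d_X+2)/(d_X+4)} S(T_n(\theta)) \stackrel{d}{\to} S(Z)$ from Corollary \ref{inf_dist_S_cor} with $\hat q_{1-\alpha} \stackrel{p}{\to} q_{1-\alpha}$ via Slutsky's theorem yields
\begin{equation*}
P\bigl(n^{(d_X+2)/(d_X+4)} S(T_n(\theta)) > \hat q_{1-\alpha}\bigr) \to P\bigl(S(Z) > q_{1-\alpha}\bigr) = \alpha,
\end{equation*}
where the last equality uses continuity of $F$ at $q_{1-\alpha}$.

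The main obstacle, such as it is, is not conceptual but bookkeeping: one must carefully handle the fact that Theorem \ref{Z_hat_thm} delivers only conditional weak convergence in probability (not almost sure convergence of the laws), so the passage to convergence in probability of the quantile estimate $\hat q_{1-\alpha}$ needs to be argued via subsequencing or by directly unpacking the metric for weak convergence. Once that is handled, everything else is a routine application of the continuous mapping and Slutsky theorems, since continuity of $S(Z)$ (Theorem \ref{abs_cont_thm}) removes any concern about atoms at the critical value.
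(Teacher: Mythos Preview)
Your proposal is correct and follows essentially the same approach as the paper's own proof, which simply invokes Theorem \ref{Z_hat_thm}, Theorem \ref{inf_dist_thm_multi} (equivalently Corollary \ref{inf_dist_S_cor}), and Theorem \ref{abs_cont_thm}, then appeals to ``standard arguments.'' You have spelled out in more detail what those standard arguments are (quantile convergence from conditional weak convergence plus continuity of the limit, then Slutsky), and your remark about handling conditional weak convergence in probability via subsequencing is exactly the right way to make the ``standard'' step rigorous.
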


If the set $\mathcal{X}_0$ is known, the quantities needed to compute
$\hat Z$ can be estimated consistently using standard methods for
nonparametric estimation of densities, conditional moments, and their
derivatives.
However, typically $\mathcal{X}_0$ is not known,
and the researcher will not even want to impose that this set is finite.
In Section \ref{rate_test_sec},
I propose methods for testing Assumption
\ref{smoothness_assump_multi} and estimating the set $\mathcal{X}_0$ under
weaker conditions on the smoothness of the conditional mean.  These
conditions allow for both the $n^{(d_X+2)/(d_X+4)}$ asymptotics that arise
from Assumption \ref{smoothness_assump_multi} and the $\sqrt{n}$
asymptotics that arise from a positive probability contact set.

Before describing these results, I extend the results of Section
\ref{inf_dist_sec} to other shapes of the conditional mean.  These results
are needed for the tests in Section \ref{subsamp_rate_subsec}, which rely
on the rate of convergence being sufficiently well behaved if it is in a
certain range.

\section{Other Shapes of the Conditional Mean}\label{inf_dist_alpha_sec}

Assumption \ref{smoothness_assump_multi} states that the components of the
conditional mean $\bar m(\theta,x)$ are minimized on a finite set and
have strictly positive second derivative matrices at the minimum.  More
generally, if the conditional mean is less smooth, or does not take an
interior minimum, $\bar m(\theta,x)$ could be minimized on a finite set,
but behave differently near the minimum.  Another possibility is that the
minimizing set could have zero probability, while containing infinitely
many elements (for example, an infinite countable set, or a lower
dimensional set when $d_X>1$).

In this section, I derive the asymptotic distribution and rate of
convergence of KS statistics under a broader class of shapes of the
conditional mean $\bar m(\theta,x)$.  I replace part (ii) of Assumption
\ref{smoothness_assump_multi} with the following assumption.

\begin{assumption}\label{smoothness_assump_alpha}
For $j\in J(k)$, $\bar m_j(\theta,x)=E(m_j(W_i,\theta)|X=x)$ is continuous
on $B(x_k)$ and satisfies
\begin{align*}
\sup_{\|x-x_k\|\le \delta} \left\|
  \frac{\bar m_j(\theta,x)-\bar m_j(\theta,x_k)}{\|x-x_k\|^{\gamma(j,k)}}
  -\psi_{j,k}\left(\frac{x-x_k}{\|x-x_k\|}\right) \right\|
\stackrel{\delta \to 0}{\to} 0
\end{align*}
for some $\gamma(j,k)>0$ and some function
$\psi_{j,k}:\{t\in\mathbb{R}^{d_X}|\|t\|=1\}\to \mathbb{R}$ with
$\overline \psi \ge \psi_{j,k}(t)\ge \underline \psi$ for some
$\overline \psi <\infty$ and $\underline \psi>0$.
For future reference, define
$\gamma=\max_{j,k}\gamma(j,k)$ and $\tilde J(k)=\{j\in J(k)|
\gamma(j,k)=\gamma\}$.
\end{assumption}

When Assumption \ref{smoothness_assump_alpha} holds,
the rate of convergence will be determined by
$\gamma$, and the asymptotic distribution will depend on the local
behavior of the objective function for $j$ and $k$ with $j\in \tilde
J(k)$.

Under Assumption \ref{smoothness_assump_multi}, Assumption
\ref{smoothness_assump_alpha} will hold with $\gamma=2$ and
$\psi_{j,k}(t)=\frac{1}{2}t V_j(x_k)t$ (this holds by a second order
Taylor expansion, as described in the appendix).  For $\gamma=1$,
Assumption \ref{smoothness_assump_alpha} states that $\bar m_j(\theta,x)$
has a directional derivative for every direction, with the approximation
error going to zero uniformly in the direction of the derivative.  More
generally, Assumption \ref{smoothness_assump_alpha} states that
$\bar m_j(\theta,x)$ increases like $\|x-x_k\|^\gamma$ near elements $x_k$
in the minimizing set $\mathcal{X}_0$.  For $d_X=1$, this follows from
simple conditions on the higher derivatives of the conditional mean with
respect to $x$.  With enough derivatives, the first derivative that is
nonzero uniformly on the support of $X_i$ determines $\gamma$.  I state
this formally in the next theorem.
For higher dimensions, Assumption \ref{smoothness_assump_alpha} requires
additional conditions to rule out contact sets of dimension less than
$d_X$, but greater than $1$.

\begin{theorem}\label{deriv_alpha_thm}
Suppose $\bar m(\theta,x)$ has $p$ bounded derivatives, $d_X=1$ and
$\text{supp}(X_i)=[\underline x,\overline x]$.
Then, if $\min_j \inf_x \bar m_j(\theta,x)=0$,
either Assumption \ref{smoothness_assump_alpha} holds, with the contact set
$\mathcal{X}_0$ possibly containing the boundary points $\underline x$ and
$\overline x$,
for $\gamma=r$
for some integer $r<p$, or, for some $x_0$ on the support of $X_i$ and
some finite $B$,
$\bar m_j(\theta,x)\le B |x-x_0|^p$ for some $j$.
\end{theorem}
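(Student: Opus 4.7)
The plan is to argue a dichotomy based on the order of vanishing of $\bar m_j(\theta,\cdot)$ at each of its zeros, using Taylor's theorem under the assumption of $p$ bounded derivatives.

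First I would reduce to the case in which the contact set is finite. Suppose instead that for some $j$ the zero set $\{x : \bar m_j(\theta,x)=0\}$ is infinite. Since $[\underline x,\overline x]$ is compact, by Bolzano--Weierstrass it has an accumulation point $x^*$. Because $\bar m_j(\theta,\cdot)\ge 0$ on the support, every interior zero is a local minimum so $\bar m_j'$ vanishes there, and between consecutive zeros Rolle's theorem places a zero of $\bar m_j'$; combining and iterating this $p-1$ times produces, by continuity, $\bar m_j^{(l)}(\theta,x^*)=0$ for $l=0,1,\ldots,p-1$. Taylor's theorem with the bound on $\bar m_j^{(p)}$ then yields $\bar m_j(\theta,x)\le B|x-x^*|^p$ near $x^*$, which is the second alternative of the theorem.

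Now assume $\mathcal{X}_0$ is finite. For each $x_k\in\mathcal{X}_0$ and $j\in J(k)$, let $r_j(k)$ denote the smallest positive integer $r\le p-1$ with $\bar m_j^{(r)}(\theta,x_k)\neq 0$, if any exists. If no such $r$ exists for some $(j,k)$, so that all derivatives of $\bar m_j(\theta,\cdot)$ up to order $p-1$ vanish at $x_k$, then Taylor's theorem with bounded $p$-th derivative again gives $\bar m_j(\theta,x)\le B|x-x_k|^p$ in a neighborhood of $x_k$, putting us in the second alternative. Otherwise $r_j(k)$ is defined for every $(j,k)$, and I would set $\gamma=\max_{j,k} r_j(k)$ and $\tilde J(k)=\{j\in J(k):r_j(k)=\gamma\}$, giving an integer $\gamma\in\{1,\ldots,p-1\}$. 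To verify Assumption \ref{smoothness_assump_alpha}, apply Taylor's theorem at each $x_k$ to write $\bar m_j(\theta,x)-\bar m_j(\theta,x_k) = \frac{\bar m_j^{(r_j(k))}(\theta,x_k)}{r_j(k)!}(x-x_k)^{r_j(k)} + R(x)$ with $|R(x)|=O(|x-x_k|^{r_j(k)+1})$. Dividing by $|x-x_k|^{r_j(k)}$ yields the required uniform approximation with $\psi_{j,k}(t)=\frac{\bar m_j^{(r_j(k))}(\theta,x_k)}{r_j(k)!}t^{r_j(k)}$ for $t\in\{-1,+1\}$. At an interior $x_k$ the local minimum property forces $r_j(k)$ to be even and the leading coefficient to be strictly positive, so $\psi_{j,k}$ is bounded below by a positive constant in both directions; at a boundary point the same conclusion holds under the smooth extension of $\bar m_j$ afforded by its Taylor polynomial, with odd $r_j(k)$ permitted and the sign determined by the one-sided minimum condition.

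The main obstacle I anticipate is the Rolle-iteration step turning an accumulation point of zeros into the vanishing of all low-order derivatives: with only $C^p$ smoothness (rather than analyticity) one needs to keep careful track of nested subsequences of zeros of successive derivatives, rather than invoking an identity-principle argument. A secondary subtlety is making the positivity requirement on $\psi_{j,k}$ precise at a boundary contact point, since the behavior of $\bar m_j$ outside the support is only available through the smooth extension provided by Taylor's theorem, and one must check that this extension is consistent with the conclusions of Assumption \ref{smoothness_assump_alpha}.
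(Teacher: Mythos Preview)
Your proposal is correct and follows the same overall dichotomy as the paper: split on whether the contact set is finite, and in each case use Taylor's theorem to either identify the first nonzero derivative (giving $\gamma=r<p$) or conclude that all derivatives up to order $p-1$ vanish (giving the $|x-x_0|^p$ bound). The finite-contact-set half of your argument is essentially identical to the paper's.

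The one substantive difference is how you handle an accumulation point of zeros. You use a Rolle-iteration: zeros of $\bar m_j$ accumulate, so zeros of $\bar m_j'$ accumulate, so zeros of $\bar m_j''$ accumulate, and so on, forcing $\bar m_j^{(l)}(\theta,x^*)=0$ for $l<p$ by continuity. This is valid, and the obstacle you flag (keeping track of nested subsequences) is real but manageable. The paper instead argues by contrapositive via a short lemma: if the first nonzero derivative at $x^*$ has order $r<p$, then a Taylor expansion of order $r$ shows $\bar m_j(\theta,x)>\bar m_j(\theta,x^*)$ on a punctured neighborhood of $x^*$, directly contradicting accumulation of zeros. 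This avoids the iteration entirely and is the cleaner route; it also makes transparent why the argument needs no more than continuity of the lowest nonzero derivative. Your approach, on the other hand, is slightly more self-contained in that it does not require isolating a separate lemma. Both reach the same Taylor bound $\bar m_j(\theta,x)\le B|x-x^*|^p$, and both leave the boundary-point positivity of $\psi_{j,k}$ at the same informal level.
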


Theorem \ref{deriv_alpha_thm} states that, with $d_X=1$ and $p$ bounded
derivatives, either Assumption \ref{smoothness_assump_alpha} holds for
$\gamma$ some integer less than $p$, or, for some $j$, $\bar
m_j(\theta,x)$ is less than
or equal to the function $B|x-x_0|^p$, which would make Assumption
\ref{smoothness_assump_alpha} hold for $\gamma=p$.  In the latter case,
the rate of convergence for the KS statistic must be at least as slow as
the rate of convergence when Assumption \ref{smoothness_assump_multi}
holds with $\gamma=p$.  While an interior minimum with a strictly positive
second derivative or a minimum at $\underline x$ or $\overline x$ with a
nonzero first derivative seem most likely, Theorem \ref{deriv_alpha_thm}
shows that Assumption \ref{smoothness_assump_alpha} holds under broader
conditions on the smoothness of the conditional mean.
This, along with the rates of convergence in
Theorem \ref{inf_dist_thm_alpha} below, will be useful for the
methods described later in Section \ref{rate_test_sec} for testing between
rates of convergence.  With enough smoothness assumptions on the
conditional mean, the rate of convergence will either be $n^\beta$ for
$\beta$ in some known range, or strictly slower than $n^\beta$ for some
known $\beta$.  With this prior knowledge of the possible types of
asymptotic behavior of $T_n(\theta)$ in hand, one can use a modified
version of the estimators of the rate of convergence proposed by
\citet{politis_subsampling_1999} to estimate $\gamma$ in Assumption
\ref{smoothness_assump_alpha}, and to test whether this assumption holds.

Under Assumption \ref{smoothness_assump_multi} with part (ii) replaced by
Assumption \ref{smoothness_assump_alpha}, the following modified version
of Theorem \ref{inf_dist_thm_multi}, with a different rate of convergence
and limiting distribution, will hold.

\begin{theorem}\label{inf_dist_thm_alpha}
Under Assumption \ref{smoothness_assump_multi}, with part (ii) replaced by
Assumption \ref{smoothness_assump_alpha}, and
Assumption \ref{bdd_y_assump_multi},
\begin{align*}
n^{(d_X+\gamma)/(d_X+2\gamma)} \inf_{s,t} E_nm(W_i,\theta)I(s<X_i<s+t)
\stackrel{d}{\to} Z
\end{align*}
where $Z$ is the random vector on $\mathbb{R}^{d_Y}$ defined as in Theorem
\ref{inf_dist_thm_multi}, but with $J(k)$ replaced by $\tilde J(k)$ and
$g_{P,x_k,j}(s,t)$ defined as
\begin{align*}
g_{P,x_k,j}(s,t)=f_X(x_k)\int_{s_1}^{s_1+t_1}\cdots
\int_{s_{d_X}}^{s_{d_X}+t_{d_X}}
  \psi_{j,k}\left(\frac{x}{\|x\|}\right)
    \|x\|^\gamma \, dx_{d_X}\cdots dx_1
\end{align*}
for $j\in \tilde J(k)$.  If Assumption \ref{S_assump} holds as well, then
\begin{align*}
n^{(d_X+\gamma)/(d_X+2\gamma)} S(\inf_{s,t}E_nm(W_i,\theta)I(s<X_i<s+t))
\stackrel{d}{\to} S(Z).
\end{align*}

If Assumption \ref{inv_mat_assump} holds as well, $Z$ has a continuous
distribution.  If Assumptions \ref{S_assump}, \ref{inv_mat_assump} and
\ref{abs_cont_S_assump} hold, $S(Z)$ has a continuous distribution.
\end{theorem}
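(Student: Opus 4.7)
The plan is to mirror the structure of the proof of Theorem \ref{inf_dist_thm_multi}, with the quadratic expansion of $\bar m_j(\theta,\cdot)$ replaced by the $\gamma$-power expansion of Assumption \ref{smoothness_assump_alpha}. First I would establish the drift expansion: for $(s,t)$ in a shrinking neighborhood of $(x_k,0)$ with $j\in\tilde J(k)$,
\[
Em_j(W_i,\theta)I(s<X_i<s+t)=f_X(x_k)\int_{s_1}^{s_1+t_1}\!\!\cdots\int_{s_{d_X}}^{s_{d_X}+t_{d_X}}\psi_{j,k}\!\left(\tfrac{x-x_k}{\|x-x_k\|}\right)\|x-x_k\|^{\gamma}\,dx+o\bigl(\|(s-x_k,t)\|^{\gamma}\textstyle\prod_i t_i\bigr),
\]
using the uniform expansion in Assumption \ref{smoothness_assump_alpha} and continuity of $f_X$. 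For $j\in J(k)\setminus\tilde J(k)$, the drift is of strictly smaller order because $\gamma(j,k)<\gamma$, so the scaled infimum for these components goes to zero rather than a nontrivial limit; this is why $\tilde J(k)$ rather than $J(k)$ appears in the limiting distribution.

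Next I would repeat the balance-of-orders heuristic from Section \ref{overview_sec} with the new exponent: $\mathrm{sd}((E_n-E)m_j(W_i,\theta)I(s<X_i<s+t))=O(\sqrt{\prod_i t_i/n})$, while the drift is bounded below by a multiple of $\underline\psi\,\|(s-x_k,t)\|^{\gamma}\prod_i t_i$, so the indices that can matter satisfy $\|(s-x_k,t)\|^{\gamma}\sqrt{\prod_i t_i}\le O(n^{-1/2})$. Taking $(s-x_k,t)$ of common order $h_n$ yields $h_n^{\gamma+d_X/2}=n^{-1/2}$, hence $h_n=n^{-1/(d_X+2\gamma)}$ and the overall rate $n^{-(d_X+\gamma)/(d_X+2\gamma)}$. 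I would then carry out a peeling argument over shells $\{(s,t):2^q h_n\le\|(s-x_k,t)\|\le 2^{q+1}h_n\}$ to show that the infimum over the complement of a large-$R$ neighborhood $\|(s-x_k,t)\|\le R h_n$ is negligible in probability, together with a separate argument for indices where $\|(s-x_k,t)\|$ is not small but $\prod_i t_i$ is, using a maximal inequality for the empirical process over rectangles together with Assumption \ref{bdd_y_assump_multi}, exactly as in the proof of Theorem \ref{inf_dist_thm_multi}.

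On the localized region I would rescale $s=x_k+h_n u$, $t=h_n v$. The covariance kernel of $n^{(d_X+\gamma)/(d_X+2\gamma)}(E_n-E)m_j(W_i,\theta)I(\,\cdot\,)$ after this rescaling depends only on overlap volumes and the conditional second moment at $x_k$, so by Assumption \ref{smoothness_assump_multi}(iv) and a triangular-array CLT it converges on compacts to the kernel of $\mathbb{G}_{P,x_k,j}$ (which is therefore the same as in Theorem \ref{inf_dist_thm_multi}); the scaled drift converges to $g_{P,x_k,j}(u,v)$ as stated, by change of variables and the expansion from step one. Continuity of the infimum functional (on functions that eventually grow to $+\infty$) combined with the tightness obtained from peeling yields convergence of $n^{(d_X+\gamma)/(d_X+2\gamma)}\inf_{s,t}E_n m_j(W_i,\theta)I(\,\cdot\,)$ to $Z_j$, and joint convergence across $k$ and $j$ follows because observations in disjoint neighborhoods of distinct $x_k$ are asymptotically independent. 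The $S$-transformed limit then follows from the continuous mapping theorem under Assumption \ref{S_assump}.

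For the continuity claim, Assumption \ref{inv_mat_assump} ensures that the Gaussian field $\mathbb{G}_{P,x_k}$ has a nondegenerate covariance among the coordinates in $\tilde J(k)$, so an anti-concentration argument on the infimum of Gaussian-plus-deterministic-drift delivers continuity of the distribution of each $Z_j$ and of their joint law; this is the direct analogue of the argument for Theorem \ref{abs_cont_thm} and carries over with $\gamma$ in place of $2$. Continuity of the distribution of $S(Z)$ then follows from Assumption \ref{abs_cont_S_assump}. The main obstacle I anticipate is the peeling/maximal-inequality step controlling indices where $\|(s-x_k,t)\|$ is not small but the rectangle $\{s<X_i<s+t\}$ is thin: the bracketing bound must beat the $\gamma$-dependent rate uniformly in the shell index $q$, and $\gamma$ enters nontrivially into the required chaining estimates, so tracking its dependence throughout the argument of Theorem \ref{inf_dist_thm_multi} is the delicate part.
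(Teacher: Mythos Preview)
Your proposal is correct and follows essentially the same route as the paper: localize at each $x_k$ with scaling $h_n=n^{-1/(d_X+2\gamma)}$, verify the drift converges to the new $g_{P,x_k,j}$ via Assumption \ref{smoothness_assump_alpha}, reuse the exponential tail bounds on $\mathbb{G}_n$ and $\mathbb{G}_P$ unchanged, and redo the peeling over shells (with the sub-partition by $\prod_i t_i$) using the lower bound $g\ge C\|(s,t)\|^\gamma\prod_i t_i$ in place of the quadratic one. One correction: for $j\in J(k)\setminus\tilde J(k)$ with $\gamma(j,k)<\gamma$, the drift is of strictly \emph{larger} order (since $\|x\|^{\gamma(j,k)}\gg\|x\|^\gamma$ near zero), which is precisely what makes the rate for that component faster and hence the $n^{(d_X+\gamma)/(d_X+2\gamma)}$-scaled infimum go to zero---your conclusion is right but the stated direction is inverted.
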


Theorem \ref{inf_dist_thm_alpha} can be used once Assumption
\ref{smoothness_assump_alpha} is known to hold for some $\gamma$, as long
as $\gamma$ can be estimated.  I treat this topic in the next section.
Theorem \ref{deriv_alpha_thm} gives primitive conditions for this to hold
for the case where $d_X=1$ that rely only on the smoothness of the
conditional mean.  The only additional condition needed to use this
theorem is to verify that the set $\mathcal{X}_0$ does not contain the
boundary points $\underline x$ and $\overline x$.  In fact, the
requirement in Theorems \ref{inf_dist_thm_multi} and
\ref{inf_dist_thm_alpha} that $\mathcal{X}_0$ not contain boundary points
could be relaxed, as long as the boundary is sufficiently smooth.  The
results will be similar as long as the density of $X_i$ is bounded away
from zero on its support, and cases where the density of $X_i$ converges
to zero smoothly near its support could be handled using a transormation
of the data \citep[see][for an example of this approach in a slightly
  different setting]{armstrong_weighted_2011}.
Alternatively, a pre-test can be done to see if the conditional mean is
bounded away from zero near the boundary of the support of $X_i$ so that
these results can be used as stated.

\section{Testing Rate of Convergence Conditions}\label{rate_test_sec}

The $n^{(d_X+2)/(d_X+4)}$ convergence derived in Section
\ref{inf_dist_sec} holds when the minimum of $\bar
m_j(\theta,x)=E(m_j(W_i,\theta)|X_i=x)$ is taken at a finite number of
points, each with a strictly positive definite second derivative matrix.
The results in Section \ref{inf_dist_alpha_sec} extend these results to
other shapes of the conditional mean near the contact set, which result in
different rates of convergence.
In contrast, if the minimum is taken on a positive probability set,
convergence will be at the slower $\sqrt{n}$ rate.  Under additional
conditions on the smoothness of $\bar m_j(\theta,x)$ as a function
of $x$, it is possible to test for the conditions that lead to the faster
convergence rates.  In this section, I describe two methods for testing
between these conditions.  In Section \ref{subsamp_rate_subsec}, I
describe tests that use a generic test for rates of convergence based on
subsampling proposed by \citet{politis_subsampling_1999}.
These tests are valid as long as the KS statistic converges to a
nondegenerate distribution at some polynomial rate, or converges more
slowly than some imposed rate, and the results in Section
\ref{inf_dist_alpha_sec} give primitive conditions for this.  In Section
\ref{deriv_rate_subsec}, I propose tests of Assumption
\ref{smoothness_assump_multi} based on estimating the second derivative
matrix of the conditional mean.

\subsection{Tests Based on Estimating the Rate of Converence Directly}
  \label{subsamp_rate_subsec}

The pre-tests proposed in this section mostly follow Chapter 8 of
\citet{politis_subsampling_1999}, using the results in Section
\ref{inf_dist_alpha_sec} to give primitive conditions under which the rate
of convergence will be well behaved so that these results can be applied,
with some modifications to accomodate the possibility that the statistic
may not converge at a polynomial rate if the rate is slow enough.
Following the notation of
\citet{politis_subsampling_1999}, define
\begin{align*}
L_{n,b}(x|\tau)\equiv \frac{1}{{n \choose b}}
  \sum_{|\mathcal{S}|=b} I(\tau_b
      [S(T_{\mathcal{S}}(\theta))-S(T_n(\theta))]\le x)
\end{align*}
for any sequence $\tau_n$, and define
\begin{align*}
L_{n,b}(x|1)\equiv \frac{1}{{n \choose b}}
  \sum_{|\mathcal{S}|=b}
    I(S(T_{\mathcal{S}}(\theta))-S(T_n(\theta))\le x).
\end{align*}
Let
\begin{align*}
L_{n,b}^{-1}(t|1)=\inf \{x|L_{n,b}(x|1)\ge t\}
\end{align*}
be the $t$th quantile of $L_{n,b}(x|1)$, and define $L_{n,b}^{-1}(t|\tau)$
similarly.  Note that $\tau_b L_{n,b}^{-1}(t|1)=L_{n,b}^{-1}(t|\tau)$.  If
$\tau_n$ is the true rate of convergence, $L_{n,b_1}^{-1}(t|\tau)$ and
$L_{n,b_2}^{-1}(t|\tau)$ both approximate the $t$th quantile of the
asymptotic distribution.  Thus, if $\tau_n=n^\beta$ for some $\beta$,
$b_1^\beta L_{n,b_1}^{-1}(t|1)$ and $b_1^\beta L_{n,b_1}^{-1}(t|1)$ should
be approximately equal, so that an estimator for $\beta$ can be formed by
choosing $\hat \beta$ to set these quantities equal.  Some calculation
gives
\begin{align}\label{rate_est_formula}
\hat \beta
  = (\log L_{n,b_2}^{-1}(t|1)) - \log L_{n,b_1}^{-1}(t|1))
    /(\log b_1-\log b_2).
\end{align}
This is a special case of the class of estimators described in
\citet{politis_subsampling_1999} which allow averaging of more than two
block sizes and more than one quantile (these estimators could be used
here as well).

Note that the estimate $L_{n,b}(x|\tau)$ centers the subsampling draws
around the KS statistic $S(T_n(\theta))$ rather than its limiting value,
$0$.  This is necessary for the rate of convergence estimate not to
diverge under fixed alternatives.  Once the rate of convergence is known
or estimated, either $L_{n,b}(x|\tau)$ or an uncentered version, defined
as
\begin{align*}
\tilde L_{n,b}(x|\tau)\equiv \frac{1}{{n \choose b}}
  \sum_{|\mathcal{S}|=b} I(\tau_b
      S(T_{\mathcal{S}}(\theta))\le x),
\end{align*}
can be used to estimate the null distribution of the scaled statistic.

The results in \citet{politis_subsampling_1999} show that subsampling with
the estimated rate of convergence $n^{\hat \beta}$ is valid as long as the
true rate of convergence is $n^\beta$ for some $\beta>0$.  However, this
will not always be the case for the estimators considered in this paper.
For example, under the conditions of Theorem \ref{deriv_alpha_thm}, the
rate of convergence will either be $n^{(1+\gamma)/(1+2\gamma)}$ for some
$\gamma<p$ (here, $d_X=1$), or the rate of convergence will be at least as
slow as $n^{(1+p)/(1+2p)}$, but may converge at a slower rate, or
oscillate between slower rates of convergence.  Even if Assumption
\ref{deriv_alpha_thm} holds for some $\gamma$ for $\theta$ on the boundary
of the identified set, the rate of convergence will be faster for $\theta$
on the interior of the identified set, where trying not to be conservative
typically has little payoff in terms of power against parameters outside
of the identified set.

To remedy these issues, I propose truncated
versions defined as follows.  For some
$1/2\le \underline \beta<\overline \beta<1$, let $\hat\beta$ be the
estimate given by (\ref{rate_est_formula}) for $b_1=n^{\chi_1}$ and
$b_2=n^{\chi_2}$ for some $1>\chi_1>\chi_2>0$, and let $\hat\beta_a$ be the
estimate given by (\ref{rate_est_formula}) for $b_2=n^{\chi_a}$ for some
$1>\chi_a>0$ and $b_1$ some fixed constant that does not change with the
sample size (if $L_{n,b_1}^{-1}(t|1))=0$, replace this with an arbitrary
positive constant in the formula for $\hat\beta_a$ so that $\hat\beta_a$
is well defined).
The test described in the theorem below uses
$\hat\beta_a$ to test whether the rate of convergence is slow enough that
the conservative rate $n^{1/2}$ should be used, and uses $\hat \beta$ to
estimate the rate of convergence otherwise, as long as it is not
implausibly large.  If the rate of convergence is estimated to be larger
than $\overline \beta$ (which, for large enough $\overline \beta$, will
typically only occur on the interior of the identified set), the estimate
is truncated to $\overline \beta$.
When the rate of convergence is only known to be either $n^\beta$ for some
$\beta\in [\underline\beta,\overline\beta]$, or either slower than
$n^{\underline\beta}$ or faster than $n^{\overline\beta}$,
this procedure
provides a conservative approach that is
still asymptotically exact when the exponent of the rate of convergence is
in $(\underline\beta,\overline\beta)$.

\begin{theorem}\label{rate_thm_alpha}
Suppose that Assumptions \ref{bdd_y_assump_multi},
\ref{S_assump} and \ref{abs_cont_S_assump} hold, and that $S$ is convex
and $E(m(W_i,\theta)m(W_i,\theta)'|X_i=x)$ is continuous and strictly
positive definite.
Suppose that, for some $\overline \gamma$, %
Assumptions
\ref{smoothness_assump_multi} and \ref{inv_mat_assump}
hold with part (ii) of Assumption \ref{smoothness_assump_multi} replaced
by Assumption \ref{smoothness_assump_alpha} for some
$\gamma\le\overline\gamma$, where the set $\mathcal{X}_0=\{x|\bar
m_j(\theta,x)=0 \text{ some $j$}\}$ may be empty, or, for some
$x_0\in\mathcal{X}_0$ such that $X_i$ has a continuous density in a
neighborhood of $x_0$ and $B<\infty$, $\bar m_j(\theta,x)\le
B\|x-x_0\|^{\gamma}$ for some $\gamma>\overline \gamma$ and some $j$.

Let $\overline\beta=(d_X+\underline\gamma)/(d_X+2\underline\gamma)$
for some $\underline\gamma<\overline\gamma$ and let
$\underline \beta=(d_X+\overline\gamma)/(d_X+2\overline\gamma)$.
Let $\hat\beta$, $\hat\beta_a$ and
be defined as above for
some $0<\chi_1<\chi_2<1$ and $0<\chi_a<1$.
Consider the following test.  If $\hat\beta_a\ge\underline\beta$,
reject if
$n^{\hat\beta \wedge \overline\beta}S(T_n(\theta))
  >L_{n,b}(1-\alpha|b^{\hat\beta \wedge \overline\beta})$
(or if $n^{\hat\beta \wedge \overline\beta}S(T_n(\theta))
  >\tilde L_{n,b}(1-\alpha|b^{\hat\beta \wedge \overline\beta})$)
where $b=n^{\chi_3}$ for some $0<\chi_3<1$.
If $\hat\beta_a<\underline\beta$, perform any (possibly conservative)
asymptotically level $\alpha$ test that compares
$n^{1/2}S(T_n(\theta))$ to a critical value that is bounded away from
zero.

Under these conditions, this test is asymptotically level $\alpha$.
If Assumption \ref{smoothness_assump_multi} holds with part (ii) of
Assumption \ref{smoothness_assump_multi} replaced by Assumption
\ref{smoothness_assump_alpha} for some
$\underline\gamma<\gamma<\overline\gamma$ and
$\mathcal{X}_0$ nonempty, this test will be asymptotically exact level
$\alpha$.

\end{theorem}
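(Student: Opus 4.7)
The plan is to partition the alternatives into three regimes based on the true rate of convergence of $S(T_n(\theta))$ and verify the claimed level in each. The backbone is Theorem \ref{inf_dist_thm_alpha}, which gives the nondegenerate continuous limit $S(Z)$ with rate $n^{\beta}$, $\beta = (d_X+\gamma)/(d_X+2\gamma)$, whenever Assumption \ref{smoothness_assump_alpha} holds, combined with the rate-estimation arguments of Chapter 8 of \citet{politis_subsampling_1999}. The key calculation is that under any data-generating process for which $b^\beta S(T_\mathcal{S}(\theta))$ has a nondegenerate continuous limit, subsampling gives $L_{n,b}^{-1}(t|1) = b^{-\beta} q_t (1 + o_P(1))$ with $q_t > 0$ the $t$-quantile of $S(Z)$, so that \eqref{rate_est_formula} yields $\hat\beta \stackrel{p}{\to} \beta$; the analogous calculation with $b_1$ fixed (using $\log L_{n,b_1}^{-1}(t|1) = O_P(1)$ because the $b_1$-subsampled statistic is bounded by Assumption \ref{bdd_y_assump_multi}) gives $\hat\beta_a \stackrel{p}{\to} \beta$.

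In Case A1 (Assumption \ref{smoothness_assump_alpha} with $\underline\gamma < \gamma < \overline\gamma$, $\mathcal{X}_0$ nonempty---the exact case), both estimators converge to $\beta \in (\underline\beta, \overline\beta)$, so the rate-based branch is triggered with probability tending to one, the truncation $\hat\beta \wedge \overline\beta$ is inactive in the limit, and standard subsampling with a consistent rate together with continuity of $S(Z)$ (from Theorem \ref{inf_dist_thm_alpha} under Assumptions \ref{inv_mat_assump} and \ref{abs_cont_S_assump}) delivers rejection probability exactly $\alpha$. In Case A2 (Assumption \ref{smoothness_assump_alpha} holds with $\gamma \le \underline\gamma$, so $\beta \ge \overline\beta$, or the faster regime when $\mathcal{X}_0$ is empty), $\hat\beta_a \stackrel{p}{\to} \beta \ge \overline\beta > \underline\beta$ still triggers the rate-based branch, but the truncation gives $\hat\beta \wedge \overline\beta \stackrel{p}{\to} \overline\beta$; then $n^{\overline\beta} S(T_n(\theta)) = O_P(n^{\overline\beta - \beta})$ while the subsampling critical value is of order $b^{\overline\beta - \beta}$, and $(n/b)^{\beta - \overline\beta} \to \infty$ (since $b/n \to 0$ and $\beta \ge \overline\beta$) forces the rejection probability to $0$ when $\beta > \overline\beta$ and to $\alpha$ when $\beta = \overline\beta$.

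Case B (the upper-bound alternative of the theorem) is the crux. Here the rate is at least as slow as $n^{\beta^\star}$ with $\beta^\star = (d_X+\gamma)/(d_X+2\gamma) < \underline\beta$, and may be substantially slower (as slow as $n^{1/2}$ if $\bar m_j$ vanishes on a positive-probability set). The goal is to show $\hat\beta_a < \underline\beta$ with probability tending to one so that the fallback $\sqrt{n}$ branch, level $\alpha$ by hypothesis, is invoked. I would establish a lower bound on $L_{n,b_2}^{-1}(t|1)$ via the balance argument of Section \ref{overview_sec}: plugging in a deterministic $(s_0,t_0)$ with $s_0 = x_0 - t_0/2$ and $\|t_0\|$ of order $b_2^{-1/(d_X+2\gamma)}$, and using both the upper bound $\bar m_j(\theta,x) \le B\|x-x_0\|^{\gamma}$ and a lower-tail bound for the empirical process on rectangles, shows that $S(T_\mathcal{S}(\theta)) \ge c\, b_2^{-\beta^\star}$ holds with probability bounded away from zero. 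This forces $L_{n,b_2}^{-1}(t|1) \ge c'\, b_2^{-\beta^\star}$ for a suitable fixed $t$, and hence $\hat\beta_a \le \beta^\star + o_P(1) < \underline\beta$ via \eqref{rate_est_formula}.

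The main obstacle is Case B, since the statistic may not have a polynomial rate and the generic consistency results for rate estimation in \citet{politis_subsampling_1999} do not apply directly. The workaround above uses only the one-sided bound implied by the hypothesis: the upper bound on $\bar m_j$ is sufficient to give a lower bound on the sampling variability of $S(T_\mathcal{S})$, which is enough to drive $\hat\beta_a$ below $\underline\beta$ without requiring a clean limit theorem. The remaining pieces---the $O_P(n^{-\beta})$ order of the centering term $S(T_n(\theta))$ appearing in $L_{n,b}^{-1}(t|\tau)$ when rates are polynomial, continuity of $S(Z)$ needed for exactness in Case A1, and conservativeness of the rate-based test in Case A2---follow from the tools assembled in Theorems \ref{inf_dist_thm_multi}--\ref{inf_dist_thm_alpha} together with Assumptions \ref{inv_mat_assump}, \ref{S_assump}, and \ref{abs_cont_S_assump}.
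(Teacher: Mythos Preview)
Your case decomposition and your Case~B strategy are in the same spirit as the paper's proof (the paper streamlines Case~B by defining $\tilde m_j(W_i,\theta)=m_j(W_i,\theta)+B\|X_i-x_0\|^\gamma-\bar m_j(\theta,X_i)\ge m_j(W_i,\theta)$, which satisfies Assumption~\ref{smoothness_assump_alpha} exactly, and then reads off the quantile lower bound from Theorem~\ref{inf_dist_thm_alpha} rather than arguing from a single plug-in $(s_0,t_0)$). But there is a genuine gap in your Case~A1. You assert that consistency $\hat\beta\stackrel{p}{\to}\beta$ is enough for subsampling at the estimated rate to be asymptotically exact. It is not: $n^{\hat\beta}=n^{\beta}\cdot n^{\hat\beta-\beta}$, and $n^{\hat\beta-\beta}\stackrel{p}{\to}1$ requires $\hat\beta-\beta=o_p((\log n)^{-1})$, not merely $o_p(1)$. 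Theorem~8.2.1 of \citet{politis_subsampling_1999} delivers this faster rate, but only if the limiting cdf of $S(Z)$ is \emph{strictly increasing} on its support, not just continuous. Proving strict increase is where the convexity assumption on $S$ enters: the paper's Lemma~\ref{inc_cdf_lemma} uses that $S(\inf_t \mathbb{H}_1(t),\ldots,\inf_t \mathbb{H}_{d_Y}(t))$ is a convex functional of a Gaussian element and invokes the log-concavity result of \citet{davydov_local_1998}. You state convexity of $S$ as a hypothesis but never use it, so this step is missing from your argument.

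Two smaller holes. First, you omit the boundary case $\gamma=\overline\gamma$, where $\beta=\underline\beta$, $\hat\beta_a\stackrel{p}{\to}\underline\beta$, and which branch is taken is genuinely random; the paper handles this by observing that the conservative $\sqrt n$ branch fails to reject with probability approaching one (since $n^{1/2}S(T_n(\theta))\stackrel{p}{\to}0$) while the rate-based branch is level~$\alpha$ by the Case~A1 argument, so either outcome is fine. Second, your treatment of $\mathcal{X}_0=\emptyset$ inside Case~A2 does not work as stated: there is no polynomial rate $\beta$ to which $\hat\beta_a$ converges (convergence of $S(T_n)$ is faster than any polynomial by Lemma~\ref{large_s_lemma2_multi}), and the subsampled statistic may be identically zero. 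The paper instead argues directly that $n^{\hat\beta\wedge\overline\beta}S(T_n(\theta))$ is of smaller order than the subsampling critical value $L_{n,b}^{-1}(1-\alpha|b^{\hat\beta\wedge\overline\beta})$ by comparing explicit rates ($n^{(\hat\beta\wedge\overline\beta)-1}(\log n)^2$ versus $b^{(\hat\beta\wedge\overline\beta)-1}\eta/2$ with $b=n^{\chi_3}$, $\chi_3<1$).
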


In the one dimensional case, the conditions of Theorem
\ref{rate_thm_alpha} follow immediately from smoothness assumptions on the
conditional mean by Theorem \ref{deriv_alpha_thm}.
As discussed above, the condition that the
minimum not be taken on the boundary of the support of $X_i$ could be
removed, or the result can be used as stated with a pre-test for this
condition.

\begin{theorem}
Suppose that $d_X=1$, Assumptions \ref{bdd_y_assump_multi},
\ref{S_assump} and \ref{abs_cont_S_assump} hold, and that $S$ is convex
and $E(m(W_i,\theta)m(W_i,\theta)'|X_i=x)$ is continuous and strictly
positive definite.  Suppose that
$\text{supp}(X_i)=[\underline x,\overline x]$ and that
$\bar m(\theta,x)$ is bounded away from zero near $\underline x$ and
$\overline x$ and has $p$ bounded derivatives.  Then the conditions of
Theorem \ref{rate_thm_alpha} hold for any $\overline \gamma<p$.
\end{theorem}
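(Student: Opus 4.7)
The plan is to verify the hypotheses of Theorem \ref{rate_thm_alpha} for an arbitrary $\overline\gamma<p$, using Theorem \ref{deriv_alpha_thm} to classify the shape of $\bar m(\theta,\cdot)$ near its minimum. The non-shape hypotheses transfer directly: Assumptions \ref{bdd_y_assump_multi}, \ref{S_assump} and \ref{abs_cont_S_assump}, the convexity of $S$, and continuity and strict positive definiteness of $E(m(W_i,\theta)m(W_i,\theta)'|X_i=x)$ are each assumed verbatim, and Assumption \ref{inv_mat_assump} follows immediately because the strict positive definiteness in particular holds at every $x_k\in\mathcal{X}_0$.

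Fix $\overline\gamma<p$. If $\min_j\inf_x \bar m_j(\theta,x)>0$, then $\mathcal{X}_0$ is empty and the first disjunct in the hypotheses of Theorem \ref{rate_thm_alpha} holds vacuously for any choice of $\gamma\in(0,\overline\gamma]$. Otherwise, since $\bar m(\theta,\cdot)$ is bounded away from zero near $\underline x$ and $\overline x$ by hypothesis, the contact set $\mathcal{X}_0$ is a closed subset of the open interval $(\underline x,\overline x)$, and Theorem \ref{deriv_alpha_thm} applies. In its case (a), Assumption \ref{smoothness_assump_alpha} holds with $\gamma=r$ for some integer $r<p$; the finiteness of $\mathcal{X}_0$ and the remaining parts of Assumption \ref{smoothness_assump_multi} (with part (ii) replaced by Assumption \ref{smoothness_assump_alpha}) are part of that conclusion. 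If $r\le\overline\gamma$, the first disjunct is satisfied directly. If $r>\overline\gamma$, the uniform approximation in Assumption \ref{smoothness_assump_alpha}, together with the boundedness $\psi_{j,k}\le\overline\psi$, yields a local bound $\bar m_j(\theta,x)\le B|x-x_k|^r$ near any $x_k\in\mathcal{X}_0$ with $j\in J(k)$, so the second disjunct is satisfied with this $\gamma=r$. In case (b) of Theorem \ref{deriv_alpha_thm}, $\bar m_j(\theta,x)\le B|x-x_0|^p$ for some $j$, some $x_0\in\mathrm{supp}(X_i)$, and finite $B$; continuity of $\bar m_j(\theta,\cdot)$ combined with the null-hypothesis inequality $\bar m_j(\theta,\cdot)\ge 0$ (which holds throughout $[\underline x,\overline x]$ since $X_i$ has full support there) forces $\bar m_j(\theta,x_0)=0$, so $x_0\in\mathcal{X}_0\subset(\underline x,\overline x)$, and $p>\overline\gamma$ delivers the second disjunct.

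The one remaining ingredient is the continuous density of $X_i$ near each $x_0$ required by the second disjunct and by Assumption \ref{smoothness_assump_multi}(iii) in the first; this is the standard background regularity also implicit in the hypotheses of Theorem \ref{deriv_alpha_thm}, and given the inclusion $\mathcal{X}_0\subset(\underline x,\overline x)$ established above, it suffices to have a continuous density on the interior of the support. I expect the main (but modest) obstacle to be the routing in the case analysis rather than any deep calculation: one must verify that the integer values of $\gamma$ produced by Theorem \ref{deriv_alpha_thm} that happen to lie strictly between $\overline\gamma$ and $p$ are correctly redirected from the first disjunct (which they fail since $\gamma>\overline\gamma$) to the second, via the upper bound extracted from Assumption \ref{smoothness_assump_alpha} itself, so that the conditions of Theorem \ref{rate_thm_alpha} come out verified uniformly in the admissible choice of $\overline\gamma$.
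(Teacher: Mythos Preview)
Your proposal is correct and follows essentially the same route as the paper, which does not give a separate proof but simply remarks (just before the theorem) that the conditions of Theorem~\ref{rate_thm_alpha} ``follow immediately from smoothness assumptions on the conditional mean by Theorem~\ref{deriv_alpha_thm}.'' Your case analysis---in particular the routing of integer values $r\in(\overline\gamma,p)$ to the second disjunct by extracting an upper bound from Assumption~\ref{smoothness_assump_alpha} and extending it globally via boundedness of $\bar m_j$ on the compact support---is a correct fleshing-out of that one-line justification, and your flagging of the continuous-density requirement as an implicit background assumption is accurate.
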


\subsection{Tests Based on Estimating the Second Derivative}
  \label{deriv_rate_subsec}

I make the following assumptions on the conditional mean and the
distribution of $X_i$.  These conditions are used to estimate the second
derivatives of $\bar m(\theta,x)= E(m_j(W_i,\theta)|X_i=x)$, and the
results are stated for local polynomial estimates.  The conditions and
results here are from \citet{ichimura_chapter_2007}.
Other nonparametric estimators of conditional means and their
derivatives and conditions for uniform convergence of such estimators
could be used instead.  The results in this section related to testing
Assumption \ref{smoothness_assump_multi} are stated for $m_j(W_i,\theta)$
for a fixed index $j$.  The consistency of a procedure that combines these
tests for each $j$ then follows from the consistency of the test for each
$j$.

\begin{assumption}\label{cond_mean_assump}
The third derivatives of
$\bar m_j(\theta,x)$
with respect to $x$ are Lipschitz continuous and uniformly bounded.
\end{assumption}

\begin{assumption}\label{x_assump}
$X_i$ has a uniformly continuous density $f_X$ such that, for some compact
set $D\in\mathbb{R}^d$, $\inf_{x\in D} f_X(x)>0$, and
$E(m_j(W_i,\theta)|X_i)$ is bounded away from zero outside of $D$.
\end{assumption}

\begin{assumption}\label{x_given_y_assump}
The conditional density of $X_i$ given $m_j(W_i,\theta)$ exists and is
uniformly bounded.
\end{assumption}

Note that Assumption \ref{x_given_y_assump} is on the density of $X_i$
given $m_j(W_i,\theta)$, and not the other way around, so that, for
example, count data for the dependent variable in an interval regression
is okay.

Let $\mathcal{X}_0^j$ be the set of minimizers of
$\bar m_j(\theta,x)$
if this function is less than or equal to $0$
for some $x$ and the empty set otherwise.  In order to test Assumption
\ref{smoothness_assump_multi}, I first note that, if the conditional mean
is smooth, the positive definiteness of the second derivative matrix on
the contact set will imply that the contact set is finite.  This reduces
the problem to determining whether the second derivative matrix is
positive definite on the set of minimizers of $\bar m_j(\theta,x)$, a
problem similar to testing local identification conditions in nonlinear
models \citep[see][]{wright_detecting_2003}.
I record this observation in the following lemma.

\begin{lemma}\label{min_set_lemma}
Under Assumptions \ref{cond_mean_assump} and \ref{x_assump}, if
the second derivative matrix of $E(m_j(W_i,\theta)|X_i=x)$ is strictly
positive definite on $\mathcal{X}_0^j$,
then $\mathcal{X}_0^j$ must be finite.
\end{lemma}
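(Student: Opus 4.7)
The plan is to split the argument into a trivial case and a main case, show each minimizer is isolated via a second-order Taylor expansion, and conclude by a compactness argument.

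First, I would dispose of the case where $\bar m_j(\theta,x)$ is strictly positive on the support of $X_i$; then $\mathcal{X}_0^j=\emptyset$ by definition, which is finite. In the remaining case, the infimum of $\bar m_j(\theta,\cdot)$ is at most zero and is attained. By Assumption \ref{x_assump}, $\bar m_j(\theta,x)$ is bounded away from zero outside the compact set $D$, so any global minimizer $x_0$ (with value $\le 0$) must lie in $D$, and in fact in the interior of the region where $\bar m_j(\theta,\cdot)$ falls below its lower bound outside $D$, so the first-order condition will apply at $x_0$.

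Second, I would show that each $x_0\in\mathcal{X}_0^j$ is an isolated point of $\mathcal{X}_0^j$. Assumption \ref{cond_mean_assump} gives Lipschitz third derivatives of $\bar m_j(\theta,\cdot)$, hence in particular $C^2$ smoothness. At an interior minimizer the gradient vanishes, and a second-order Taylor expansion yields
\begin{align*}
\bar m_j(\theta,x_0+h)-\bar m_j(\theta,x_0)=\tfrac{1}{2}h^{T}H_j(x_0)h+R(x_0,h),
\end{align*}
with $R(x_0,h)/\|h\|^2\to 0$ as $\|h\|\to 0$. Strict positive definiteness of $H_j(x_0)$ (the hypothesis of the lemma) gives a constant $c>0$ and a neighborhood of $x_0$ on which the left-hand side is at least $(c/2)\|h\|^2>0$ whenever $h\neq 0$. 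Thus $x_0$ is the unique element of $\mathcal{X}_0^j$ in this neighborhood.

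Finally, I would invoke compactness. The set $\mathcal{X}_0^j\subseteq D$ is bounded, and it is closed as the preimage of a singleton under the continuous function $\bar m_j(\theta,\cdot)$, hence compact. A compact subset of $\mathbb{R}^{d}$ all of whose points are isolated must be finite: otherwise Bolzano--Weierstrass would produce a convergent sequence of distinct points in the set with limit in the set (by closedness), violating isolation. The main obstacle, modest but worth flagging, is in the first step: one must be careful to check that each minimizer actually sits in the interior of the support of $X_i$ so that $\nabla\bar m_j(\theta,x_0)=0$; this is where Assumption \ref{x_assump} enters essentially, since it forces minimizers into $D$ and away from the region where $\bar m_j$ stays bounded away from zero.
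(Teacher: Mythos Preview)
Your proposal is correct and follows essentially the same approach as the paper's proof: both show that each point of $\mathcal{X}_0^j$ is isolated (via the positive definite Hessian at a local minimum) and then use compactness of $\mathcal{X}_0^j\subseteq D$ to conclude finiteness. The only cosmetic difference is that the paper extracts a finite subcover from the open neighborhoods $A(x)$, whereas you argue by Bolzano--Weierstrass that an infinite compact set would contain a non-isolated accumulation point; these are equivalent standard routes.
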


According to Lemma \ref{min_set_lemma}, once we know that the second
derivative matrix of $E(m_j(W_i,\theta)|X_i)$ is positive definite on the
set of minimizers $E(m_j(W_i,\theta)|X_i)$, the conditions of Theorem
\ref{inf_dist_thm_multi}
will hold.  This reduces the problem to testing the conditions of the
lemma. One simple way of doing this is to take a preliminary estimate of
$\mathcal{X}_0^j$ that contains this set with probability approaching one,
and then test whether the second derivative matrix of
$E(m_j(W_i,\theta)|X_i)$ is positive definite on this set.  In
what follows, I describe an approach based on local polynomial regression
estimates of the conditional mean and its second derivatives, but other
methods of estimating the conditional mean would work under appropriate
conditions.  The methods require knowledge of a set $D$ satisfying
Assumption \ref{x_assump}.
This set could be chosen with another preliminary test, an extension which
I do not pursue.

Under the conditions above, we can estimate
$\bar m_j(\theta,x)$
and its derivatives at a given point $x$ with a local second order
polynomial regression estimator defined as follows.  For a kernel function
$K$ and a bandwidth
parameter $h$, run a regression of $m_j(W_i,\theta)$ on a second order
polynomial of $X_i$, weighted by the distance of $X_i$ from $x$ by
$K((X-x)/h)$.  That is,
for each $j$ and any $x$,
define $\hat{\bar m}_j(\theta,x)$, $\hat\beta_j(x)$, and $\hat V_j(x)$ to
be the values of $m$, $\beta$, and $V$ that minimize
\begin{align*}
&E_n
\left\{\left[m_j(W_i,\theta)
-\left(m + (X_i-x)'\beta
+ \frac{1}{2} (X_i-x)'V(X_i-x)
\right)\right]^2
\times K((X_i-x)/h)\right\}.
\end{align*}
The pre-test uses $\hat{\bar m}_j(\theta,x)$ as an estimate of $\bar
m_j(\theta,x)$ and $\hat V_j(x)$ as an estimate of $V_j(x)$.

The following theorem, taken from
\citet[Theorem 4.1]{ichimura_chapter_2007},
gives rates of convergence for these estimates of the conditional mean and
its second derivatives that will be used to estimate $\mathcal{X}_0^j$ and
$V_j(x)$ as described above.  The theorem uses an additional assumption on
the kernel $K$.

\begin{assumption}\label{K_assump}
The kernel function $K$ is bounded, has compact support, and satisfies,
for some $C$ and for any $0\le j_1+\cdots +j_r\le 5$,
$|u_1^{j_1}\cdots u_r^{j_r}K(u)-v_1^{j_1}\cdots v_r^{j_r}K(v)|
\le C \|u-v\|$.
\end{assumption}

\begin{theorem}\label{local_poly_conv}
Under iid data and Assumptions \ref{bdd_y_assump_multi},
\ref{cond_mean_assump}, \ref{x_assump}, \ref{x_given_y_assump}, and
\ref{K_assump},
\begin{align*}
\sup_{x\in D} %
\left|\hat V_{j,rs}(x)
-V_{j,rs}(x)\right|
=\mathcal{O}_p((\log n/(nh^{d_X+4}))^{1/2})+\mathcal{O}_p(h)
\end{align*}
for all $r$ and $s$, where $V_{j,rs}$ is the $r,s$ element of $V_j$, and
\begin{align*}
\sup_{x\in D} %
\left|\hat{\bar m}_j(\theta,x)-\bar m_j(\theta,x)\right|
=\mathcal{O}_p((\log n/(nh^{d_X}))^{1/2})+\mathcal{O}_p(h^3).
\end{align*}
\end{theorem}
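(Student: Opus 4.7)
The plan is to verify that the hypotheses listed here are sufficient to invoke Theorem 4.1 of \citet{ichimura_chapter_2007} directly, and then record the two rate claims as specializations of that general result.

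First, I would line up the assumptions. Ichimura and Todd require: (a) bounded regressand; (b) a smooth conditional mean with enough derivatives to justify a second-order local polynomial (so that the bias is controlled via Taylor expansion to the degree fitted plus one); (c) a density for $X_i$ that is bounded and bounded away from zero uniformly on a compact set on which the sup-norm rate is claimed; (d) regularity of the conditional density of $X_i$ given the regressand (so as to obtain uniformly bounded conditional moments at boundary-like values of the dependent variable); and (e) a Lipschitz, compactly supported kernel. Assumption \ref{bdd_y_assump_multi} supplies (a); Assumption \ref{cond_mean_assump} supplies (b) at the level needed for a second-order polynomial with bias exponents $3$ for the mean and $1$ for the second derivative; Assumption \ref{x_assump} supplies (c) along with the compact $D$ on which the uniform statement is made, together with the fact that off $D$ the inequality is not binding so the estimator's behavior there is irrelevant for our later use; Assumption \ref{x_given_y_assump} supplies (d); and Assumption \ref{K_assump} supplies (e), with the Lipschitz condition on $u_1^{j_1}\cdots u_r^{j_r}K(u)$ up to total order $5$ being exactly what is needed to run the chaining argument for the sup-norm over $D$ for both $\hat{\bar m}_j$ and $\hat V_j$.

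With the correspondence established, I would then simply quote the rates from \citet[Theorem 4.1]{ichimura_chapter_2007}. The form of each rate is the standard bias-plus-stochastic decomposition for local polynomial regression fitted to degree two: the stochastic term at a point is of order $(nh^{d_X+2r})^{-1/2}$ for an $r$th partial derivative, and the uniform-over-$D$ upgrade costs a $\sqrt{\log n}$ factor through the standard covering/Bernstein argument (this is where Assumption \ref{K_assump} and the lower bound on $f_X$ are used). For $\hat{\bar m}_j$ (i.e.\ $r=0$), the Taylor remainder beyond order two has order $h^3$ under Lipschitz third derivatives, giving the $\mathcal{O}_p(h^3)$ bias. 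For $\hat V_{j,rs}$ (i.e.\ $r=2$), the same Lipschitz third-derivative bound translates into an $\mathcal{O}_p(h)$ bias, and the stochastic term has order $(\log n/(nh^{d_X+4}))^{1/2}$.

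The only real work beyond citation would be verifying that each component assumption as stated here is at least as strong as the corresponding hypothesis in Ichimura and Todd; the main place I would have to be careful is the boundary behavior, since their uniform statement is over a compact set on which the density is bounded away from zero. The fact that our Assumption \ref{x_assump} gives this on $D$ and also guarantees $E(m_j(W_i,\theta)|X_i)$ is bounded away from zero off $D$ means that when this theorem is later applied to locate $\mathcal{X}_0^j$ and test positive-definiteness of $V_j$, restricting attention to $D$ loses nothing. That is the only substantive check; the bias and variance orders themselves follow mechanically from a Taylor expansion of degree three combined with the standard sup-norm maximal inequality for weighted empirical processes indexed by $K((\cdot-x)/h)$ as $x$ ranges over $D$.
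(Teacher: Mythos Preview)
Your proposal is correct and matches the paper's treatment: the paper does not give an independent proof of this theorem but simply states that it is ``taken from \citet[Theorem 4.1]{ichimura_chapter_2007},'' so the only content is exactly the assumption-matching exercise you describe. Your careful lining up of Assumptions \ref{bdd_y_assump_multi}, \ref{cond_mean_assump}, \ref{x_assump}, \ref{x_given_y_assump}, and \ref{K_assump} with the hypotheses of Ichimura and Todd is, if anything, more explicit than what the paper provides.
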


For both the conditional mean and the derivative, the first term in the
asymptotic order of convergence is the variance term and the second is the
bias term.  The optimal choice of $h$ sets both of these to be the same
order, and is $h_n=(\log n/n)^{1/(d_X+6)}$ in both cases.  This gives a
$(\log n/n)^{1/(d_X+6)}$ rate of convergence for the second derivative, and
a $(\log n/n)^{3/(d_X+6)}$ rate of convergence for the conditional mean.
However, any choice of $h$ such that both terms go to zero can be used.

In
order to test the conditions of Lemma \ref{min_set_lemma}, we can use the
following procedure.  For some sequence $a_n$ growing to infinity such
that $a_n [(\log n/(nh^{d_X}))^{1/2}\vee h^3]$ converges to zero, let
$\hat{\mathcal{X}}_0^j
=\{x\in D|\hat{\bar m}_j(\theta,x)-(\inf_{x'\in D} \hat{\bar
  m}_j(\theta,x')\wedge 0)|
\le [a_n (\log n/(nh^{d_X}))^{1/2}\vee h^3]\}$.
By Theorem \ref{local_poly_conv}, $\hat{\mathcal{X}}_0^j$ will contain
$\mathcal{X}_0^j$ with probability approaching one.
Thus, if we can determine that $V_j(x)$ is positive definite on
$\hat{\mathcal{X}}_0^j$, then, asymptotically, we will know that $V_j(x)$ is
positive definite on $\mathcal{X}_0^j$.
Note that $\hat{\mathcal{X}}_0^j$ is an estimate of the set of
minimizers of $\overline m_j(x,\theta)$ over $x$ if the moment
inequality binds or fails to hold, and is eventually equal to the
empty set if the moment inequality is slack.

Since the determinant is a differentiable map from $\mathbb{R}^{d_X^2}$ to
$\mathbb{R}$, the
$\mathcal{O}_p((\log n/(nh^{d_X+4}))^{1/2})+\mathcal{O}_p(h)$ rate of
uniform convergence for $\hat V_j(x)$ translates to the same (or faster)
rate of convergence for $\det \hat V_j(x)$.  If, for some $x_0\in
\mathcal{X}_0^j$, $V_j(x_0)$ is not positive definite, then $V_j(x_0)$
will be singular (the second derivative matrix at an interior minimum must
be positive semidefinite if the second derivatives are continuous in a
neighborhood of $x_0$), and $\det V_j(x_0)$ will be zero.  Thus,
$\inf_{x\in \hat{\mathcal{X}}_0^j} \det \hat V_j(x)\le \det \hat V_j(x_0)
=\mathcal{O}_p((\log n/(nh^{d_X+4}))^{1/2})+\mathcal{O}_p(h)$
where the inequality holds with probability approaching one.
Thus, letting $b_n$ be any sequence going to infinity such that $b_n
[(\log n/(nh^{d_X+4}))^{1/2}\vee h]$ converges to zero, if $V_j(x_0)$ is not
positive definite for some $x_0\in\mathcal{X}_0^j$, we will have
$\inf_{x\in \hat{\mathcal{X}}_0^j} \det \hat V_j(x)\le b_n
[(\log n/(nh^{d_X+4}))^{1/2}\vee h]$ with probability approaching one
(actually, since we are only dealing with the point $x_0$, we can use
results for pointwise convergence of the second derivative of the
conditional mean, so the $\log n$ term can be replaced by a constant,
but I use the uniform convergence results for simplicity).

Now, suppose $V_j(x)$ is positive definite for all $x\in\mathcal{X}_0^j$.  By
Lemma \ref{min_set_lemma}, we will have, for some $B>0$, $\det V_j(x)\ge B$
for all $x\in\mathcal{X}_0^j$.  By continuity of $V_j(x)$, we will
also have, for
some $\varepsilon>0$, $\det V_j(x)\ge B/2$ for all $x\in
{\mathcal{X}_0^j}^\varepsilon$ where
${\mathcal{X}_0^j}^\varepsilon=\{x|\inf_{x'\in\mathcal{X}_0^j} \|x-x'\|\le
\varepsilon\}$ is the $\varepsilon$-expansion of $\mathcal{X}_0^j$.
Since $\hat{\mathcal{X}}_0^j \subseteq {\mathcal{X}_0^j}^\varepsilon$ with
probability approaching one, we will also have
$\inf_{x\in \hat{\mathcal{X}}_0^j} \det V_j(x)\ge B/2$
with probability approaching one.  Since $\det \hat V_j(x)\to \det V_j(x)$
uniformly over $D$, we will then have
$\inf_{x\in \hat{\mathcal{X}}_0^j} \det \hat V_j(x)
\ge b_n [(\log n/(nh^{d_X+4}))^{1/2}\vee h]$
with probability approaching one.

This gives the following theorem.
\begin{theorem}\label{pre_test_thm}
Let $\hat V_j(x)$ and
$\hat{\bar m}_j(\theta,x)$
be the local second order polynomial
estimates defined with some kernel $K$ with $h$ such that the rate of
convergence terms in Theorem \ref{local_poly_conv} go to zero.  Let
$\hat{\mathcal{X}}_0^j$ be defined as above with $a_n [(\log
n/(nh^{d_X}))^{1/2}\vee  h^3]$ going to zero and $a_n$ going to infinity, and
let $b_n$ be any sequence going to infinity such that $b_n
[(\log n/(nh^{d_X+4}))^{1/2}\vee h]$ goes to zero.
Suppose that Assumptions
\ref{bdd_y_assump_multi}, \ref{cond_mean_assump}, \ref{x_assump},
\ref{x_given_y_assump}, and \ref{K_assump}, hold, and the null
hypothesis holds with $E(m(W_i,\theta)m(W_i,\theta)'|X_i=x)$
continuous and the data are iid.  Then, if Assumption
\ref{smoothness_assump_multi} holds, we will have
$\inf_{x\in\hat{\mathcal{X}}_0^j} \det \hat V_j(x) > b_n [(\log
  n/(nh^{d_X+4}))^{1/2}\vee h]$ for each $j$
with probability approaching one.  If Assumption
\ref{smoothness_assump_multi} does not hold, we will have
$\inf_{x\in\hat{\mathcal{X}}_0^j} \det \hat V_j(x) \le b_n [(\log
  n/(nh^{d_X+4}))^{1/2}\vee h]$ for some $j$ with probability
approaching one.
\end{theorem}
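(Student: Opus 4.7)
The plan is to combine three ingredients: the uniform rates of convergence supplied by Theorem \ref{local_poly_conv}, the structural information about $\mathcal{X}_0^j$ provided by Lemma \ref{min_set_lemma}, and an elementary ``sandwich'' showing that $\mathcal{X}_0^j \subseteq \hat{\mathcal{X}}_0^j \subseteq (\mathcal{X}_0^j)^{\varepsilon}$ holds with probability approaching one for any fixed $\varepsilon>0$. Throughout I would write $r_n = (\log n/(nh^{d_X+4}))^{1/2}\vee h$ and $s_n = (\log n/(nh^{d_X}))^{1/2}\vee h^3$, so that Theorem \ref{local_poly_conv} gives $\sup_{x\in D}\|\hat V_j(x) - V_j(x)\|=\mathcal{O}_p(r_n)$ and $\sup_{x\in D}|\hat{\bar m}_j(\theta,x)-\bar m_j(\theta,x)|=\mathcal{O}_p(s_n)$. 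Because the determinant is a polynomial in the matrix entries and hence Lipschitz on bounded sets, and because $V_j$ is continuous on the compact set $D$ (so $\hat V_j$ is uniformly bounded on $D$ in probability), this yields $\sup_{x\in D}|\det \hat V_j(x)-\det V_j(x)|=\mathcal{O}_p(r_n)$.

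For the sandwich, the left inclusion follows because, for any $x\in\mathcal{X}_0^j$, $\bar m_j(\theta,x)=0$, hence $\hat{\bar m}_j(\theta,x)=\mathcal{O}_p(s_n)$, while under the null $\inf_{x'\in D}\hat{\bar m}_j(\theta,x')\wedge 0\in[-\mathcal{O}_p(s_n),0]$ (since $\inf\bar m_j\ge 0$); thus the defining inequality holds with probability approaching one because $a_n\to\infty$. For the right inclusion, fix $\varepsilon>0$; on $D\setminus (\mathcal{X}_0^j)^\varepsilon$ continuity and compactness give $\bar m_j(\theta,x)\ge\delta$ for some $\delta>0$ (if $\mathcal{X}_0^j=\emptyset$, take $\delta=\inf_D\bar m_j>0$, in which case $\hat{\mathcal{X}}_0^j=\emptyset$ wpa $1$ and the positive conclusion is trivial because the infimum is $+\infty$). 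Therefore $\hat{\bar m}_j(\theta,x)\ge\delta/2$ uniformly on this set wpa $1$, while $\inf\hat{\bar m}_j\wedge 0\le 0$, forcing the defining inequality to fail there since $a_n s_n\to 0$.

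With the sandwich established, the two directions are short. If Assumption \ref{smoothness_assump_multi} holds, Lemma \ref{min_set_lemma} ensures $\mathcal{X}_0^j$ is finite and $V_j$ is positive definite on it, so continuity yields $\varepsilon>0$ and $B>0$ with $\det V_j(x)\ge B$ on $(\mathcal{X}_0^j)^\varepsilon$; combining $\hat{\mathcal{X}}_0^j\subseteq(\mathcal{X}_0^j)^\varepsilon$ wpa $1$ with $\det\hat V_j=\det V_j+\mathcal{O}_p(r_n)$ gives $\inf_{x\in\hat{\mathcal{X}}_0^j}\det\hat V_j(x)\ge B-\mathcal{O}_p(r_n)\ge B/2$ wpa $1$, which exceeds $b_n r_n$ since $b_n r_n\to 0$. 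Conversely, if Assumption \ref{smoothness_assump_multi} fails, then for some $j$ either $V_j$ fails to be positive definite at some $x_0\in\mathcal{X}_0^j$ directly, or $\mathcal{X}_0^j$ is infinite and the contrapositive of Lemma \ref{min_set_lemma} produces such an $x_0$; because $x_0$ is an interior minimizer of the smooth function $\bar m_j(\theta,\cdot)$, $V_j(x_0)$ must be positive semidefinite and singular, giving $\det V_j(x_0)=0$. Since $x_0\in\mathcal{X}_0^j\subseteq\hat{\mathcal{X}}_0^j$ wpa $1$, one obtains $\inf_{x\in\hat{\mathcal{X}}_0^j}\det\hat V_j(x)\le\det\hat V_j(x_0)=\mathcal{O}_p(r_n)\le b_n r_n$ wpa $1$ since $b_n\to\infty$.

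The main obstacle is getting the sandwich right, and in particular understanding the role of the ``$\wedge 0$'' centering in the definition of $\hat{\mathcal{X}}_0^j$: it prevents a strictly positive sample minimum from being used as the reference value, which would otherwise shrink $\hat{\mathcal{X}}_0^j$ to a neighborhood of $\arg\min\hat{\bar m}_j$ rather than to a neighborhood of the true contact set. A secondary subtlety is ensuring minimizers lie in the interior of $\mathrm{supp}(X_i)$ so that the ``second derivative is positive semidefinite at an interior min'' step is available; this is built into the use of the set $D$ from Assumption \ref{x_assump} and was explicitly flagged in the discussion preceding the theorem.
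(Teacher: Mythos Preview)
Your proposal is correct and follows essentially the same argument as the paper, which is developed in the discussion immediately preceding the theorem statement: the uniform rates from Theorem~\ref{local_poly_conv}, the sandwich $\mathcal{X}_0^j\subseteq\hat{\mathcal{X}}_0^j\subseteq(\mathcal{X}_0^j)^\varepsilon$, Lemma~\ref{min_set_lemma} for the positive-definite case, and the positive-semidefiniteness of $V_j$ at an interior minimizer for the failure case. If anything, you are slightly more explicit than the paper about the role of the $\wedge 0$ centering and the empty-contact-set edge case.
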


The purpose of this test of Assumption \ref{smoothness_assump_multi} is as
a preliminary consistent test in a procedure that uses the asymptotic
approximation in Theorem \ref{inf_dist_thm_multi} if the test finds
evidence in favor of Assumption \ref{smoothness_assump_multi}, and uses
the methods that are robust to
different types of contact sets, but possibly conservative, such as those
described in \citet{andrews_inference_2009}, otherwise.  It follows from
Theorem \ref{pre_test_thm} that such a procedure will have the correct size
asymptotically.  In the statement of the following theorem, it is
understood that Assumptions \ref{inv_mat_assump} and
\ref{avar_est_assump}, which refer to objects in Assumption
\ref{smoothness_assump_multi}, do not need to hold if the data generating
process is such that Assumption \ref{smoothness_assump_multi} does not
hold.

\begin{theorem}
Consider the following test.  For some $b_n\to \infty$ and $h\to 0$
satisfying the conditions of Theorem \ref{pre_test_thm}, perform a
pre-test that finds evidence in favor of Assumption
\ref{smoothness_assump_multi}
iff. $\inf_{x\in\hat{\mathcal{X}}_0} \det \hat V_j(x) \ge b_n
[(\log n/(nh^{d_X+4}))^{1/2}\vee h]$ for each $j$.  If
$\hat{\mathcal{X}}_0=\emptyset$, do not
reject the null hypothesis that $\theta\in\Theta_0$.
If $\inf_{x\in\hat{\mathcal{X}}_0} \det \hat
V_j(x) > b_n [(\log n/(nh^{d_X+4}))^{1/2}\vee h]$ for each $j$, reject
the null hypothesis that $\theta\in\Theta_0$ if
$n^{(d_X+2)/(d_X+4)}S(T_n(\theta))>\hat q_{1-\alpha}$ where $\hat
q_{1-\alpha}$ is
an estimate of the $1-\alpha$ quantile of the distribution of $S(Z)$
formed using one of the methods in Section \ref{inference_sec}.
If $\inf_{x\in\hat{\mathcal{X}}_0} \det \hat
V_j(x) \le b_n [(\log n/(nh^{d_X+4}))^{1/2}\vee h]$ for some $j$,
perform any (possibly conservative) asymptotically level $\alpha$ test.
Suppose that Assumptions
\ref{bdd_y_assump_multi}, \ref{S_assump}, \ref{inv_mat_assump},
\ref{abs_cont_S_assump}, \ref{cond_mean_assump}, \ref{x_assump},
\ref{x_given_y_assump}, and \ref{K_assump} hold,
$E(m(W_i,\theta)m(W_i,\theta)'|X_i=x)$ is continuous, and the data are
iid.
Then this provides an asymptotically level $\alpha$ test of
$\theta\in\Theta_0$ if the subsampling procedure is used or if Assumption
\ref{avar_est_assump} holds and the procedure based on estimating the
asymptotic distribution directly is used.  If Assumption
\ref{smoothness_assump_multi} holds, this test is asymptotically exact.
\end{theorem}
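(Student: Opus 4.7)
The plan is to decompose the rejection event according to which of the three pre-test branches is taken, and to argue that (a) when Assumption \ref{smoothness_assump_multi} holds with nonempty $\mathcal{X}_0$, the branch based on the asymptotic quantile $\hat q_{1-\alpha}$ is selected with probability approaching one and delivers exact asymptotic level $\alpha$; and (b) under any other null data-generating process, the test is conservative. The three branches are (i) $\hat{\mathcal{X}}_0 = \emptyset$ and the test does not reject, (ii) the pre-test finds evidence in favor of Assumption \ref{smoothness_assump_multi} and the test uses $\hat q_{1-\alpha}$, and (iii) the pre-test fails for some $j$ and the test uses the conservative backup.

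First I would handle the case where Assumption \ref{smoothness_assump_multi} holds with $\mathcal{X}_0$ nonempty (the convention adopted after Theorem \ref{inf_dist_thm_multi}). By Theorem \ref{pre_test_thm}, $\inf_{x \in \hat{\mathcal{X}}_0^j} \det \hat V_j(x) > b_n [(\log n/(nh^{d_X+4}))^{1/2} \vee h]$ for every $j$ with probability approaching one, ruling out branch (iii). Moreover, the uniform convergence rate for $\hat{\bar m}_j(\theta,\cdot)$ in Theorem \ref{local_poly_conv}, together with the choice of threshold $a_n [(\log n/(nh^{d_X}))^{1/2} \vee h^3]$ (with $a_n \to \infty$), implies $\mathcal{X}_0^j \subseteq \hat{\mathcal{X}}_0^j$ eventually, so $\hat{\mathcal{X}}_0$ is nonempty with probability approaching one, ruling out branch (i). The test therefore reduces asymptotically to rejecting iff $n^{(d_X+2)/(d_X+4)} S(T_n(\theta)) > \hat q_{1-\alpha}$. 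Corollary \ref{Z_hat_cor} gives rejection probability converging to $\alpha$ when $\hat q_{1-\alpha}$ is obtained by direct estimation under Assumption \ref{avar_est_assump}, and Theorem \ref{subsamp_thm} gives the same conclusion under subsampling; this establishes asymptotic exactness.

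Next I would handle the case where Assumption \ref{smoothness_assump_multi} fails under the null. If the contact set is empty (so $\theta$ lies in the interior of the identified set), Assumption \ref{x_assump} and Theorem \ref{local_poly_conv} imply that $\hat{\bar m}_j(\theta,\cdot)$ is uniformly bounded away from zero on $D$ with probability approaching one, so $\hat{\mathcal{X}}_0$ is empty, branch (i) is selected, and the rejection probability tends to $0 \le \alpha$. If the contact set is nonempty but either $V_j$ is singular at some contact point or $\mathcal{X}_0^j$ is infinite, Theorem \ref{pre_test_thm} forces branch (iii) with probability approaching one, and the rejection probability is bounded asymptotically by $\alpha$ because the conservative backup is asymptotically level $\alpha$ by hypothesis.

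The main delicate step is verifying the containment $\mathcal{X}_0^j \subseteq \hat{\mathcal{X}}_0^j$ used in the first case: on $\mathcal{X}_0^j$ one has $\bar m_j(\theta,x) - \inf_{x'} \bar m_j(\theta,x') = 0$, and the deviation $|\hat{\bar m}_j(\theta,x) - \bar m_j(\theta,x)|$ is $\mathcal{O}_p((\log n/(nh^{d_X}))^{1/2}) + \mathcal{O}_p(h^3)$ uniformly in $x$ by Theorem \ref{local_poly_conv}, so for $a_n \to \infty$ the oracle deviation is strictly dominated by the threshold with probability approaching one. Pasting the three cases together shows that the asymptotic rejection probability is at most $\alpha$ under every null data-generating process and equals $\alpha$ whenever Assumption \ref{smoothness_assump_multi} holds, which is the desired conclusion.
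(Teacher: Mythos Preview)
Your proposal is correct and follows essentially the same approach as the paper, which does not provide a separate proof for this theorem but notes in the body that the result ``follows from Theorem~\ref{pre_test_thm}'' combined with the validity results for the quantile estimates in Section~\ref{inference_sec}. Your case decomposition into the three pre-test branches, together with the invocations of Theorem~\ref{pre_test_thm}, Theorem~\ref{subsamp_thm}, and Corollary~\ref{Z_hat_cor}, is exactly the argument the paper has in mind; the only detail you add beyond the paper's sketch is the explicit verification that $\hat{\mathcal{X}}_0$ is eventually empty when the conditional mean is bounded away from zero, which the paper records informally just before Theorem~\ref{pre_test_thm}.
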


The estimates used for this pre-test can also be used to construct
estimates of the quantities in Assumption \ref{avar_est_assump} that
satisfy the consistency requirements of this assumption.  Suppose
that we have estimates $\hat M(x)$, $\hat f_X(x)$, and $\hat V(x)$ of
$E(m(W_i,\theta)m(W_i,\theta)'|X=x)$, $f_X(x)$, and $V(x)$ that are
consistent uniformly over $x$ in a neighborhood of $\mathcal{X}_0$.  Then,
if we have estimates of
$\mathcal{X}_0$ and $J(k)$, we can estimate the quantities in Assumption
\ref{avar_est_assump} using $\hat M_k(x_k)$, $\hat f_X(x_k)$, and $\hat
V(x_k)$ for each $x_k$ in the estimate of $\mathcal{X}_0$, where $\hat
M_k(x_k)$ is a sparse version of $\hat M(x_k)$ with elements with indices
not in the estimate of $J(k)$ set to zero.

The estimate $\hat{\mathcal{X}}_0$ contains infinitely many points, so it
will not work for this purpose.  Instead, define the estimate
$\tilde{\mathcal{X}}_0$ of $\mathcal{X}_0$ and the estimate $\hat J(k)$
of $J(k)$ as follows.  Let $a_n$ be as in Theorem \ref{pre_test_thm},
and let $\varepsilon_n^2\to 0$ more slowly than
$a_n [(\log n/(nh^{d_X}))^{1/2}\vee h^3]$.  Let $\hat \ell_j$ be the smallest
number such that
$\hat{\mathcal{X}}_0^j\subseteq \cup_{k=1}^{\hat
\ell_j}B_{\varepsilon_n}(\hat x_{j,k})$ for some $\hat x_{j,1},\ldots,\hat
x_{j,\hat \ell_j}$.  Define
an equivalence relation $\sim$ on the set $\{(j,k)|1\le j\le d_Y, 1\le
k\le \hat \ell_j\}$ by $(j,k)\sim (j',k')$ iff. there is a sequence
$(j,k)=(j_1,k_1),(j_2,k_2),\ldots,(j_r,k_r)=(j',k')$ such that
$B_{\varepsilon_n}(\hat x_{j_s,k_s})
\cap B_{\varepsilon_n}(\hat x_{j_{s+1},k_{s+1}})\ne \emptyset$ for $s$
from $1$ to $r-1$.  Let $\hat \ell$ be the number of equivalence classes,
and, for each equivalence class, pick exactly one $(j,k)$ in the
equivalence class and let $\tilde x_r=\hat x_{j,k}$ for some $r$ between $1$
and $\hat \ell$. Define the estimate of the set $\mathcal{X}_0$ to be
$\tilde{\mathcal{X}}_0\equiv \{\tilde x_1,\ldots,\tilde x_{\hat \ell}\}$, and
define the estimate $\hat J(r)$ for $r$ from $1$ to $\hat \ell$ to be the
set of indices $j$ for which some $(j,k)$ is in the same equivalence class
as $\tilde x_r$.

Although these estimates of $\mathcal{X}_0$, $\ell$, and
$J(1),\ldots,J(\ell)$ require some
cumbersome notation to define, the intuition behind them is simple.
Starting with the initial estimates $\hat{\mathcal{X}}_j$, turn these sets
into discrete sets of points by taking the centers of balls that contain
the sets $\hat{\mathcal{X}}_j$ and converge at a slower rate.  This gives
estimates of the points at which the conditional moment inequality indexed
by $j$ binds for each $j$, but to estimate the asymptotic distribution in
Theorem \ref{inf_dist_thm_multi}, we also need to determine which
components, if any, of $\bar m(\theta,x)$ bind at the same value of $x$.
The procedure described above does this by testing whether the balls used
to form the estimated contact points for each index of $\bar m(\theta,x)$
intersect across indices.

The following theorem shows that this is a consistent
estimate of the set $\mathcal{X}_0$ and the indices of the binding
moments.

\begin{theorem}\label{X0_est_thm}
Suppose that Assumptions
\ref{smoothness_assump_multi}, \ref{cond_mean_assump}, \ref{x_assump},
\ref{x_given_y_assump}, and \ref{K_assump} hold.
For the estimates $\tilde{\mathcal{X}}_0$, $\hat \ell$ and $\hat J(r)$,
$\hat \ell=\ell$ with probability approaching one and, for some labeling
of the indices of $\tilde x_1,\ldots,\tilde x_{\hat \ell}$ we have, for
$k$ from $1$ to $\ell$, $\tilde x_k\stackrel{p}{\to} x_k$ and, with
probability approaching one, $\hat J(k)=J(k)$.
\end{theorem}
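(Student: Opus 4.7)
The plan is to deduce the theorem from two ingredients. First, a localization result showing each $\hat{\mathcal{X}}_0^j$ concentrates within balls of radius $\rho_n = \mathcal{O}_p(\delta_n^{1/2})$ around the true binding points, where $\delta_n = a_n[(\log n/(nh^{d_X}))^{1/2}\vee h^3]$. Second, the rate ordering $\rho_n \ll \varepsilon_n \to 0$, guaranteed by the hypothesis $\varepsilon_n^2/\delta_n \to \infty$, which will ensure that the $\varepsilon_n$-covering and chain-equivalence construction correctly resolves these clusters into distinct contact points and recovers the index sets $J(k)$.

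For the localization step, Theorem \ref{local_poly_conv} gives $\sup_{x\in D}|\hat{\bar m}_j(\theta,x)-\bar m_j(\theta,x)|=o_p(\delta_n)$ since $a_n\to\infty$. If $\mathcal{X}_0^j=\emptyset$, then $\inf_x \bar m_j>0$ by Assumption \ref{smoothness_assump_multi}(i) together with compactness of $D$, so $\hat{\bar m}_j>\delta_n$ uniformly with probability approaching one and $\hat{\mathcal{X}}_0^j=\emptyset$. If $\mathcal{X}_0^j\ne\emptyset$, then $\inf_x \bar m_j=0$ and $\inf_x \hat{\bar m}_j\wedge 0=o_p(\delta_n)$, so any $x\in\hat{\mathcal{X}}_0^j$ obeys $\bar m_j(\theta,x)\le 2\delta_n$ with probability approaching one. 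Assumption \ref{smoothness_assump_multi}(i) bounds $\bar m_j$ away from zero outside $\bigcup_{k:j\in J(k)}B(x_k)$, while (ii) gives $\bar m_j(\theta,x)\ge c\|x-x_k\|^2$ on each $B(x_k)$ for $j\in J(k)$ and some $c>0$. Together these force $\hat{\mathcal{X}}_0^j\subseteq \bigcup_{k:j\in J(k)}B_{\rho_n}(x_k)$ with $\rho_n=\mathcal{O}_p(\delta_n^{1/2})$, while each such $x_k$ itself lies in $\hat{\mathcal{X}}_0^j$ with probability approaching one because $\hat{\bar m}_j(\theta,x_k)=o_p(\delta_n)$.

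For the cluster-resolution and equivalence step, distinct points in $\mathcal{X}_0$ lie at a fixed positive distance $d_{\min}>0$, while $\rho_n/\varepsilon_n\to 0$ and $\varepsilon_n\to 0$. Hence, once $4\varepsilon_n<d_{\min}$, a single $\varepsilon_n$-ball suffices to cover the $\rho_n$-cluster of $\hat{\mathcal{X}}_0^j$ around each $x_k$ with $j\in J(k)$, and the covering requires exactly one ball per cluster, yielding $\hat\ell_j=|\{k:j\in J(k)\}|$ with probability approaching one. After relabeling the $\hat x_{j,k}$, one has $\|\hat x_{j,k}-x_k\|\le \rho_n+\varepsilon_n=o_p(1)$. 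Two balls $B_{\varepsilon_n}(\hat x_{j,k})$ and $B_{\varepsilon_n}(\hat x_{j',k'})$ intersect iff $\|\hat x_{j,k}-\hat x_{j',k'}\|\le 2\varepsilon_n$, which, given the localization, happens iff both centers sit near the same $x_r\in\mathcal{X}_0$. The equivalence classes are thus in bijection with $\mathcal{X}_0$, so $\hat\ell=\ell$ and one may choose the labeling so that $\tilde x_r\stackrel{p}{\to}x_r$. Moreover, the class associated with $x_r$ contains exactly those pairs $(j,k)$ for which $j\in J(r)$, giving $\hat J(r)=J(r)$ with probability approaching one.

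The main difficulty is the quadratic-rate localization in the first step: one must combine the global strict positivity of $\bar m_j$ outside the contact neighborhoods with the quadratic lower bound inside them, and translate the sampling error $o_p(\delta_n)$ in the conditional-mean estimator into an $\mathcal{O}_p(\delta_n^{1/2})$ bound on distances from the $x_k$'s. Once this is in hand, the rest is a geometric argument that uses only the rate separation $\rho_n\ll \varepsilon_n\ll d_{\min}$ and the finiteness of $\mathcal{X}_0$, with the only subtlety being the need to handle $\mathcal{X}_0^j=\emptyset$ separately so that no spurious clusters appear for non-binding indices.
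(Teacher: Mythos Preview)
Your proposal is correct and follows essentially the same approach as the paper. The paper isolates the localization step as a separate lemma showing $\mathcal{X}_0^j\subseteq \hat{\mathcal{X}}_0^j\subseteq \bigcup_{k:\,j\in J(k)} B_{\varepsilon_n}(x_k)$ via the same quadratic lower bound $\bar m_j(\theta,x)\ge C\min_k\|x-x_k\|^2$ that you invoke, and then carries out the same cluster-counting and equivalence-class argument using the separation $\varepsilon_n\ll d_{\min}$; your explicit introduction of the intermediate radius $\rho_n=\mathcal{O}_p(\delta_n^{1/2})$ and your separate handling of the case $\mathcal{X}_0^j=\emptyset$ make the logic slightly more transparent but do not differ in substance.
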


An immediate consequence of this is that this estimate of $\mathcal{X}_0$
can be used in combination with consistent estimates of
$E(m(W_i,\theta)m(W_i,\theta)'|X=x)$, $f_X(x)$, and $V(x)$
to form estimates of these functions evaluated at points in
$\mathcal{X}_0$ that satisfy the assumptions needed for the procedure for
estimating the asymptotic distribution described in Section
\ref{inference_sec}.

\begin{corollary}
If the estimates $\hat M_k(x)$, $\hat f_X(x)$, and $\hat V(x)$ are
consistent uniformly over $x$ in a neighborhood of $\mathcal{X}_0$, then,
under Assumptions \ref{smoothness_assump_multi}, \ref{cond_mean_assump},
\ref{x_assump}, \ref{x_given_y_assump}, and \ref{K_assump},
the estimates $\hat M_k(\tilde x_k)$, $\hat f_X(\tilde x_k)$, and $\hat
V_j(\tilde x_k)$ satisfy Assumption \ref{avar_est_assump}.
\end{corollary}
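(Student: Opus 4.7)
The approach is to combine the consistency result for the estimated contact structure from Theorem \ref{X0_est_thm} with the assumed uniform consistency of $\hat M$, $\hat f_X$ and $\hat V$ and the continuity of their population counterparts in a neighborhood of $\mathcal{X}_0$ (which holds by Assumption \ref{smoothness_assump_multi} parts (ii)--(iv), strengthened where needed by Assumption \ref{cond_mean_assump}). The plan is to verify the two displays in Assumption \ref{avar_est_assump} componentwise, using the standard trick that if $\hat g$ is uniformly consistent for a continuous $g$ on a neighborhood of $x_k$ and $\tilde x_k \stackrel{p}{\to} x_k$, then $\hat g(\tilde x_k) \stackrel{p}{\to} g(x_k)$.

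First I would invoke Theorem \ref{X0_est_thm} to fix, with probability approaching one, a labeling of $\tilde x_1, \ldots, \tilde x_{\hat\ell}$ for which $\hat\ell = \ell$, $\tilde x_k \stackrel{p}{\to} x_k$, and $\hat J(k) = J(k)$. Restrict attention to an event of probability approaching one on which this relabeling holds and on which $\tilde x_k$ lies in a fixed open neighborhood $N_k \subset B(x_k)$ of $x_k$ on which $\hat f_X$ and $\hat V$ are uniformly consistent. Then for the first convergence in Assumption \ref{avar_est_assump},
\begin{align*}
\bigl\|\hat f_X(\tilde x_k)\hat V(\tilde x_k) - f_X(x_k)V(x_k)\bigr\|
&\le \bigl\|\hat f_X(\tilde x_k)\hat V(\tilde x_k) - f_X(\tilde x_k)V(\tilde x_k)\bigr\| \\
&\quad + \bigl\|f_X(\tilde x_k)V(\tilde x_k) - f_X(x_k)V(x_k)\bigr\|.
\end{align*}
The first term is bounded by $\sup_{x\in N_k}\|\hat f_X(x)\hat V(x) - f_X(x)V(x)\|$, which tends to zero in probability by uniform consistency of $\hat f_X$ and $\hat V$ together with the boundedness of $f_X$ and $V$ on the closure of $N_k$ (which follows from continuity in Assumption \ref{smoothness_assump_multi}). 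The second term tends to zero in probability by the continuous mapping theorem applied to $x \mapsto f_X(x) V(x)$, which is continuous at $x_k$ by parts (ii) and (iii) of Assumption \ref{smoothness_assump_multi}.

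For the second convergence, the only additional step is to handle the sparsity structure in the definition of $\hat M_k$. Because $\hat J(k) = J(k)$ on the asymptotic event above, the matrix $\hat M_k(\tilde x_k)$ is obtained from $\hat M(\tilde x_k)$ by zeroing out exactly those rows and columns with indices outside $J(k)$, which is the same sparsification that turns $E(m(W_i,\theta)m(W_i,\theta)'|X=x_k)$ into $E(m_{J(k)}(W_i,\theta)m_{J(k)}(W_i,\theta)'|X=x_k)$. Applying the same triangle-inequality argument as above to $\hat M\hat f_X$ in place of $\hat f_X \hat V$, using continuity of $x\mapsto E(m(W_i,\theta)m(W_i,\theta)'|X=x)$ at $x_k$ (Assumption \ref{smoothness_assump_multi}(iv), which gives continuity of the $J(k)$ block that is all that enters after sparsification) and the uniform consistency of $\hat M$ and $\hat f_X$ in $N_k$, yields $\hat M_k(\tilde x_k)\hat f_X(\tilde x_k) \stackrel{p}{\to} E(m_{J(k)}(W_i,\theta)m_{J(k)}(W_i,\theta)'|X=x_k)f_X(x_k)$.

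There is no real obstacle beyond bookkeeping: the result is essentially a plug-in continuous mapping argument, and the only point requiring care is aligning the labeling of $\tilde x_k$ with the population labeling of $x_k$ and ensuring that the estimated index set $\hat J(k)$ coincides with $J(k)$ with probability approaching one, both of which are precisely the content of Theorem \ref{X0_est_thm}.
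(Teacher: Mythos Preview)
Your proposal is correct and matches the paper's approach: the paper states this corollary as ``an immediate consequence'' of Theorem \ref{X0_est_thm} and omits a separate proof, and your argument is precisely the plug-in continuous mapping argument that this phrase is gesturing at. You have simply made explicit the triangle-inequality decomposition and the handling of the sparsity structure via $\hat J(k)=J(k)$ that the paper leaves to the reader.
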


\section{Local Alternatives}\label{local_alt_sec}

Consider local alternatives of the form $\theta_n=\theta_0+a_n$ for some
fixed $\theta_0$ such that $m(W_i,\theta_0)$ satisfies Assumption
\ref{smoothness_assump_multi} and $a_n\to 0$.  Here, I keep the data
generating process fixed and vary the parameter being tested.  Similar
ideas will apply when the parameter is fixed and the data generating
process is changed so that the parameter approaches the identified set.
Throughout this section, I restrict attention to the conditions in Section
\ref{inf_dist_sec}, which corresponds to the more general setup in Section
\ref{inf_dist_alpha_sec} with $\gamma=2$.
To translate the $a_n$ rate of convergence to $\theta_0$ to a rate of
convergence for the sequence of conditional means, I make the following
assumptions.  As before, define
$\bar m(\theta,x)=E(m(W_i,\theta)|X_i=x)$.

\begin{assumption}\label{diff_m_assump}
For each $x_k\in \mathcal{X}_0$,
$\bar m(\theta,x)$ has a derivative as a function of
$\theta$ in a neighborhood of $(\theta_0,x_k)$, denoted $\bar
  m_\theta(\theta,x)$, that is continuous as a
function of $(\theta,x)$ at $(\theta_0,x_k)$
and,
for any neighborhood of $x_k$, there is a neighborhood of $\theta_0$ such
that $\bar m_j(\theta,x)$ is bounded away from zero for $\theta$ in the
given neighborhood of $\theta_0$ and $x$ outside of the given neighborhood
of $x_k$ for $j\in J(k)$ and for all $x$ for $j\notin J(k)$.
\end{assumption}

\begin{assumption}\label{m2_assump}
For each $x_k\in\mathcal{X}_0$ and $j\in J(k)$,
$E\{[m_j(W_i,\theta)-m_j(W_i,\theta_0)]^2|X_i=x\}$ converges to zero
uniformly in $x$ in some neighborhood of $x_k$ as $\theta\to \theta_0$.
\end{assumption}

I also make the following assumption, which extends Assumption
\ref{bdd_y_assump_multi} to a neighborhood of $\theta_0$.

\begin{assumption}\label{bdd_y_assump_local}
For some fixed $\overline Y<\infty$ and $\theta$ in a some neighborhood of
$\theta_0$, $|m(W_i,\theta)|\le \overline Y$ with probability one.
\end{assumption}

In the interval regression example, these conditions are satisfied as long
as Assumption \ref{smoothness_assump_multi} holds at $\theta_0$ and the
data have finite support.  These conditions are also likely to hold in a
variety of models once Assumption \ref{smoothness_assump_multi} holds at
$\theta_0$.  Note that smoothness conditions are in terms of the
conditional mean $\bar m(\theta,x)$, rather than $m(W_i,\theta)$, so that
the conditions can still hold when the sample moments are nonsmooth
functions of $\theta$.

Set $a_n=b_na$ for some sequence of scalars $b_n\to 0$ and a constant
vector $a$.
Going through the argument for Theorem \ref{inf_dist_thm_multi},
the variance term in the local process is now
\begin{align*}
&\frac{\sqrt{n}}{\sqrt{h_n^{d_X}}}(E_n-E)m(W_i,\theta_0+b_na)I(h_ns<X-x_k<h_n(s+t))
\\
&=\frac{\sqrt{n}}{\sqrt{h_n^{d_X}}}(E_n-E)m(W_i,\theta_0)I(h_ns<X-x_k<h_n(s+t))
\\
&+\frac{\sqrt{n}}{\sqrt{h_n^{d_X}}}(E_n-E)
[m(W_i,\theta_0+b_na)-m(W_i,\theta_0)]I(h_ns<X-x_k<h_n(s+t)).
\end{align*}
The first term is the variance term under the null, and the second term
should be small under Assumption \ref{m2_assump}.

As for the drift term,
\begin{align*}
&\frac{1}{h_n^{d_X+2}}Em(W_i,\theta+b_na)I(h_ns<X_i-x_k<h_n(s+t))  \\
&=\frac{1}{h_n^{d_X+2}}Em(W_i,\theta)I(h_ns<X_i-x_k<h_n(s+t))  \\
&+\frac{1}{h_n^{d_X+2}}E[m(W_i,\theta+b_na)-m(W_i,\theta)]
I(h_ns<X_i-x_k<h_n(s+t)).
\end{align*}
The first term is the drift term under the null.  The second term is
\begin{align*}
&\frac{1}{h_n^{d_X+2}}E[\bar m(\theta+b_na,X_i)-\bar m(\theta,X_i)]
I(h_ns<X_i-x_k<h_n(s+t))  \\
&\approx \frac{1}{h_n^{d_X+2}}Eb_n\bar m_\theta(\theta,X_i)a
I(h_ns<X_i-x_k<h_n(s+t))  \\
&\approx \frac{b_n}{h_n^{d_X+2}}f_X(x_k)\bar m_\theta(\theta,x_k)a
\int_{h_ns<x-x_k<h_n(s+t)} \, dx
= \frac{b_n}{h_n^2}f_X(x_k)\bar m_\theta(\theta,x_k)a
\prod_i t_i.
\end{align*}

Setting $b_n=h_n^2=n^{-2/(d_X+4)}$ gives a constant that does not change
with $n$, so we should expect to have power against $n^{-2/(d_X+4)}$
alternatives.  The following theorem formalizes these ideas.

\begin{theorem}\label{local_alt_exact_thm}
Let $\theta_0$ be such that $E(m(W_i,\theta_0)|X_i)\ge 0$ almost surely and
Assumptions \ref{smoothness_assump_multi}, \ref{diff_m_assump},
\ref{m2_assump}, and \ref{bdd_y_assump_local}
are satisfied for $\theta_0$.  Let
$a\in \mathbb{R}^{d_\theta}$ and let $a_n=an^{-2/(d_X+4)}$.  Let $Z(a)$ be a
random variable defined the same way as $Z$ in Theorem
\ref{inf_dist_thm_multi}, but with the functions $g_{P,x_k,j}(s,t)$
replaced by the functions
\begin{align*}
g_{P,x_k,j,a}(s,t)
=\frac{1}{2}f_X(x_k)\int_{s<x<s+t}x'V_j(x_k)x\, dx
+\bar m_{\theta,j}(\theta_0,x_k) a f_X(x_k)\prod_i t_i
\end{align*}
for $j\in J(k)$ for each $k$ where $\bar m_{\theta,j}$ is the $j$th row of
the derivative matrix $\bar m_{\theta}$.  Then
\begin{align*}
n^{(d_X+2)/(d_X+4)}\inf_{s,t}E_nm(W_i,\theta+a_n)I(s<X_i<s+t)
\stackrel{d}{\to} Z(a).
\end{align*}
\end{theorem}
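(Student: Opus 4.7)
The plan is to adapt the proof of Theorem \ref{inf_dist_thm_multi} to the perturbed parameter sequence $\theta_n=\theta_0+a_n$, keeping all rescaling choices the same: around each $x_k\in\mathcal{X}_0$ introduce local coordinates $s=x_k+h_n\tilde s$, $t=h_n\tilde t$ with $h_n=n^{-1/(d_X+4)}$, so that the overall scaling $n^{(d_X+2)/(d_X+4)}=h_n^{-(d_X+2)}$ matches the volume factor $h_n^{d_X}$ times the standard deviation factor $\sqrt{n h_n^{d_X}}$ coming from a sample size $n$. The key decomposition, exactly as written in the discussion preceding the theorem, is to split
\begin{align*}
h_n^{-(d_X+2)}\,E_n m(W_i,\theta_n)I(s<X_i<s+t)
=A_{1,n}+A_{2,n}+A_{3,n}+A_{4,n},
\end{align*}
where $A_{1,n}$ and $A_{3,n}$ are the mean-zero empirical-process piece and the drift piece at the \emph{null} parameter $\theta_0$, and $A_{2,n}$, $A_{4,n}$ are the analogous pieces built from $m(W_i,\theta_n)-m(W_i,\theta_0)$. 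The null-parameter pieces $A_{1,n}+A_{3,n}$ converge jointly (over $(\tilde s,\tilde t)$ in compact sets and over the index $k$) to $\mathbb{G}_{P,x_k}+g_{P,x_k,j}$ by Theorem \ref{inf_dist_thm_multi} applied to $\theta_0$.

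The next step is to show that the perturbation of the drift, $A_{4,n}$, converges to the new linear term $\bar m_{\theta,j}(\theta_0,x_k)\,a\,f_X(x_k)\prod_i \tilde t_i$. Using Assumption \ref{diff_m_assump} one writes $\bar m(\theta_0+b_n a,x)-\bar m(\theta_0,x)=b_n\bar m_\theta(\theta_0,x)a+o(b_n)$ uniformly for $x$ in a neighborhood of $x_k$, multiplies by $h_n^{-(d_X+2)}I(h_n\tilde s<x-x_k<h_n(\tilde s+\tilde t))f_X(x)\,dx$, substitutes $b_n=h_n^2$, and uses continuity of $\bar m_\theta(\theta_0,\cdot)$ and of $f_X$ at $x_k$ to bring out $\bar m_{\theta,j}(\theta_0,x_k)\,a\,f_X(x_k)$ times the volume $\prod_i\tilde t_i$; the $o(b_n)$ remainder contributes $o(1)$ after scaling. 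The stochastic perturbation $A_{2,n}$ is shown to be $o_P(1)$ uniformly over compact sets of $(\tilde s,\tilde t)$: the $L^2$ norm of the class of functions $(W_i,X_i)\mapsto [m(W_i,\theta_n)-m(W_i,\theta_0)]I(h_n\tilde s<X_i-x_k<h_n(\tilde s+\tilde t))$ is bounded, by Assumption \ref{m2_assump} combined with Assumption \ref{bdd_y_assump_local} and the volume of the indicator, by $o(1)\cdot O(h_n^{d_X/2})$, which after the $h_n^{-(d_X+2)/2}\sqrt{n}$ scaling for the standard deviation gives $o(1)$ by the same maximal-inequality argument (bracketing over the rectangle index set $(\tilde s,\tilde t)$) used in Theorem \ref{inf_dist_thm_multi}.

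Given convergence on compacts of $(\tilde s,\tilde t)$, the remaining work is the peeling/tail argument showing that the infimum over all $(s,t)\in\mathbb{R}^{2d_X}$ is achieved, in the limit, by $(s,t)$ with $(s-x_k,t)$ of order $h_n$ for some $k$. This is the same peeling structure as in the proof of Theorem \ref{inf_dist_thm_multi}: outside $\bigcup_k B(x_k)$, Assumption \ref{diff_m_assump} provides a uniform lower bound on $\bar m_j(\theta_n,x)$ for all large $n$ (because $\theta_n\to\theta_0$ lies eventually in any prescribed neighborhood of $\theta_0$), so the null-case argument applies verbatim to force contributions from these regions to be positive and of strictly faster order than $h_n^{d_X+2}$; inside the $B(x_k)$ but outside $h_n$-neighborhoods of $x_k$, the quadratic drift still dominates the stochastic piece at the null and the additional $O(h_n^2)\cdot O(\prod_i t_i)$ linear-in-$a$ perturbation is of the same order as, or smaller than, the quadratic drift, so the same peeling bounds control it. Finally, an argmin continuous-mapping argument (justified because the Gaussian process has continuous sample paths and the modified drift $g_{P,x_k,j,a}$ still grows like $\|(\tilde s,\tilde t)\|^{d_X+2}$ at infinity in the leading term) delivers joint convergence of $(\inf_{(s,t)}\cdot)_j$ to $Z(a)_j$.

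The step I expect to be the main obstacle is the uniform control of the stochastic perturbation $A_{2,n}$ combined with the peeling argument under the local alternative. The function class indexing $A_{2,n}$ differs with $n$, so one needs a maximal inequality with envelopes and $L^2$ norms both tending to zero at rates depending on $h_n$ and on the uniform $L^2$ bound from Assumption \ref{m2_assump}; verifying that this implies uniform negligibility simultaneously on the local scale (compact $(\tilde s,\tilde t)$) and globally (the peeling shells where $\|(s-x_k,t)\|\gg h_n$) is the most delicate piece, and is where the boundedness in Assumption \ref{bdd_y_assump_local} and the envelope structure of the indicator class do the real work.
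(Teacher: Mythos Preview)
Your proposal is correct and follows essentially the same architecture as the paper's proof: the same local rescaling, the same four-way decomposition into null drift, null stochastic term, drift perturbation, and stochastic perturbation, and the same peeling argument for tightness of the infimum.

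There is one place where the paper's route is cleaner than what you anticipate. You flag as the main obstacle the global control of the stochastic perturbation $A_{2,n}$, i.e., showing that $[m(W_i,\theta_n)-m(W_i,\theta_0)]I(\cdot)$ is negligible not only on compact local sets but also across all the peeling shells. The paper does not attempt this. Instead, for the tail bounds (the analogue of Lemma \ref{tail_bound_Gn}) it works directly with the \emph{undecomposed} perturbed moments $m_j(W_i,\theta_0+a_n)$ and observes that the constants in that lemma depend only on the uniform bound $\overline Y$ (Assumption \ref{bdd_y_assump_local}) and on the distribution of $X_i$, not on the particular $\theta$. So the same exponential bound applies verbatim to the perturbed process, and no separate global control of $A_{2,n}$ is needed. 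The decomposition into $A_{1,n}+A_{2,n}$ is used only on compact local sets, where the maximal-inequality argument you describe suffices. Similarly, for the drift lower bound in the peeling shells, the paper shows directly that the perturbed drift $g_{n,x_k,j,a}(s,t)$ satisfies $g_{n,x_k,j,a}(s,t)\ge C\|(s,t)\|^2\prod_i t_i$ for $\|(s,t)\|\ge M$ large, because the added linear-in-$a$ term is $O(\prod_i t_i)$ and is dominated by the quadratic piece once $\|(s,t)\|$ is large; this is the content of your remark that the linear perturbation is ``of the same order as, or smaller than, the quadratic drift,'' and it is all that is needed.
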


Thus, an exact test gives power against $n^{-2/(d_X+4)}$ alternatives (as
long as $\bar m_{\theta,j}(\theta_0,x_k)a$ is negative for each $j$ or
negative enough for at least one $j$),
but not against alternatives that converge strictly faster.  The
dependence on the dimension of $X_i$ is a result of the curse of
dimensionality.  With a fixed amount of ``smoothness,'' the speed at which
local alternatives can converge to the null space and still be detected is
decreasing in the dimension of $X_i$.

Now consider power against local alternatives of this form, with a
possibly different sequence $a_n$, using the conservative estimate that
$\sqrt{n}\inf_{s,t}E_nm(W_i,\theta)I(s<X_i<s+t)\stackrel{p}{\to} 0$ for
$\theta\in\Theta_0$.  That is, we
fix some $\eta>0$ and reject if
$\sqrt{n}S(\inf_{s,t}E_nm(W_i,\theta_0+a_n)I(s<X_i<s+t))>\eta$.
For the drift term $Em_j(W_i,\theta_0+a_n)I(s<X_i-x_k<s+t)$ of the
local alternative, we have, for $t$ near zero and $s$ near any
$x_k\in\mathcal{X}_0$,
\begin{align*}
&\sqrt{n}Em_j(W_i,\theta_0+a_n)I(s<X_i-x_k<s+t)  \\
&\approx \sqrt{n}
E[\overline m_j(\theta_0,X_i)+\overline m_{\theta,j}(\theta_0,X_i)a_n]
I(s<X_i-x_k<s+t)  \\
&\approx \sqrt{n}f_X(x_k)\int_{s<x<s+t}
\left(\frac{1}{2}x'V_j(x_k)x
+\overline m_{\theta,j}(\theta_0,x_k)a_n\right) \, dx.
\end{align*}
For any $a$ and $b$,
\begin{align*}
&f_X(x_k)\int_{s<x<s+t} \left(\frac{1}{2}x'Vx+a\right) \, dx
=(a/b)f_X(x_k)\int_{s<x<s+t}
\left\{\frac{1}{2}[(b/a)^{1/2}x]'V[(b/a)^{1/2}x]+b\right\} \, dx  \\
&= (a/b)f_X(x_k)\int_{(b/a)^{1/2}s<x<(b/a)^{1/2}(s+t)}
\left(\frac{1}{2}u'Vu+b\right) \, (b/a)^{-d_X/2}du.
\end{align*}
For any $(s,t)$, the last line in the display is equal to
$(a/b)^{(d_X+2)/2}$ times the first expression in the display evaluated
at a different value of $(s,t)$ with $a$ replaced with $b$.
It follows that the minimized expression for $b$ is $(a/b)^{(d_X+2)/2}$
times the minimized expression for $a$.  Thus, if $a_n=a b_n$, the
drift term is of order $\sqrt{n}b_n^{(d_X+2)/2}$, so we should expect to
have power against local alternatives with
$\sqrt{n}b_n^{(d_X+2)/2}=\mathcal{O}(1)$ or $b_n=n^{-1/(d_X+2)}$ (note
that setting $n^{(d_X+2)/(d_X+4)}b_n^{(d_X+2)/2}=\mathcal{O}(1)$ so that
the drift term is of the same order of magnitude as the exact rate of
convergence gives the $n^{-2/(d_X+4)}$ rate derived in the previous
theorem for the exact test).  Since
the infimum of the drift term is taken at a point where $t$ is small,
we should expect the mean zero term to converge at a faster than
$\sqrt{n}$ rate, so that the limiting distribution will be degenerate.
This is formalized in the following theorem.

\begin{theorem}\label{local_alt_degen_thm}
Let $\theta_0$ be such that $E(m(W_i,\theta_0)|X_i)\ge 0$ almost surely and
Assumptions \ref{smoothness_assump_multi}, \ref{diff_m_assump},
\ref{m2_assump}, and \ref{bdd_y_assump_local}
are satisfied for $\theta_0$.  Let
$a\in \mathbb{R}^{d_\theta}$ and let $a_n=an^{-1/(d_X+2)}$.
Then, for each $j$,
\begin{align*}
&\sqrt{n} \inf_{s,t} E_nm_j(W_i,\theta_0+a_n)I(s<X<s+t)  \\
&\stackrel{p}{\to} \min_{k \text{ s.t. } j\in J(k)}\inf_{s,t}
f_X(x_k)\int_{s<x<s+t}
\left(\frac{1}{2}x'Vx
+\overline m_{\theta,j}(\theta_0,x_k)a\right) \, dx.
\end{align*}
\end{theorem}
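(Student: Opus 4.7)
My plan is to localize around each contact point $x_k\in\mathcal{X}_0$ with $j\in J(k)$ by rescaling $s = x_k + h_n' \tilde s$ and $t = h_n' \tilde t$, where $h_n' = n^{-1/(2(d_X+2))}$. This scale is forced by the twin calibrations $\sqrt{n}(h_n')^{d_X+2} = 1$ (so the quadratic piece of the drift is $O(1)$) and $a_n/(h_n')^2 = a$ (so the linear-in-$a_n$ perturbation appears at $O(1)$ in the form $\bar m_{\theta,j}(\theta_0,x_k)a$). After rescaling, I split the objective into a drift $\sqrt{n}\,E m_j(W_i,\theta_0+a_n) I(\cdot)$ and a fluctuation $\sqrt{n}(E_n-E) m_j(W_i,\theta_0+a_n) I(\cdot)$, aiming to show the drift converges uniformly on compacts of $(\tilde s,\tilde t)$ to the integrand on the right-hand side while the fluctuation is uniformly $o_P(1)$ on such compacts.

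For the drift, the change of variable $x = x_k + h_n' u$ combined with a second-order Taylor expansion of $\bar m_j(\theta_0,\cdot)$ at $x_k$ (Assumption \ref{smoothness_assump_multi}(ii)) and continuity of $\bar m_\theta$ at $(\theta_0,x_k)$ (Assumption \ref{diff_m_assump}) delivers the stated uniform limit, with the cross-term $\bar m(\theta_0+a_n,x) - \bar m(\theta_0,x) - \bar m_\theta(\theta_0,x_k)a_n$ absorbed into $o(1)$ after integration. For the fluctuation, Assumption \ref{bdd_y_assump_local} bounds $|m_j|\le\overline Y$ uniformly in $\theta$ near $\theta_0$, Assumption \ref{m2_assump} controls the second moment of $m_j(\cdot,\theta_0+a_n)-m_j(\cdot,\theta_0)$ so the process under $\theta_0+a_n$ can be compared to that under $\theta_0$, and the class of rectangle indicators is VC; a standard entropy/maximal inequality then yields $\sup_{(\tilde s,\tilde t)\in K}|\sqrt{n}(E_n-E) m_j I| = O_P((h_n')^{d_X/2}) = o_P(1)$ for any compact $K$.

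The main obstacle is ruling out the complement of the rescaled compact windows. Two complementary regimes need separate treatment. In the large-scale regime, where $(s,t)$ stays at a fixed positive distance from every $x_k$, Assumption \ref{smoothness_assump_multi}(i) together with \ref{diff_m_assump} ensures $\bar m_j(\theta_0+a_n,x)\ge\varepsilon>0$ for $n$ large, so $\sqrt{n}$ times the drift diverges to $+\infty$ on any rectangle of nonvanishing probability, while a Donsker-type maximal inequality using $|m_j|\le\overline Y$ keeps the fluctuation $O_P(1)$ uniformly. In the sub-scale regime, where $(s,t)$ concentrates near some $x_k$ but $\prod_i t_i$ decays faster than $(h_n')^{d_X}$, I will adapt the peeling argument used for Theorem \ref{inf_dist_thm_multi}, slicing the rectangle class by $P(s<X_i<s+t)\le p_n$: empirical process maximal inequalities give $\sup|\sqrt{n}(E_n-E) m_j I|=O_P(\sqrt{p_n\log(1/p_n)})=o_P(1)$ on each slice, while the drift there is also $o(1)$ and bounded below by a negligible correction coming from the local-alternative perturbation. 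Hence these rectangles cannot push the sample infimum below the limiting value. Combining local convergence on each rescaled compact with this complement control and taking the minimum over $k$ with $j\in J(k)$ recovers the stated right-hand side.
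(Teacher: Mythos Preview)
Your localization scale $h_n' = n^{-1/(2(d_X+2))}$ and the on-compact analysis of drift and fluctuation are correct, and this part matches the paper's change of variables. However, your treatment of the complement of the rescaled compact is incomplete. Your two regimes---``fixed positive distance from every $x_k$'' and ``$\prod_i t_i$ decays faster than $(h_n')^{d_X}$''---do not exhaust the complement: they miss the intermediate region where $(s,t)$ lies inside $B(x_k)$ (so not at fixed distance) but with $M<\|(\tilde s,\tilde t)\|\le\eta/h_n'$ and $\prod_i\tilde t_i$ bounded away from zero (so $\prod_i t_i$ is of order $(h_n')^{d_X}$ or larger, not sub-scale). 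On that region the rescaled drift is not $o(1)$; it is large and positive, of order $\|(\tilde s,\tilde t)\|^2\prod_i\tilde t_i$, while the fluctuation is only $O_P(1)$ (Donsker, not $o_P(1)$). Ruling it out requires either the full peeling of Theorems~\ref{inf_bound_G}--\ref{inf_bound_Gn} (partitioning simultaneously in $\|(\tilde s,\tilde t)\|$ and in $\prod_i\tilde t_i$), or a separate argument; your description of the peeling as handling only slices where ``the drift there is also $o(1)$'' mischaracterizes what the peeling actually does and leaves this piece unaddressed.

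The paper avoids peeling entirely here by a cleaner device. It first shows that the sample minimizer $(s^*_n,t^*_n)$ over $B(x_k)$ satisfies $\rho(0,(s^*_n,t^*_n))\stackrel{p}{\to}0$, where $\rho$ is the covariance semimetric $\rho(0,(s,t))=\operatorname{var}\big(m_j(W_i,\theta_0)I(s<X_i<s+t)\big)$: on $\{\rho\ge\varepsilon\}$ one has $\prod_i t_i\ge K>0$, so the \emph{unscaled} drift $Em_j(W_i,\theta_0+a_n)I(s<X_i<s+t)$ is bounded below away from zero (the $O(\|a_n\|)$ perturbation from Assumption~\ref{diff_m_assump} is eventually negligible against the quadratic lower bound), hence $\sqrt{n}$ times the drift diverges while the Donsker fluctuation stays $O_P(1)$. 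Once the minimizer is localized to $\{\rho<\varepsilon_n\}$ with $\varepsilon_n\to 0$, the fluctuation there vanishes in one step by asymptotic equicontinuity of $\sqrt{n}(E_n-E)m_j(W_i,\theta_0)I(s<X_i<s+t)$ at the origin (the limiting Gaussian process has $\rho$-continuous sample paths and equals zero where $\rho=0$), with Assumption~\ref{m2_assump} handling the $\theta_0+a_n$ versus $\theta_0$ discrepancy. This reduces the problem to the deterministic drift alone, which is then analyzed by exactly your change of variables. The equicontinuity step replaces your entire complement analysis and handles all off-compact regimes at once.
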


The $n^{-1/(d_X+2)}$ rate is slower than the $n^{-2/(d_X+4)}$ rate for
detecting local alternatives with the asymptotically exact test.
As with the asymptotically exact tests, the conservative tests do
worse against this form of local alternative as
the dimension of the conditioning variable $X_i$ increases.

\section{Monte Carlo}\label{mc_sec}

I perform a monte carlo study to examine the finite sample behavior of the
tests I propose, and to see how well the asymptotic results in this paper
describe the finite sample behavior of KS statistics.  First, I simulate
the distribution of KS statistics for various sample sizes under parameter
values and data generating processes that satisfy Assumption
\ref{smoothness_assump_multi},
and for data generating processes that lead to a $\sqrt{n}$ rate of
convergence.  As predicted by Theorem \ref{inf_dist_thm_multi}, for the
data generating process that satisfies Assumption
\ref{smoothness_assump_multi},
the distribution of the KS statistic is roughly stable across sample sizes
when scaled up by $n^{(d_X+2)/(d_X+4)}$.  For the data generating process
that leads to $\sqrt{n}$ convergence, scaling by $\sqrt{n}$ gives a
distribution that is stable across sample sizes.  Next,
I examine the size and power of KS statistic based tests using the
asymptotic distributions derived in this paper.  I include procedures that
test between the conditions leading to $\sqrt{n}$ convergence and the
faster rates derived in this paper using the subsampling estimates of the
rate of convergence %
described in Section \ref{subsamp_rate_subsec}, as well
as infeasible procedures that use prior knowledge of the correct rate of
convergence to estimate the asymptotic distribution.

\subsection{Monte Carlo Designs}

Throughout this section, I consider two monte carlo designs for a mean
regression model with missing data.  In this model, the latent variable
$W_i^*$ satisfies $E(W_i^*|X_i)=\theta_1+\theta_2 X_i$, but $W_i^*$ is
unobserved, and can only be bounded by the observed variables
$W_i^H=\overline w I(\text{$W_i^*$ missing})
  +W_i^*I(\text{$W_i^*$ observed})$
and
$W_i^L=\underline w I(\text{$W_i^*$ missing})
  +W_i^*I(\text{$W_i^*$ observed})$ are observed, where
$[\underline w,\overline w]$ is an interval known to contain $W_i^*$.  The
identified set $\Theta_0$ is the set of values of $(\theta_1,\theta_2)$
such that the moment inequalities
$E(W_i^H-\theta_1-\theta_2 X_i|X_i)\ge 0$ and
$E(\theta_1+\theta_2 X_i-W_i^L|X_i)\ge 0$ hold with probability one.
For both designs, I draw $X_i$ from a uniform distribution on $(-1,1)$
(here, $d_X=1$).  Conditional on $X_i$, I draw $U_i$ from an independent
uniform $(-1,1)$ distribution, and set
$W_i^*=\theta_{1,*}+\theta_{2,*} X_i+U_i$, where $\theta_{1,*}=0$ and
$\theta_{2,*}=.1$.  I then set $W_i^*$ to be missing with probability
$p^*(X_i)$ for some function $p^*$ that differs across designs.  I set
$[\underline w,\overline w]=[-.1-1,.1+1]=[-1.1,1.1]$, the unconditional
support of $W_i^*$.  Note that, while the data are generated using a
particular value of $\theta$ in the identified set and a censoring process
that satisfies the missing at random assumption (that the probability of
data missing conditional on $(X_i,W_i^*)$ does not depend on $W_i^*$), the
data generating process is consistent with forms of endogenous censoring
that do not satisfy this assumption.  The identified set contains all
values of $\theta$ for which the data generating process is consistent
with the latent variable model for $\theta$ and some, possibly endogenous,
censoring mechanism.

The shape of the conditional moment inequalities as a function of $X_i$
depends on $p^*$.  For Design 1, I set
$p^*(x)=(0.9481 x^4 + 1.0667 x^3 -0.6222 x^2 -0.6519 x + 0.3889)\wedge 1$.
The coefficients of this quartic polynomial were chosen to make $p^*(x)$
smooth, but somewhat wiggly, so that the quadratic approximation to the
resulting conditional moments used in Theorem \ref{inf_dist_thm_multi}
will not be good over the entire support of $X_i$.  The resulting
conditional means of the bounds on $W_i^*$ are
$E(W_i^L|X_i=x)=(1-p^*(x))(\theta_{1,*}+\theta_{2,*}x)+p^*(x)\underline w$
and
$E(W_i^H|X_i=x)=(1-p^*(x))(\theta_{1,*}+\theta_{2,*}x)+p^*(x)\overline w$.
In the monte carlo study, I examine the distribution of the KS statistic
for the upper inequality at
$(\theta_{1,D1},\theta_{2,D1})\equiv(1.05,.1)$, a parameter
value on the boundary of the identified set for which Assumption
\ref{smoothness_assump_multi} holds, along with confidence intervals
for the intercept parameter $\theta_1$ with the slope parameter $\theta_2$
fixed at $.1$.  For the confidence regions, I also restrict attention to
the moment inequality corresponding to $W_i^H$, so that the confidence
regions are for the one sided model with only this conditional moment
inequality.  Figure \ref{cond_means_d1_fig} plots the conditional means of
$W_i^H$ and $W_i^L$, along with the regression line corresponding to
$\theta=(1.05,.1)$.  The confidence intervals for the slope parameter
invert a family of tests corresponding to values of $\theta$ that move
this regression line vertically.

For Design 2, I set $p^*(x)=[(|x-.5|\vee .25)-.15]\wedge .7)$.  Figure
\ref{cond_means_d3_fig} plots the resulting conditional means.  For this
design, I examine the distribution of the KS statistic for the upper
inequality at $(\theta_{1,D2},\theta_{2,D2})=(1.1,.9)$, which leads to a
positive probability
contact set for the upper moment inequality and a $n^{1/2}$ rate of
convergence to a nondegenerate distribution.  The regression line
corresponding to this parameter is plotted in Figure
\ref{cond_means_d3_fig} as well.  For this design, I form confidence
intervals for the slope parameter $\theta_1$ with $\theta_2$ fixed at
$.9$, using the KS statistic for the moment inequality for $W_i^H$.

The confidence intervals reported in this section are computed by
inverting KS statistic based tests on a grid of parameter values.
I use a grid with meshwidth $.01$ that covers the area of the parameter
space with distance to the boundary of the identified set no more than
$1$.
In practice, monotonicity of the KS statistic in certain parameters (in
this case, the KS statistic for each moment inequality is monotonic in the
intercept parameter) can often be used to get a rough estimate of the
boundary of the identified set before mapping out the confidence region
exactly.  In this
case, a rough estimate of the boundary of the identified set for the
intercept parameter could be formed by finding the point where the KS
statistic for the moment inequality for $W_i^H$ crosses a fixed critical
value before performing the test with critical values estimated for each
value of $\theta$. All of the results in this section use 1000 monte carlo
draws for each sample size and monte carlo design.

\subsection{Distribution of the KS Statistic}

To examine how well Theorem \ref{inf_dist_thm_multi}
describes the finite sample distribution of KS statistics under Assumption
\ref{smoothness_assump_multi},
I simulate from Design 1 for a range of sample sizes and form
the KS statistic for testing $(\theta_{1,D1},\theta_{2,D1})$.  Since
Assumption \ref{smoothness_assump_multi} holds for testing this value of
$\theta$ under this data generating process, Theorem
\ref{inf_dist_thm_multi} predicts that the distribution of the KS
statistic scaled up by $n^{(d_X+2)/(d_X+4)}=n^{3/5}$ should be similar
across the sample sizes.  The performance of this asymptotic prediction in
finite samples is examined in Figure
\ref{ks_stat_hist_d1_fig}, which plots histograms of the scaled KS
statistic $n^{3/5}S(T_n(\theta))$ for the sample sizes
$n\in\{100,500,1000,2000,5000\}$.  The scaled distributions appear roughly
stable across sample sizes, as predicted.

In contrast, under Design 2, the KS statistic
for testing $(\theta_{1,D2},\theta_{2,D2})$
will converge at a $n^{1/2}$ rate to a nondegenerate distribution.  Thus,
asymptotic approximation suggests that, in this case, scaling by $n^{1/2}$
will give a distribution that is roughly stable across sample sizes.
Figure \ref{ks_stat_hist_d3_fig} plots histograms of the scaled statistic
$n^{1/2}S(T_n(\theta))$ for this case.  The scaling suggested by
asymptotic approximations appears to give a distribution that is stable
across sample sizes here as well.

\subsection{Finite Sample Performance of the Tests}

I now turn to the finite sample performance of confidence regions for the
identified set based on critical values formed using the asymptotic
approximations derived in this paper, along with possibly conservative
confidence regions that use the $n^{1/2}$ approximation.  The critical
values use subsampling with different assumed rates of convergence.  I
report results for the tests based on subsampling estimates of the rate of
convergence described in Section \ref{subsamp_rate_subsec}, tests that use
the conservative rate $n^{1/2}$, and infeasible tests that use a $n^{3/5}$
rate under Design 1, and a $n^{1/2}$ rate under Design 2.
The implementation details are as follows.
For the critical values using the conservative
rate of convergence, I estimate the $.9$ and $.95$ quantiles
of the distribution of the KS statistic at each value of $\theta$ using
subsampling, and add the correction factor $.001$ to prevent the critical
value from going to zero.  The critical values using estimated rates of
convergence are computed as described in Section \ref{subsamp_rate_subsec}.
I use the subsample sizes $b_1=\lceil n^{1/2} \rceil$ and $b_2=\lceil
n^{1/3} \rceil$ to estimate the rate of convergence $\hat \beta$ for
subsampling, and $b_2=5$ for the rate estimate $\hat
\beta_a$ that is used to test whether the conservative rate should be
used.
For both rate estimates, I average the estimates computed using the
quantiles $.5$, $.9$, and $.95$.
For the upper and lower truncation points for the rate of convergence, I
use $\underline\beta=.55$ and $\overline\beta=2/3$.  These truncation
points allow for exact inference for values of $\theta$ such that
Assumption \ref{smoothness_assump_alpha} holds with $\gamma=2$ (twice
differentiable conditional mean) or $\gamma=1$ (directional derivatives
from both sides).  The upper truncation point $\overline \beta$
corresponds to $\gamma=1$, and the lower truncation point $\underline
\beta$ is halfway between the rate of convergence exponent $3/5$ for
$\gamma=2$, and the conservative rate exponent $1/2$.
In addition, I truncate $\hat \beta$ from below at $1/2$ in cases where
$\hat\beta<1/2$.  For both the
conservative and estimated rates of convergence, I use the uncentered
subsampling estimate with subsample size $\lceil n^{1/2}\rceil$.  All
subsampling estimates use 1000 subsample draws.  For
values of $\theta$ such that the pre-test finds that the conservative
approximation should be used ($\hat\beta_a<\underline \beta$), I use the
same method of estimating the critical values as in the tests that always
use the conservative rate of convergence.

Table \ref{cov_prob_table_d1} reports the coverage probabilities for
$(\theta_{1,D1},\theta_{2,D1})$ under Design 1.  As discussed above, under
Design 1, $(\theta_{1,D1},\theta_{2,D1})$ is on the boundary of the
identified set and satisfies Assumption \ref{smoothness_assump_multi}.
As predicted, the tests that subsample with the $n^{1/2}$ rate are
conservative.  The nominal 95\% confidence regions that use the $n^{1/2}$
rate cover $(\theta_{1,D1},\theta_{2,D1})$ with probability at least $.99$
for all of the sample sizes.  Subsampling with the exact $n^{3/5}$ rate of
convergence, an infeasible procedure that uses prior knowledge that
Assumption \ref{smoothness_assump_multi} holds under
$(\theta_{1,D1},\theta_{2,D1})$ for this data generating process, gives
confidence regions that cover $(\theta_{1,D1},\theta_{2,D1})$ with
probability much closer to the nominal coverage.  The subsampling tests
with the estimated rate of convergence also perform well, attaining close
to the nominal coverage.

Table \ref{cov_prob_table_d3} reports coverage probabilities for testing
$(\theta_{1,D2},\theta_{2,D2})$ under Design 2.  In this case, subsampling
with a $n^{1/2}$ rate gives an asymptotically exact test of
$(\theta_{1,D2},\theta_{2,D2})$, so we should expect the coverage
probabilities for the tests that use the $n^{1/2}$ rate of convergence to
be close to the nominal coverage probabilities, rather than being
conservative.  The coverage probabilities for the $n^{1/2}$ rate are
generally less conservative here than for Design 1, as the asymptotic
approximations predict, although the coverage is considerably greater than
the nominal coverage, even with $5000$ observations.  In this case, the
infeasible procedure is identical to the conservative test, since the
exact rate of convergence is $n^{1/2}$.  The confidence regions that use
subsampling with the estimated rate contain
$(\theta_{1,D2},\theta_{2,D2})$ with probability close to the nominal
coverage, but are generally more liberal than their nominal level.

Given that
subsampling with the estimated rate increases type I error by having
coverage probability close to the nominal coverage probability rather than
being conservative, we should expect a decrease in type II error.  The
results in Section \ref{local_alt_sec} show that critical values based on
the exact $n^{3/5}$ rate of convergence lead to tests that detect local
alternatives that approach the identified set at a $n^{2/(d_X+4)}=n^{2/5}$
rate, while the conservative tests detect local alternatives that approach
the identified set at a slower $n^{1/(d_X+2)}=n^{1/3}$ rate.  For
confidence regions that invert these tests, this is reflected in the
portion of the parameter space the confidence region covers outside of the
true identified set.

Tables \ref{ci_length_table_d1} and
\ref{ci_length_table_d3} summarize the portion of the parameter space
outside of the identified set covered by
confidence intervals for the intercept
parameter $\theta_1$ with $\theta_2$ fixed at $\theta_{2,D1}$ for Design 1
and $\theta_{2,D2}$ for Design 2.  The entries in each table report
the upper endpoint of one of the confidence regions minus
the upper endpoint of the identified set for the slope parameter, averaged
over the monte
carlo draws.  As discussed above, the true upper endpoint of the
identified set for $\theta_1$ under Design 1 with $\theta_2$ fixed at
$\theta_{2,D1}$ is $\theta_{1,D1}$, and the true upper endpoint of the
identified set for $\theta_1$ under Design 2 with $\theta_2$ fixed at
$\theta_{2,D2}$ is $\theta_{1,D2}$, so, letting $\hat u_{1-\alpha}$ be
the greatest value of $\theta_1$ such that $(\theta_1,\theta_{2,D1})$ is
not rejected, Table \ref{ci_length_table_d1} reports averages of
$\hat u_{1-\alpha}-\theta_{2,D1}$, and similarly for Table
\ref{ci_length_table_d3} and Design 2.

The results of Section \ref{local_alt_sec} suggest
that, for the results for Design 1 reported in Table
\ref{ci_length_table_d1}, the difference between the upper endpoint of the
confidence region and the upper endpoint of the identified set should
decrease at a $n^{2/5}$ rate for the critical values that use or estimate
the exact rate of convergence (the first and third rows), and a
$n^{1/3}$ rate for subsampling with the conservative rate and adding
$.001$ to the critical value (the second row).  This appears roughly
consistent with the values reported in these tables.  The conservative
confidence regions start out slightly larger, and then converge more
slowly.
For Design 2, the KS statistic converges at a $n^{1/2}$ rate on the
boundary of the identified set for $\theta_1$ for $\theta_2$ fixed at
$\theta_{2,D2}$, and arguments in \citet{andrews_inference_2009} show that
$n^{1/2}$ approximation to the KS statistic give power against sequences
of alternatives that approach the identified set at a $n^{1/2}$ rate.  The
confidence regions do appear to shrink to the identified set at
approximately this rate over most sample sizes, although the decrease in
the width of the confidence region is larger than predicted for smaller
sample sizes, perhaps reflecting time taken by the subsampling procedures
to find the binding moments.

\section{Illustrative Empirical Application}\label{application_sec}

As an illustrative empirical application, I apply the methods in this
paper to regressions of out of pocket prescription drug spending on income
using data from the Health and Retirement Study (HRS).  In this survey,
respondents who did not report point values for these and other variables
were asked whether the variables were within a series of brackets, giving
point values for some observations and intervals of different sizes for
others.  The income variable used here is taken from the RAND contribution
to the HRS,
which adds up reported income from different sources elicited in the
original survey.
For illustrative purposes, I focus on the subset of respondents
who report point values for income, so that only prescription drug
spending, the dependent variable, is interval valued.  The resulting
confidence regions are valid under any potentially endogenous process
governing the size of the reported interval for prescription expenditures,
but require that income be missing or interval reported at random.
Methods similar to those proposed in this paper could also be used along
with the results of \citet{manski_inference_2002} for interval reported
covariates to use these additional observations to potentially gain
identifying power (but still using an assumption of exogenous interval
reporting for income).
I use the 1996 wave of the survey and restrict attention to women with no
more than \$15,000 of yearly income who report using prescription
medications.  This results in a data set with 636 observations.  Of these
observations, 54 have prescription expenditures reported as an interval
of nonzero width with finite endpoints, and an additional 7 have no
information on prescription expenditures.

To describe the setup formally, let $X_i$ and $W_i^*$ be income and
prescription drug expenditures for the $i$th observation.  We observe
$(X_i,W_i^L,W_i^H)$, where $[W_i^L,W_i^H]$ is an interval that contains
$W_i^*$.  For observations where no interval is reported for prescription
drug spending, I set $W_i^L=0$ and $W_i^H=\infty$.  I estimate an interval
median regression model where the median $q_{1/2}(W_i^*|X_i)$ of $W_i^*$
given $X_i$ is assumed to follow a linear regression model
$q_{1/2}(W_i^*|X_i)=\theta_1+\theta_2 X_i$.  This leads to the conditional
moment inequalities $E(m(W_i,\theta)|X_i)\ge 0$ almost surely, where
$m(W_i,\theta)=(I(\theta_1+\theta_2 X_i\le W_i^H)-1/2,
  1/2-I(\theta_1+\theta_2 X_i\le W_i^L))$
and $W_i=(X_i,W_i^L,W_i^H)$.

Figure \ref{data_plot_fig} shows the data graphically.  The horizontal
axis measures income, while the vertical axis measures out of pocket
prescription drug expenditures.
Observations for which prescription expenditures are
reported as a point value are plotted as points.  For observations
where a nontrivial interval is reported, a plus symbol marks the upper
endpoint, and an x marks the lower endpoint.  For
observations where no information on prescription expenditures is obtained
in the survey, a circle is placed on the $x$ axis at the value of income
reported for that observation.
In order to show in detail the ranges of
spending that contain most of the observations, the vertical axis is
truncated at \$15,000, leading to $5$ observations not being shown
(although these observations are used in forming the confidence regions
reported below).

I form 95\% confidence intervals by inverting level $.05$ tests using the
KS statistics described in this paper with critical values calculated
using the conservative rate of convergence $n^{1/2}$, and rates of
convergence estimated using the methods described in Section
\ref{subsamp_rate_subsec}.  For the function $S$, I set $S(t)=\max_k
|t_k\wedge 0|$.  The rest of the implementation details are the same as
for the monte carlos in Section \ref{mc_sec}.

For comparison, I also compute point estimates and confidence regions
using the least absolute deviations (LAD) estimator
\citep{koenker_regression_1978} for the median regression model with only
the observations for which a point value for spending was reported.
These are valid under the additional assumption that the decision to
report an interval or missing value is independent of spending conditional
on income.  The confidence regions use Wald tests based on the asymptotic
variance estimates computed by Stata.  These asymptotic variance estimates
are based on formulas in \citet{koenker_robust_1982} and require
additional assumptions on the data generating process, but I use these
rather than more robust standard errors in order to provide a comparison
to an alternative procedure using default options in a standard
statistical package.

Figure \ref{cr95_est_fig} plots the outline of the 95\% confidence region
for $\theta$ using the pre-tests and rate of convergence estimates
described above, while Figure \ref{cr95_cons_fig} plots
the outline of the 95\% confidence region using the conservative
approximation.  Figure \ref{cr95_wald_fig} plots the outline of the 95\%
confidence region from estimating a median regression model on the subset
of the data with point values reported for spending.
Table \ref{ci_table} reports the
corresponding confidence intervals for the components of $\theta$.  For
the confidence regions based on KS tests, I use
the projections of the confidence region for $\theta$ onto each
component.  For the confidence regions based on median regression with
point observations, the 95\% confidence regions use the limiting normal
approximation for each component of $\theta$ separately.

The results show a sizeable increase in statistical power from
using the estimated rates of convergence.  With the conservative tests,
the 95\% confidence region estimates that a \$1,000 increase in income is
associated with at least a \$3 increase in out of pocket prescription
spending at the median.  With the tests that use the estimated rates of
convergence, the 95\% confidence region bounds the increase in out of
pocket prescription spending associated with a \$1,000 increase in income
from below by \$11.30.

The 95\% confidence region based on median regression using observations
reported as points overlaps with both moment inequality based confidence
regions, but gives a different picture of which parameter values can be
ruled out by the data.  The upper bound for the increase in spending
associated with a \$1,000 increase in income is \$24.40 using LAD,
compared to \$37.20 and \$34.70 using KS statistics with all observations
and the conservative and estimated rates respectively.  The corresponding
lower bound is \$10 using LAD with point observations, substantially
larger than the lower bound of \$3 using the conservative procedure, but
actually smaller than the \$11.30 lower bound under the estimated rate.
Thus, while the interval reporting at random assumption for the dependent
variable allows one to tighten the upper bound for the slope parameter, a
lower bound close to the lower bound of the LAD confidence interval can be
obtained using the new asymptotic approximations developed in this paper.

Note also that these tests could, but do not, provide evidence against the
assumptions required for LAD on the point reported values.  If the
LAD 95\% confidence region did not overlap with one of the moment
inequality 95\% confidence regions, there would be no parameter value
consistent with this assumption at the $.1$ level (for any parameter
value, we can reject the joint null of both models holding using
Bonferroni's inequality and the results of the $.05$ level tests).  This
type of test will not necessarily have power if the interval reporting at
random assumption for the dependent variable does not hold, so it should
not be taken as evidence that the more robust interval regression
assumptions can be replaced with LAD methods.

\section{Discussion}\label{discussion_sec}

Under some smoothness conditions, the asymptotic approximations
derived in Sections \ref{inf_dist_sec} and \ref{inf_dist_alpha_sec} can be
combined with the methods
in Sections \ref{inference_sec} and \ref{rate_test_sec} to form tests
that are asymptotically exact on portions of the boundary of the
identified set where the $\sqrt{n}$ approximation only allows for
conservative inference.  Since these methods require assumptions on
the conditional mean that are not needed for conservative inference
using the $\sqrt{n}$ approximation, the decision of which method to
use involves a tradeoff between power and robustness.  The results in
Section \ref{local_alt_sec} quantify these tradeoffs.  While
approximations to the distribution of a KS statistic based on the
asymptotic distribution in Section \ref{inf_dist_sec} and the tests in
Sections \ref{inference_sec} and \ref{rate_test_sec} may not be robust to
certain types of
nonsmooth conditional means, when they are valid, they can detect
parameters in a $n^{-2/(d_X+4)}$ region of the identified set,
while the $\sqrt{n}$ approximation can only detect parameters
in a $n^{-1/(d_X+2)}$ region of the identified set.  It should be noted
that, even if the pre-tests in Section \ref{rate_test_sec} find a rate of
convergence that is too fast, Lemma \ref{large_s_lemma2_multi} in the
Appendix shows that the rate of convergence will typically be within $\log
n$ of $1/n$ for testing $\theta$ on the interior of the identified set, so
the resulting confidence region, while failing to contain values of
$\theta$ near the boundary of the identified set with high probability,
will not be too much smaller than the true identified set.

The results in this paper also shed light on the tradeoff between the KS
statistics based on integrating conditional moments to get unconditional
moments considered in this paper and other methods for inference with
conditional moment inequalities, such as those based on kernel or series
estimation
\citep{chernozhukov_intersection_2009,ponomareva_inference_2010} or
increasing numbers of
unconditional moments \citep{menzel_estimation_2008}.  With the bandwidth
chosen to decrease at the correct rate, kernel methods based on a supremum
statistic will give
close to the same $n^{-2/(d_X+4)}$ rate (up to a power of $\log n$) for
detecting the local alternatives considered in this paper.  With enough
derivatives imposed on the conditional mean, higher order kernels or
series methods could be used to get even more power.
However,
kernel based methods will perform worse with suboptimal bandwidth choices,
or against
local alternatives in which the conditional moment inequality fails to
hold on a larger set.  The $n^{-2/(d_X+4)}$ rate for detecting local
alternatives can also be achieved within a $\log n$ term using the
increasing truncation point variance weighting proposed in
\citet{armstrong_weighted_2011}.  Unlike the tests proposed in this paper,
those methods are robust to nonsmooth conditional means.  These tests also
have the advantage of adapting to different shapes of the conditional mean
without estimating the optimal bandwidth, as would be necessary with
kernel estimates, or estimating the rate of convergence of a test
statistic, as required by the tests in this paper.
However, they
have less power by a $\log n$ term when applied to this setting, and
require choosing a conservative critical value, which decreases the power
further (but not the rate at which local alternatives can converge to the
identified set and still be detected).

While the results in this paper and \citet{armstrong_weighted_2011}
characterize how moment selection and weighting functions affect relative
efficiency in this setting, the choice of test statistic (supremum norm,
as considered here, or $L_p$ norm, as with Cramer-von Mises statistics)
and instrument functions are also of interest.  While the results in this
paper and in \citet{armstrong_weighted_2011} give some insight into these
problems (for example, it is clear from the arguments in these papers that
Cramer-von Mises style statistics will have less power in this setting
unless new asymptotic distribution results or moment selection procedures
are used)
more complete answers to these questions are topics of ongoing research.

It is also interesting to compare the nonsimilarity problem with the
statistics in this paper to nonsimilarity problems encountered with kernel
based methods.  The rate of convergence of supremum statistics based on
kernel estimates of the conditional moments also depends on the contact
set, but to a lesser extent.  \citet{ponomareva_inference_2010} shows that
the rate of convergence of these statistics differs by a factor of
$\log n$ depending on the contact set.  Arguing as in Section 6 of
\citet{armstrong_weighted_2011}, this would lead to an increase in the
rate at which local alternatives can approach the identified set and still
be detected by a factor of $\log n$.  In contrast, the polynomial
difference in the rates of convergence of the KS statistics based on
integrated moments considered in the present paper leads to increases in
local power by factors of $n$ rather than $\log n$.  Thus, the gains in
terms of power from using exact approximations are much larger in this
context.

In addition to these immediate practical applications, the
results in this paper are also of independent interest in their
relation to broader questions in the literatures on moment
inequalities and nonparametric estimation.  In testing multiple moment
inequalities, the asymptotic distribution of test statistics typically
only depends on inequalities that bind as equalities.  Since the
non binding moments do affect the finite sample distribution of the
test statistic, this means that asymptotic distributions may provide
poor approximations to finite sample distributions.  The existing
literature on moment inequalities has taken several approaches to this
issue.  One is to use conservative approximations using ``least
favorable'' asymptotic distributions in which all moment inequalities
bind.
Another approach is to design tests that are robust to
sequences where the data generating process or test statistic changes
as the number of observations increases.
\citet{menzel_estimation_2008} considers asymptotic approximations in
which the number of moment inequalities used for a test statistic
increases with the number of observations.
\citet{andrews_inference_2009} show that the tests they consider
using test statistics similar to the ones in this paper, but using a
(possibly degenerate) $\sqrt{n}$ asymptotic distribution, have the
correct size asymptotically
when data generating processes change with the sample size within
certain classes of data generating processes.  Since these classes of
data generating processes include sequences where some moment
inequalities are slack, but close to binding, this suggests that the
methods they propose will not suffer from problems with non binding
inequalities affecting the finite sample distribution.

In contrast,
the asymptotic distributions presented in Sections \ref{inf_dist_sec} and
\ref{inf_dist_alpha_sec} of
the present paper are, to my knowledge, the first known case of the
asymptotic distribution of a test statistic that uses a fixed (although,
in this case, infinite) set of moment inequalities depending on moment
inequalities that do not bind.  These results show that, under the
conditions in this paper, the ``moment selection'' problem takes the
form of a balancing of expected value and variance of moments that
are close to binding.  This leads to ideas typically associated with
kernel smoothing
and nonstandard M-estimation %
applying to test
statistics for moment inequalities.
As with the objective functions for nonstandard M-estimation considered by
\citet{kim_cube_1990}, the asymptotic distribution of the KS statistic is
the limit of local processes under a scaling that balances a drift term
and a variance term.  This balancing of drift and variance terms mirrors
the equating of bias and variance terms in choosing the optimal bandwidth
for nonparametric kernel estimation
\citep[see, for example,][]{pagan_nonparametric_1999}.
This is especially interesting since one of the appealing features of KS
style statistics in this setting is that they get rid of the need for
bandwidth parameters.  In the settings I consider, the choice of
``bandwidth'' is made automatically by the balancing of the drift and
variance terms, which determines the scale of the moments that matter
asymptotically.  However, this shows up in the rate of convergence, so
that tests to determine which ``bandwidth'' was chosen are still needed
for exact inference.  Thus, in a sense, the bandwidth selection problem
shows up in the moment selection problem through the rate of convergence.

In another paper \citep{armstrong_weighted_2011}, I show that KS
statistics similar to the ones in the present paper can be made to choose
the moments that correspond to the optimal bandwidth by using a variance
weighting with an increasing sequence of truncation points.  This helps
alleviate the problem with different rates of convergence of the KS
statistic along the boundary of the identified set, but loses a $\log n$
term relative to the tests based on unweighted KS statistics (or KS
statistics with bounded weights) and asymptotic approximations based on
the exact rate of convergence.  Thus, moment selection (in the form of
testing for rates of convergence) and variance weighting play similar
roles in this framework.  Even without the variance weighting of
\citet{armstrong_weighted_2011}, the statistics in this paper find the
moments that lead to the most local power.  Estimating the rate of
convergence of the test statistic is only needed to find the order of
magnitude (under the null) of the moments that were found.

The results in this paper are pointwise in the underlying distribution
$P$.  Since the procedures proposed in this paper involve pre tests, it is
natural to ask for which classes of underlying distributions these tests
are uniformly valid.  Since uniformity in the underlying distribution is
implicit in the bounds used in many of the arguments used to derive these
asymptotic distributions, it seems likely that these tests could be shown
to enjoy
uniformity in classes of distributions with uniform bounds on the
constants governing the smoothness conditions needed for the pointwise
results.
While this would be an interesting extension of the results in the paper,
uniformity in the underlying distribution is perhaps less interesting than
in other settings because many of the tradeoffs between the approach in
the present paper and more conservative approaches are already clear from
the pointwise results.  Smoothness conditions not needed for the
conservative approach to control the size uniformly in the underlying
distribution are needed even for the pointwise results derived here. Thus,
it is clear from the pointwise results that the power improvement achieved
by the tests in this paper comes at a cost of robustness to smoothness
conditions.

Many of the results in this paper assume that the conditional mean $\bar
m(\theta,x)$ is minimized only on a finite set.  For the case where
$d_X=1$, this is implied by smoothness conditions on the conditional mean
(or, when it does not hold, the results in this paper bound the rate of
convergence so that the tests based on estimated rates are still valid).
In higher dimensions, the case where the contact set has infinitely many
points but is of a dimension less than $d_X$ is likely to be more
difficult, but similar ideas will apply.  The results in this paper could
also be extended to the case where the $\bar m(\theta,x)$ only approaches
$0$ near the (possibly infinite) boundary of the support of $x$.  These
cases are often relevant in %
performing inference on bounds on treatment effects such as those
considered by \citet{manski_nonparametric_1990}.
In the one dimensional case, $X_i$ can be
transformed into a uniform random variable so that the conditions on the
density of the conditioning variable used in this paper will apply (once
the density is positive and well behaved on its support, the assumption
that the contact point is on the interior of the support is easy to
relax).  If the density and conditional mean approach zero at polynomial
rates, the transformed model will fit into a slight extension of Theorem
\ref{inf_dist_thm_alpha} for some $\gamma$ that depends on these rates.
These transformations are used in a slightly different setting in
\citet{armstrong_weighted_2011}.

\section{Conclusion}\label{conclusion_sec}

This paper derives the asymptotic distribution of a class of
Kolmogorov-Smirnov style test statistics for conditional moment inequality
models under a general set of conditions.  I show how to use these results
to form valid tests that are more powerful than existing approaches based
on this statistic.  Local power results for the new tests and existing
tests are derived, which quantify this power improvement.  While the
increase in power comes
at a cost of robustness to smoothness conditions, a complementary paper
\citep{armstrong_weighted_2011} proposes methods for inference that
achieve almost the same power improvement while still being robust to
failure of smoothness conditions.

In addition to their immediate practical application to asymptotically
exact inference, the results in this paper add to our understanding of
how familiar issues in the literatures on moment inequalities and
nonparametric estimation, such as moment selection and the curse of
dimensionality, manifest themselves in the use of one sided KS
statistics for conditional moment inequalities.  Under the conditions
in this paper, the asymptotic distribution of the KS statistic depends
on nonbinding moments, which are determined through a balancing of a
bias term and a variance term in a way that is similar to the objective
functions for the point
estimators considered by \citet{kim_cube_1990}.  The dimension of the
conditioning variables and the smoothness of the conditional mean
determine which moments matter asymptotically and which types of local
alternatives the KS statistic can detect.

\section*{Appendix}

This appendix contains proofs of the theorems in this paper.  The proofs
are organized into subsections according to the section containing the
theorem in the body of the paper.  In cases where a result follows
immediately from other theorems or arguments in the body of the paper, I
omit a separate proof.  Statements involving convergence in distribution
in which random elements in the converging sequence are not measurable
with respect to the relevant Borel sigma algebra are in the sense of outer
weak convergence \citep[see][]{van_der_vaart_weak_1996}.
For notational convenience, I use $d=d_X$
throughout this appendix.

\subsection*{Asymptotic Distribution of the KS Statistic}

In this subsection of the appendix, I prove Theorem
\ref{inf_dist_thm_multi}.  For notational convenience, let
$Y_i=m(W_i,\theta)$ and $Y_{i,J(m)}=m_{J(m)}(W_i,\theta)$ and let $d=d_X$
and $k=d_Y$ throughout this subsection.

The asymptotic distribution comes from the behavior of the objective
function $E_nY_{i,j}I(s<X_i<s+t)$ for $(s,t)$ near $x_m$ such that
$j\in J(m)$.  The bulk of the proof involves showing that the objective
function doesn't matter for $(s,t)$ outside of neighborhoods of $x_m$ with
$j\in J(m)$ where these neighborhoods shrink at a fast enough rate.
First, I derive the limiting distribution over such shrinking
neighborhoods and the rate at which they shrink.

\begin{theorem}\label{local_process_thm_multi}
Let $h_n=n^{-\alpha}$ for some $0<\alpha<1/d$.  Let
\begin{align*}
\mathbb{G}_{n,x_m}(s,t)
=\frac{\sqrt{n}}{h_n^{d/2}}(E_n-E)Y_{i,J(m)}I(h_ns<X_i-x_m<h_n(s+t))
\end{align*}
and let $g_{n,x_m}(s,t)$ have $j$th element
\begin{align*}
g_{n,x_m,j}(s,t)=\frac{1}{h_n^{d+2}}EY_{i,j}I(h_ns<X_i-x_m<h_n(s+t))
\end{align*}
if $j\in J(m)$ and zero otherwise.
Then, for any finite $M$,
$(\mathbb{G}_{n,x_1}(s,t),\ldots,\mathbb{G}_{n,x_\ell}(s,t))
\stackrel{d}{\to}
(\mathbb{G}_{P,x_1}(s,t),\ldots,\mathbb{G}_{P,x_\ell}(s,t))$
taken as random
processes on $\|(s,t)\|\le M$ with the supremum norm
and $g_{n,x_m}(s,t)\to g_{P,x_m}(s,t)$ uniformly in $\|(s,t)\|\le M$
where $\mathbb{G}_{P,x_m}(s,t)$ and $g_{P,x_m}(s,t)$ are defined as in
Theorem \ref{inf_dist_thm_multi} for $m$ from $1$ to $\ell$.
\end{theorem}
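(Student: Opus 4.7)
The plan is to split the argument into three pieces: (i) the pointwise asymptotic behavior of the mean and covariance functions of $\mathbb{G}_{n,x_m}$ and $g_{n,x_m}$; (ii) stochastic equicontinuity of $\mathbb{G}_{n,x_m}$; and (iii) joint convergence across $m=1,\ldots,\ell$ with asymptotic independence. The workhorse at each step is the change of variable $x = x_m + h_n u$, which turns integrals over $\{h_n s < x - x_m < h_n(s+t)\}$ into integrals over the fixed rectangle $\{s < u < s+t\}$ times $h_n^d$.

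For step (i), for $j\in J(m)$, $x_m$ is an interior minimum of $\bar m_j(\theta,\cdot)$ with $\bar m_j(\theta,x_m)=0$, so $\nabla \bar m_j(\theta,x_m)=0$, and a second-order Taylor expansion using Assumption \ref{smoothness_assump_multi}(ii) gives $\bar m_j(\theta, x_m + h_n u) = \tfrac{1}{2} h_n^2 u' V_j(x_m) u + o(h_n^2)$ uniformly for $\|u\|\le 2M$. Combined with continuity of $f_X$ at $x_m$ (Assumption \ref{smoothness_assump_multi}(iii)), the change of variable plugged into $g_{n,x_m,j}(s,t)$ converts the $h_n^{-(d+2)}$ prefactor to $h_n^{-2}$, absorbs the $h_n^2$ from the Taylor remainder, and yields $g_{n,x_m,j}(s,t)\to g_{P,x_m,j}(s,t)$. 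The convergence is uniform in $\|(s,t)\|\le M$ because the Taylor remainder is uniform in $\|u\|\le 2M$ and the limit is continuous in $(s,t)$. The same change of variable applied to $\mathrm{Var}(\mathbb{G}_{n,x_m}(s,t))$ and the cross-covariance, combined with continuity of $x\mapsto E(Y_{i,J(m)} Y_{i,J(m)}' \mid X=x) f_X(x)$ at $x_m$ (Assumption \ref{smoothness_assump_multi}(iii),(iv)), reproduces the stated covariance kernel; the subtracted mean product $E(\cdots)E(\cdots)/h_n^d$ is of order $h_n^d \to 0$ and thus negligible. Finite-dimensional convergence to a centered Gaussian limit then follows from the Lindeberg CLT: each coordinate of $\mathbb{G}_{n,x_m}(s,t)$ is a normalized sum of iid mean-zero random variables bounded by $\overline Y h_n^{-d/2}/\sqrt{n}$ (using Assumption \ref{bdd_y_assump_multi}), and the Lindeberg condition holds because $h_n = n^{-\alpha}$ with $\alpha < 1/d$ forces $n h_n^d \to \infty$, so the maximum summand is $o(1)$ relative to the standard deviation.

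For step (ii), I would invoke empirical process theory for the class $\mathcal{F}_{n,m} = \{(w,x)\mapsto y_j I(h_n s < x - x_m < h_n(s+t)) : \|(s,t)\|\le M,\ j\le d_Y\}$. Rectangles form a VC class with a fixed VC dimension, and multiplication by the bounded factor $y_j$ preserves uniform-entropy polynomial bounds, so $\mathcal{F}_{n,m}$ has a uniform entropy integral bounded independently of $n$. The natural envelope is $\overline Y$ with $L^2(P)$-size of order $h_n^{d/2}$. After rescaling by $\sqrt{n}/h_n^{d/2}$, the standard maximal inequality (e.g., Theorem 2.14.1 of van der Vaart and Wellner) applied to the intrinsic semimetric $\rho_n((s,t),(s',t'))^2$, which by another change of variable equals the Lebesgue measure of the symmetric difference of the unit-scale rectangles $\{s<u<s+t\}\triangle\{s'<u<s'+t'\}$ (up to a factor tending to $E(Y_{i,J(m)} Y_{i,J(m)}'|X=x_m) f_X(x_m)$), yields asymptotic equicontinuity on $\{\|(s,t)\|\le M\}$ with respect to a metric that is continuous in $(s,t)$ and dominates the Euclidean one. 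Combined with step (i) this gives weak convergence of $\mathbb{G}_{n,x_m}$ in $C(\{\|(s,t)\|\le M\}, \mathbb{R}^{d_Y})$ to $\mathbb{G}_{P,x_m}$.

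For step (iii), independence across $m$ is automatic for $n$ large enough: once $2 M h_n < \tfrac{1}{2}\min_{m\ne m'}\|x_m - x_{m'}\|$, the rectangles $\{h_n s < X - x_m < h_n(s+t)\}$ and $\{h_n s' < X - x_{m'} < h_n(s'+t')\}$ are disjoint for all $\|(s,t)\|,\|(s',t')\|\le M$, so the cross-covariance is exactly zero; finite-dimensional joint convergence then delivers independent Gaussian limits, and tightness of each marginal gives joint weak convergence. The main obstacle is step (ii): because the indexing class $\mathcal{F}_{n,m}$ shrinks with $n$ and the envelope size goes to zero, one cannot apply a fixed-class Donsker theorem and must carefully track the $h_n^{d/2}$ envelope scaling through the uniform entropy bound to verify the Lindeberg-type prerequisite of the maximal inequality used for equicontinuity.
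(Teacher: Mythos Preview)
Your proposal is correct and follows essentially the same route as the paper, which packages your steps (i)--(ii) into a single citation of Theorem 2.11.22 in van der Vaart and Wellner (1996) and then computes the covariance kernel and the drift $g_{n,x_m}$ via the same change of variables and second-order Taylor expansion you describe. One small slip in step (iii): for $m\ne m'$ the cross-covariance is not \emph{exactly} zero once the rectangles are disjoint, since the term $-h_n^{-d}E[Y_{i,J(m)}I_m]\,E[Y_{i,J(m')}'I_{m'}]$ survives, but it is $O(h_n^d)\to 0$, which is precisely what the paper verifies.
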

\begin{proof}
The convergence in distribution in the first statement follows from
verifying the conditions of Theorem 2.11.22 in \citet{van_der_vaart_weak_1996}.
To derive the covariance kernel, note that
\begin{align*}
&cov(\mathbb{G}_{n,x_m}(s,t),\mathbb{G}_{n,x_m}(s',t'))  \\
&=h_n^{-d}EY_{i,J(m)}Y_{i,J(m)}'I\left\{h_n(s\vee s')<X-x_m
<h_n\left[(s+t)\wedge (s'+t')\right]\right\}  \\
&-h_n^{-d}\left\{EY_{i,J(m)}I\left[h_ns<X-x_m<h_n(s+t)\right]\right\}
\left\{EY_{i,J(m)}'I\left[h_ns'<X-x_m<h_n(s'+t')\right]\right\}.
\end{align*}
The second term goes to zero as $n\to\infty$.  The first is equal to the
claimed covariance kernel plus the error term
\begin{align*}
h_n^{-d}\int_{h_n(s\vee s')<x-x_m<h_n\left[(s+t)\wedge (s'+t')\right]}
\left[E(Y_{i,J(m)}Y_{i,J(m)}'|X=x)f_X(x)
-E(Y_{i,J(m)}Y_{i,J(m)}'|X=x_m)f_X(x_m)\right]\, dx,
\end{align*}
which is bounded by
\begin{align*}
&\left\{\max_{\|x-x_m\|\le 2h_nM}\left[E(Y_{i,J(m)}Y_{i,J(m)}'|X=x)f_X(x)
-E(Y_{i,J(m)}Y_{i,J(m)}'|X=x_m)f_X(x_m)\right]\right\}  \\
&\times h_n^{-d}\int_{h_n(s\vee s')<x-x_m<h_n\left[(s+t)\wedge (s'+t')\right]}
\, dx  \\
&= \left\{\max_{\|x-x_m\|\le 2h_nM}\left[E(Y_{i,J(m)}Y_{i,J(m)}'|X=x)f_X(x)
-E(Y_{i,J(m)}Y_{i,J(m)}'|X=x_m)f_X(x_m)\right]\right\}  \\
&\times \int_{(s\vee s')<x-x_m<(s+t)\wedge (s'+t')}\, dx.
\end{align*}
This goes to zero as $n\to\infty$ by continuity of
$E(Y_{i,J(m)}Y_{i,J(m)}'|X=x)$ and $f_X(x)$.  For $m\ne r$ and
$\|(s,t)\|\le M$, $\|(s',t')\|\le M$,
$cov(\mathbb{G}_{n,x_m}(s,t),\mathbb{G}_{n,x_r}(s',t'))$ is eventually
equal to
\begin{align*}
-h_n^{-d}\left\{EY_{i,J(m)}I\left[h_ns<X-x_m<h_n(s+t)\right]\right\}
\left\{EY_{i,J(r)}'I\left[h_ns'<X-x_r<h_n(s'+t')\right]\right\},
\end{align*}
which goes to zero, so the processes for different elements of
$\mathcal{X}_0$ are independent as claimed.

For the claim regarding $g_{n,x_m}(s,t)$, first note that the assumptions
imply
that, for $j\in J(m)$, the first derivative of $x\mapsto E(Y_{i,j}|X=x)$
at $x=x_m$ is $0$, and that this function has a second order Taylor
expansion:
\begin{align*}
E(Y_{i,j}|X=x)=\frac{1}{2}(x-x_m)'V_j(x_m)(x-x_m)+R_n(x)
\end{align*}
where
\begin{align*}
R_n(x)=\frac{1}{2}(x-x_m)'V_j(x^*(x))(x-x_m)
-\frac{1}{2}(x-x_m)'V_j(x_m)(x-x_m)
\end{align*}
and $V_j(x^*)$ is the second derivative matrix evaluated at some $x^*(x)$
between $x_m$ and $x$.

We have
\begin{align*}
g_{n,x_m,j}(s,t)&=\frac{1}{2h_n^{d+2}}
\int_{h_ns<x-x_m<h_n(s+t)} (x-x_m)'V_j(x_m)(x-x_m) f_X(x_m)\, dx  \\
&+\frac{1}{2h_n^{d+2}}\int_{h_ns<x-x_m<h_n(s+t)} (x-x_m)'V_j(x_m)(x-x_m)
    [f_X(x)-f_X(x_m)]\, dx  \\
&+\frac{1}{h_n^{d+2}}\int_{h_ns<x-x_m<h_n(s+t)} R_n(x)f_X(x)\, dx.
\end{align*}
The first term is equal to $g_{P,x_m,j}(s,t)$ by a change of variable $x$
to $h_nx+x_m$ in the integral.  The second term is bounded by
$g_{P,x_m,j}(s,t)\sup_{\|x-x_m\|\le 2h_n M} [f_X(x)-f_X(x_m)]/f_X(x_m)$,
which goes to zero uniformly in $\|(s,t)\|\le M$ by continuity of $f_X$.
The third term is equal to (using the same change of variables)
\begin{align*}
\frac{1}{2} \int_{s<x<s+t}
[x'V_j(x^*(h_nx+x_m))x-x'V_j(x_m)x]f_X(h_nx+x_m)\, dx.
\end{align*}
This is bounded by a constant times
$\sup_{\|x\|\le M} |x'V_j(x^*(h_nx+x_m))x-x'V_j(x_m)x|$,
which goes to zero as $n\to\infty$ by continuity of the second
derivatives.
\end{proof}

Thus, if we let $h_n$ be such that
$\sqrt{n}/h_n^{d/2}=1/h_n^{d+2}
\Longleftrightarrow h_n=n^{-1/(d+4)}$ and scale up by
$\sqrt{n}/h_n^{d/2}=1/h_n^{d+2}=n^{(d+2)/(d+4)}$, we will have
\begin{align*}
&n^{(d+2)/(d+4)}(E_nY_{i,J(1)}I(h_ns<X-x_1<h_n(s+t)),
\ldots,E_nY_{i,J(\ell)}I(h_ns<X-x_\ell<h_n(s+t))  \\
&=(\mathbb{G}_{n,x_1}(s,t)+g_{n,x_1}(s,t),
\ldots,\mathbb{G}_{n,x_\ell}(s,t)+g_{n,x_\ell}(s,t))  \\
&\stackrel{d}{\to}
(\mathbb{G}_{P,x_1}(s,t)+g_{P,x_1}(s,t),\ldots,
\mathbb{G}_{P,x_m}(s,t)+g_{P,x_m}(s,t))
\end{align*}
taken as stochastic processes in $\{\|(s,t)\|\le M\}$ with the supremum
norm.  From now on, let $h_n=n^{-1/(d+4)}$ so that this will hold.

We would like to show that the infimum of these stochastic
processes over all of $\mathbb{R}^{2d}$ converges to the infimum of the
limiting process over all of $\mathbb{R}^{2d}$, but this does not follow
immediately since we only have uniform convergence on compact sets.
Another way of thinking about this problem is that convergence in
distribution in $\{\|(s,t)\|\le M\}$ with the supremum norm for any $M$
implies convergence in distribution in $\mathbb{R}^{2d}$ with the topology
of uniform convergence on compact sets \citep[see][]{kim_cube_1990},
but the infimum over all of $\mathbb{R}^{2d}$ is not a continuous mapping
on this space since uniform convergence on all compact sets does not imply
convergence of the infimum over all of $\mathbb{R}^{2d}$.  To get the
desired result, the following lemma will be useful.  The idea is to show
that values of $(s,t)$ far away from zero won't matter for the limiting
distribution, and then use convergence for fixed compact sets.

\begin{lemma}\label{inf_dist_lemma_multi}
Let $\mathbb{H}_n$ and $\mathbb{H}_P$ be random functions from
$\mathbb{R}^{k_1}$ to $\mathbb{R}^{k_2}$
such that, (i) for all $M$, $\mathbb{H}_n\stackrel{d}{\to} \mathbb{H}_P$
when $\mathbb{H}_n$ and $\mathbb{H}_P$ are taken as random processes on
$\{t\in \mathbb{R}^{k_1}| \|t\|\le M\}$ with the supremum norm, (ii) for all
$r<0$, $\varepsilon>0$, there exists an $M$ such that
$P\left(\inf_{\|t\|>M} \mathbb{H}_{P,j}(t) \le r
  \text{ some $j$}\right)<\varepsilon$
and an $N$ such that
$P\left(\inf_{\|t\|>M} \mathbb{H}_{n,j}(t) \le r \text{ some
  $j$}\right)<\varepsilon$
for all $n\ge N$ and (iii) $\inf_t \mathbb{H}_{n,j}(t)\le 0$ and $\inf_t
\mathbb{H}_{P,j}(t)\le 0$ with probability one.  Then $\inf_{t\in\mathbb{R}^{k_1}}
\mathbb{H}_n(t)\stackrel{d}{\to} \inf_{t\in\mathbb{R}^{k_1}} \mathbb{H}_P(t)$.
\end{lemma}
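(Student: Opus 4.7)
The plan is to establish $\inf_t \mathbb{H}_n(t) \stackrel{d}{\to} \inf_t \mathbb{H}_P(t)$ by verifying $E f(L_n) \to E f(L_P)$ for every bounded Lipschitz $f:\mathbb{R}^{k_2}\to\mathbb{R}$, writing $L_n = \inf_t \mathbb{H}_n(t)$ and $L_P = \inf_t \mathbb{H}_P(t)$ for the componentwise infima. The bridge between condition (i), which gives convergence only on compact sets, and the conclusion, which concerns the infimum over all of $\mathbb{R}^{k_1}$, is a pair of truncated infima $L_n^M := \inf_{\|t\|\le M} \mathbb{H}_n(t)$ and, crucially, their componentwise zero-capped versions $\tilde L_n^M := L_n^M \wedge 0$, and analogously for $\mathbb{H}_P$. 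The map $\mathbb{H}\mapsto \tilde L^M$ is continuous in the supremum norm on $\{\|t\|\le M\}$, so by the continuous mapping theorem (in its outer-weak-convergence form) applied to condition (i), $E f(\tilde L_n^M)\to E f(\tilde L_P^M)$ for each fixed $M$.

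Given bounded Lipschitz $f$ with constant $K$ and $\|f\|_\infty\le B$ and given $\varepsilon>0$, I would first pick $r<0$ with $K|r|<\varepsilon$, then use condition (ii) to choose $M$ and $N$ so that the event $A_{n,M,r}:=\{\inf_{\|t\|>M}\mathbb{H}_{n,j}(t)>r\ \text{for all } j\}$ satisfies $P(A_{n,M,r}^c)<\varepsilon$ for all $n\ge N$, and analogously for $\mathbb{H}_P$. The core estimate, proved in the next paragraph, is that $\|L_n - \tilde L_n^M\|_\infty\le |r|$ holds on $A_{n,M,r}$. Granting this, $|E f(L_n) - E f(\tilde L_n^M)|\le K|r| + 2B\,P(A_{n,M,r}^c)\le (1+2B)\varepsilon$, and similarly for $\mathbb{H}_P$. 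Combined with the middle term from continuous mapping, a triangle inequality gives $\limsup_n |E f(L_n) - E f(L_P)| \le (3+4B)\varepsilon$, and since $\varepsilon$ is arbitrary we conclude $L_n\stackrel{d}{\to} L_P$ by Portmanteau.

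The main obstacle is the componentwise bound $|\tilde L_{n,j}^M - L_{n,j}|\le |r|$ on $A_{n,M,r}$, which is where condition (iii) is essential. Writing $L_{n,j}=\min(L_{n,j}^M,\inf_{\|t\|>M}\mathbb{H}_{n,j}(t))$, two cases arise. If $L_{n,j}^M\le \inf_{\|t\|>M}\mathbb{H}_{n,j}$, then $L_{n,j}=L_{n,j}^M$, which combined with $L_{n,j}\le 0$ from condition (iii) forces $L_{n,j}^M\le 0$ and hence $\tilde L_{n,j}^M=L_{n,j}^M=L_{n,j}$. Otherwise, the componentwise infimum is attained outside the ball, so $L_{n,j}=\inf_{\|t\|>M}\mathbb{H}_{n,j}\in (r,0]$ (using $A_{n,M,r}$ together with $L_{n,j}\le 0$), while $\tilde L_{n,j}^M \in (L_{n,j},0]$, and therefore $|\tilde L_{n,j}^M-L_{n,j}|\le |r|$. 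Without the capping at zero, this second case would break down because $L_{n,j}^M$ could be arbitrarily large when the overall infimum is attained outside the ball; it is exactly this capping, legitimized by condition (iii), that closes the argument and lets the compact-set convergence of condition (i) be promoted to convergence of the full infimum.
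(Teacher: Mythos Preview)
Your proof is correct and follows a genuinely different route from the paper's. The paper reduces to the scalar case via the Cramer--Wold device and then runs a direct Portmanteau argument on the distribution functions of $w'\inf_t\mathbb{H}_n(t)$: for each $w\in\mathbb{R}^{k_2}$ it controls the contribution from $\|t\|>M$ by invoking condition (ii) with the threshold rescaled to $r/(k_2\max_i|w_i|)$, and then uses weak convergence of the infimum over $\|t\|\le M$ from condition (i) together with a liminf/limsup sandwich. You instead stay multivariate throughout, use the bounded-Lipschitz characterization of weak convergence, and introduce the zero-capped truncation $\tilde L^M=L^M\wedge 0$. The capping is the distinctive idea here: it puts condition (iii) to explicit work and turns tightness outside the $M$-ball into a clean componentwise bound $\|L_n-\tilde L_n^M\|_\infty\le|r|$ on the good event, neatly handling the case where $L_{n,j}^M$ is large and positive because the $j$th infimum is actually attained outside the ball. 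What your route buys is that it avoids the somewhat delicate interaction between an arbitrary linear functional and the componentwise minimum (in general $w'L_n\neq \min\{w'L_n^M,\,w'\inf_{\|t\|>M}\mathbb{H}_n\}$, so the scalar reduction requires care); what the paper's route buys is that, once in one dimension, the argument is the familiar cdf sandwich without any auxiliary capping device.
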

\begin{proof}
First, by the Cramer-Wold device, it suffices to show that, for all $w\in
\mathbb{R}^{k_2}$,
$w'\inf_{t\in\mathbb{R}^{k_1}} \mathbb{H}_n(t)\stackrel{d}{\to}
w'\inf_{t\in\mathbb{R}^{k_1}} \mathbb{H}_P(t)$.  For this,
it suffices to show that for all
$r\in\mathbb{R}$,
$\liminf_n P\left(w'\inf_{t\in\mathbb{R}^{k_1}} \mathbb{H}_n(t)<r\right)
\ge P\left(w'\inf_{t\in\mathbb{R}^{k_1}} \mathbb{H}_P(t)<r\right)$
and
$\limsup_n P\left(w'\inf_{t\in\mathbb{R}^{k_1}} \mathbb{H}_n(t)\le r\right)
\le P\left(w'\inf_{t\in\mathbb{R}^{k_1}} \mathbb{H}_P(t)\le r\right)$
since, arguing along the lines of the Portmanteau Lemma,
when $r$ is a continuity point
of the limiting distribution, we will have
\begin{align*}
&P\left(w'\inf_{t\in\mathbb{R}^{k_1}} \mathbb{H}_P(t)\le r\right)
=P\left(w'\inf_{t\in\mathbb{R}^{k_1}} \mathbb{H}_P(t)<r\right)
\le \liminf_n P\left(w'\inf_{t\in\mathbb{R}^{k_1}} \mathbb{H}_n(t)<r\right)  \\
&\le \liminf_n P\left(w'\inf_{t\in\mathbb{R}^{k_1}} \mathbb{H}_n(t)\le r\right)
\le \limsup_n P\left(w'\inf_{t\in\mathbb{R}^{k_1}} \mathbb{H}_n(t)\le r\right)
\le P\left(w'\inf_{t\in\mathbb{R}^{k_1}} \mathbb{H}_P(t)\le r\right).
\end{align*}

Given $\varepsilon>0$, let $M$ and $N$ be as in the assumptions of the
lemma, but with $r$ replaced by $r/(k_2 \max_i |w_i|)$.  Then
\begin{align*}
P\left(w'\inf_{\|t\|\ge M} \mathbb{H}_P(t)<r\right)
\le P\left((k_2 \max_i |w_i|)\inf_{\|t\|\ge M} \mathbb{H}_{P,j}(t)<r
\text{ some $j$}\right)
\le \varepsilon
\end{align*}
so that
$P\left(w'\inf_{\|t\|\le M} \mathbb{H}_P(t)<r\right)
+\varepsilon
\ge P\left(w'\inf_{t\in\mathbb{R}^{k_1}} \mathbb{H}_P(t)<r\right)$
and, for $n\ge N$,
\begin{align*}
P\left(w'\inf_{\|t\|\ge M} \mathbb{H}_n(t)\le r\right)
\le P\left((k_2 \max_i |w_i|)\inf_{\|t\|\ge M} \mathbb{H}_{n,j}(t)\le r
\text{ some $j$}\right)
\le \varepsilon
\end{align*}
so that
$P\left(w'\inf_{\|t\|\le M} \mathbb{H}_n(t)\le r\right)
+\varepsilon
\ge P\left(w'\inf_{t\in\mathbb{R}} \mathbb{H}_n(t)\le r\right)$.
Thus, by convergence in distribution of the infima over $\|t\|\le M$,
\begin{align*}
&\liminf_n P\left(w'\inf_{t\in\mathbb{R}^{k_1}} \mathbb{H}_n(t)<r\right)
\ge \liminf_n P\left(w'\inf_{\|t\|\le M} \mathbb{H}_n(t)<r\right)
\ge P\left(w'\inf_{\|t\|\le M} \mathbb{H}_P(t)<r\right)  \\
&\ge P\left(w'\inf_{t\in\mathbb{R}^{k_1}}
\mathbb{H}_P(t)<r\right)-\varepsilon
\end{align*}
and
\begin{align*}
&\limsup_n P\left(w'\inf_{t\in\mathbb{R}^{k_1}} \mathbb{H}_n(t)\le r\right)
\le \limsup_n P\left(w'\inf_{\|t\|\le M} \mathbb{H}_n(t)\le r\right)
+ \varepsilon  \\
&\le P\left(w'\inf_{\|t\|\le M} \mathbb{H}_P(t)\le r\right)
+ \varepsilon
\le P\left(w'\inf_{t\in\mathbb{R}^{k_1}} \mathbb{H}_P(t)\le r\right)
+ \varepsilon.
\end{align*}
Since $\varepsilon$ was arbitrary, this gives the desired result.

\end{proof}

Technically, this lemma does not apply to
\begin{align*}
(\mathbb{G}_{n,x_1}(s,t)+g_{n,x_1}(s,t),
\ldots,\mathbb{G}_{n,x_\ell}(s,t)+g_{n,x_\ell}(s,t))
\end{align*}
since, for $m\ne r$, $\mathbb{G}_{n,x_m}(s,t)+g_{n,x_m}(s,t)$ evaluated at
some increasing values of $(s,t)$ may actually be equal to
$\mathbb{G}_{n,x_r}(s',t')+g_{n,x_r}(s',t')$ for some small values of
$(s',t')$, since, once the local indices are large enough, the original
indices overlap.  Instead, noting that, for any $\eta>0$,
\begin{align*}
&n^{(d+2)/(d+4)} \inf_{s,t} E_nY_iI(s<X_i<s+t)  \\
&=\left(\min_{m \text{ s.t. } 1\in J(m)} \inf_{\|(s,t)\|\le \eta/h_n}
\mathbb{G}_{n,x_m,1}(s,t)+g_{n,x_m,1}(s,t)),\ldots,\right.  \\
&\left.\min_{m \text{ s.t. } k\in J(m)} \inf_{\|(s,t)\|\le \eta/h_n}
\mathbb{G}_{n,x_m,k}(s,t)+g_{n,x_m,k}(s,t)\right)  \\
&\wedge \left(n^{(d+2)/(d+4)}
\inf_{\|(s-x_m,t)\|>\eta \text{ all $m$ s.t. $1\in J(m)$}}
E_nY_{i,1}I(s<X_i<s+t),\ldots,\right.  \\
&\left.n^{(d+2)/(d+4)}
\inf_{\|(s-x_m,t)\|>\eta \text{ all $m$ s.t. $k\in J(m)$}}
E_nY_{i,k}I(s<X_i<s+t)\right)  \\
&\equiv Z_{n,1}\wedge Z_{n,2},
\end{align*}
I show that, for some $\eta>0$, $Z_{n,2}\stackrel{p}{\to} 0$ using a
separate argument, and use Lemma \ref{inf_dist_lemma_multi} to show that,
for the same $\eta$,
\begin{align*}
&(\inf_{s,t}[\mathbb{G}_{n,x_1}(s,t)+g_{n,x_1}(s,t)]I(\|(s,t)\|\le \eta/h_n),
\ldots,\inf_{s,t}[\mathbb{G}_{n,x_\ell}(s,t)+g_{n,x_\ell}(s,t)]
I(\|(s,t)\|\le \eta/h_n))  \\
&\stackrel{d}{\to}
(\inf_{s,t}\mathbb{G}_{P,x_1}(s,t)+g_{P,x_1}(s,t),
\ldots,\inf_{s,t}\mathbb{G}_{P,x_\ell}(s,t)+g_{P,x_\ell}(s,t)),
\end{align*}
from which it follows that $Z_{n,1}\stackrel{d}{\to} Z$ for $Z$ defined as
in Theorem \ref{inf_dist_thm_multi} by the continuous mapping theorem.

Part (i) of Lemma \ref{inf_dist_lemma_multi} follows from Theorem
\ref{local_process_thm_multi} (the $I(\|(s,t)\|\le \eta/h_n)$ term does
not change this, since it is equal to one for $\|(s,t)\|\le M$
eventually).  Part (iii) follows since the processes involved are equal to
zero when $t=0$.  To verify part (ii), first note that it suffices to
verify part (ii) of the lemma for
$\mathbb{G}_{n,x_m,j}(s,t)+g_{n,x_m,j}(s,t)$ and
$\mathbb{G}_{P,x_m,j}(s,t)+g_{P,x_m,j}(s,t)$ for each $m$ and $j$
individually.  Part (ii) of the lemma holds trivially for $m$ and $j$ such
that $j\notin J(m)$, so we need to verify this part of the lemma for $m$
and $j$ such that $j\in J(m)$.

The next two lemmas provide bounds that will be used
to verify condition (ii) of Lemma \ref{inf_dist_lemma_multi} for
$\mathbb{G}_{n,x_m,j}(s,t)+g_{n,x_m,j}(s,t)$ and
$\mathbb{G}_{P,x_m,j}(s,t)+g_{P,x_m,j}(s,t)$ for $m$ and $j$ with $j\in
J(m)$.
To do this, the bounds in the lemmas are applied to sets of
$(s,t)$ with $\|(s,t)\|$ increasing.  The idea is similar to the
``peeling'' argument of, for example, \citet{kim_cube_1990},
but different arguments are required to deal with values of $(s,t)$ for
which, even though $\|s\|$ is large, $\prod_i t_i$ is small so that the
objective function on average uses only a few observations, which may
happen to be negative.  To get bounds on the suprema of the limiting and
finite sample processes where $t$ may be small relative to $s$, the
next two lemmas bound the supremum by a maximum over $s$ in a finite grid
of suprema over $t$ with $s$ fixed, and then use exponential bounds on
suprema of the processes with fixed $s$.

\begin{lemma}\label{tail_bnd_G_multi}
Fix $m$ and $j$ with $j\in J(m)$.
For some $C>0$ that depends only on $d$, $f_X(x_m)$ and
$E(Y_{i,j}^2|X=x_m)$, we have,
for any $B\ge 1$, $\varepsilon>0$, $w>0$,
\begin{align*}
P\left(
\sup_{\|(s,t)\|\le B, \prod_i t_i\le \varepsilon}
|\mathbb{G}_{P,x_m,j}(s,t)|\ge w \right)
\le 2\left\{3B[B^d/(\varepsilon\wedge 1)]+2\right\}^{2d}
\exp\left(-C\frac{w^2}{\varepsilon}\right)
\end{align*}
for $\frac{w^2}{\varepsilon}$ greater than some constant that depends only
on $d$, $f_X(x_m)$ and $E(Y_{i,j}^2|X=x_m)$.
\end{lemma}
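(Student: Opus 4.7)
The plan is to produce the claimed tail bound by combining a union bound on a sufficiently fine grid in $\{(s,t):\|(s,t)\|\le B,\ \prod_i t_i\le \varepsilon\}$ with a separate control of the oscillation of $\mathbb{G}_{P,x_m,j}$ between grid points. The two quantitative inputs I would use are drawn directly from the covariance formula in Theorem \ref{inf_dist_thm_multi} specialized to the $j$-th coordinate: (i) the pointwise variance bound
\begin{equation*}
\mathrm{Var}(\mathbb{G}_{P,x_m,j}(s,t)) = E(Y_{i,j}^2|X=x_m)\,f_X(x_m)\prod_i t_i \;\le\; C_0\varepsilon,
\end{equation*}
where $C_0=E(Y_{i,j}^2|X=x_m)f_X(x_m)$; and (ii) the increment variance bound, which follows from the fact that $\mathbb{G}_{P,x_m,j}(s,t)-\mathbb{G}_{P,x_m,j}(s',t')$ is a mean-zero Gaussian whose variance equals $C_0$ times the Lebesgue measure of the symmetric difference of the rectangles $\{s<x<s+t\}$ and $\{s'<x<s'+t'\}$. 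For indices in $[-B,B]^{2d}$ this symmetric difference has measure at most $2d\,B^{d-1}\|(s-s',t-t')\|_\infty$.

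First I would take a rectangular grid with mesh $\delta=(\varepsilon\wedge 1)/B^d$ in each of the $2d$ coordinates. The number of grid points in $[-B,B]^{2d}$ is at most $\{3B/\delta+2\}^{2d}=\{3B[B^d/(\varepsilon\wedge 1)]+2\}^{2d}$, which is precisely the combinatorial factor appearing in the lemma. Combining the standard Gaussian tail inequality $P(|\mathbb{G}_{P,x_m,j}(s,t)|\ge w/2)\le 2\exp(-w^2/(8C_0\varepsilon))$ at each grid point with the union bound handles the supremum restricted to the grid.

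Next I would control the oscillation of the process between adjacent grid points. Since on each sub-cube of side $\delta$ the $L^2$ diameter of the restricted Gaussian process is of order $\sqrt{C_0\cdot B^{d-1}\delta}=O(\sqrt{C_0\varepsilon/B})$, a standard chaining/Dudley estimate bounds its expected supremum by a constant multiple of $\sqrt{\varepsilon/B}$, and Borell's Gaussian concentration inequality then gives sub-Gaussian tails with parameter of the same order. Hence, for $w^2/\varepsilon$ larger than a threshold depending only on $d$, $f_X(x_m)$ and $E(Y_{i,j}^2|X=x_m)$, the probability that the oscillation on any fixed sub-cube exceeds $w/2$ is at most $\exp(-Cw^2/\varepsilon)$ with constants uniform in the cube. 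A union bound over the (at most $\{3B[B^d/(\varepsilon\wedge 1)]+2\}^{2d}$) sub-cubes merges with the grid bound, and absorbing numerical constants yields the stated inequality.

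The main obstacle will be the bookkeeping in the oscillation step: one must be sure that the choice $\delta=(\varepsilon\wedge 1)/B^d$ actually makes the per-cube sub-Gaussian parameter small enough that its contribution dissolves into the constant $C$ in $\exp(-Cw^2/\varepsilon)$, uniformly in $B$, $\varepsilon$ and $w$ (subject to the assumed threshold on $w^2/\varepsilon$). The choice of mesh is dictated precisely by the need to make the variance of increments across a cube, $O(C_0 B^{d-1}\delta)=O(\varepsilon/B)$, a factor of $1/B$ smaller than the ambient variance $O(\varepsilon)$; the absorption of $E\sup$ into $w$ and of $1/B$ into the constants then makes everything line up. The rest is standard Gaussian-process machinery.
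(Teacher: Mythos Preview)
Your approach is correct but takes a genuinely different route from the paper. The paper does not control oscillations via chaining and Borell at all. Instead it exploits the special structure of $\mathbb{G}_{P,x_m,j}$, which is essentially a rescaled Brownian sheet: first, an inclusion--exclusion identity writes $\mathbb{G}(s,t)$ for any $s_0\le s\le s+t\le s_0+t_0$ as an alternating sum of $2^d$ values $\mathbb{G}(s_0,\cdot)$ with second argument in $[0,t_0]$, giving $\sup_{s_0\le s\le s+t\le s_0+t_0}|\mathbb{G}(s,t)|\le 2^d\sup_{t\le t_0}|\mathbb{G}(s_0,t)|$. Then stationarity in $s_0$ and exact self-similarity in $t_0$ yield $\sup_{t\le t_0}|\mathbb{G}(s_0,t)|\stackrel{d}{=}(\prod_i t_{0,i})^{1/2}\sup_{t\le 1}|\mathbb{G}(0,t)|$, so after the same $d$-dimensional grid in $(s_0,s_0+t_0)$ that you use, the entire problem reduces to a single fixed supremum $\sup_{t\le 1}|\mathbb{G}(0,t)|$, to which Adler's (1990) Gaussian tail theorem is applied directly.

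The comparison: the paper's argument is shorter and avoids the oscillation bookkeeping entirely, but it relies on exact shift-invariance and scaling of the covariance kernel. Your grid-plus-chaining argument uses only the variance and increment-variance bounds and would go through for limiting processes lacking exact self-similarity. One small point you should make explicit: the grid points you evaluate need not satisfy $\prod_i t_i\le\varepsilon$, but any cube of side $\delta=(\varepsilon\wedge 1)/B^d$ that meets the constraint region has its corners satisfying $\prod_i t_i\le 2^d\varepsilon$ (the paper carries out this calculation), so the pointwise variance bound $C_0\cdot 2^d\varepsilon$ still applies at the relevant grid points.
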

\begin{proof}
Let $\mathbb{G}(s,t)=\mathbb{G}_{P,x_m,j}(s,t)$.  We have, for any
$s_0\le s\le s+t\le s_0+t_0$,
\begin{align*}
&\mathbb{G}(s,t)
=\mathbb{G}(s_0,t+s-s_0)  \\
&+\sum_{1\le j \le d} (-1)^j \sum_{1\le i_1<i_2<\ldots<i_j\le d}  \\
&\mathbb{G}(s_0,(t_1+s_1-s_{0,1},\ldots,
t_{i_1-1}+s_{i_1-1}-s_{0,i_1-1},s_{i_1}-s_{0,i_1},t_{i_1+1}+s_{i_1+1}-s_{0,i_1+1},  \\
& \ldots,
t_{i_j-1}+s_{i_j-1}-s_{0,i_j-1},s_{i_j}-s_{0,i_j},t_{i_j+1}+s_{i_j+1}-s_{0,i_j+1},\ldots,
t_d+s_d-s_{0,d})).
\end{align*}
Thus, since there are $2^d$ terms in the above display, each with absolute
value bounded by $\sup_{t\le t_0} |\mathbb{G}(s_0,t)|$,
\begin{align*}
\sup_{s_0\le s\le s+t\le s_0+t_0}
|\mathbb{G}(s,t)|
\le 2^d \sup_{t\le t_0} |\mathbb{G}(s_0,t)|
\stackrel{d}{=}
2^d \sup_{t\le t_0} |\mathbb{G}(0,t)|.
\end{align*}

Let $A$ be a grid of meshwidth $(\varepsilon\wedge 1)/B^d$ covering
$[-B,2B]^d$.  For any $(s,t)$ with $\|(s,t)\|\le B$ and $\prod_i t_i\le
\varepsilon$,
there are $s_0$ and $t_0$ with $s_0,s_0+t_0\in A$ such that
$s_0\le s\le s+t\le s_0+t_0$, and
$\prod_i t_{0,i}\le \prod_i (t_i+(\varepsilon\wedge 1)/B^d)
=\sum_{j=0}^d [(\varepsilon\wedge 1)/B^d]^j
\sum_{I\in\{1,\ldots,d\}, |I|=d-j}\prod_{i\in I} t_i
\le \prod_i t_i
+\sum_{j=1}^d [(\varepsilon\wedge 1)/B^d]^j
{d\choose d-j} B^{d-j}
\le \varepsilon
+\varepsilon \sum_{j=1}^d B^{-dj}
{d\choose d-j} B^{d-j}
\le 2^d \varepsilon$.  For this $s_0,t_0$, we
will then have, by the above display,
$|\mathbb{G}(s,t)|\le 2^d \sup_{t\le t_0} |\mathbb{G}(s_0,t)|$.

This gives
\begin{align*}
\sup_{\|(s,t)\|\le B, \prod_i t_i\le \varepsilon} |\mathbb{G}(s,t)|
\le 2^d \max_{s_0,s_0+t_0\in A, \prod_i t_{0,i}\le 2^d\varepsilon}
\sup_{t\le t_0} |\mathbb{G}(s_0,t)|,
\end{align*}
so that
\begin{align*}
&P\left(\sup_{\|(s,t)\|\le B, \prod_i t_i\le \varepsilon} |\mathbb{G}(s,t)|
\ge w\right)
\le |A|^2\max_{s_0,s_0+t_0\in A, \prod_i t_{0,i}\le 2^d\varepsilon}
P\left(2^d \sup_{t\le t_0} |\mathbb{G}(s_0,t)|\ge w\right)  \\
&= |A|^2\max_{s_0,s_0+t_0\in A, \prod_i t_{0,i}\le 2^d\varepsilon}
P\left(2^d \sup_{t\le 1} \left(\prod_i t_{0,i}\right)^{1/2}
|\mathbb{G}(0,t)| \ge w\right)  \\
&\le |A|^2%
P\left(\sup_{t\le 1}
|\mathbb{G}(0,t)| \ge \frac{w}{2^d 2^{d/2}\varepsilon^{1/2}}\right).
\end{align*}
The result then follows using the fact that
$|A|\le \left\{3B[B^d/(\varepsilon\wedge 1)]+2\right\}^{d}$
and using Theorem 2.1 (p.43) in \citet{adler_introduction_1990}
to bound the probability in the last line of the display
(the theorem in \citet{adler_introduction_1990} shows that the probability
in the above display is bounded by
$2\exp(-K_1w^2/\varepsilon+K_2w/\varepsilon^{1/2}+K_3)$
for some constants $K_1$, $K_2$, and
$K_3$ with $K_1>0$ that depend only on $d$, $f_X(x_m)$ and
$E(Y_{i,j}^2|X=x_m)$ and this expression is less than
$2\exp(-(K_1/2)w^2/\varepsilon)$ for $w^2/\varepsilon$ greater than some
constant that depends only on $K_1$, $K_2$, and $K_3$).

\end{proof}

\begin{lemma}\label{tail_bound_Gn}
Fix $m$ and $j$ with $j\in J(m)$.
For some $C>0$ that depends only on the distribution of $(X,Y)$ and some
$\eta>0$, we have,
for any $1\le B\le h_n^{-1}\eta$, $w>0$ and
$\varepsilon \ge n^{-4/(d+4)}(1+\log n)^2$,
\begin{align*}
P\left(
\sup_{\|(s,t)\|\le B, \prod_i t_i\le \varepsilon}
|\mathbb{G}_{n,x_m,j}(s,t)|\ge w \right)
\le 2\left\{3B[B^d/(\varepsilon\wedge 1)]+2\right\}^{2d}
\exp\left(-C\frac{w}{\varepsilon^{1/2}}\right).
\end{align*}
\end{lemma}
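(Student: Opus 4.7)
The argument parallels Lemma \ref{tail_bnd_G_multi}, with the Gaussian concentration of Adler replaced by a Bernstein/Talagrand-type bound for the empirical process. The inclusion--exclusion identity writing $\mathbb{G}_{n,x_m,j}(s,t)$ as a sum of $2^d$ values $\mathbb{G}_{n,x_m,j}(s_0,\cdot)$ for $s_0\le s$ is algebraic and holds path-by-path, so it applies verbatim to the finite-sample process. Discretizing $[-B,2B]^d$ by the same grid $A$ of meshwidth $(\varepsilon\wedge 1)/B^d$ reproduces the same cardinality bound $|A|^2\le \{3B[B^d/(\varepsilon\wedge 1)]+2\}^{2d}$ and reduces the event in question to at most $|A|^2$ events of the form $\{\sup_{0\le t\le t_0}|\mathbb{G}_{n,x_m,j}(s_0,t)|\ge w/2^d\}$ with $(s_0,s_0+t_0)\in A^2$ and $\prod_i t_{0,i}\le 2^d\varepsilon$.

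For a single such $(s_0,t_0)$, write $\mathbb{G}_{n,x_m,j}(s_0,t)=n^{-1/2}\sum_i\xi_i(t)$ with $\xi_i(t)=h_n^{-d/2}[Y_{i,j}I(h_ns_0<X_i-x_m<h_n(s_0+t))-\text{mean}]$. Assumption \ref{bdd_y_assump_multi} gives $|\xi_i(t)|\le 2\overline Y h_n^{-d/2}$, and the change-of-variables calculation already used in the proof of Theorem \ref{local_process_thm_multi}, together with continuity of $f_X$ and of $E(Y_{i,j}^2\mid X=\cdot)$ on a neighborhood of $x_m$, gives $\sup_{t\le t_0}E\xi_i(t)^2\le K\prod_i t_{0,i}\le 2^dK\varepsilon$, uniformly in $s_0$, provided $h_nB\le\eta$ keeps the integration region inside that neighborhood. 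Because the class $\{I(h_ns_0<X-x_m<h_n(s_0+t)):0\le t\le t_0\}$ is VC (sub-rectangles of a fixed rectangle with envelope $\overline Y$), Talagrand's inequality (or, equivalently, a further fine discretization of $t$ and the same $2^d$-term increment bound combined with Bernstein pointwise) yields
\begin{equation*}
P\bigl(\sup_{t\le t_0}|\mathbb{G}_{n,x_m,j}(s_0,t)|\ge u\bigr)
\le C_1\exp\left(-\frac{C_2 u^2}{\varepsilon+u\,h_n^{-d/2}/\sqrt{n}}\right)
\end{equation*}
for $u$ above a constant multiple of the expected supremum, which by Dudley/Koltchinskii entropy bounds for VC classes is at most a constant times $\varepsilon^{1/2}(1+\log n)$.

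It remains to translate this Bernstein exponent into the claimed $-Cw/\varepsilon^{1/2}$ form using the hypothesis on $\varepsilon$. With $h_n=n^{-1/(d+4)}$ one has $h_n^{-d/2}/\sqrt{n}=n^{-2/(d+4)}$, and the lower bound $\varepsilon\ge n^{-4/(d+4)}(1+\log n)^2$ gives $h_n^{-d/2}/\sqrt{n}\le \varepsilon^{1/2}/(1+\log n)$ and, in particular, $n^{2/(d+4)}\ge \varepsilon^{-1/2}$. In the ``linear'' regime where $u\,h_n^{-d/2}/\sqrt{n}$ dominates $\varepsilon$, the exponent is bounded by $-C_2 u n^{2/(d+4)}/2\le -C_3u/\varepsilon^{1/2}$. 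In the subgaussian regime the quadratic exponent $-C_2u^2/(2\varepsilon)$ dominates $-C_3u/\varepsilon^{1/2}$ whenever $u\ge\varepsilon^{1/2}$, and for $u<\varepsilon^{1/2}$ the target right-hand side exceeds a positive constant times the prefactor $\{3B[B^d/(\varepsilon\wedge 1)]+2\}^{2d}\ge 2$ and the bound is trivial. Taking $u=w/2^d$ and multiplying by $|A|^2$ delivers the claim.

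The main technical obstacle is the variance side of the Bernstein/Talagrand step: one must secure $\sup_{t\le t_0}E\xi_i(t)^2\lesssim \prod_i t_{0,i}$ rather than an envelope-based $O(1)$ estimate, since only the former produces $\varepsilon$ in the denominator and ultimately $\varepsilon^{1/2}$ in the exponent. This is precisely what forces the hypothesis $B\le h_n^{-1}\eta$, as outside a fixed neighborhood of $x_m$ one loses the local approximation of $f_X(x)E(Y_{i,j}^2\mid X=x)$ by its value at $x_m$, and the variance can no longer be bounded by a constant multiple of $\prod_i t_{0,i}$.
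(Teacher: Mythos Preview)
Your proposal is correct and follows the same overall architecture as the paper: the inclusion--exclusion reduction to $2^d$ terms $\mathbb{G}_n(s_0,\cdot)$, the same grid $A$ and cardinality bound, and sub-exponential concentration of each supremum $\sup_{t\le t_0}|\mathbb{G}_n(s_0,t)|$ using that the underlying class is VC with a small envelope. The difference is in how that concentration is obtained. You invoke Talagrand's inequality and then convert the Bernstein exponent $u^2/(\varepsilon+u\,h_n^{-d/2}/\sqrt n)$ into the target $u/\varepsilon^{1/2}$ via a subgaussian/linear case split, using the hypothesis on $\varepsilon$ only in the linear regime. The paper instead bounds the $\psi_1$-Orlicz norm of the supremum directly via Theorem~2.14.5 of van der Vaart and Wellner, which gives
\[
\bigl\|\sup_{f\in\mathcal F_n}|\sqrt n(E_n-E)f|\bigr\|_{\psi_1}
\;\lesssim\;
J(1,\mathcal F,L^2)\,(EF_n^2)^{1/2}
\;+\;
n^{-1/2}(1+\log n)\,\|Y_{i,j}\|_{\psi_1},
\]
with $F_n=|Y_{i,j}|I(h_ns_0<X_i-x_m<h_n(s_0+t_0))$, and then applies the Orlicz tail bound $P(|Z|\ge w)\le 2\exp(-w/\|Z\|_{\psi_1})$ in one step. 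In that formulation the lower bound on $\varepsilon$ appears in a single, transparent place: it is exactly the condition $h_n^{d/2}\varepsilon^{1/2}\ge n^{-1/2}(1+\log n)$ that makes the second (Bernstein-type) term above dominated by the first, so the $\psi_1$ norm is $\lesssim h_n^{d/2}\varepsilon^{1/2}$ and the exponent $-Cw/\varepsilon^{1/2}$ falls out without case analysis. Your route has the advantage of making the Bernstein trade-off explicit; the paper's route is shorter and avoids the need to separately treat small $u$ and the centering by $E\sup$.
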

\begin{proof}
Let $\mathbb{G}_{n}(s,t)=\mathbb{G}_{n,x_m,j}(s,t)$.
By the same argument as in the previous lemma with $\mathbb{G}$ replaced
by $\mathbb{G}_n$, we have
\begin{align*}
\sup_{s_0\le s\le s+t\le s_0+t_0} |\mathbb{G}_n(s,t)|
\le 2^d \sup_{t\le t_0} |\mathbb{G}_n(s_0,t)|.
\end{align*}
As in the previous lemma, let $A$ be a grid of meshwidth
$(\varepsilon\wedge 1)/B^d$ covering $[-B,2B]^d$.  Arguing as in the
previous lemma, we have, for any $(s,t)$ with $\|(s,t)\|\le B$ and
$\prod_i t_i\le \varepsilon$, there exists some $s_0,t_0$ with
$s_0,s_0+t_0\in A$ such that $\Pi_it_{0,i}\le 2^d\varepsilon$ and
$|\mathbb{G}_n(s,t)|\le 2^d \sup_{t\le t_0} |\mathbb{G}_n(s_0,t)|$.  Thus,
\begin{align*}
&\sup_{\|(s,t)\|\le B, \prod_i t_i\le \varepsilon} |\mathbb{G}_n(s,t)|
\le 2^d \max_{s_0,s_0+t_0\in A, \prod_i t_{0,i}\le 2^d\varepsilon}
\sup_{t\le t_0} |\mathbb{G}_n(s_0,t)|  \\
&= 2^d \max_{s_0,s_0+t_0\in A, \prod_i t_{0,i}\le 2^d\varepsilon}
\sup_{t\le t_0} \frac{\sqrt{n}}{h_n^{d/2}}
|(E_n-E)Y_{i,j}I(h_ns_0\le X_i-x_m\le h_n(s_0+t))|.
\end{align*}
This gives
\begin{align*}
&P\left(\sup_{\|(s,t)\|\le B, \prod_i t_i\le 2^d\varepsilon}
|\mathbb{G}_n(s,t)|\ge w\right)  \\
&\le |A|^2 \max_{s_0,s_0+t_0\in A, \prod_i t_{0,i}\le 2^d\varepsilon}
P\left(2^d\sup_{t\le t_0} \frac{\sqrt{n}}{h_n^{d/2}}
|(E_n-E)Y_{i,j}I(h_ns_0\le X_i-x_m\le h_n(s_0+t))| \ge w\right).
\end{align*}
We have, for some universal constant $K$ and all $n$ with $\varepsilon
\ge n^{-4/(d+4)}(1+\log n)^2$, letting
$\mathcal{F}_n=\{(x,y)\mapsto y_jI(h_ns_0\le x-x_m\le h_n(s_0+t))
|t\le t_0\}$ and defining $\|\cdot\|_{P,\psi_1}$ to be the Orlicz norm
defined on p.90 of \citet{van_der_vaart_weak_1996} for
$\psi_1(x)=\exp(x)-1$,
\begin{align*}
&\|2^d\sup_{f\in\mathcal{F}_n} |\sqrt{n}(E_n-E)f(X_i,Y_i)|\|_{P,\psi_1}  \\
&\le K\left[
E \sup_{f\in\mathcal{F}_n} |\sqrt{n}(E_n-E)f(X_i,Y_i)|
+ n^{-1/2}(1+\log n)
\||Y_{i,j}|I(h_ns_0\le X_i-x_m\le h_n(s_0+t_0))\|_{P,\psi_1}\right]  \\
&\le K\left[
J(1,\mathcal{F}_n,L^2) \left\{E
[|Y_{i,j}|I(h_ns_0<X_i-x_m<h_n(s_0+t_0))]^2\right\}^{1/2}
+ n^{-1/2}(1+\log n)
\|Y\|_{P,\psi_1}\right]  \\
&\le K\left[
J(1,\mathcal{F}_n,L^2) \overline f^{1/2} \overline Y
h_n^{d/2} 2^{d/2}\varepsilon^{1/2}
+ n^{-1/2}(1+\log n)
\|Y_{i,j}\|_{P,\psi_1}\right]  \\
&\le K\left[
J(1,\mathcal{F}_n,L^2) \overline f^{1/2} \overline Y 2^{d/2}
+ \|Y_{i,j}\|_{P,\psi_1}\right]h_n^{d/2}\varepsilon^{1/2}.
\end{align*}
The first inequality follows by Theorem 2.14.5 in
\citet{van_der_vaart_weak_1996}.
The second uses Theorem 2.14.1 in \citet{van_der_vaart_weak_1996}.
The fourth inequality uses the fact that
$h_n^{d/2}\varepsilon^{1/2}= n^{-d/[2(d+4)]}\varepsilon^{1/2}
\ge n^{-1/2}(1+\log n)$ once
$\varepsilon^{1/2}\ge n^{-1/2+d/[2(d+4)]}(1+\log n)
=n^{-2/(d+4)}(1+\log n)$.
Since each
$\mathcal{F}_n$ is contained in the larger class $\mathcal{F}$ defined
in the same way but replacing $s_0$ with $s$, and allowing $(s,t)$ to vary
over all of $\mathbb{R}^{2d}$, we can replace $\mathcal{F}_n$ by
$\mathcal{F}$ on the last line of this display.  Since
$J(1,\mathcal{F},L^2)$ and $\|Y_{i,j}\|_{\psi_1}$ are finite
($\mathcal{F}$ is a VC class and $Y_{i,j}$ is bounded), the bound is equal
to $C^{-1} \varepsilon^{1/2} h_n^{d/2}$ for a constant $C$ that depends only
on the distribution of $(X_i,Y_i)$.

This bound along with Lemma 8.1 in \citet{kosorok_introduction_2008}
implies
\begin{align*}
&P\left(2^d\sup_{t\le t_0} \frac{\sqrt{n}}{h_n^{d/2}}
|(E_n-E)Y_{i,j}I(h_ns_0\le X_i-x_m\le h_n(s_0+t))| \ge w\right)  \\
&=P\left(2^d\sup_{f\in\mathcal{F}_n} |\sqrt{n}(E_n-E)f(X_i,Y_i)|
\ge w h_n^{d/2}\right)  \\
&\le 2\exp\left(
-\frac{w h_n^{d/2}}
{\|2^d\sup_{f\in\mathcal{F}_n}|\sqrt{n}(E_n-E)f(X_i,Y_i)|\|_{P,\psi_1}}
\right)  \\
&\le 2\exp\left(-\frac{w h_n^{d/2}}
  {C^{-1} h_n^{d/2}\varepsilon^{1/2}}\right)
= 2\exp\left(-Cw/\varepsilon^{1/2}\right).
\end{align*}
The result follows using this and the fact that
$|A|\le \left\{3B[B^d/(\varepsilon\wedge 1)]+2\right\}^{d}$.
\end{proof}

The following theorem verifies the part of condition (ii) of Lemma
\ref{inf_dist_lemma_multi} concerning the limiting process
$\mathbb{G}_{P,x_m,j}(s,t)+g_{P,x_m,j}(s,t)$.

\begin{theorem}\label{inf_bound_G}
Fix $m$ and $j$ with $j\in J(m)$.
For any $r<0$, $\varepsilon>0$ there exists an $M$ such that
\begin{align*}
P\left(\inf_{\|(s,t)\|>M} \mathbb{G}_{P,x_m,j}(s,t)+g_{P,x_m,j}(s,t)
\le r\right)<\varepsilon.
\end{align*}
\end{theorem}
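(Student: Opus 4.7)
The plan is to reduce to a peeling (shell) argument on $\{\|(s,t)\|>M\}$ and apply the Gaussian tail bound of Lemma \ref{tail_bnd_G_multi} on each piece. First, since both $\mathbb{G}_{P,x_m,j}$ and $g_{P,x_m,j}$ vanish identically whenever some component of $t$ is nonpositive (the integrand is supported on $\{s<x<s+t\}$), it suffices to restrict attention to $t>0$ componentwise.

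The key ingredient is a quadratic lower bound on the drift: for $t>0$,
\begin{align*}
g_{P,x_m,j}(s,t) \;\geq\; c_1 \,\|(s,t)\|^2 \prod_i t_i
\end{align*}
for some $c_1>0$ depending only on $f_X(x_m)$ and the smallest eigenvalue of $V_j(x_m)$. This follows by rewriting $g_{P,x_m,j}(s,t) = \tfrac{1}{2} f_X(x_m)(\prod_i t_i)\,E[X'V_j(x_m)X]$ with $X$ uniform on $[s,s+t]$, expanding $E[X'V_j(x_m)X] = (s+t/2)'V_j(x_m)(s+t/2) + \sum_i V_{j,ii}(x_m)t_i^2/12$, and splitting into the cases $\|s+t/2\|\gtrsim\|t\|$ and $\|s+t/2\|\lesssim\|t\|$ to show that one of the two terms is at least a constant multiple of $\lambda_{\min}(V_j(x_m))\,\|(s,t)\|^2$.

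Next, decompose the tail region into dyadic shells $S_k = \{2^kM < \|(s,t)\| \leq 2^{k+1}M,\; t>0\}$ for $k\geq 0$, and subdivide each shell further by letting $S_{k,\ell} = \{(s,t)\in S_k : 2^{-\ell-1} < \prod_i t_i \leq 2^{-\ell}\}$ (with $\ell$ ranging down to $\ell \approx -d\log_2(2^{k+1}M)$). On $S_{k,\ell}$ the drift bound gives $g_{P,x_m,j}(s,t) \geq c_1(2^kM)^2 \cdot 2^{-\ell-1}$, so the event $\inf_{S_{k,\ell}}[\mathbb{G}_{P,x_m,j}+g_{P,x_m,j}]\leq r$ implies $\sup_{S_{k,\ell}}|\mathbb{G}_{P,x_m,j}| \geq c_1(2^kM)^2\cdot 2^{-\ell-1} + |r|$. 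Applying Lemma \ref{tail_bnd_G_multi} with $B=2^{k+1}M$, $\varepsilon = 2^{-\ell}$, and this lower bound for $w$, the Gaussian exponent $w^2/\varepsilon$ is at least $\max\{c_1^2(2^kM)^4\cdot 2^{-\ell}/4,\; |r|^2\cdot 2^\ell\}$. The first term dominates for small $\ell$ and the second for large $\ell$; the crossover occurs near $2^{\ell^*}\asymp (2^kM)^2/|r|$, where both are of order $|r|(2^kM)^2$. Consequently the shell-level probability is bounded by $\sum_\ell P(\text{subregion}) \leq C_k\exp(-C|r|(2^kM)^2)$, with $C_k$ at most polynomial in $2^kM$ because the covering factor $[3B(B^d/(\varepsilon\wedge 1))+2]^{2d}$ is polynomial in $2^kM$ and $2^\ell$, which is dominated by the two opposing exponential terms.

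Summing over $k\geq 0$ gives a total probability $\sum_{k\geq 0} C_k\exp(-C|r|(2^kM)^2)$, which can be made smaller than $\varepsilon$ by choosing $M$ sufficiently large. The main obstacle is the bookkeeping: keeping the polynomial covering factors from Lemma \ref{tail_bnd_G_multi} under control as $\ell$ becomes large (where $\varepsilon$ is tiny), and verifying that the sum over $\ell$ in each shell is dominated by the value at the crossover rather than by the tails. Once the two-term lower bound on $w^2/\varepsilon$ is in hand, this amounts to a geometric-series estimate.
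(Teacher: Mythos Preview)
Your proposal is correct and follows essentially the same route as the paper: peel the region $\{\|(s,t)\|>M\}$ into shells, stratify each shell by the size of $\prod_i t_i$, use the quadratic drift bound $g_{P,x_m,j}(s,t)\ge c\|(s,t)\|^2\prod_i t_i$ on strata with moderate $\prod_i t_i$ and the bare threshold $|r|$ on strata with small $\prod_i t_i$, and invoke Lemma~\ref{tail_bnd_G_multi} on each piece. The only cosmetic difference is that the paper uses integer shells $\{k\le\|(s,t)\|\le k+1\}$ together with a \emph{fixed finite} polynomial stratification $\{(k+1)^{\alpha_i}<\prod_i t_i\le (k+1)^{\alpha_{i+1}}\}$ (so the within-shell sum is trivially finite), whereas you use dyadic shells and an infinite dyadic stratification in $\prod_i t_i$, which costs you the extra geometric-series bookkeeping you correctly flag.
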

\begin{proof}
Let $\mathbb{G}(s,t)=\mathbb{G}_{P,x_m,j}(s,t)$ and
$g(s,t)=g_{P,x_m,j}(s,t)$.
Let $S_k=\{k\le \|(s,t)\|\le k+1\}$ and let $S_k^L=S_k\cap \{\prod_i
t_i\le (k+1)^{-\delta}\}$ for some fixed $\delta$.  By
Lemma \ref{tail_bnd_G_multi},
\begin{align*}
&P\left(\inf_{S_k^L} \mathbb{G}(s,t)+g(s,t)\le r\right)
\le P\left(\sup_{S_k^L} |\mathbb{G}(s,t)|\ge |r|\right)  \\
&\le 2\left\{3(k+1)[(k+1)^d/k^{-\delta}]+2\right\}^{2d}
\exp\left(-C r^2 (k+1)^{\delta}\right)
\end{align*}
for $k$ large enough where $C$ depends only on $d$.  This bound is
summable over $k$.

For any $\alpha$ and $\beta$ with $\alpha<\beta$, let
$S_k^{\alpha,\beta}=S_k\cap \{(k+1)^\alpha<\prod_i t_i\le (k+1)^\beta\}$.  We
have, for some $C_1>0$ that depends only on $d$ and
$V_j(x_m)$, $g(s,t)\ge C_1\|(s,t)\|^2\prod_i t_i$.
(To see this, note that $g(s,t)$ is greater than or equal to a constant
times $\int_{s_1}^{s_1+t_1}\cdots\int_{s_d}^{s_d+t_d} \|x\|^2dx_d\cdots
dx_1=\left(\Pi_{i=1}^d t_i\right)\sum_{i=1}^d (s_i^2+t_i^2/3+s_it_i)$,
and the sum can be bounded below by a constant times $\|(s,t)\|^2$ by
minimizing over $s_i$ for fixed $t_i$ using calculus.  The claimed
expression for the integral follows from evaluating the inner integral to
get an expression involving the integral for $d-1$, and then using
induction.)
Using this and Lemma \ref{tail_bnd_G_multi},
\begin{align*}
&P\left(\inf_{S_k^{\alpha,\beta}} \mathbb{G}(s,t)+g(s,t)\le r\right)
\le P\left(\sup_{S_k^{\alpha,\beta}} |\mathbb{G}(s,t)|\ge C_1
k^{2+\alpha}\right)  \\
&\le 2\left\{3(k+1)[(k+1)^d/((k+1)^\beta \wedge 1)]+2\right\}^{2d}
\exp\left(-C C_1^2\frac{k^{4+2\alpha}}{(k+1)^\beta}\right).
\end{align*}
This is summable over $k$ if $4+2\alpha-\beta>0$.

Now, note that, since $\prod_i t_i\le (k+1)^d$ on $S_k$, we have, for any
$-\delta<\alpha_1<\alpha_2<\ldots<\alpha_{\ell-1}<\alpha_\ell=d$,
$S_k=S_k^L\cup S_k^{-\delta,\alpha_1}\cup
S_k^{\alpha_1,\alpha_2}\cup\ldots \cup
S_k^{\alpha_{\ell-1},\alpha_\ell}$.  If we choose $\delta<3/2$ and
$\alpha_i=i$ for $i\in\{1,\ldots,d\}$, the arguments above will show that
the probability of the infimum being less than or equal to $r$ over
$S_k^L$, $S_k^{-\delta,\alpha_1}$ and each $S_k^{\alpha_i,\alpha_{i+1}}$
is summable over $k$, so that $P\left(\inf_{S_k} \mathbb{G}(s,t)+g(s,t)\le
r\right)$ is summable over $k$, so setting $M$ so that the tail of this
sum past $M$ is less than $\varepsilon$ gives the desired result.
\end{proof}

The following theorem verifies condition (ii) of Lemma
\ref{inf_dist_lemma_multi} for the sequence of finite sample processes
$\mathbb{G}_{n,x_m,j}(s,t)+g_{n,x_m,j}(s,t)$ with $\eta/h_n\ge\|(s,t)\|$.
As explained above, the case where $\eta/h_n\le\|(s,t)\|$ is handled by a
separate argument.
\begin{theorem}\label{inf_bound_Gn}
Fix $m$ and $j$ with $j\in J(m)$.
There exists an $\eta>0$ such that for any $r<0$, $\varepsilon>0$, there
exists an $M$ and $N$ such that, for
all $n\ge N$,
\begin{align*}
P\left(\inf_{M<\|(s,t)\|\le \eta/h_n}
\mathbb{G}_{n,x_m,j}(s,t)+g_{n,x_m,j}(s,t)
\le r\right)<\varepsilon.
\end{align*}
\end{theorem}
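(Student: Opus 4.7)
My plan is to mirror the proof of Theorem~\ref{inf_bound_G}, replacing the Gaussian tail bound of Lemma~\ref{tail_bnd_G_multi} with its empirical-process counterpart (Lemma~\ref{tail_bound_Gn}), and separately treating the regime in which $\prod_i t_i$ is smaller than the lower bound $n^{-4/(d+4)}(1+\log n)^2$ that Lemma~\ref{tail_bound_Gn} requires of its parameter $\varepsilon$.

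First, I would fix $\eta>0$ small enough that the closed ball of radius $\eta$ around $x_m$ sits inside the neighborhood $B(x_m)$ of Assumption~\ref{smoothness_assump_multi}, so that the second-order Taylor expansion of $\bar m_j(\theta,x)$ at $x_m$ used in Theorem~\ref{local_process_thm_multi} holds there with positive-definite leading term. Mimicking the change-of-variables calculation in the proof of Theorem~\ref{local_process_thm_multi}, I obtain a constant $C_1>0$ depending only on $V_j(x_m)$ and $f_X(x_m)$ such that
\[
g_{n,x_m,j}(s,t)\;\ge\;C_1\,\|(s,t)\|^2\prod_i t_i
\qquad\text{for all }\|(s,t)\|\le \eta/h_n
\]
and all $n$ sufficiently large. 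This is the finite-sample analogue of the inequality $g(s,t)\ge C_1\|(s,t)\|^2\prod_i t_i$ that powered the proof of Theorem~\ref{inf_bound_G}.

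Next, I would peel the region $\{M<\|(s,t)\|\le \eta/h_n\}$ into shells $S_k=\{k\le\|(s,t)\|\le k+1\}$ and further split each shell by $\prod_i t_i$, using the same thresholds $\alpha_0=-\delta<\alpha_1<\cdots<\alpha_d=d$ (with $\delta<3/2$ and $\alpha_i=i$) that worked in Theorem~\ref{inf_bound_G}. On each piece $S_k^{\alpha_{i-1},\alpha_i}=S_k\cap\{(k+1)^{\alpha_{i-1}}<\prod_i t_i\le (k+1)^{\alpha_i}\}$ the drift is at least $C_1 k^{2+\alpha_{i-1}}$, and applying Lemma~\ref{tail_bound_Gn} with $B=k+1$ and $\varepsilon=(k+1)^{\alpha_i}$ bounds the event that $\mathbb{G}_{n,x_m,j}$ dominates this drift by a polynomial-in-$k$ prefactor times $\exp(-C\,k^{2+\alpha_{i-1}-\alpha_i/2})$. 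The chosen exponents keep $2+\alpha_{i-1}-\alpha_i/2$ strictly positive (the same arithmetic condition as $4+2\alpha-\beta>0$ in Theorem~\ref{inf_bound_G}), so the bounds are geometrically summable in $k$ and the standard tail-of-a-convergent-sum argument delivers $M$.

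The genuinely new step, and what I expect to be the main obstacle, is handling the piece $S_k^L=S_k\cap\{\prod_i t_i\le (k+1)^{-\delta}\}$ when $(k+1)^{-\delta}$ falls below the lemma's minimum $\varepsilon_{\min}:=n^{-4/(d+4)}(1+\log n)^2$. There I apply Lemma~\ref{tail_bound_Gn} with $\varepsilon=\varepsilon_{\min}$ (which is a valid upper bound on $\prod_i t_i$ in this piece), producing an exponential factor $\exp(-C|r|/\varepsilon_{\min}^{1/2})=\exp(-C|r|\,n^{2/(d+4)}/(1+\log n))$. Multiplied by the polynomial prefactor evaluated at $B\le \eta/h_n=O(n^{1/(d+4)})$ and summed across the $O(n^{1/(d+4)})$ offending shells, this contribution still vanishes as $n\to\infty$, so by taking $N$ large I can force it below $\varepsilon/2$; combined with the geometric tail on the other pieces (which is $<\varepsilon/2$ for $M$ large, uniformly in $n$), this yields the claim.
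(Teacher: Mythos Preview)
Your proof is correct and follows the same peeling strategy as the paper: the choice of $\eta$ to obtain the drift lower bound $g_{n,x_m,j}(s,t)\ge C_1\|(s,t)\|^2\prod_i t_i$, the shells $S_k$, the thresholds $\alpha_i=i$ with $\delta<3/2$, and the application of Lemma~\ref{tail_bound_Gn} on each $S_k^{\alpha_{i-1},\alpha_i}$ all match the paper's argument.

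The one place you diverge is in treating what you call the ``main obstacle'': the $S_k^L$ pieces for which $(k+1)^{-\delta}$ falls below the lemma's floor $\varepsilon_{\min}=n^{-4/(d+4)}(1+\log n)^2$. You handle these by invoking Lemma~\ref{tail_bound_Gn} at $\varepsilon=\varepsilon_{\min}$ and then arguing that the polynomial-in-$n$ prefactor, summed over $O(n^{1/(d+4)})$ shells, is still killed by the exponential $\exp\bigl(-C|r|\,n^{2/(d+4)}/(1+\log n)\bigr)$. This is valid, but the paper observes something simpler that makes the case disappear entirely: the constraint $(k+1)^{-\delta}\ge\varepsilon_{\min}$ is equivalent to $k+1\le n^{4/[\delta(d+4)]}(1+\log n)^{-2/\delta}$, and since $\delta<3/2<4$ we have $4/\delta>1$, so this upper bound eventually dominates $h_n^{-1}\eta=\eta\,n^{1/(d+4)}$. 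Thus for all $n$ beyond some $N$ (which the paper simply folds into the final choice of $N$), \emph{every} shell with $k+1\le h_n^{-1}\eta$ already satisfies the hypothesis of Lemma~\ref{tail_bound_Gn} with $\varepsilon=(k+1)^{-\delta}$, and the summable bound $\exp\bigl(-C|r|(k+1)^{\delta/2}\bigr)$ applies directly. Your extra case is correct but unnecessary; the paper's route is shorter and yields a bound on $S_k^L$ that is uniform in $n$ rather than merely vanishing in $n$.
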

\begin{proof}
Let $\mathbb{G}_n(s,t)=\mathbb{G}_{n,x_m,j}(s,t)$ and
$g_n(s,t)=g_{n,x_m,j}(s,t)$.
Let $\eta$ be small enough that the assumptions hold for $\|x-x_m\|\le
2\eta$ and that, for some constant $C_2$, $E(Y_{i,j}|X_i=x)\ge C_2 \|x-x_m\|^2$
for $\|x-x_m\|\le 2\eta$.  This implies that, for $\|(s,t)\|\le
h_n^{-1}\eta$,
\begin{align*}
&g_n(s,t)
\ge \frac{C_2}{h_n^{d+2}}
\int_{h_ns<x-x_m<h_n(s+t)}
\|x-x_m\|^2 f_X(x)
\, dx  \\
&\ge \frac{C_2\underline f}{h_n^{d+2}}
\int_{h_ns<x-x_m<h_n(s+t)}
\|x-x_m\|^2 \, dx
= C_2\underline f
\int_{s<x<s+t}
\|x\|^2
\, dx_d\cdots dx_1
\ge C_3 \|(s,t)\|^2 \prod_i t_i
\end{align*}
where $C_3$ is a constant that depends only on $\underline f$ and $d$
and the last inequality follows from bounding the integral as explained in
the proof of the previous theorem.

As in the proof of the previous theorem, let $S_k=\{k\le \|(s,t)\|\le
k+1\}$ and let $S_k^L=S_k\cap \{\prod_i t_i\le (k+1)^{-\delta}\}$ for some
fixed $\delta$.  We have, using Lemma \ref{tail_bound_Gn},
\begin{align*}
&P\left(\inf_{S_k^L} \mathbb{G}_n(s,t)+g_n(s,t)\le r\right)
\le P\left(\sup_{S_k^L} |\mathbb{G}_n(s,t)|\ge |r|\right)  \\
&\le 2\left\{3(k+1)[(k+1)^d/k^{-\delta}]+2\right\}^{2d}
\exp\left(-C\frac{|r|}{(k+1)^{-\delta/2}}\right)
\end{align*}
for %
$(k+1)^{-\delta}\ge n^{-4/(d+4)}(1+\log n)^2
\Longleftrightarrow k+1\le n^{4/[\delta(d+4)]}(1+\log n)^{-2/\delta}$
so, if $\delta<4$, this will
hold eventually for all $(k+1)\le h_n^{-1}\eta$ (once
$h_n^{-1}\eta\le n^{4/[\delta(d+4)]}(1+\log n)^{-2/\delta}
\Longleftrightarrow \eta\le n^{(4/\delta-1)/(d+4)}(1+\log n)^{-2/\delta}$).
The bound is summable over $k$ for any $\delta>0$.

Again following the proof of the previous theorem, for $\alpha<\beta$,
define $S_k^{\alpha,\beta}=S_k\cap \{(k+1)^\alpha < \prod_i t_i\le
(k+1)^\beta\}$.  We have, again using Lemma \ref{tail_bound_Gn},
\begin{align*}
&P\left(\inf_{S_k^{\alpha,\beta}} \mathbb{G}_n(s,t)+g_n(s,t)\le r\right)
\le P\left(\sup_{S_k^{\alpha,\beta}} |\mathbb{G}_n(s,t)|
\ge C_3 k^{2+\alpha}\right)  \\
&\le 2\left\{3(k+1)[(k+1)^d/(k^{\alpha}\wedge 1)]+2\right\}^{2d}
\exp\left(-C\frac{C_3 k^{2+\alpha}}{(k+1)^{\beta/2}}\right)
\end{align*}
for $(k+1)^\beta\ge n^{-4/(d+4)}$ (which will hold once the same
inequality holds for $\delta$ for $-\delta<\beta$) and $k+1\le
h_n^{-1}\eta$.
The bound is summable over $k$ for any $\alpha,\beta$ with
$4+2\alpha-\beta>0$.

Thus, noting as in the previous theorem that, for any
$-\delta<\alpha_1<\alpha_2<\ldots<\alpha_{\ell-1}<\alpha_\ell=d$,
$S_k=S_k^L\cup S_k^{-\delta,\alpha_1}\cup
S_k^{\alpha_1,\alpha_2}\cup\ldots \cup
S_k^{\alpha_{\ell-1},\alpha_\ell}$,
if we choose $\delta<3/2$ and $\alpha_i=i$ for $i\in\{1,\ldots,d\}$
the probability of the infimum being less than or equal to $r$ over the sets
indexed by $k$ for any $k\le h_n^{-1}\eta$ is bounded uniformly in $n$ by
a sequence that is summable over $k$ (once
$\eta\le n^{(4/\delta-1)/(d+4)}(1+\log n)^{-2/\delta}$).  Thus, if we
choose $M$ such that the tail of this sum past $M$ is less than
$\varepsilon$ and let $N$ be large enough so that
$\eta\le N^{(4/\delta-1)/(d+4)}(1+\log N)^{-2/\delta}$, we will have the
desired result.

\end{proof}

To complete the proof of Theorem \ref{inf_dist_thm_multi}, we need to show
that
\begin{align*}
&Z_{n,2}
\equiv \left(n^{(d+2)/(d+4)}
\inf_{\|(s-x_m,t)\|>\eta \text{ all $m$ s.t. $1\in J(m)$}}
E_nY_{i,1}I(s<X_i<s+t),\ldots,\right.  \\
&\left.n^{(d+2)/(d+4)}
\inf_{\|(s-x_m,t)\|>\eta \text{ all $m$ s.t. $k\in J(m)$}}
E_nY_{i,k}I(s<X_i<s+t)\right)
\stackrel{p}{\to} 0.
\end{align*}
This follows from the next two lemmas.

\begin{lemma}\label{large_s_lemma1_multi}
Under Assumptions \ref{smoothness_assump_multi} and
\ref{bdd_y_assump_multi}, 
for any $\eta>0$, there exists some
$\underline B>0$ such that
$EY_{i,j}I(s<X_i<s+t)\ge \underline B P(s<X_i<s+t)$ for all $(s,t)$ with
$\|(s-x_m,t)\|>\eta$ for all $m$ with $j\in J(m)$.

\end{lemma}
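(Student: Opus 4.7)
Fix $\eta>0$ and the index $j$. The plan is to produce a neighborhood $U_{\epsilon_0}:=\bigcup_{m:j\in J(m)} B(x_m,\epsilon_0)$ of the relevant contact points and a positive lower bound $c_0$ for $\bar m_j(\theta,x)$ outside $U_{\epsilon_0}$, then to show that any box satisfying the hypothesis either lies entirely outside $U_{\epsilon_0}$ or has at most half its volume inside it. The existence of $c_0$ and $\epsilon_0$ comes from Assumption \ref{smoothness_assump_multi}: part (i) gives a uniform positive lower bound for $\bar m_j$ outside $\bigcup_k B(x_k)$ and on those $B(x_k)$ with $j\notin J(k)$; on the remaining $B(x_m)$ (with $j\in J(m)$), continuity from part (ii) combined with the fact that $\bar m_j$ vanishes only at $x_m$ yields a positive lower bound on the annulus $B(x_m)\setminus B(x_m,\epsilon_0)$. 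Shrink $\epsilon_0$ further so that the balls $B(x_m,\epsilon_0)$ are pairwise disjoint, each contained in $B(x_m)$, and $\epsilon_0\leq \eta/C_{d,\ell}$ for an explicit constant $C_{d,\ell}$ determined below.

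Given a box $(s,s+t)$ with $\|(s-x_m,t)\|>\eta$ for every $m$ with $j\in J(m)$, split into two cases. If $\max_i t_i<4\ell\epsilon_0$, then $\|t\|<4\sqrt{d}\,\ell\epsilon_0$; for each such $m$ the hypothesis gives $\|s-x_m\|>\sqrt{\eta^2-16 d\ell^2\epsilon_0^2}$, and a triangle inequality yields $\|x-x_m\|>\epsilon_0$ for every $x\in(s,s+t)$ provided $C_{d,\ell}$ was taken large enough. The whole box then lies outside $U_{\epsilon_0}$, and the pointwise bound $\bar m_j\geq c_0$ immediately gives $EY_{i,j}I(s<X_i<s+t)\geq c_0\,P(s<X_i<s+t)$. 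Otherwise some coordinate $t_{i_0}\geq 4\ell\epsilon_0$, and a one-dimensional slicing along the $i_0$-axis gives $\text{vol}((s,s+t)\cap B(x_m,\epsilon_0))\leq (2\epsilon_0/t_{i_0})\,\text{vol}((s,s+t))$ for each relevant $m$; summing over the at most $\ell$ contact points yields $\text{vol}((s,s+t)\cap U_{\epsilon_0})\leq \tfrac{1}{2}\text{vol}((s,s+t))$.

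Converting this volume inequality into the claimed bound on the integral uses the local bounds $\bar f_m:=\sup_{B(x_m)} f_X<\infty$ and $\underline f_m:=\inf_{B(x_m,\epsilon_0)} f_X>0$ from part (iii), together with the decomposition
\begin{equation*}
\int_{(s,s+t)}\bar m_j f_X\, dx\;\geq\; c_0\int_{(s,s+t)\setminus U_{\epsilon_0}} f_X\, dx
\;+\;C_*\sum_m \int_{(s,s+t)\cap B(x_m,\epsilon_0)} \|x-x_m\|^2 f_X\, dx,
\end{equation*}
where $C_*>0$ comes from the positive definite $V_j(x_m)$ via Taylor's theorem. The first summand dominates when the box carries substantial $f_X$-mass outside $U_{\epsilon_0}$; the second term takes over in the degenerate subcase when the mass is essentially concentrated inside $U_{\epsilon_0}$. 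The main obstacle is this degenerate subcase, where the volume inequality alone fails; the key geometric fact is that the hypothesis $\|(s-x_m,t)\|>\eta$ together with $\epsilon_0<\eta/\sqrt 5$ forbids the box from being contained in any single ball $B(x_m,\epsilon_0)$, forcing $(s,s+t)\cap B(x_m,\epsilon_0)$ to consist of points with $\|x-x_m\|$ of order $\epsilon_0$ on a constant fraction of its volume. Feeding this into the quadratic lower bound produces a positive contribution comparable to $\underline f_m C_* \epsilon_0^2\, P(X\in(s,s+t))$, which combines with the first term to deliver the desired $\underline B>0$ depending only on $\eta$, $c_0$, $C_*$, $\ell$, $d$, $\bar f_m$, and $\underline f_m$.
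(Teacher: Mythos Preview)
Your proposal is in the right direction and can be completed, but it is more circuitous than the paper's argument and, crucially, leaves unproved the one geometric step that actually carries the weight.

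The paper's proof works with a single annulus decomposition using $\ell_\infty$ balls. After shrinking $\eta$ and setting $\delta=\eta/(4d)$, it shows that for any box with $\|(s-x_m,s+t-x_m)\|_\infty>\delta$ the Lebesgue measure of $(s,s+t)\cap B_\infty(x_m,\delta)\setminus B_\infty(x_m,\delta/2)$ is at least one third of the measure of $(s,s+t)\cap B_\infty(x_m,\delta)$. Because the intersection of a box with an $\ell_\infty$ ball is again a box, this is a one-line coordinate computation. Density bounds on $B(x_m)$ convert volumes to probabilities, and the result follows using only that $\bar m_j\ge\underline\mu$ outside the inner balls $B_\infty(x_m,\delta/2)$; no quadratic lower bound is invoked.

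Your route adds two layers the paper does without. First, the Case~1/Case~2 split on $\max_i t_i$ and the ``half the volume outside $U_{\epsilon_0}$'' inequality are not what actually closes the argument: once you know the box is not contained in any $B(x_m,\epsilon_0)$ (which follows from $\epsilon_0<\eta/\sqrt5$ alone, in either case), your two-term decomposition already handles every box, and the volume inequality is never used to bound the integral. Second, and more importantly, the step you call the ``key geometric fact''---that a constant fraction of $(s,s+t)\cap B(x_m,\epsilon_0)$ has $\|x-x_m\|$ of order $\epsilon_0$---is precisely the substance of the lemma, and you assert it without proof. With Euclidean balls the intersection is no longer a box, so the paper's clean coordinate argument is unavailable and you would need additional work to justify this claim. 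If you switch to $\ell_\infty$ balls, your argument collapses to the paper's, and you can also replace the quadratic bound $C_*\|x-x_m\|^2$ (which uses Assumption~\ref{smoothness_assump_multi}(ii)) by the simpler positivity $\bar m_j\ge\underline\mu$ on the annulus, which needs only continuity.
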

\begin{proof}
Given $\eta>0$, we can make $\eta$ smaller without weakening the result,
so let $\eta$ be small enough that $\|x_m-x_r\|_\infty>2\eta$ for all $m\ne
r$ with $j\in J(m)\cap J(r)$ and
$f_X$ satisfies $0<\underline f\le f_X(x)\le \overline f<\infty$ for some
$\overline f$ and $\underline f$ on
$\{x|\|x-x_m\|_\infty\le \eta\}$.  If
$\|(s-x_m,t)\|>\eta$, then $\|(s-x_m,s+t-x_m)\|_\infty>\eta/(4d)$, so
it suffices to show that $EY_{i,j}I(s<X_i<s+t)\ge \underline B
P(s<X_i<s+t)$ for all $(s,t)$ with $\|(s-x_m,s+t-x_m)\|_\infty>\eta/(4d)$.
Let $\underline\mu>0$ be such that $E(Y_{i,j}|X_i=x)>\underline\mu$ when
$\|x-x_m\|_\infty\ge \eta/(8d)$ for $m$ with $j\in J(m)$.  For notational
convenience, let $\delta=\eta/(4d)$.

For $m$ with $j\in J(m)$, let
$B(x_m,\delta)=\{x|\|x-x_m\|_\infty\le\delta\}$ and
$B(x_m,\delta/2)=\{x|\|x-x_m\|_\infty\le\delta/2\}$.  First, I show
that, for any $(s,t)$ with $\|(s-x_m,s+t-x_m)\|_\infty\ge \delta$,
$P(\{s<X_i<s+t\}\cap B(x_m,\delta)\backslash B(x_m,\delta/2))
\ge (1/3)(\underline f/\overline f)P(\{s<X_i<s+t\}\cap B(x_m,\delta/2))$.
Intuitively, this holds because, taking any box with a corner outside of
$B(x_m,\delta)$, this box has to intersect with a substantial proportion of
$B(x_m,\delta)\backslash B(x_m,\delta/2)$ in order to intersect with
$B(x_m,\delta/2)$.

Formally, we have $\{s<x<s+t\}\cap B(x_m,\delta)
=\{s\vee (x_m-\delta)<x<(s+t)\wedge (x_m+\delta)\}$, so that, letting
$\lambda$ be the Lebesgue measure on $\mathbb{R}^d$,
$\lambda(\{s<x<s+t\}\cap B(x_m,\delta))
=\prod_i [(s_i+t_i)\wedge (x_{m,i}+\delta)-s_i\vee (x_{m,i}-\delta)]$.
Similarly, $\lambda(\{s<x<s+t\}\cap B(x_m,\delta/2))
=\prod_i [(s_i+t_i)\wedge (x_{m,i}+\delta/2)-s_i\vee (x_{m,i}-\delta/2)]$.
For all $i$,
$[(s_i+t_i)\wedge (x_{m,i}+\delta/2)-s_i\vee (x_{m,i}-\delta/2)]
\le [(s_i+t_i)\wedge (x_{m,i}+\delta)-s_i\vee (x_{m,i}-\delta)]$.  For
some $r$, we must have $s_r\le x_{m,r}-\delta$ or $s_r+t_r\ge x_{m,r}+\delta$.
For this $r$, we will have
$[(s_r+t_r)\wedge (x_{m,r}+\delta/2)-s_r\vee (x_{m,r}-\delta/2)]
\le 2[(s_r+t_r)\wedge (x_{m,r}+\delta)-s_r\vee (x_{m,r}-\delta)]/3$.
Thus,
$\lambda(\{s<x<s+t\}\cap B(x_m,\delta/2))
\le 2\lambda(\{s<x<s+t\}\cap B(x_m,\delta))/3$.  It then follows that
$\lambda(\{s<x<s+t\}\cap B(x_m,\delta)\backslash B(x_m,\delta/2))
\ge (1/3)\lambda(\{s<x<s+t\}\cap B(x_m,\delta))$, so that
$P(\{s<x<s+t\}\cap B(x_m,\delta)\backslash B(x_m,\delta/2))
\ge (1/3)(\underline f/\overline f)P(\{s<x<s+t\}\cap B(x_m,\delta))$.

Now, we use the fact that $E(Y_{i,j}|X_i)$ is bounded away from zero
outside of $B(x_m,\delta/2)$, and that the proportion of $\{s<x<s+t\}$
that intersects with $B(x_m,\delta/2)$ can't be too large.  We have, for
any $(s,t)$ with $\|(s-x_m,s+t-x_m)\|_\infty\ge \delta$,
\begin{align*}
&EY_{i,j}I(s<X_i<s+t)
\ge \underline \mu P(\{s<X_i<s+t\}\backslash [\cup_mB(x_m,\delta/2)])  \\
&=  %
\underline \mu P(\{s<X_i<s+t\}\backslash [\cup_m B(x_m,\delta)])
+ \underline \mu \sum_m
P(\{s<X_i<s+t\}\cap B(x_m,\delta)\backslash B(x_m,\delta/2))  \\
&\ge \underline \mu P(\{s<X_i<s+t\}\backslash [\cup_m B(x_m,\delta)])
+ \underline \mu \sum_m
(1/3)(\underline f/\overline f)P(\{s<X_i<s+t\}\cap B(x_m,\delta))  \\
&\ge \underline \mu (1/3)(\underline f/\overline f) P(s<X_i<s+t)
\end{align*}
where the unions are taken over $m$ such that $j\in J(m)$.  The equality
in the second line follows because the sets $B(x_m,\delta)$ are disjoint.

\end{proof}

\begin{lemma}\label{large_s_lemma2_multi}
Let $S$ be any set in $\mathbb{R}^{2d}$ such that, for some $\underline
\mu>0$ and all $(s,t)\in S$,%
$EY_{i,j}I(s<X_i<s+t)\ge \underline \mu P(s<X_i<s+t)$.
Then, under Assumption \ref{bdd_y_assump_multi}, for any sequence
$a_n\to\infty$ and $r<0$,
\begin{align*}
\inf_{(s,t)\in S} \frac{n}{a_n\log n} E_nY_{i,j}I(s<X_i<s+t)> r
\end{align*}
with probability approaching 1.
\end{lemma}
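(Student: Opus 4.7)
The strategy is to lower-bound $E_nY_{i,j}I(s<X_i<s+t)$ by combining the hypothesis $EY_{i,j}I(s<X_i<s+t)\ge\underline\mu P(s<X_i<s+t)$ with a variance-adaptive uniform concentration bound for the empirical process, and then to optimize over $p:=P(s<X_i<s+t)$. The point of working with a variance-adaptive bound is that a naive uniform bound of order $\sqrt{\log n/n}$ would yield only $E_n Y_{i,j}I(s<X_i<s+t)\ge -O(\sqrt{\log n/n})$, which is much too weak for the claimed $\log n/n$ rate.

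First I would establish the uniform bound. The class $\mathcal{F}=\{(x,w)\mapsto w_j I(s<x<s+t):(s,t)\in\mathbb{R}^{2d}\}$ is a VC class of functions bounded by $\overline Y$ (Assumption \ref{bdd_y_assump_multi}), and for each $(s,t)$ the variance of a member is bounded by $\overline Y^2 P(s<X_i<s+t)$. By a variance-adaptive uniform concentration inequality (e.g., Talagrand's inequality applied along a dyadic peeling of the shells $\{2^{-k-1}<P(s<X_i<s+t)\le 2^{-k}\}$, with a union bound over the $O(\log n)$ shells), there exists a constant $C$ depending only on $P$ and $\overline Y$ such that, with probability approaching one,
\begin{align*}
\sup_{(s,t)}\bigl\{|(E_n-E)Y_{i,j}I(s<X_i<s+t)|-C\sqrt{P(s<X_i<s+t)\log n/n}\bigr\}\le C\log n/n.
\end{align*}

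Writing $p=P(s<X_i<s+t)$ and combining with the hypothesis yields, uniformly in $(s,t)\in S$,
\begin{align*}
E_nY_{i,j}I(s<X_i<s+t)\ge \underline\mu\, p-C\sqrt{p\log n/n}-C\log n/n.
\end{align*}
The right-hand side, viewed as a function of $\sqrt{p}$, is minimized over $p\ge 0$ at $\sqrt{p^*}=C/(2\underline\mu)\cdot\sqrt{\log n/n}$, with minimum value $-C^2/(4\underline\mu)\cdot\log n/n$. Therefore, with probability approaching one,
\begin{align*}
\inf_{(s,t)\in S}E_nY_{i,j}I(s<X_i<s+t)\ge -C_1\frac{\log n}{n}
\end{align*}
for a constant $C_1$ not depending on $n$. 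Multiplying through by $n/(a_n\log n)$ gives an infimum bounded below by $-C_1/a_n$, which tends to $0>r$ since $a_n\to\infty$, completing the proof.

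The main obstacle is the variance-adaptive uniform bound in the first step: one must arrange for the stochastic error to scale with $\sqrt{P(s<X_i<s+t)}$ rather than with a worst-case constant, so that after optimizing over $p$ one arrives at the correct $\log n/n$ rate rather than $\sqrt{\log n/n}$. This sharpness is what ties the argument to the VC structure of rectangles and to the finite envelope provided by Assumption \ref{bdd_y_assump_multi}; the rest of the argument is an elementary minimization and division by $a_n\log n/n$.
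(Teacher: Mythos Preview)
Your argument is correct and rests on the same key insight as the paper's proof: the empirical process error must be controlled in a variance-adaptive way, so that its size scales with $\sqrt{P(s<X_i<s+t)}$ rather than a worst-case constant, after which balancing against the drift $\underline\mu\,P(s<X_i<s+t)$ yields the $\log n/n$ rate.

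The technical implementation differs. The paper does not derive a uniform bound of the form $|(E_n-E)f|\le C\sqrt{p\log n/n}+C\log n/n$ and then minimize over $p$. Instead, it introduces the weight $\bigl[(a_n\log n/n)\big/\bigl((a_n\log n/n)\vee P(s<X_i<s+t)\bigr)\bigr]^{1/2}$ and shows, via a chain of elementary inequalities, that the event $\frac{n}{a_n\log n}E_nY_{i,j}I(s<X_i<s+t)\le r$ forces the weighted empirical process to exceed $(|r|\wedge\underline\mu)\,a_n\log n/n$; it then invokes Theorem~37 of Pollard (1984) for the weighted class to conclude. Your peeling-plus-Talagrand route is more self-contained and makes the trade-off between drift and noise explicit through the minimization in $p$, while the paper's route is shorter by delegating the variance-adaptive step to a ready-made ratio-type empirical process result. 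Both lean on the same structural ingredients---the VC property of rectangles and the bounded envelope from Assumption~\ref{bdd_y_assump_multi}---and arrive at the same conclusion.
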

\begin{proof}
For $(s,t)\in S$,
\begin{align*}
&\frac{n}{a_n\log n} E_nY_{i,j}I(s<X_i<s+t)\le r  \\
&\Longrightarrow \frac{n}{a_n\log n} (E_n-E)Y_{i,j}I(s<X_i<s+t)
\le r - \frac{n}{a_n\log n}EY_{i,j}I(s<X_i<s+t)  \\
&\le r - \frac{n}{a_n\log n}\underline \mu P(s<X_i<s+t)
\le -\left\{|r|\vee
  \left[\frac{n}{a_n\log n}\underline \mu P(s<X_i<s+t)\right]\right\}
  \\
&\Longrightarrow 
\left[\frac{\frac{a_n\log n}{n}}
{\frac{a_n\log n}{n}\vee P(s<X_i<s+t)}\right]^{1/2}
|(E_n-E)Y_{i,j}I(s<X_i<s+t)|  \\
&\ge \left[\frac{\frac{a_n\log n}{n}}
{\frac{a_n\log n}{n}\vee P(s<X_i<s+t)}\right]^{1/2}
\left\{\left[\frac{a_n\log n}{n}|r|\right]
\vee \left[\underline \mu P(s<X_i<s+t)\right]\right\}.
\end{align*}
If $\frac{a_n\log n}{n}\ge P(s<X_i<s+t)$, then the last line
is greater than or equal to $\frac{a_n\log n}{n} |r|$.  If
$\frac{a_n\log n}{n}\le P(s<X_i<s+t)$, the last line is
greater than or equal to
$\left[\frac{\frac{a_n\log n}{n}}{P(s<X_i<s+t)}\right]^{1/2}
\underline \mu P(s<X_i<s+t)
=\left(\frac{a_n\log n}{n}\right)^{1/2}
\underline \mu \sqrt{P(s<X_i<s+t)}
\ge \underline \mu \frac{a_n\log n}{n}$.  Thus,
\begin{align*}
&P\left(\inf_{(s,t)\in S}
\frac{n}{a_n\log n} E_nY_{i,j}I(s<X_i<s+t)\le r\right)  \\
&\le P\left(\sup_{(s,t)\in S}
\left[\frac{\frac{a_n\log n}{n}}
{\frac{a_n\log n}{n}\vee P(s<X_i<s+t)}\right]^{1/2}
|(E_n-E)Y_{i,j}I(s<X_i<s+t)|
\ge (|r|\wedge \underline \mu) \frac{a_n\log n}{n}
\right).
\end{align*}
This converges to zero by Theorem 37 in \citet{pollard_convergence_1984}
with, in the notation of that theorem, $\mathcal{F}_n$ the class of
functions of the form
\begin{align*}
\left[\frac{\frac{a_n\log n}{n}}
{\overline Y^2\frac{a_n\log n}{n}\vee P(s<X_i<s+t)}\right]^{1/2}
Y_{i,j}I(s<X_i<s+t)
\end{align*}
with $(s,t)\in S$,
$\delta_n=\left(\frac{n}{a_n\log n}\right)^{1/2}$ and
$\alpha_n=1$.  To verify the conditions of the lemma, the covering number
bound holds because each $\mathcal{F}_n$ is contained in the larger class
$\mathcal{F}$ of functions of the form $wY_{i,j}I(s<X_i<s+t)$ where
$(s,t)$ ranges over $S$ and $w$ ranges over $\mathbb{R}$, and this larger
class is a VC subgraph class.  The supremum bound on functions in
$\mathcal{F}_n$ holds by Assumption \ref{bdd_y_assump_multi}.  To verify
the bound on the $L^2$ norm of functions in $\mathcal{F}_n$, note that
\begin{align*}
&E\left\{\left[\frac{\frac{a_n\log n}{n}}
{\overline Y^2\frac{a_n\log n}{n}\vee P(s<X_i<s+t)}\right]^{1/2}
Y_{i,j}I(s<X_i<s+t)\right\}^2  \\
&\le \frac{\frac{a_n\log n}{n}}
{\frac{a_n\log n}{n}\vee P(s<X_i<s+t)}
P(s<X_i<s+T)
\le \frac{a_n\log n}{n} = \delta_n^2
\end{align*}
since $ab/(a\vee b)\le a$ for any $a,b>0$.

\end{proof}

By Lemma \ref{large_s_lemma1_multi}, $\{\|(s-x_m,t)\|>\eta \text{ all
$m$ s.t. $j\in J(m)$}\}$ satisfies the conditions of Lemma
\ref{large_s_lemma2_multi}, so $E_nY_{i,j}I(s<X_i<s+t)$ converges to zero
at a $n/(a_n\log n)$ rate for any $a_n\to \infty$, which can be made
faster than the $n^{(d+2)/(d+4)}$ rate needed to show that
$Z_{n,2}\stackrel{p}{\to} 0$.  This completes the proof of Theorem
\ref{inf_dist_thm_multi}.

\subsection*{Inference}

I use the following lemma in the proof of Theorem \ref{abs_cont_thm}

\begin{lemma}\label{zero_var_atom_lemma_multi}
Let $\mathbb{H}$ be a Gaussian random process with sample paths that are
almost surely in the set $C(\mathbb{T},\mathbb{R}^k)$ of continuous
functions with respect to %
some semimetric
on the index set $\mathbb{T}$ with a countable dense subset
$\mathbb{T}_0$.
Then, for any set $A\in\mathbb{R}^k$ with Lebesgue measure zero,
$P(\inf_{t\in\mathbb{T}} \mathbb{H}(t)\in A)
\le P(\inf_{t\in\mathbb{T}, \det var(\mathbb{H}(t))<\varepsilon}
\mathbb{H}(t)\in A \text{ for all $\varepsilon>0$})$.
\end{lemma}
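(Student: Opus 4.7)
The plan is to reduce the lemma to showing, for each fixed $\varepsilon>0$, that $P(Z\in A,\,Z_\varepsilon\notin A)=0$, where $Z=\inf_{t\in\mathbb{T}}\mathbb{H}(t)$ and $Z_\varepsilon=\inf_{t\in\mathbb{T},\,\det\mathrm{var}(\mathbb{H}(t))<\varepsilon}\mathbb{H}(t)$ (both componentwise). Since $\varepsilon\mapsto Z_\varepsilon$ is monotone in $\varepsilon$ and the sample paths are continuous, the event $\{Z_\varepsilon\in A\text{ for all }\varepsilon>0\}$ differs from the countable intersection $\bigcap_{\varepsilon\in\mathbb{Q}_+}\{Z_\varepsilon\in A\}$ only by a null set, so the reduction follows by countable subadditivity once the limiting behavior at irrational $\varepsilon$ has been handled by left-continuity of $\varepsilon\mapsto Z_\varepsilon$.

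Fix $\varepsilon>0$ and set $T_\varepsilon=\{t\in\mathbb{T}:\det\mathrm{var}(\mathbb{H}(t))\ge\varepsilon\}$ and $W_\varepsilon=\inf_{t\in T_\varepsilon}\mathbb{H}(t)$, so that $Z=Z_\varepsilon\wedge W_\varepsilon$ componentwise. On $\{Z\in A,\,Z_\varepsilon\notin A\}$ the vectors $Z$ and $Z_\varepsilon$ must disagree in some coordinate $j$, which forces $Z_j=W_{\varepsilon,j}<Z_{\varepsilon,j}$. A union bound over $j$ reduces the problem to showing $P(Z\in A,\,W_{\varepsilon,j}<Z_{\varepsilon,j})=0$ for each fixed $j$. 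On this event the $j$th coordinate of $Z$ is pinned down by the ``non-degenerate'' infimum $W_{\varepsilon,j}$, and I want the conditional law of $Z_j$ given the other coordinates $(Z_i)_{i\ne j}$ (restricted to the event) to be absolutely continuous, so that Fubini applied to the slices of $A$ gives $P(Z\in A,\,W_{\varepsilon,j}<Z_{\varepsilon,j})=0$.

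For this last step I would use the non-degeneracy of $\mathbb{H}(t)$ on $T_\varepsilon$ through a Gaussian orthogonality argument. Approximate $W_{\varepsilon,j}$ by $\min_{1\le m\le n}\mathbb{H}_j(t_m)$ for an enumeration $t_1,t_2,\ldots$ of $T_\varepsilon\cap\mathbb{T}_0$ and partition the sample space by the identity of the argmin in every coordinate of the componentwise minimum. On each partition piece, $Z$ equals a fixed linear functional of the finite-dimensional Gaussian vector $(\mathbb{H}(t_1),\ldots,\mathbb{H}(t_n))$; within the block $\mathbb{H}(t_m)$, the Hadamard inequality $\det\mathrm{var}(\mathbb{H}(t_m))\le\prod_i\mathrm{var}(\mathbb{H}_i(t_m))$ combined with $\det\mathrm{var}(\mathbb{H}(t_m))\ge\varepsilon$ gives a strictly positive conditional variance of $\mathbb{H}_j(t_m)$ given $(\mathbb{H}_i(t_m))_{i\ne j}$, and this conditional variance is uniformly bounded below on compact subsets of $T_\varepsilon$. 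Conditioning on the enlarged sigma algebra that includes the argmins and taking monotone limits $n\to\infty$ along the enumeration of $T_\varepsilon\cap\mathbb{T}_0$ then yields the claimed absolute continuity.

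The main obstacle will be the disentangling in the last step: the other coordinates $Z_i$ for $i\ne j$ can themselves depend on $\mathbb{H}_j$ through the competition for the componentwise minimum, so naive conditioning on $(Z_i)_{i\ne j}$ may impose nontrivial restrictions on $\mathbb{H}_j(t_m)$. My plan for handling this is to condition on the richer sigma algebra that also records the argmin for each coordinate, so that each $Z_i$ becomes a specific linear functional of the joint Gaussian on that partition piece and the orthogonal decomposition of $\mathbb{H}_j(t_m)$ against $(\mathbb{H}_i(t_m))_{i\ne j}$ can be invoked cleanly. The passage from finite to countable index sets then uses monotone convergence along the nested enumerations, and the passage from countable to $\mathbb{T}$ uses path continuity of $\mathbb{H}$ on the countable dense subset $\mathbb{T}_0$.
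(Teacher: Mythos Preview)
The paper's proof is two sentences: it invokes Proposition~3.2 of Pitt (1979), which gives directly that $W_\varepsilon\equiv\inf_{t:\det\mathrm{var}(\mathbb{H}(t))\ge\varepsilon}\mathbb{H}(t)$ has an absolutely continuous law on $\mathbb{R}^k$, hence $P(W_\varepsilon\in A)=0$; combined with the partition of $\mathbb{T}$ into $T_\varepsilon$ and its complement this yields $P(Z\in A)\le P(Z_\varepsilon\in A)$ for each fixed $\varepsilon$, and intersecting over a countable sequence finishes. Your proposal instead attempts to build the absolute-continuity input by hand via finite-dimensional Gaussian approximations, which is a much longer route and, as sketched, does not work.

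The fatal gap is that your union-bound reduction asks for too much. You reduce to showing $P(Z\in A,\,W_{\varepsilon,j}<Z_{\varepsilon,j})=0$ for each fixed $j$, but this can be false even when the lemma itself holds. Take $k=2$, $\mathbb{T}=\{t_0,t_1\}$ with $\mathbb{H}(t_0)=(0,X_2)$ and $\mathbb{H}(t_1)=(X_1,Y_2)$ for nondegenerate jointly Gaussian $(X_1,X_2,Y_2)$, and $A=\{0\}\times\mathbb{R}$. Then $t_0$ is the only degenerate point, $W_\varepsilon=(X_1,Y_2)$, $Z_\varepsilon=(0,X_2)$, and $Z=(0\wedge X_1,\,X_2\wedge Y_2)$. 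Here $Z_\varepsilon\in A$ always, so $\{Z\in A,\,Z_\varepsilon\notin A\}=\emptyset$; yet for $j=2$ one has $\{Z\in A,\,W_{\varepsilon,2}<Z_{\varepsilon,2}\}=\{X_1\ge 0,\,Y_2<X_2\}$, which has positive probability. The same example defeats your Fubini-on-slices step: conditional absolute continuity of $Z_j$ given $(Z_i)_{i\ne j}$ is not enough, because Fubini only makes the slice $A_{z_{-j}}$ Lebesgue-null for \emph{Lebesgue}-a.e.\ $z_{-j}$, whereas you need it null for a.e.\ $z_{-j}$ under the law of $(Z_i)_{i\ne j}$, and that law may be singular (here $P(Z_1=0)=P(X_1\ge 0)>0$). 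There are also smaller slips: the Hadamard inequality $\det\Sigma\le\prod_i\Sigma_{ii}$ bounds marginal variances given $\det\Sigma\ge\varepsilon$, not conditional ones (the relevant identity is $\mathrm{var}(\mathbb{H}_j\mid\mathbb{H}_{-j})=\det\Sigma/\det\Sigma_{-j,-j}$); and the ``competition for the componentwise minimum'' worry is misplaced, since each $Z_i=\inf_t\mathbb{H}_i(t)$ is a functional of the scalar process $\mathbb{H}_i$ alone. The short correct route is to cite Pitt (1979) for the joint absolute continuity of $W_\varepsilon$, as the paper does.
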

\begin{proof}
First, note that, if the infimum over $\mathbb{T}$
is in $A$, then, since $\{t\in\mathbb{T}|\det var(\mathbb{H}(t))\ge
\varepsilon\}$ and $\{t\in\mathbb{T}|\det var(\mathbb{H}(t))<\varepsilon\}$
partition $T$, the infimum over one of these sets must be in
$A$.  By Proposition 3.2 in \citet{pitt_local_1979},
the infimum of $\mathbb{H}(t)$ over the former set has a distribution that
is continuous with respect to the Lebesgue measure, so the probability of
the infimum of $\mathbb{H}(t)$ over this set being in $A$ is zero.
Thus, $P\left(\inf_{t\in\mathbb{T}} \mathbb{H}(t) \in A\right)\le
P\left(\inf_{t\in\mathbb{T}, \det var(\mathbb{H}(t))<\varepsilon}
\mathbb{H}(t)\in A\right)$.  Taking $\varepsilon$ to zero along a
countable sequence gives the result.

\end{proof}

\begin{proof}[Proof of Theorem \ref{abs_cont_thm}]
For $m$ from $1$ to $\ell$, let $\{j_{m,1},\ldots,j_{m,|J(m)|}\}=J(m)$.
Then, letting
\begin{align*}
\tilde Z
\equiv &(\inf_{s,t}\mathbb{G}_{P,x_1,j_{1,1}}(s,t)+g_{P,x_1,j_{1,1}}(s,t),\ldots,
\inf_{s,t}\mathbb{G}_{P,x_1,j_{1,|J(1)|}}(s,t)+g_{P,x_1,j_{1,|J(1)|}}(s,t),\ldots,  \\
&\inf_{s,t}\mathbb{G}_{P,x_\ell,j_{\ell,1}}(s,t)+g_{P,x_\ell,j_{\ell,1}}(s,t),\ldots,
\inf_{s,t}\mathbb{G}_{P,x_\ell,j_{\ell,|J(\ell)|}}(s,t)
+g_{P,x_\ell,j_{\ell,|J(\ell)|}}(s,t)),
\end{align*}
each element of $Z$ is the minimum of the elements of some subvector of
$\tilde Z$, where the subvectors corresponding to different elements of
$Z$ do not overlap. Thus, it suffices to show that $\tilde Z$ has an
absolutely continuous distribution.
For this, it suffices to show that, for each $m$,
\begin{align*}
(\inf_{s,t}\mathbb{G}_{P,x_m,j_{m,1}}(s,t)+g_{P,x_m,j_{m,1}}(s,t),\ldots,
\inf_{s,t}\mathbb{G}_{P,x_m,j_{m,|J(m)|}}(s,t)+g_{P,x_m,j_{m,|J(m)|}}(s,t))
\end{align*}
has an absolutely continuous distribution, since these are independent
across $m$.

To this end, fix $m$ and let $\mathbb{H}(s,t)$ be the random process with
sample paths
in $C(\mathbb{R}^{2d},\mathbb{R}^{|J(m)|})$ defined by
\begin{align*}
\mathbb{H}(s,t)
=(\mathbb{G}_{P,x_m,j_{m,1}}(s,t)+g_{P,x_m,j_{m,1}}(s,t),\ldots,
\mathbb{G}_{P,x_m,j_{m,|J(m)|}}(s,t)+g_{P,x_m,j_{m,|J(m)|}}(s,t)).
\end{align*}
By Assumption \ref{inv_mat_assump}
$var(\mathbb{H}(s,t))=M \prod_i{t_i}$ for some positive definite matrix
$M$, so that $\det var(\mathbb{H}(s,t))
=(\det M) \left(\prod_i{t_i}\right)^{|J(m)|}$.  Thus,
$\inf_{(s,t)\in\mathbb{R}^{2d}, \det var(\mathbb{H}(s,t))<\varepsilon}
\mathbb{H}(s,t)\in A \text{ for all $\varepsilon>0$}$ iff.
$\inf_{(s,t)\in\mathbb{R}^{2d}, \prod_i t_i<\varepsilon}
\mathbb{H}(s,t)\in A \text{ for all $\varepsilon>0$}$ so, by Lemma
\ref{zero_var_atom_lemma_multi},
$P(\inf_{(s,t)\in\mathbb{R}^{2d}} \mathbb{H}(s,t)\in A)
\le P(\inf_{(s,t)\in\mathbb{R}^{2d}, \prod_i t_i<\varepsilon}
\mathbb{H}(s,t)\in A \text{ for all $\varepsilon>0$})$.  For each $j$,
$\prod_it_i$ is equal to $var(\mathbb{H}_j(s,t))=\rho_j(0,(s,t))$ times some
constant, where $\rho_j$ is the covariance semimetric for component $j$
given by
$\rho_j((s,t),(s',t'))=var(\mathbb{H}_j(s,t)-\mathbb{H}_j(s',t'))$.  Thus,
there exists a constant $C$ such that $\prod_i t_i\le \varepsilon$
implies $\rho_j(0,(s,t))< C\varepsilon$ for all $j$, so that
$P(\inf_{(s,t)\in\mathbb{R}^{2d}} \mathbb{H}(s,t)\in A)
\le P(\inf_{(s,t)\in\mathbb{R}^{2d}, \rho_j(0,(s,t))< C\varepsilon \text{ all $j$}}
\mathbb{H}(s,t)\in A \text{ for all $\varepsilon>0$})$.

Since the sample paths of $\mathbb{H}$ are almost surely continuous with
respect to the semimetric $\max_j \rho_j((s,t),(s',t'))$
on the set $\|(s,t)\|\le M$ for any finite $M$,
$\inf_{\|(s,t)\|\le M, \rho_j(0,(s,t))< C\varepsilon \text{ all $j$}}
\mathbb{H}(s,t)\in A \text{ for all $\varepsilon>0$}$
implies that $\mathbb{H}(0)=0$ is a limit point of $A$ on this probability
one set.  Thus, for any set $A$ that does not have zero as a limit point,
$P(\inf_{\|(s,t)\|\le M} \mathbb{H}(s,t)\in A)=0$ for any finite $M$.
Applying this to $A\backslash B_\eta(0)$ where $B_\eta(0)$ is the
$\eta$-ball around $0$ in $\mathbb{R}^{|J(m)|}$, we have
\begin{align*}
&P\left(\inf_{(s,t)\in\mathbb{R}^{2d}} \mathbb{H}(s,t)\in A\right)
= P\left(\inf_{(s,t)\in\mathbb{R}^{2d}}
\mathbb{H}(s,t)\in A\cap B_\eta(0)\right)
+P\left(\inf_{(s,t)\in\mathbb{R}^{2d}}
\mathbb{H}(s,t)\in A\backslash B_\eta(0)\right)  \\
&\le P\left(\inf_{(s,t)\in\mathbb{R}^{2d}}
\mathbb{H}(s,t)\in A\cap B_\eta(0)\right)
+P\left(\inf_{\|(s,t)\|\le M}
\mathbb{H}(s,t)\in A\backslash B_\eta(0)\right)  \\
&+P\left(\inf_{\|(s,t)\|> M}
\mathbb{H}(s,t)\in A\backslash B_\eta(0)\right)  \\
&= P\left(\inf_{(s,t)\in\mathbb{R}^{2d}}
\mathbb{H}(s,t)\in A\cap B_\eta(0)\right)
+P\left(\inf_{\|(s,t)\|> M}
\mathbb{H}(s,t)\in A\backslash B_\eta(0)\right).
\end{align*}
Noting that $P\left(\inf_{\|(s,t)\|> M}
\mathbb{H}(s,t)\in A\backslash B_\eta(0)\right)$ can be made arbitrarily
small by making $M$ large, this shows that
$P\left(\inf_{(s,t)\in\mathbb{R}^{2d}} \mathbb{H}(s,t)\in A\right)
  =P\left(\inf_{(s,t)\in\mathbb{R}^{2d}}
     \mathbb{H}(s,t)\in A\cap B_\eta(0)\right)$
Taking $\eta$ to zero along a countable sequence, this shows that
$P\left(\inf_{(s,t)\in\mathbb{R}^{2d}} \mathbb{H}(s,t)\in A\right)
\le P\left(\inf_{(s,t)\in\mathbb{R}^{2d}}
\mathbb{H}(s,t)\in A\cap \{0\}\right)$ so that $\inf_{(s,t)\in\mathbb{R}^{2d}}
\mathbb{H}(s,t)$ has an absolutely continuous distribution with a possible
atom at zero.

To show that there can be no atom at zero, we argue as follows.
Fix $j\in J(m)$.  The component of $\mathbb{H}$ corresponding to this $j$
is $\mathbb{G}_{P,x_m,j}(s,t)+g_{P,x_m,j}(s,t)$.
For some constant $K$, for any $k\ge
0$, letting $s_{i,k}=(i/k,0,\ldots,0)$ and $t_k=(1/k,1,\ldots,1)$, we
will have $g_{P,x_m,j}(s_{i,k},t_k)\le K/k$ for $i\le k$, so that
\begin{align*}
&P\left(\inf_{(s,t)\in\mathbb{R}^{2d}}
  \mathbb{G}_{P,x_m,j}(s,t)+g_{P,x_m,j}(s,t)=0\right)
=P\left(\inf_{(s,t)\in\mathbb{R}^{2d}}
\mathbb{G}_{P,x_m,j}(s,t)+g_{P,x_m,j}(s,t)\ge 0\right)  \\
&\le P\left(\mathbb{G}_{P,x_m,j}(s_{i,k},t_k)+g_{P,x_m,j}(s_{i,k},t_k)
\ge 0 \text{ all $i\in\{0,\ldots,k\}$}\right)  \\
&\le P\left(\mathbb{G}_{P,x_m,j}(s_{i,k},t_k)+K/k\ge 0 \text{ all
  $i\in\{0,\ldots,k\}$}\right)  \\
&=P\left(\sqrt{k}\mathbb{G}_{P,x_m,j}(s_{i,k},t_k)+K/\sqrt{k}\ge 0
  \text{ all $i\in\{0,\ldots,k\}$}\right)  \\
&=P\left(\mathbb{G}_{P,x_m,j}(s_{i,1},t_1)+K/\sqrt{k}\ge 0
  \text{ all $i\in\{0,\ldots,k\}$}\right).
\end{align*}
The final line is the probability of $k+1$ iid normal random variables
each being greater than or equal to $-K/\sqrt{k}$, which can be made
arbitrarily small by making $k$ large.
\end{proof}

\begin{proof}[proof of Theorem \ref{subsamp_thm}]
This follows immediately from the continuity of the asymptotic
distribution \citep[see][]{politis_subsampling_1999}.
\end{proof}

\begin{proof}[proof of Theorem \ref{Z_hat_thm}]
It suffices to show that, for every subsequence, there exists a further
subsequence along which the distribution of $\hat Z$ converges weakly to
the distribution of $Z$.  Given a subsequence, let the further subsequence
be such that the convergence in probability in
Assumption \ref{avar_est_assump} is with probability one.

For any fixed $B>0$, the processes
\begin{align*}
\left[\hat{\mathbb{G}}_{P,x_k}(s,t)+\hat g_{P,x_k}(s,t)\right]
I(\|(s,t)\|\le B_n)
\end{align*}
are, along this subsequence, Gaussian processes with mean functions and
covariance kernels converging with probability one to those of the
distribution being estimated uniformly in $\|(s,t)\|\le B$.  Thus, with
probability one, the distributions of these processes converge weakly to
the distribution of the process being estimated along this subsequence
taken as random processes on $\|(s,t)\|\le B$.  Thus, to get the weak
convergence of the elementwise infimum, we just need to verify part (ii)
of Lemma \ref{inf_dist_lemma_multi}.  To this end, note that, along the
further subsequence, the infimum of
\begin{align*}
\left[\hat{\mathbb{G}}_{P,x_k,j}(s,t)+\hat g_{P,x_k,j}(s,t)\right]
I(\|(s,t)\|\le B_n)
\end{align*}
is eventually bounded from below (in the stochastic dominance sense) by
the infimum of a process defined the same way as
\begin{align*}
\mathbb{G}_{P,x_k,j}(s,t)+g_{P,x_k,j}(s,t),
\end{align*}
but with $E(m_{J(k)}(W_i,\theta)m_{J(k)}(W_i,\theta)'|X=x_k)$
replaced by $2E(m_{J(k)}(W_i,\theta)m_{J(k)}(W_i,\theta)'|X=x_k)$, and
$V(x_k)$ replaced by $V(x_k)/2$.
Once $n$ is large enough that this holds along this further subsequence,
part (ii) of  Lemma \ref{inf_dist_lemma_multi} will hold by Lemma
\ref{inf_bound_G} applied to this process.
\end{proof}

\begin{proof}[proof of Corollary \ref{Z_hat_cor}]
By Theorem \ref{Z_hat_thm}, the distribution of $S(\hat Z)$ converges
weakly conditionally in probability to the distribution of $S(Z)$,
and by Theorem \ref{inf_dist_thm_multi},
$n^{(d_X+2)/(d_X+4)}S(T_n(\theta))\stackrel{d}{\to} S(Z)$.  $S(Z)$ has a
continuous distribution by Theorem \ref{abs_cont_thm}, so the result
follows by standard arguments.
\end{proof}

\subsection*{Other Shapes of the Conditional Mean}

This section contains the proofs of the results in Section
\ref{inf_dist_alpha_sec}, which extend the results of Section
\ref{inf_dist_sec} to other shapes of the conditional mean.  First, I
show how Assumption \ref{smoothness_assump_multi} implies Assumption
\ref{smoothness_assump_alpha} with $\gamma=2$.
Next, I prove Theorem \ref{deriv_alpha_thm}, which gives an
interpretation of Assumption \ref{inf_dist_thm_alpha} in terms of
conditions on the number of bounded derivatives in the one dimensional
case.  Finally, I prove Theorem
\ref{inf_dist_thm_alpha}, which derives the asymptotic distribution of the
KS statistic under these assumptions.  The proof is mostly the same as the
proof of Theorem \ref{inf_dist_thm_multi}, and I present only the parts of
the proof that differ, referring to the proof of Theorem
\ref{inf_dist_thm_multi} for the parts that do not need to be changed.

To see that, under part (ii) from Assumption
\ref{smoothness_assump_multi}, Assumption \ref{smoothness_assump_alpha}
will hold with $\gamma=2$, note that, by a second order Taylor expansion,
for some $x^*(x)$ between $x$ and $x_k$,
\begin{align*}
\frac{\bar m_j(\theta,x)-\bar m_j(\theta,x_k)}{\|x-x_k\|^2}
=\frac{(x-x_k)V_j(x^*(x))(x-x_k)}{2\|x-x_k\|^2}
=\frac{1}{2}\frac{x-x_k}{\|x-x_k\|}V_j(x^*(x))\frac{x-x_k}{\|x-x_k\|}.
\end{align*}
Thus, letting
$\psi_{j,k}(t)=\frac{1}{2}t V_j(x_k)t$
we have
\begin{align*}
& \sup_{\|x-x_k\|\le \delta} \left\|
  \frac{\bar m_j(\theta,x)-\bar m_j(\theta,x_k)}{\|x-x_k\|^2}
  -\psi_{j,k}\left(\frac{x-x_k}{\|x-x_k\|}\right) \right\| \\
&=\sup_{\|x-x_k\|\le \delta} \left\|
  \frac{1}{2}\frac{x-x_k}{\|x-x_k\|}V_j(x^*(x))\frac{x-x_k}{\|x-x_k\|}
  -\frac{1}{2}\frac{x-x_k}{\|x-x_k\|}V_j(x_k)\frac{x-x_k}{\|x-x_k\|}
    \right\|.
\end{align*}
This goes to zero as $\delta\to 0$ by the continuity of the second
derivative matrix.

The proof of Theorem \ref{deriv_alpha_thm} below shows that, in the one
dimensional case, Assumption \ref{smoothness_assump_multi} follows more
generally from conditions on higher order derivatives.

\begin{proof}[proof of Theorem \ref{deriv_alpha_thm}]
It suffices to consider the case where $d_Y=1$.
First, suppose that $\mathcal{X}_0$ has infinitely many elements.  Let
$\{x_k\}_{k=1}^\infty$ be a nonrepeating sequence of elements in
$\mathcal{X}_0$.  Since $\mathcal{X}_0$ is compact, this sequence must
have a subsequence that converges to some $\tilde x\in\mathcal{X}_0$.  If
$\bar m(\theta,x)$ had a nonzero $r$th derivative at $\tilde x$ for some
$r<p$, then, by Lemma \ref{min_p_deriv_lemma} below, $\bar m(\theta,x)$
would be strictly greater than $\bar m(\theta,\tilde x)$ for $x$ in some
neighborhood of $\tilde x$, a contradiction.  Thus, a $p$th order taylor
expansion gives, using the notation $D_r(x)=\delta^r/\delta x^r \bar
m(\theta,x)$ for $r\le p$,
$\bar m(\theta,x)-\bar m(\theta,\tilde x)= D_p(x^*(x))(x-\tilde x)^p/p!
  \le \bar D |x-\tilde x|^p/p!$
where $\bar D$ is a bound on the $p$th derivative and $x^*(x)$ is some
value between $x$ and $\tilde x$.

If $\mathcal{X}_0$ has finitely many elements, then, for each $x_0\in
\mathcal{X}_0$, a $p$th order Taylor expansion gives
$\bar m(\theta,x)-\bar m(\theta,x_0)
  =D_1(x_0)(x-x_0)+\frac{1}{2}D_2(x_0)(x-x_0)^2
    +\cdots+\frac{1}{p!}D_p(x^*(x))(x-x_0)^p$.
If, for some $r<p$, $D_r(x_0)\ne 0$ and $D_{r'}(x_0)=0$ for $r'<r$, then
Assumption \ref{smoothness_assump_alpha} will hold at $x_0$ with
$\gamma=r$.
If not, we will have $\bar m(\theta,x)-\bar m(\theta,x_0)\le \bar D
|x-x_0|^p/p!$ for all $x$.
\end{proof}

\begin{lemma}\label{min_p_deriv_lemma}
Suppose that
$g:[\underline x,\overline x]\subseteq \mathbb{R}\to \mathbb{R}$ is
minimized at some $x_0$.  If the least nonzero derivative of $g$ is
continuous at $x_0$, then, for some $\varepsilon>0$, $g(x)>g(x_0)$ for
$|x-x_0|\le \varepsilon$, $x\ne x_0$.
\end{lemma}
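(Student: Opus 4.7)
Let $r$ be the order of the least nonzero derivative of $g$ at $x_0$, so $g^{(i)}(x_0)=0$ for $1\le i<r$ and $g^{(r)}(x_0)\ne 0$. Since $g^{(r)}$ is assumed continuous at $x_0$, the existence of the derivatives $g^{(1)},\ldots,g^{(r-1)}$ in a neighborhood of $x_0$ follows, so I can apply the Lagrange form of Taylor's theorem: for every $x$ in a neighborhood of $x_0$ (and in the domain), there exists some $x^*$ between $x_0$ and $x$ with
\begin{align*}
g(x)-g(x_0)=\frac{g^{(r)}(x^*)}{r!}(x-x_0)^r,
\end{align*}
where the lower-order terms vanish by the assumption on derivatives at $x_0$. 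By continuity of $g^{(r)}$ at $x_0$ together with $g^{(r)}(x_0)\ne 0$, there is some $\varepsilon>0$ such that $g^{(r)}(x^*)$ has the same sign as $g^{(r)}(x_0)$ whenever $|x^*-x_0|\le \varepsilon$, hence certainly whenever $|x-x_0|\le\varepsilon$.

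The remainder of the argument is a short case split on the parity of $r$, exploiting that $x_0$ is a minimizer. If $r$ is even, then $(x-x_0)^r>0$ for $x\ne x_0$, and the minimizing property forces $g^{(r)}(x_0)>0$ (else one could take $x$ close to $x_0$ on either side and use the Taylor expansion to produce $g(x)<g(x_0)$). Continuity then gives $g^{(r)}(x^*)/r!>0$ on the neighborhood, yielding $g(x)>g(x_0)$ for $0<|x-x_0|\le\varepsilon$. If $r$ is odd, then $(x-x_0)^r$ changes sign at $x_0$; were $x_0$ interior, the Taylor formula would force $g$ to take values smaller than $g(x_0)$ on one side, contradicting the minimum property. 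Hence $x_0\in\{\underline x,\overline x\}$, and only one sign of $x-x_0$ is admissible. On that admissible side, the minimizing property forces the sign of $g^{(r)}(x_0)$ to match the sign of $(x-x_0)^r$, and continuity again propagates this to $x^*$, giving $g(x)>g(x_0)$ on the one-sided neighborhood.

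The main obstacle is simply keeping the case analysis clean, particularly in the odd-$r$ boundary case where one must argue that the location of $x_0$ at an endpoint and the sign of $g^{(r)}(x_0)$ are both pinned down by the minimizing hypothesis; once these are pinned down, the conclusion follows from the Taylor expansion exactly as in the even case. No deeper tool than Taylor with Lagrange remainder plus continuity of $g^{(r)}$ at $x_0$ is required, and the proof should be only a few lines once the cases are laid out.
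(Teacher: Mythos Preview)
Your proposal is correct and follows essentially the same approach as the paper: let $r$ (the paper uses $p$) be the order of the first nonzero derivative, apply Taylor's theorem with Lagrange remainder to obtain $g(x)-g(x_0)=\frac{g^{(r)}(x^*)}{r!}(x-x_0)^r$, and use continuity of $g^{(r)}$ at $x_0$ together with the minimizing property to conclude the product is strictly positive near $x_0$. The paper compresses your parity case split into a single parenthetical remark (``the $p$th derivative must have the same sign as $x-x_0$ if $p$ is odd in order for $g$ to be minimized at $x_0$''); your more explicit treatment of the odd-$r$ case, noting that $x_0$ must then be an endpoint, is a welcome clarification but not a different idea.
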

\begin{proof}
Let $p$ be the least integer such that the $p$th derivative
$g^{(p)}(x_0)$ is nonzero.  By a $p$th
order Taylor expansion, $g(x)-g(x_0)=g^{(p)}(x^*(x))(x-x_0)^p$ for some
$x^*(x)$ between $x$ and $x_0$.  By continuity of $g^{(p)}(x)$,
$|g^{(p)}(x^*(x))-g^{(p)}(x_0)|>|g^{(p)}(x_0)|/2$ for $x$ close enough to
$x_0$, so that $g(x)-g(x_0)=g^{(p)}(x^*(x))(x-x_0)^p\ge
|g^{(p)}(x_0)|/2|x-x_0|^p>0$ (the $p$th derivative must have the same sign
as $x-x_0$ if $p$ is odd in order for $g$ to be minimized at $x_0$).
\end{proof}

I now prove Theorem \ref{inf_dist_thm_alpha}.  I prove the theorem under
the assumption that $\gamma(j,k)=\gamma$ for all $(j,k)$ with
$j\in J(k)$.  The general case follows from applying the argument to
neighborhoods of each $x_k$, and getting faster rates of convergence for
$(j,k)$ such that $\gamma(j,k)<\gamma$.
The proof is the same as the proof of Theorem \ref{inf_dist_thm_multi}
with the following modifications.
First, Theorem \ref{local_process_thm_multi} must be modified to the
following theorem, with the new definition of $g_{P,x_k,j}(s,t)$.

\begin{theorem}\label{local_process_thm_alpha}
Let $h_n=n^{-\beta}$ for some $0<\beta<1/d_X$.  Let
\begin{align*}
\mathbb{G}_{n,x_m}(s,t)
=\frac{\sqrt{n}}{h_n^{d/2}}(E_n-E)Y_{i,J(m)}I(h_ns<X_i-x_m<h_n(s+t))
\end{align*}
and let $g_{n,x_m}(s,t)$ have $j$th element
\begin{align*}
g_{n,x_m,j}(s,t)=\frac{1}{h_n^{d_X+\gamma}}EY_{i,j}I(h_ns<X_i-x_m<h_n(s+t))
\end{align*}
if $j\in J(m)$ and zero otherwise.
Then, for any finite $M$,
$(\mathbb{G}_{n,x_1}(s,t),\ldots,\mathbb{G}_{n,x_\ell}(s,t))
\stackrel{d}{\to}
(\mathbb{G}_{P,x_1}(s,t),\ldots,\mathbb{G}_{P,x_\ell}(s,t))$
taken as random
processes on $\|(s,t)\|\le M$ with the supremum norm
and $g_{n,x_m}(s,t)\to g_{P,x_m}(s,t)$ uniformly in $\|(s,t)\|\le M$
where $\mathbb{G}_{P,x_m}(s,t)$ and $g_{P,x_m}(s,t)$ are defined as in
Theorem \ref{inf_dist_thm_multi} for $m$ from $1$ to $\ell$.
\end{theorem}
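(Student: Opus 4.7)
The plan is to follow the structure of the proof of Theorem \ref{local_process_thm_multi}, keeping the empirical-process part essentially unchanged and reworking only the drift calculation, which is the only place the shape of $\bar m(\theta,x)$ enters.

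For the weak convergence of $(\mathbb{G}_{n,x_1}(s,t),\ldots,\mathbb{G}_{n,x_\ell}(s,t))$ to $(\mathbb{G}_{P,x_1}(s,t),\ldots,\mathbb{G}_{P,x_\ell}(s,t))$ on $\{\|(s,t)\|\le M\}$ in sup norm, I would repeat the argument from Theorem \ref{local_process_thm_multi} verbatim: verify the hypotheses of Theorem 2.11.22 of \citet{van_der_vaart_weak_1996}, using that the underlying classes of indicators form VC subgraph classes, and compute the pointwise covariance as in the original proof using only continuity of $x\mapsto E(Y_{i,J(m)}Y_{i,J(m)}'|X_i=x)$ and of $f_X$ at $x_m$. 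Independence across $m\ne r$ follows from the disjoint support of the rescaled indicators for $n$ large. None of this relies on the order of vanishing of $\bar m$ near $x_m$, so no change is needed.

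For the drift term, the key step is the change of variables $x = x_m + h_n u$, which gives
\begin{align*}
g_{n,x_m,j}(s,t)
= \frac{1}{h_n^{d_X+\gamma}} \int_{h_ns<x-x_m<h_n(s+t)} \bar m_j(\theta,x) f_X(x)\,dx
= \frac{1}{h_n^{\gamma}} \int_{s<u<s+t} \bar m_j(\theta,x_m+h_nu) f_X(x_m+h_nu)\,du.
\end{align*}
Applying Assumption \ref{smoothness_assump_alpha} (and using $\bar m_j(\theta,x_k)=0$ for $j\in J(k)$), I would write
\begin{align*}
\bar m_j(\theta,x_m+h_nu) = h_n^{\gamma}\|u\|^{\gamma}\left[\psi_{j,k}\!\left(\tfrac{u}{\|u\|}\right) + \varepsilon_n(h_nu)\right],
\qquad \sup_{\|v\|\le 2h_nM}|\varepsilon_n(v)|\to 0,
\end{align*}
so that $g_{n,x_m,j}(s,t)$ equals
\begin{align*}
\int_{s<u<s+t}\|u\|^{\gamma}\psi_{j,k}\!\left(\tfrac{u}{\|u\|}\right) f_X(x_m+h_nu)\,du
\;+\; \int_{s<u<s+t}\|u\|^{\gamma}\varepsilon_n(h_nu) f_X(x_m+h_nu)\,du.
\end{align*}
The second integral is bounded uniformly on $\|(s,t)\|\le M$ by $(2M)^{d_X+\gamma}\sup_{\|v\|\le 2h_nM}|\varepsilon_n(v)|\cdot \sup_{\|v\|\le 2h_nM}f_X(x_m+v)$, which tends to zero since $f_X$ is bounded near $x_m$. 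The first integral converges to $f_X(x_m)\int_{s<u<s+t}\|u\|^{\gamma}\psi_{j,k}(u/\|u\|)\,du = g_{P,x_m,j}(s,t)$ uniformly in $\|(s,t)\|\le M$ by continuity of $f_X$ at $x_m$, together with the uniform bound $\psi_{j,k}\le \overline\psi$ and the compactness of the domain of integration.

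The main obstacle here is purely bookkeeping: one must make sure the approximation error in Assumption \ref{smoothness_assump_alpha} is multiplied by quantities that remain uniformly bounded in both $n$ and $(s,t)$ on compact sets, which is handled by the uniform upper bound $\overline\psi$ on $\psi_{j,k}$ and the local boundedness of $f_X$. There is no substantive empirical-process difficulty beyond what was already dealt with in the proof of Theorem \ref{local_process_thm_multi}, and in particular no new peeling or entropy computation is required at this stage -- those are needed only later when the infimum over all of $\mathbb{R}^{2d_X}$ is taken and must be tied to the infimum over compact sets via the analogue of Lemma \ref{inf_dist_lemma_multi}.
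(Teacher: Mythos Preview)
Your proposal is correct and follows essentially the same approach as the paper: the empirical-process part is deferred to the proof of Theorem \ref{local_process_thm_multi} unchanged, and the drift term is handled by the change of variables $x\mapsto x_m+h_nu$ together with Assumption \ref{smoothness_assump_alpha} and continuity of $f_X$ at $x_m$. The only cosmetic difference is that the paper decomposes first into three pieces (main term, density-difference term, $\bar m$-remainder term) and then changes variables, whereas you change variables first and absorb the density difference into the limit of the main integral, yielding a two-term split; both are equivalent.
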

\begin{proof}
The proof of the first display is the same.  For the proof of the claim
regarding $g_{n,x_m}(s,t)$, we have
\begin{align*}
g_{n,x_m,j}(s,t)&=\frac{1}{h_n^{d_X+\gamma}}
\int_{h_ns<x-x_m<h_n(s+t)} \psi_{j,k}\left(\frac{x-x_m}{\|x-x_m\|}\right)
  \|x-x_m\|^\gamma f_X(x_m)\, dx  \\
&+\frac{1}{h_n^{d_X+\gamma}}\int_{h_ns<x-x_m<h_n(s+t)}
  \psi_{j,k}\left(\frac{x-x_m}{\|x-x_m\|}\right)\|x-x_m\|^\gamma
    [f_X(x)-f_X(x_m)]\, dx  \\
&+\frac{1}{h_n^{d_X+\gamma}}\int_{h_ns<x-x_m<h_n(s+t)}
  \left[\bar m_j(\theta,x)-\bar m_j(\theta,x_m)  \right.  \\
& \left.-\psi_{j,k}\left(\frac{x-x_m}{\|x-x_m\|}\right)\|x-x_m\|^\gamma\right]
   f_X(x)\, dx.
\end{align*}
The first term is equal to $g_{P,x_m,j}(s,t)$ by a change of variable $x$
to $h_nx+x_m$ in the integral.  The second term is bounded by
$g_{P,x_m,j}(s,t)\sup_{\|x-x_m\|\le 2h_n M} [f_X(x)-f_X(x_m)]/f_X(x_m)$,
which goes to zero uniformly in $\|(s,t)\|\le M$ by continuity of $f_X$.
The third term is equal to (using the same change of variables)
\begin{align*}
&\int_{s<x<s+t}
  \left[\frac{\bar m_j(\theta,h_nx+x_m)-\bar m_j(\theta,x_m)}{h_n^\gamma}
  -\psi_{j,k}\left(\frac{x}{\|x\|}\right)\|x\|^\gamma\right]
   f_X(x)\, dx  \\
&=\int_{s<x<s+t}
  \|x\|^\gamma
  \left[\frac{\bar m_j(\theta,h_nx+x_m)-\bar m_j(\theta,x_m)}{\|h_nx\|^\gamma}
  -\psi_{j,k}\left(\frac{x}{\|x\|}\right)\right]
   f_X(x)\, dx.
\end{align*}
For $\|(s,t)\|\le M$, this is bounded by a constant times
\begin{align*}
\sup_{\|x\|\le 2M} 
\left\|\frac{\bar m_j(\theta,h_nx+x_m)-\bar m_j(\theta,x_m)}{\|h_nx\|^\gamma}
  -\psi_{j,k}\left(\frac{x}{\|x\|}\right)\right\|,
\end{align*}
which goes to zero as $n\to\infty$ by Assumption
\ref{smoothness_assump_alpha}.
\end{proof}

The drift term and the mean zero term will be of the same order of
magnitude if
$\sqrt{n}/h_n^{d_X/2}=1/h_n^{d_X+\gamma}
\Leftrightarrow h_n=n^{-1/(d_X+2\gamma)}$, so that
\begin{align*}
&n^{(d_X+\gamma)/(d+2\gamma)}(E_nY_{i,J(1)}I(h_ns<X-x_1<h_n(s+t)),
\ldots,E_nY_{i,J(\ell)}I(h_ns<X-x_\ell<h_n(s+t))  \\
&=(\mathbb{G}_{n,x_1}(s,t)+g_{n,x_1}(s,t),
\ldots,\mathbb{G}_{n,x_\ell}(s,t)+g_{n,x_\ell}(s,t))  \\
&\stackrel{d}{\to}
(\mathbb{G}_{P,x_1}(s,t)+g_{P,x_1}(s,t),\ldots,
\mathbb{G}_{P,x_m}(s,t)+g_{P,x_m}(s,t))
\end{align*}
taken as stochastic processes in $\{\|(s,t)\|\le M\}$ with the supremum
norm.  From now on, let $h_n=n^{-1/(d+2\gamma)}$ so that this will hold.

Lemmas \ref{tail_bnd_G_multi} and \ref{tail_bound_Gn} hold as stated,
except for the condition in Lemma \ref{tail_bound_Gn} that
$\varepsilon\ge n^{-4/(d+4)}(1+\log n)^2$
must be replaced by
$\varepsilon\ge n^{2\gamma/(d+2\gamma)}(1+\log n)^2$
so that $h_n^{d/2}2^{d/2}\varepsilon^{1/2}\ge n^{-1/2}(1+\log n)$,
which implies the fourth inequality in the last display in the proof of
this lemma, holds for the sequence $h_n$ in the general case.

The next part of the proof that needs to be modified is the proofs of
Theorems \ref{inf_bound_G} and \ref{inf_bound_Gn}.  For this, note that,
for some constants $C_1$ and $\eta>0$
\begin{align}\label{g_ineq}
g_{P,x_m,j}(s,t)\ge C_1 \|(s,t)\|^\gamma \prod_i t_i
\end{align}
and, for %
$\|(s,t)\|\le \eta/h_n$,
\begin{align}\label{gn_ineq}
g_{n,x_m,j}(s,t)\ge C_1 \|(s,t)\|^\gamma \prod_i t_i
\end{align}
for all $m$ and $j$.  To see this, note that
\begin{align*}
&g_{n,x_m,j}(s,t)=E\frac{1}{h_n^{d_X+\gamma}}EY_{i,j}I(h_ns<X_i-x_m<h_n(s+t)) \\
&=\frac{1}{h_n^{d_X+\gamma}}
  \int_{h_ns<x-x_m<h_n(s+t)} \bar m(\theta,x) f_X(x) \, dx
=\int_{s<x<s+t} \frac{\bar m(\theta,h_nx+x_m)}{\|h_nx\|^\gamma}\|x\|^\gamma
  f_X(h_nx+x_m) \, dx
\end{align*}
where the last equality follows from the change of variables $x$ to
$h_nx+x_m$.  For small enough $\eta$, this is greater than or equal to
$\frac{1}{2}\int_{s<x<s+t} \underline \psi \|x\|^\gamma f_X(x_m) \, dx$
for $\|(s,t)\|\le \eta/h_n$ by Assumption \ref{smoothness_assump_alpha}
and the continuity of $f_X$.  By definition, $g_{P,x_m,j}(s,t)$ is also
greater than or equal to a constant times
$\int_{s<x<s+t} \|x\|^\gamma \, dx$.  To see that this is greater than or
equal to a constant times $\|(s,t)\|^\gamma\prod_i t_i$, note that the
Euclidean norm is equivalent to the norm
$(s,t)\mapsto \max_i \max\{|s_i|,|s_i+t_i|\}$ and let $i^*$ be
an index such that $|s_{i^*}|=\max_i \max\{|s_i|,|s_i+t_i|\}$ or
$|s_{i^*}+t_{i^*}|=\max_i \max\{|s_i|,|s_i+t_i|\}$.  In the former case,
we will have $\|x\|\ge |s_{i^*}|/2$ for $x$ on the set
$\{s_{i^*}\le x_{i^*}\le s_{i^*}+|s_{i^*}|/2\}\cap \{s<x<s+t\}$, which has
Lebesgue measure
$\left(\prod_{i\ne i^*}t_i\right)\cdot |s_{i^*}|/2
  \ge \left(\prod_{i\ne i^*}t_i\right)\cdot t_{i^*}/4$,
so that
$\int_{s<x<s+t} \|x\|^\gamma \, dx
  \ge \left(\max_i \max\{|s_i|,|s_i+t_i|\}/2\right)^\gamma\prod_i t_i/4$,
and a symmetric argument holds in the latter case.

With these inequalities in hand, the modified proofs of Theorems
\ref{inf_bound_G} and \ref{inf_bound_Gn} are as follows.

\begin{proof}[proof of Theorem \ref{inf_bound_G} for general case]
Let $\mathbb{G}(s,t)=\mathbb{G}_{P,x_m,j}(s,t)$ and
$g(s,t)=g_{P,x_m,j}(s,t)$.
Let $S_k=\{k\le \|(s,t)\|\le k+1\}$ and let $S_k^L=S_k\cap \{\prod_i
t_i\le (k+1)^{-\delta}\}$ for some fixed $\delta$.  By
Lemma \ref{tail_bnd_G_multi},
\begin{align*}
&P\left(\inf_{S_k^L} \mathbb{G}(s,t)+g(s,t)\le r\right)
\le P\left(\sup_{S_k^L} |\mathbb{G}(s,t)|\ge |r|\right)  \\
&\le \left\{3(k+1)[(k+1)^d/k^{-\delta}]+2\right\}^{2d}
\exp\left(-C r^2 (k+1)^{\delta}\right)
\end{align*}
for $k$ large enough where $C$ depends only on $d$.  Thus, the infimum
over each $S_k^L$ is summable over $k$.

For any ${\underline \beta}$ and ${\overline \beta}$ with
${\underline \beta}<{\overline \beta}$, let
$S_k^{{\underline \beta},{\overline \beta}}
=S_k\cap \{(k+1)^{\underline \beta}<\prod_i t_i\le (k+1)^{\overline \beta}\}$.
Using Lemma \ref{tail_bnd_G_multi} and (\ref{g_ineq}),
\begin{align*}
&P\left(\inf_{S_k^{{\underline \beta},{\overline \beta}}} \mathbb{G}(s,t)+g(s,t)\le r\right)
\le P\left(\sup_{S_k^{{\underline \beta},{\overline \beta}}}
  |\mathbb{G}(s,t)| \ge C_1 k^{\gamma+{\underline \beta}}\right)  \\
&\le \left\{3(k+1)[(k+1)^d/((k+1)^{\overline \beta} \wedge 1)]+2\right\}^{2d}
\exp\left(-C C_1^2\frac{k^{2\gamma+2{\underline \beta}}}
  {(k+1)^{\overline \beta}}\right).
\end{align*}
This is summable over $k$ if
$2\gamma+2{\underline \beta}-{\overline \beta}>0$.

Now, note that, since $\prod_i t_i\le (k+1)^d$ on $S_k$, we have, for any
$-\delta<\beta_1<\beta_2<\ldots<\beta_{\ell-1}<\beta_\ell=d$,
$S_k=S_k^L\cup S_k^{-\delta,\beta_1}\cup
S_k^{\beta_1,\beta_2}\cup\ldots \cup
S_k^{\beta_{\ell-1},\beta_\ell}$.
If we choose $0<\delta<\gamma$, $\beta_1=0$, $\beta_2=\gamma$, and
$\beta_{i+1}=(2\beta_i)\wedge d$ for $i\ge 2$,
the arguments above will show that
the probability of the infimum being less than or equal to $r$ over
$S_k^L$, $S_k^{-\delta,\beta_1}$ and each $S_k^{\beta_i,\beta_{i+1}}$
is summable over $k$, so that $P\left(\inf_{S_k} \mathbb{G}(s,t)+g(s,t)\le
r\right)$ is summable over $k$, so setting $M$ be such that the tail of
this sum past $M$ is less than $\varepsilon$ gives the desired result.
\end{proof}

\begin{proof}[proof of Theorem \ref{inf_bound_Gn} for the general case]
Let $\mathbb{G}_n(s,t)=\mathbb{G}_{n,x_m,j}(s,t)$ and
$g_n(s,t)=g_{n,x_m,j}(s,t)$.
Let $\eta$ be small enough that (\ref{gn_ineq}) holds.

As in the proof of the previous theorem, let $S_k=\{k\le \|(s,t)\|\le
k+1\}$ and let $S_k^L=S_k\cap \{\prod_i t_i\le (k+1)^{-\delta}\}$ for some
fixed $\delta$.  We have, using Lemma \ref{tail_bound_Gn},
\begin{align*}
&P\left(\inf_{S_k^L} \mathbb{G}_n(s,t)+g_n(s,t)\le r\right)
\le P\left(\sup_{S_k^L} |\mathbb{G}_n(s,t)|\ge |r|\right)  \\
&\le \left\{6(k+1)[(k+1)^d/k^{-\delta}]+2\right\}^{2d}
\exp\left(-C\frac{|r|}{(k+1)^{-\delta/2}}\right)
\end{align*}
for
$(k+1)^{-\delta}\ge n^{-2\gamma/(d+2\gamma)}(1+\log n)^2
\Longleftrightarrow k+1\le n^{2\gamma/[\delta(d+2\gamma)]}
   (1+\log n)^{-2/\delta}$
so, if $\delta<2\gamma$, this will
hold eventually for all $(k+1)\le h_n^{-1}\eta$ (once
$h_n^{-1}\eta\le n^{2\gamma/[\delta(d+2\gamma)]}(1+\log n)^{-2/\delta}
\Longleftrightarrow \eta\le n^{2\gamma/[\delta(d+2\gamma)]}
  n^{-1/(d+2\gamma)}(1+\log n)^{-2/\delta}
  = n^{(2\gamma/\delta-1)/(d+2\gamma)} (1+\log n)^{-2/\delta}$).
The bound is summable over $k$ for any $\delta>0$.

Again following the proof of the previous theorem, for
${\underline \beta}<{\overline \beta}$,
define
$S_k^{{\underline \beta},{\overline \beta}}
  =S_k\cap \{(k+1)^{\underline \beta}
  <\prod_i t_i\le (k+1)^{\overline \beta}\}$.  We have, again using Lemma
\ref{tail_bound_Gn},
\begin{align*}
&P\left(\inf_{S_k^{{\underline \beta},{\overline \beta}}} \mathbb{G}_n(s,t)+g_n(s,t)\le r\right)
\le P\left(\sup_{S_k^{{\underline \beta},{\overline \beta}}} |\mathbb{G}_n(s,t)|
  \ge C_1 k^{\gamma+{\underline \beta}}\right)  \\
&\le \left\{6(k+1)[(k+1)^d/(k^{{\underline \beta}}\wedge 1)]+2\right\}^{2d}
\exp\left(-C\frac{C_1 k^{\gamma+{\underline \beta}}}
  {(k+1)^{{\overline \beta}/2}}\right)
\end{align*}
for $(k+1)^{\overline \beta}\ge n^{-2\gamma/(d+2\gamma)}(1+\log n)^2$
(which will hold once the same inequality holds for $\delta$ for
$-\delta<{\overline \beta}$) and $k+1\le h_n^{-1}\eta$.
The bound is summable over $k$ for any
${\underline \beta},{\overline \beta}$ with
$2\gamma+2{\underline \beta}-{\overline \beta}>0$.

Thus, noting as in the previous theorem that, for any
$-\delta<\beta_1<\beta_2<\ldots<\beta_{\ell-1}<\beta_\ell=d$,
$S_k=S_k^L\cup S_k^{-\delta,\beta_1}\cup
S_k^{\beta_1,\beta_2}\cup\ldots \cup
S_k^{\beta_{\ell-1},\beta_\ell}$,
if we choose $0<\delta<\gamma$, $\beta_1=0$, $\beta_2=\gamma$, and
$\beta_{i+1}=(2\beta_i)\wedge d$ for $i\ge 2$,
the arguments above will show that
the probability of the infimum being less than or equal to $r$ over the
sets indexed by $k$ for any $k\le h_n^{-1}\eta$ is bounded uniformly in
$n$ by a sequence that is summable over $k$ (once
$\eta\le n^{(2\gamma/\delta-1)/(d+2\gamma)} (1+\log n)^{-2/\delta}$).
Thus, if we choose $M$ such that the tail of this sum past $M$ is less
than $\varepsilon$ and let $N$ be large enough so that
$\eta\le N^{(2\gamma/\delta-1)/(d+2\gamma)} (1+\log N)^{-2/\delta}$,
we will have the desired result.

\end{proof}

Lemmas \ref{large_s_lemma1_multi} and \ref{large_s_lemma2_multi} hold as
stated with the same proofs, so the rest of the proof is the same as in
the $\gamma=2$ case.  The $n/(a_n \log n)$ rate for $Z_{n,2}$ is still
faster than the $n^{(d+\gamma)/(d+2\gamma)}$ rate for $a_n$ increasing
slowly enough.

The proof of Theorem \ref{abs_cont_thm} for the limiting process is the
same as before.  The only place the drift term is used is in ensuring that
the inequality $g_{P,x_m,j}(s_{i,k},t_k)\le K/k$ holds in the last display
in the proof of the theorem, which is still the case.

\subsection*{Testing Rate of Convergence Conditions: Subsampling}

First, I collect results on the rate estimate $\hat \beta$ defined in
(\ref{rate_est_formula}).  The next lemma bounds $\hat \beta$ when the
statistic may not converge at a polynomial rate.  Throughout the
following, $S_n$ is a statistic on $\mathbb{R}$ with cdf $J_n(x)$ and
quantile function $J_n^{-1}(t)$.  $L_{n,b}(x|\tau)$ and $\tilde
L_{n,b}(x|\tau)$ are defined as in the body of the paper, with
$S(T_n(\theta))$ replaced by $S_n$.

\begin{lemma}\label{subsamp_lower_lemma}
Let $S_n$ be a statistic such that, for some sequence $\tau_n$ and $x>0$,
$\tau_n J_n^{-1}(t)\ge x$ for
large enough $n$.  Then, if $\tau_bS_n\stackrel{p}{\to} 0$ and $b/n\to 0$,
we will have, for any $\varepsilon>0$,
$L_{n,b}^{-1}(t+\varepsilon|\tau)\ge x-\varepsilon$ with probability
approaching one.
\end{lemma}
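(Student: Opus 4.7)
The plan is to reduce the statement to a claim about the uncentered subsampling distribution and then invoke a standard U-statistic variance bound. Define the uncentered subsampling cdf $U_{n,b}(y|\tau) \equiv \binom{n}{b}^{-1}\sum_{|\mathcal{S}|=b} I(\tau_b S_{\mathcal{S}} \le y)$, where $S_{\mathcal{S}}$ denotes the statistic computed on the subsample indexed by $\mathcal{S}$. By direct substitution $L_{n,b}(y|\tau) = U_{n,b}(y + \tau_b S_n|\tau)$, so $L_{n,b}^{-1}(t+\varepsilon|\tau) = U_{n,b}^{-1}(t+\varepsilon|\tau) - \tau_b S_n$. Since $\tau_b S_n \stackrel{p}{\to} 0$, we have $|\tau_b S_n| \le \varepsilon/2$ with probability approaching one, so it is enough to prove $U_{n,b}^{-1}(t+\varepsilon|\tau) \ge x - \varepsilon/2$ with probability approaching one and subtract.

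Next, set $y_0 \equiv x - \varepsilon/2$. The hypothesis $\tau_b J_b^{-1}(t) \ge x$ (valid for $b$ eventually, since $b \to \infty$) gives $J_b^{-1}(t) \ge x/\tau_b > y_0/\tau_b$ strictly, and the defining relation $J_b^{-1}(t) = \inf\{y: J_b(y) \ge t\}$ then forces $J_b(y_0/\tau_b) < t$. By the quantile characterization, it therefore suffices to show $U_{n,b}(y_0|\tau) < t + \varepsilon$ with probability approaching one, since this places $y_0$ outside $\{y: U_{n,b}(y|\tau) \ge t + \varepsilon\}$ and hence $U_{n,b}^{-1}(t+\varepsilon|\tau) > y_0 = x - \varepsilon/2$.

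Now $U_{n,b}(y_0|\tau)$ is a U-statistic of order $b$ with $[0,1]$-valued kernel $(W_{i_1},\ldots,W_{i_b}) \mapsto I(\tau_b S_b(W_{i_1},\ldots,W_{i_b}) \le y_0)$. Its mean is $EU_{n,b}(y_0|\tau) = P(\tau_b S_b \le y_0) = J_b(y_0/\tau_b) < t$ by the previous step, and the classical Hoeffding bound yields $\mathrm{Var}\,U_{n,b}(y_0|\tau) = O(b/n) \to 0$ under $b/n \to 0$. Chebyshev's inequality then gives $U_{n,b}(y_0|\tau) = J_b(y_0/\tau_b) + o_p(1) < t + \varepsilon$ with probability approaching one, closing the chain.

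The only delicate point is the bookkeeping of the $\varepsilon$ buffer, which must be split between an $\varepsilon/2$ gap on the value axis (so that $J_b(y_0/\tau_b) < t$ is strict) and an additional $\varepsilon$ on the probability axis (to absorb the $o_p(1)$ fluctuation of $U_{n,b}$ and the shift $\tau_b S_n$). No probabilistic input beyond the elementary U-statistic variance bound is required, which is why the lemma holds under the weak hypotheses imposed on $S_n$.
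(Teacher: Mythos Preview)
Your proof is correct and follows essentially the same approach as the paper: relate the centered subsampling cdf to the uncentered one via the shift $\tau_b S_n$, bound the mean of the uncentered version using the quantile hypothesis, and invoke U-statistic concentration (the paper's ``standard arguments'') to control the fluctuation. Your bookkeeping is in fact slightly cleaner than the paper's: by evaluating at $y_0=x-\varepsilon/2$ you obtain $J_b(y_0/\tau_b)<t$ strictly, which sidesteps a boundary subtlety in the paper's step $J_b(x/\tau_b)\le t$ (the hypothesis $J_b^{-1}(t)\ge x/\tau_b$ does not by itself exclude $J_b(x/\tau_b)\ge t$ when equality holds in the quantile inequality).
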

\begin{proof}
It suffices to show $L_{n,b}(x-\varepsilon|\tau)\le t+\varepsilon$ with
probability approaching one.  On
the event $E_n\equiv\{|\tau_bS_{\mathcal{S}}|\le \varepsilon\}$, which
has probability approaching one,
$L_{n,b}(x-\varepsilon|\tau)\le \tilde L_{n,b}(x|\tau)$.  We also have
$E[L_{n,b}(x|\tau)]=P(\tau_bS_{\mathcal{S}}\le x)=J_b(x/\tau_b)\le t$ by
assumption.  Thus,
\begin{align*}
&P(L_{n,b}(x-\varepsilon|\tau)\le t+\varepsilon)
\ge P\left( \left\{\tilde L_{n,b}(x|\tau)\le t+\varepsilon\right\}
      \cap E_n\right)  \\
&\ge P\left( \left\{\tilde L_{n,b}(x|\tau)\le E[L_{n,b}(x|\tau)]
        +\varepsilon\right\}
      \cap E_n\right).
\end{align*}
This goes to one by standard arguments.
\end{proof}

\begin{lemma}\label{beta_a_lemma}
Let $\hat\beta_a$ be the estimator defined in Section
\ref{subsamp_rate_subsec}, of any other estimator such that
$\hat\beta_a=\frac{-\log L_{n,b_1}^{-1}(t|1)+\mathcal{O}_p(1)}
  {\log b_1-\mathcal{O}_p(1)}$.
Suppose that, for some $x_\ell>0$ and $\beta_u$, $x_u n^{\beta_u}\le
J_n^{-1}(t-\varepsilon)$ eventually and
$b_1^{\beta_u}S_n\stackrel{p}{\to} 0$.
Then,
for any $\varepsilon>0$, we will have
$\hat\beta_a\le\hat\beta_u+\varepsilon$ with probability approaching one.
\end{lemma}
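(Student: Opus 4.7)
The plan is to reduce Lemma \ref{beta_a_lemma} to a direct application of Lemma \ref{subsamp_lower_lemma}, converting the lower bound on the subsample quantile that lemma provides into an upper bound on $-\log L_{n,b_1}^{-1}(t|1)$, and then substituting into the given representation of $\hat\beta_a$.

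First I would invoke Lemma \ref{subsamp_lower_lemma} with the scaling $\tau_n = n^{-\beta_u}$ (so that $\tau_b = b^{-\beta_u}$) and with the quantile level taken to be $t - \varepsilon$. Under this choice, the hypothesis $x_u n^{\beta_u} \le J_n^{-1}(t-\varepsilon)$ is exactly $\tau_n J_n^{-1}(t-\varepsilon) \ge x_u$, and the hypothesis $b_1^{\beta_u} S_n \stackrel{p}{\to} 0$ delivers $\tau_{b_1} S_n \stackrel{p}{\to} 0$ (together with $b_1/n\to 0$ since $b_1 = n^{\chi_a}$ with $\chi_a<1$). The conclusion of Lemma \ref{subsamp_lower_lemma}, applied with the slack parameter $\varepsilon'\to 0$, then gives $L_{n,b_1}^{-1}(t|\tau) \ge x_u - \varepsilon'$ with probability tending to one.

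Next I would use the homogeneity identity $L_{n,b_1}^{-1}(t|\tau) = \tau_{b_1}\,L_{n,b_1}^{-1}(t|1)$, noted in Section \ref{subsamp_rate_subsec}, to rewrite this lower bound as
\begin{align*}
L_{n,b_1}^{-1}(t|1) \ge (x_u - \varepsilon')\,b_1^{\beta_u},
\end{align*}
and hence, after taking logarithms,
\begin{align*}
-\log L_{n,b_1}^{-1}(t|1) \le -\beta_u \log b_1 - \log(x_u - \varepsilon'),
\end{align*}
with probability approaching one. Substituting this bound into the given expression
\begin{align*}
\hat\beta_a = \frac{-\log L_{n,b_1}^{-1}(t|1) + \mathcal{O}_p(1)}{\log b_1 - \mathcal{O}_p(1)}
\end{align*}
and using $\log b_1 = \chi_a \log n \to \infty$ yields $\hat\beta_a \le \beta_u + o_p(1)$, which delivers $\hat\beta_a \le \beta_u + \varepsilon$ with probability approaching one.

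The only delicate step is the first one, where one must line up the sign conventions for $\tau_n$ and the hypothesized inequality on $J_n^{-1}(t-\varepsilon)$ so that the assumptions of Lemma \ref{subsamp_lower_lemma} are satisfied verbatim; after that, the remainder is purely an algebraic manipulation of the logarithmic representation of $\hat\beta_a$, with the $\mathcal{O}_p(1)$ terms absorbed by the divergence of $\log b_1$.
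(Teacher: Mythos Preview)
Your proposal is correct and follows essentially the same approach as the paper: apply Lemma \ref{subsamp_lower_lemma} to lower bound the subsample quantile, convert via the homogeneity identity $L_{n,b_1}^{-1}(t|\tau)=\tau_{b_1}L_{n,b_1}^{-1}(t|1)$, take logarithms, and absorb the $\mathcal{O}_p(1)$ terms using $\log b_1\to\infty$. The paper's proof compresses exactly these steps into a single displayed chain, writing
\[
\hat\beta_a=-\frac{\log L_{n,b_1}^{-1}(t|1)}{\log b_1}+o_p(1)
=\frac{\beta_u \log b_1-\log L_{n,b_1}^{-1}(t|b^{\beta_u})}{\log b_1}+o_p(1)
\le \beta_u-\frac{\log (x_u/2)}{\log b_1}+o_p(1),
\]
with the inequality justified by Lemma \ref{subsamp_lower_lemma}. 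The only cosmetic discrepancy is your choice $\tau_n=n^{-\beta_u}$ versus the paper's implicit $\tau_n=n^{\beta_u}$; this reflects a sign typo in the lemma's hypothesis (which, to match both the paper's proof and its application in Theorem \ref{rate_thm_alpha}, should read $x_u n^{-\beta_u}\le J_n^{-1}(t-\varepsilon)$), and your reading handles it consistently provided $\beta_u>0$ so that $b_1^{\beta_u}S_n\to 0$ indeed implies $b_1^{-\beta_u}S_n\to 0$.
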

\begin{proof}
We have
\begin{align*}
\hat\beta_a=-\frac{\log L_{n,b_1}^{-1}(t|1)}{\log b_1}+o_p(1)
=\frac{\beta_u \log b_1-\log L_{n,b_1}^{-1}(t|b^{\beta_u})}
   {\log b_1}+o_p(1)
\le \beta_u-\frac{\log (x_u/2)}{\log b_1}+o_p(1)
\stackrel{p}{\to} \beta_u
\end{align*}
where the inequality holds with probability approaching one by Lemma
\ref{subsamp_lower_lemma}.
\end{proof}

The following lemma shows that the asymptotic distribution of the KS
statistic is strictly increasing on its support, which is needed for the
estimates of the rate of convergence in \citet{politis_subsampling_1999}
to converge at a fast enough rate that they can be used in the subsampling
procedure.

\begin{lemma}\label{inc_cdf_lemma}
Under Assumptions \ref{smoothness_assump_multi}, \ref{bdd_y_assump_multi},
\ref{S_assump}, \ref{inv_mat_assump} and \ref{abs_cont_S_assump}
with part (ii) of Assumption \ref{smoothness_assump_multi} replaced
by Assumption \ref{smoothness_assump_alpha}, if $S$ is convex, then the
the asymptotic distribution $S(Z)$ in Theorem \ref{inf_dist_thm_alpha}
satisfies $P(S(Z)\in (a,\infty))=1$ for some $a$, and the cdf of
$S(Z)$ is strictly increasing on $(a,\infty)$.
\end{lemma}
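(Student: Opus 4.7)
The plan is to take $a = 0$. For the first claim, by Assumption \ref{S_assump}, $S$ maps $\mathbb{R}^{d_Y}$ into $\mathbb{R}_+$, so $S(Z) \ge 0$ almost surely; Theorem \ref{abs_cont_thm} shows that $S(Z)$ has a continuous (atomless) distribution, so $P(S(Z) = 0) = 0$, giving $P(S(Z) > 0) = 1$.

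For strict monotonicity of the cdf on $(0, \infty)$, the strategy is to show that the topological support of $S(Z)$ equals $[0, \infty)$. Combined with continuity of the distribution, this implies strict monotonicity on $(0, \infty)$: for any $0 < c_1 < c_2$, the open interval $(c_1, c_2)$ meets the support, so $P(S(Z) \in (c_1, c_2)) > 0$, giving a strict increase of the cdf between $c_1$ and $c_2$. The inclusion $\text{supp}(S(Z)) \subseteq [0, \infty)$ is immediate from $S \ge 0$. For the other inclusion, let $\mathbb{P}$ denote the joint Gaussian law of $(\mathbb{G}_{P,x_1}, \ldots, \mathbb{G}_{P,x_\ell})$ with Cameron-Martin space $H$ and topological support $\bar H$. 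For each $h \in \bar H$ define $Z^h$ by substituting each $h_k$ for $\mathbb{G}_{P,x_k}$ in the formula for $Z$, and set $\Phi(h) = S(Z^h)$. By the support theorem for Gaussian measures, provided $\Phi$ is continuous on $\bar H$, the support of $S(Z)$ contains the closure of $\Phi(\bar H)$. I would then exhibit a one-parameter family $\{\lambda h_0 : \lambda \in \mathbb{R}\}$ whose image under $\Phi$ sweeps out all of $[0, \infty)$: choose $h_0 \in H$ with exactly one component $h_{0,k^*,j^*}$ nonzero and strictly positive at some $(s^*, t^*)$, where $j^*$ is an index with $S(-e_{j^*}) > 0$. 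Such an index exists because convexity and positive homogeneity, via the sublinear representation $S(z) = \sup_{y \in Y} \langle y, z \rangle$, imply that if $S(-e_j) = 0$ for every $j$ then $S \equiv 0$ on the negative orthant, contradicting non-degeneracy of $S(Z)$'s continuous distribution (since $Z \le 0$ almost surely). Then $\Phi(0) = S(0) = 0$ (since $g_{P,x_k,j} \ge 0$ and vanishes whenever some $t_i = 0$, so $\inf g = 0$), while as $\lambda \to -\infty$ one has $Z^{\lambda h_0}_{j^*} \le \lambda h_{0,k^*,j^*}(s^*, t^*) + g_{P,x_{k^*},j^*}(s^*, t^*) \to -\infty$ with all other components remaining equal to zero, so positive homogeneity and $S(-e_{j^*}) > 0$ force $\Phi(\lambda h_0) \to \infty$. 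Continuity of $\lambda \mapsto \Phi(\lambda h_0)$ and the intermediate value theorem then yield $\Phi(\{\lambda h_0 : \lambda \in \mathbb{R}\}) = [0, \infty)$.

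The main technical obstacle is establishing the continuity of $\Phi$ on $\bar H$ in a topology in which the support theorem gives $\text{supp}(\mathbb{P}) = \bar H$. Because the infimum defining $Z^h$ is taken over the unbounded set $\mathbb{R}^{2d_X}$, standard continuity of infima over compact sets does not apply directly. The resolution is to use deterministic analogues of the tail estimates in Theorems \ref{inf_bound_G} and \ref{inf_bound_Gn}: these show that the infimum is effectively attained on a bounded region, reducing the continuity problem to uniform convergence on compacts, where it is straightforward from $|Z^{h_1} - Z^{h_2}| \le \sup_{(s,t) \in K} \|h_1(s,t) - h_2(s,t)\|$ on the relevant compact set $K$.
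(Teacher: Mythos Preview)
Your approach is genuinely different from the paper's, and the difference is instructive.

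The paper does \emph{not} try to identify the support of $S(Z)$. Instead it observes that $S(Z)$ is a \emph{convex} functional of the underlying Gaussian process: the infimum is concave in the sample path, and composing a convex, coordinatewise-nonincreasing $S$ with concave coordinate maps yields a convex functional. It then invokes the Davydov--Lifshits--Smorodina theory (Proposition~11.3 and Theorem~11.1 of \citet{davydov_local_1998}), which says that for a convex functional of a Gaussian element, the transform $\Phi^{-1}\!\circ F$ of the cdf is concave. Concavity of $\Phi^{-1}\!\circ F$ immediately rules out any flat stretch of $F$ followed by an increase, so $F$ is strictly increasing on its support, which is automatically an interval $(a,\infty)$. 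The lower-semicontinuity issue (the infimum being over an unbounded set) is handled by first restricting to finite index sets $\mathbb T_\ell$, applying the result there, and passing to the limit---concavity of $\Phi^{-1}\!\circ F_\ell$ survives weak limits. Convexity of $S$ is used in exactly this place, and nowhere else.

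Your route---show directly that $\mathrm{supp}\,S(Z)=[0,\infty)$ via the Gaussian support theorem and Cameron--Martin shifts---is attractive because it would even pin down $a=0$. But the argument as written has a genuine gap at precisely the point you flag. You need $\Phi$ continuous on $\bar H$ in the topology of uniform convergence on compacts, and the map $h\mapsto\inf_{(s,t)\in\mathbb R^{2d_X}}[h(s,t)+g(s,t)]$ is \emph{not} continuous in that topology. Your proposed fix, ``deterministic analogues of the tail estimates,'' cannot work as stated: $\bar H$ is the closure of the RKHS under local uniform convergence and therefore contains functions with arbitrary behavior at infinity, so there is no growth bound on $h\in\bar H$ that would force the infimum to live on a fixed compact. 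Theorems~\ref{inf_bound_G} and~\ref{inf_bound_Gn} are probabilistic statements about the \emph{random} process $\mathbb G+g$; they have no deterministic counterpart valid across $\bar H$.

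The idea can be rescued, but not through continuity of $\Phi$ on $\bar H$. One working repair: pick $h_0\in H$ with compact support, and for each target $r=\Phi(\lambda h_0)$ argue directly that $P(|S(Z)-r|<\varepsilon)>0$ by combining (i) the Gaussian small-ball property $P(\sup_{\|(s,t)\|\le M}\|\mathbb G\|<\delta)>0$ on a large compact, with (ii) the probabilistic tail bound (Theorem~\ref{inf_bound_G}) to control $\|(s,t)\|>M$, plus an absolute-continuity (Cameron--Martin) shift by $\lambda h_0$. This bypasses the support theorem entirely. The paper's convex-functional argument is shorter precisely because it sidesteps all of this: the finite-$\mathbb T_\ell$ approximation plus preservation of concavity under weak limits is a two-line substitute for the whole tail/continuity analysis.
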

\begin{proof}
First, note that, for any concave functions $f_1,\ldots,f_{d_Y}$,
$f_i:V_i\to\mathbb{R}$, for some vector space $V_i$,
$x\mapsto S(f_1(x_1),\ldots,f_{d_Y}(x_{d_Y}))$ is convex, since, for any
$\lambda\in(0,1)$,
\begin{align*}
&S(f_1(\lambda x_{a,1}+(1-\lambda) x_{b,1})
  ,\ldots,f_k(\lambda x_{a,{d_Y}}+(1-\lambda) x_{b,{d_Y}}))  \\
&\ge S(\lambda f_1(x_{a,1})+(1-\lambda)f_k(x_{b,1}),
  ,\ldots,\lambda f_k(x_{a,{d_Y}})+(1-\lambda)f_k(x_{b,{d_Y}}))  \\
&\ge \lambda S(f_1(x_{a,1}),\ldots,f_k(x_{a,{d_Y}}))
   + (1-\lambda) S(f_1(x_{b,1}),\ldots,f_k(x_{b,{d_Y}}))
\end{align*}
where the first inequality follows since $S$ is decreasing in each
argument and by concavity of the $f_k$s, and the second follows by
convexity of $S$.

$S(Z)$ can be written as, for some random processes
$\mathbb{H}_1(t),\ldots,\mathbb{H}_{d_Y}(t)$ with continuous sample
paths and $\mathbb{T}\equiv \mathbb{R}^{|\mathcal{X}_0|\cdot 2d_X}$,
$S(\inf_{t\in\mathbb{T}}\mathbb{H}_1(t),\ldots
  ,\inf_{t\in\mathbb{T}}\mathbb{H}_{d_Y}(t))$.
Since the infimum
of a real valued function is a concave functional, this is a convex
function of the sample paths of
$(\mathbb{H}_1(t),\ldots,\mathbb{H}_{d_Y}(t))$.  The result follows from
Theorem 11.1 in \citet{davydov_local_1998} as long as the vector of random
processes can be given a topology for which this function is lower
semi-continuous.  In fact, this step can be done away with by noting that,
for $\mathbb{T}_0$ a countable dense subset of $\mathbb{T}$ and
$\mathbb{T}_\ell$ the first $\ell$ elements of this subset,
$S(\inf_{t\in\mathbb{T}_\ell}\mathbb{H}_1(t),\ldots
  ,\inf_{t\in\mathbb{T}_\ell}\mathbb{H}_{d_Y}(t))
\stackrel{d}{\to}
S(\inf_{t\in\mathbb{R}^{2d}}\mathbb{H}_1(t),\ldots
  ,\inf_{t\in\mathbb{R}^{2d}}\mathbb{H}_{d_Y}(t))$ as $\ell\to\infty$,
so, letting $F_\ell$ be the cdf of
$S(\inf_{t\in\mathbb{T}_\ell}\mathbb{H}_1(t),\ldots
  ,\inf_{t\in\mathbb{T}_\ell}\mathbb{H}_{d_Y}(t))$, applying Proposition
11.3 of \citet{davydov_local_1998} for each $F_\ell$ shows that
$\Phi^{-1}(F_\ell(t))$ is concave for each $\ell$, so, by convergence in
distribution, this holds for $S(Z)$ as well.
\end{proof}

The same result in \citet{davydov_local_1998} could also be used in the
proof of Theorem \ref{abs_cont_thm} to show that the distribution of
$S(Z)$ is continuous except possibly at the infimum of its support, but an
additional argument would be needed to show that, if such an atom exists,
it would have to be at zero.  In the proof of Theorem \ref{abs_cont_thm},
this is handled by using the results of \citet{pitt_local_1979} instead.

We are now ready to prove Theorem \ref{rate_thm_alpha}.

\begin{proof}[proof of Theorem \ref{rate_thm_alpha}]
First, suppose that Assumption \ref{smoothness_assump_multi} holds with
part (ii) of Assumption \ref{smoothness_assump_multi} replaced by
Assumption \ref{smoothness_assump_alpha} for some
$\underline\gamma<\gamma<\overline\gamma$ and
$\mathcal{X}_0$ nonempty.  By Theorem \ref{inf_dist_thm_alpha},
$n^{(d_X+\gamma)/(d_X+2\gamma)}S(T_n(\theta))$ converges in distribution
to a continous distribution.  Thus, by Lemma \ref{beta_a_lemma},
$\hat\beta_a\stackrel{p}{\to}(d_X+\gamma)/(d_X+2\gamma)$, so
$\hat\beta_a>\underline\beta=(d_X+\overline\gamma)/(d_X+2\overline\gamma)$
with probability approaching
one.  On this event, the test uses the subsample estimate of the
$1-\alpha$ quantile with rate estimate $\hat\beta \wedge \overline\beta$.
By Theorem 8.2.1 in \citet{politis_subsampling_1999},
$\hat\beta \wedge \overline\beta=(d_X+\gamma)/(d_X+2\gamma)
  +o_p((\log n)^{-1})$
as long as the asymptotic distribution of
$n^{(d_X+\gamma)/(d_X+2\gamma)}S(T_n(\theta))$ is increasing on the
smallest interval $(k_0,k_1)$ on which the asymptotic distribution has
probability one.  This holds by Lemma \ref{inc_cdf_lemma}.  By Theorem
8.3.1 in \citet{politis_subsampling_1999}, the $o_p((\log n)^{-1})$ rate
of convergence for the rate estimate $\hat\beta \wedge \overline\beta$
implies that the probability of rejecting converges to $\alpha$.

Next, suppose that %
Assumption \ref{smoothness_assump_multi} holds with part
(ii) of Assumption \ref{smoothness_assump_multi} replaced by Assumption
\ref{smoothness_assump_alpha} for $\gamma=\overline\gamma$.
The test that compares $n^{1/2}S(T_n(\theta))$ to a positive critical
value will fail to reject with probability approaching one in this case,
so, on an event with probability approaching one, the test will reject
only if $\hat\beta_a\ge\underline\beta$ and
the subsampling test with rate $\hat\beta \wedge \overline\beta$ rejects.
Thus, the probability of rejecting is asymptotically no greater than the
probability of rejecting with the subsampling test with rate
$\hat\beta \wedge \overline\beta$, which has asymptotic level $\alpha$
under these conditions by the argument above.

Now, consider the case where, for some $x_0\in\mathcal{X}_0$ and
$B<\infty$, $\bar m_j(\theta,x)\le B\|x-x_0\|^{\gamma}$ for some
$\gamma>\bar \gamma$.  Let
$\tilde m_j(W_i,\theta)
  =m_j(W_i,\theta)+(B\|x-x_0\|^{\gamma}-\bar m_j(\theta,x))$.  Then
$\tilde m_j(W_i,\theta)\ge m_j(W_i,\theta)$, and $\tilde m_j(W_i,\theta)$
satisfies the assumptions of Theorems \ref{inf_dist_thm_alpha} and
\ref{abs_cont_thm}, so
\begin{align*}
n^{(d_X+\gamma)/(d_X+2\gamma)}S(T_n(\theta))
\ge n^{(d_X+\gamma)/(d_X+2\gamma)}
  S(0,\ldots,0,
  \inf_{s,t}E_n \tilde m_j(W_i,\theta)I(s<X_i<s+t),0,\ldots,0)
\end{align*}
and the latter quantity converges in distribution to a continuous random
variable that is positive with probability one.  Thus, by Lemma
\ref{beta_a_lemma}, for any $\varepsilon>0$,
$\hat\beta_a<(d_X+\gamma)/(d_X+2\gamma)+\varepsilon$
with probability approaching one.  For $\varepsilon$ small enough, this
means that
$\hat\beta_a<(d_X+\overline\gamma)/(d_X+2\overline\gamma)$ with
probability approaching one.  Thus, the procedure uses an asymptotically
level $\alpha$ test with probability approaching one.

The remaining case is where $\bar m_j(\theta,x)$ is bounded from below
away from zero.  If $m_j(W_i,\theta)\ge 0$ for all $j$ with probability
one, $S(T_n(\theta))$ and the estimated $1-\alpha$ quantile will both be
zero, so the probability of rejecting will be zero, so suppose that
$P(m_j(W_i,\theta)<0)>0$ for some $j$.  Then, for some $\eta>0$, we have
$nS(T_n(\theta))>\eta$ with
probability approaching one.  From Lemma \ref{subsamp_lower_lemma}
(applied with $t$ less that $1-\alpha$ and $\tau_b=b$), it
follows that
$L_{n,b}^{-1}(1-\alpha|b^{\hat\beta\wedge\overline\beta})
=b^{\hat\beta\wedge\overline\beta-1} L_{n,b}^{-1}(1-\alpha|b)
\ge b^{\hat\beta\wedge\overline\beta-1} \eta/2$ with probability
approaching one.
By Lemma \ref{large_s_lemma2_multi},
$S(T_n(\theta))$ will converge at a $n\log n$ rate, so that
$n^{\hat\beta\wedge\overline\beta}S(T_n(\theta))
  <n^{\hat\beta\wedge\overline\beta-1} (\log n)^2$ with probability
approaching one.  Thus, we will fail to reject with probability
approaching one as long as
$n^{\hat\beta\wedge\overline\beta-1} (\log n)^2
  \le b^{\hat\beta\wedge\overline\beta-1} \eta/2
  = n^{\chi_3(\hat\beta\wedge\overline\beta-1)} \eta/2$
for large enough $n$, and this holds since $\chi_3<1$.
A similar argument holds for
$\tilde L_{n,b}^{-1}(1-\alpha|b^{\hat\beta\wedge\overline\beta})$.

\end{proof}

\subsection*{Testing Rate of Convergence Conditions:
  Estimating the Second Derivative}

\begin{proof}[proof of Lemma \ref{min_set_lemma}]
Let $h(x)=\bar m_j(\theta,x)-\min_{x'\in D} \bar m_j(\theta,x)$ where
$\bar m_j(\theta,x)=E(m_j(W_i,\theta)|X_i=x)$ for a continuous version of
the conditional mean function.
First, note that
$\mathcal{X}_0^j$ is compact.
Since each $x\in\mathcal{X}_0^j$ is a local minimizer of $h(x)$ such that
the second derivative matrix is strictly positive definite at $x$,
there is an open set $A(x)$ containing
each $x\in\mathcal{X}_0^j$ such that $h(x)>0$ on $A(x)\backslash x$.
The sets $A(x)$ with $x$ ranging over $\mathcal{X}_0^j$ form a covering
of $\mathcal{X}_0^j$ with open sets.  Thus, there is a finite subcover
$A(x_1),\ldots A(x_\ell)$ of $\mathcal{X}_0^j$.  Since the only elements in
$A(x_1)\cup\cdots\cup A(x_\ell)$ that are also in $\mathcal{X}_0^j$ are
$x_1,\ldots, x_\ell$, this means that
$\mathcal{X}_0^j=\{x_1,\ldots, x_\ell\}$.
\end{proof}

\begin{proof}[proof of Theorem \ref{X0_est_thm}]
By the next lemma, we will have
$\mathcal{X}_0^j\subseteq \hat{\mathcal{X}}_0^j \subseteq
\cup_{k=1}^{\hat \ell_j}B_{\varepsilon_n}(\hat x_{j,k})$ and
$\mathcal{X}_0^j\subseteq \hat{\mathcal{X}}_0^j \subseteq
\cup_{k \text{ s.t. } j\in J(k)} B_{\varepsilon_n}(x_k)$
with probability approaching one.  When this holds, we will have
$\hat \ell \le |\{k|j\in J(k)\}|$ by construction and, once
$\varepsilon_n$ is less than the smallest distance between any two points
in $\mathcal{X}_0^j$, we will also have $\hat \ell_j = |\{k|j\in J(k)\}|$
and, for each $k$ from $1$ to $\hat \ell_j$, we will have, for some
function $r(j,k)$ such that $r(j,\cdot)$, is bijective from
$\{1,\ldots,\hat \ell_j\}$ to $\{k|j\in J(k)\}$,
$x_{r(j,k)}\in B_{\varepsilon_n}(\hat x_{j,k})$ for each $j,k$.  When this
holds, all of the $\hat x_{j,k}$s with $r(j,k)$ equal will be in the same
equivalence class, since the corresponding $\varepsilon_n$ neighborhoods
will intersect.  When $\varepsilon_n$ is small enough that
$\varepsilon_n$ neighborhoods containing $x_r$ and $\varepsilon_n$
neighborhoods containing $x_s$ do not intersect for $r\ne s$, there will
be exactly $\ell$ equivalence classes, each one corresponding to the
$(j,k)$ indices such that $r(j,k)$ is the same.  Let the labeling of the
$\tilde x_s$s be such that, for all $s$, $\tilde x_s=\hat x_{j,k}$ for
some $(j,k)$ such that $r(j,k)=s$.  Then, for each $s$, we have, for some
$(j,k)$ such that $r(j,k)=s$,
$x_s=x_{r(j,k)}\in B_{\varepsilon_n}(\hat x_{j,k})
=B_{\varepsilon_n}(\tilde x_s)$ with probability approaching one so that
$\tilde x_s\stackrel{p}{\to} x_s$.
To verify that $\hat J(s)=J(s)$ with probability approaching one, note
that, for $j\in J(s)$, we will have
$x_s\in \mathcal{X}_0^j\subseteq \cup_k B_{\varepsilon_n}(\hat x_{j,k})$
and $x_s\in B_{\varepsilon_n}(\tilde x_s)$
eventually, and, when this holds,
$\left[\cup_k B_{\varepsilon_n}(\hat x_{j,k})\right]
\cap B_{\varepsilon_n}(\tilde x_s) \ne \emptyset$
so that $j\in \hat J(s)$.  For $j\notin J(s)$, each $\hat x_{j,k}$ will
eventually be within $\varepsilon_n$ of some $x_r$ with $r\ne s$, while
indices $(j',k')$ in the equivalence class associated with $s$ will
eventually have $\hat x_{j',k'}$ within $2\varepsilon$ of $x_s$, so that
$(j,k)$ will not be in the equivalence class associated with $s$ for any
$k$, and $j\notin \hat J(s)$.
\end{proof}

\begin{lemma}
Suppose that
$\sup_{x\in D}\|\hat{\bar m}_j(\theta,x)-\bar m_j(\theta,x)\|
=\mathcal{O}(a_n)$ for some sequence $a_n\to 0$.  Then, under Assumption
\ref{smoothness_assump_multi},
for any sequence
$b_n\to \infty$ with $b_na_n\to 0$ and $\varepsilon_n$ with
$\varepsilon_n\to 0$ more slowly than $\sqrt{b_na_n}$, the set 
$\hat{\mathcal{X}}_0^j\equiv \{x|\hat{\bar m}_j(\theta,x)\le b_na_n\}$
satisfies
\begin{align*}
\mathcal{X}_0^j\subseteq \hat{\mathcal{X}}_0^j
  \subseteq \cup_{k \text{ s.t. } j\in J(k)} B_{\varepsilon_n}(x_k)
\end{align*}
\begin{proof}
We will have $\mathcal{X}_0^j\subseteq \hat{\mathcal{X}}_0^j$ as soon as
$\sup_{x\in D}\|\hat{\bar m}_j(\theta,x)-\bar m_j(\theta,x)\|\le b_n a_n$,
which happens with probability approaching one.  To show that
$\hat{\mathcal{X}}_0^j\subseteq
\cup_{k \text{ s.t. } j\in J(k)} B_{\varepsilon_n}(x_k)$ eventually,
suppose that, for some $\hat x\in \hat{\mathcal{X}}_0^j$, $\hat x\notin
B_{\varepsilon_n}(x_k)$ for any $k$.  Let $C$ and $\eta$ be such that
$\bar m_j(\theta,x)\ge C\min_k\|x-x_k\|^2$ when $\|x-x_k\|\le \eta$ for
some $k$ (such a $C$ and $\eta$ exist by Assumption
\ref{smoothness_assump_multi}).  Then, for any $\hat x$ such that $\hat{\bar
m}_j(\theta,\hat x) \le b_na_n$, we must have, with probability
approaching one,
\begin{align*}
C\min_k\|x-x_k\|^2\le \bar m_j(\theta,\hat x)
\le b_na_n + \bar m_j(\theta,\hat x)-\hat{\bar m}_j(\theta,\hat x)
\le 2b_na_n
\end{align*}
where the first inequality follows since $\hat{\mathcal{X}}_0^j$ is
contained in
$\{x|\|x-x_k\|\le \eta \text{ some } k \text{ s.t. } j\in J(k)\}$
eventually.  Since $\varepsilon_n\ge \sqrt{2b_na_n/C}$ eventually, the
first claim follows.
\end{proof}

\end{lemma}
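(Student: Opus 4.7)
The plan is to prove each containment separately, in both cases working on the high-probability event that $\sup_{x\in D}\|\hat{\bar m}_j(\theta,x)-\bar m_j(\theta,x)\| \le b_n a_n$, which occurs eventually with probability approaching one because $b_n\to\infty$ and the supremum is $\mathcal{O}(a_n)$.

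For the inclusion $\mathcal{X}_0^j\subseteq \hat{\mathcal{X}}_0^j$, I would observe that on $\mathcal{X}_0^j$ one has $\bar m_j(\theta,x)=0$ (the contact set consists of zeros of $\bar m_j(\theta,\cdot)$), so on the good event $\hat{\bar m}_j(\theta,x)\le b_n a_n$ for every such $x$, placing it in $\hat{\mathcal{X}}_0^j$.

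For the reverse inclusion, the argument has two stages. First, under Assumption \ref{smoothness_assump_multi}(i), $\bar m_j(\theta,x)$ is bounded away from zero outside of $\cup_{k: j\in J(k)} B(x_k)$ for the fixed neighborhoods $B(x_k)$. Since any $\hat x\in\hat{\mathcal{X}}_0^j$ satisfies $\bar m_j(\theta,\hat x)\le 2 b_n a_n\to 0$ on the good event, $\hat x$ must eventually lie in one of these fixed neighborhoods. Second, within $B(x_k)$ with $j\in J(k)$, Assumption \ref{smoothness_assump_multi}(ii) (positive definite Hessian $V_j(x_k)$ at the interior minimum $x_k$ where $\bar m_j$ and its gradient vanish) together with a second-order Taylor expansion gives a local lower bound $\bar m_j(\theta,x)\ge C\|x-x_k\|^2$ for some $C>0$. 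Combining this with $\bar m_j(\theta,\hat x)\le 2b_n a_n$ yields $\|\hat x-x_k\|\le \sqrt{2 b_n a_n/C}$, and since $\varepsilon_n/\sqrt{b_n a_n}\to\infty$, this is $\le \varepsilon_n$ eventually, giving $\hat x\in B_{\varepsilon_n}(x_k)$.

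There is no serious obstacle here; the only minor care needed is in uniformizing the quadratic lower bound over $k$ (which is fine because $\mathcal{X}_0$ is finite, so one may take a single constant $C$ and a single radius $\delta$ valid on $\cup_k B_\delta(x_k)$) and in chaining the two stages so that the fixed-neighborhood step activates before the $\varepsilon_n$-neighborhood step. Both are deterministic once the uniform convergence event holds.
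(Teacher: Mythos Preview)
Your proposal is correct and follows essentially the same approach as the paper: both arguments work on the event where the uniform approximation error is at most $b_n a_n$, obtain the first inclusion from $\bar m_j(\theta,x)=0$ on $\mathcal{X}_0^j$, and for the second inclusion first confine $\hat{\mathcal{X}}_0^j$ to fixed neighborhoods of the $x_k$ (using that $\bar m_j$ is bounded away from zero elsewhere) and then apply the quadratic lower bound $\bar m_j(\theta,x)\ge C\|x-x_k\|^2$ together with $\bar m_j(\theta,\hat x)\le 2b_n a_n$ to get $\|\hat x-x_k\|\le\sqrt{2b_na_n/C}<\varepsilon_n$. Your write-up is in fact slightly more explicit than the paper's in separating the two localization stages.
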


\subsection*{Local Alternatives}

\begin{proof}[proof of Theorem \ref{local_alt_exact_thm}]
Everything is the same as in the proof of Theorem
\ref{inf_dist_thm_multi}, but with the following modifications.

First, in the proof of Theorem \ref{local_process_thm_multi}, we need to
show that, for all $j$,
\begin{align*}
\frac{\sqrt{n}}{\sqrt{h_n^d}}(E_n-E)[m_j(W_i,\theta_0+a_n)-m_j(W_i,\theta_0)]
  I(h_ns<X_i-x_k<h_n(s+t))
\end{align*}
converges to zero uniformly over $\|(s,t)\|<M$ for any fixed $M$.  By
Theorem 2.14.1 in \citet{van_der_vaart_weak_1996},
the $L^2$ norm of this is bounded up to a constant by
$J(1,\mathcal{F}_n,L_2)\frac{1}{h_n^d}\sqrt{E F_n(X_i,W_i)^2}$, where
$\mathcal{F}_n
=\{(x,w)\mapsto [m_j(w,\theta_0+a_n)-m_j(w,\theta_0)]
  I(h_ns<x-x_k<h_n(s+t))|(s,t)\in\mathbb{R}^{2d}\}$
and
$F_n(x,w)=|m_j(w,\theta_0+a_n)-m_j(w,\theta_0)|I(-h_nM\iota<x-x_k<2h_nM\iota)$
is an envelope function for this class (here $\iota$ is a vector of
ones).
The covering numbers of the $\mathcal{F}_n$s are uniformly bounded by a
polynomial,  so that we just need to
show that $\frac{1}{h_n^d}\sqrt{E F_n(X_i,W_i)^2}$ converges to zero.  We
have
\begin{align*}
&\frac{1}{\sqrt{h_n^d}}\sqrt{E F_n(X_i,W_i)^2}  \\
&=\frac{1}{\sqrt{h_n^d}}\sqrt{E
  E\{[m_j(W_i,\theta_0+a_n)-m_j(W_i,\theta_0)]^2|X_i\}I(-h_nM\iota<X_i-x_k<2h_nM\iota)}
\\
&\le \frac{1}{\sqrt{h_n^d}}\sqrt{E I(-h_nM\iota<X_i-x_k<2h_nM\iota)}
\sup_{\|x-x_k\|\le \eta}
E\{[m_j(W_i,\theta_0+a_n)-m_j(W_i,\theta_0)]^2|X_i=x\}
\end{align*}
where the first equality uses the law of iterated expectations and the
second holds eventually with $\eta$ chosen so that the convergence in
Assumption \ref{m2_assump} is uniform over $\|x-x_k\|<\eta$.  The first
term is bounded eventually by
$\overline f\int_{-M\iota<x<2M\iota} \, dx$
where $\overline f$ is a bound for the density of $X_i$ in a neighborhood
of $x_k$ (this follows from the same change of variables as in other parts
of the proof).  The second term converges to zero by Assumption
\ref{m2_assump}.

Next, in the proof of Theorem \ref{local_process_thm_multi}, we need to
show that
\begin{align*}
\frac{1}{h_n^{d+2}}E[\bar m_j(\theta_0+a_n,X_i)-\bar m_j(\theta_0,X_i)]
I(h_ns<X_i-x_k<h_n(s+t))
\to f_X(x_k)\bar m_{\theta,j}(\theta_0,x_k)a
\prod_i t_i
\end{align*}
uniformly in $\|(s,t)\|\le M$.  We have
\begin{align*}
&\frac{1}{h_n^{d+2}}E[\bar m_j(\theta_0+a_n,X_i)-\bar m_j(\theta_0,X_i)]
I(h_ns<X_i-x_k<h_n(s+t))
-f_X(x_k)\bar m_{\theta,j}(\theta_0,x_k)a
\prod_i t_i    \\
&=\frac{1}{h_n^{d+2}}\int_{h_ns<x-x_k<h_n(s+t)}
\left\{[\bar m_j(\theta_0+a_n,x)-\bar m_j(\theta_0,x)]f_X(x)
-h_n^2f_X(x_k)\bar m_{\theta,j}(\theta_0,x_k)a\right\} \, dx  \\
&=\int_{s<x<s+t}
\left\{h_n^{-2}[\bar m_j(\theta_0+a_n,h_n x +x_k)-\bar m_j(\theta_0,h_n x+ x_k)]
f_X(h_n x+ x_k)
-f_X(x_k)\bar m_{\theta,j}(\theta_0,x_k)a\right\} \, dx
\end{align*}
where the second equality comes from the change of variable $x\mapsto h_n
x+x_k$.
This will go to zero uniformly in $\|(s,t)\|\le M$ as long as
$\sup_{\|x\|\le   2M}\|f_X(h_n x+ x_k)-f_X(x_k)\|$ and
\begin{align*}
\sup_{\|x\|\le 2M}
\|h_n^{-2}[\bar m_j(\theta_0+a_n,h_n x +x_k)-\bar m_j(\theta_0,h_n x+ x_k)]
-\bar m_{\theta,j}(\theta_0,x_k)a\|
\end{align*}
both go to zero.  $\sup_{\|x\|\le   2M}\|f_X(h_n x+ x_k)-f_X(x_k)\|$ goes
to zero by continuity of $f_X$ at $x_k$.  As for the other expression,
since $a h_n^2=a_n$, the mean value theorem shows that this is equal to
$\bar m_{\theta,j}(\theta^*(a_n),h_nx+x_k)a
-\bar m_{\theta,j}(\theta_0,x_k)a$
for some $\theta^*(a_n)$ between $\theta_0$ and $\theta_0+a_n$.
This goes to
zero by Assumption \ref{diff_m_assump}.

In verifying the conditions of Lemma \ref{inf_dist_lemma_multi},
we need to make sure the bounds,
$g_{P,x_k,j,a}(s,t)\ge C\|(s,t)\|^2\prod_i t_i$
and
\begin{align*}
g_{n,x_k,j,a}(s,t)
\equiv \frac{1}{h_n^{d+2}} Em_j(W_i,\theta_0+a_n)I(h_ns<X_i<h_n(s+t))
\ge C\|(s,t)\|^2\prod_i t_i
\end{align*}
still hold for $\|(s,t)\|\ge M$ for $M$ large enough and, for the latter
function, $\|(s,t)\|\le h_n^{-1}\eta$ for some $\eta>0$ and $n$ greater
than some $N$ that does not depend on $M$.  We have
\begin{align*}
&g_{P,x_k,j,a}(s,t)
=g_{P,x_k,j}(s,t)+\bar m_{\theta,j}(\theta_0,x_k) a f_X(x_k)\prod_i t_i
\ge C\|(s,t)\|^2\prod_i t_i
+\bar m_{\theta,j}(\theta_0,x_k) a f_X(x_k)\prod_i t_i  \\
&=\|(s,t)\|^2
[C+\bar m_{\theta,j}(\theta_0,x_k) a f_X(x_k)/\|(s,t)\|^2]\prod_i t_i
\end{align*}
where the first inequality follows from the bound in the original proof.
For $\|(s,t)\|\ge M$ for $M$ large enough, this is greater than or equal
to $K\|(s,t)\|^2\prod_i t_i$ for
$K=C-|\bar m_{\theta,j}(\theta_0,x_k) a| f_X(x_k)/M^2>0$.  For
$g_{n,x_k,j,a}(s,t)$, we have
\begin{align*}
&\|g_{P,x_k,j,a}(s,t)-g_{P,x_k,j}(s,t)\|
=\|\frac{1}{h_n^{d+2}} E[m_j(W_i,\theta_0+a_n)-m_j(W_i,\theta_0)]
I(h_ns<X_i<h_n(s+t)) \|  \\
&\le \sup_{\|x-x_k\|\le \eta}
\|\frac{1}{h_n^2}[\bar m_j(\theta_0+a_n,x)-\bar m_j(\theta_0,x)]\|
\|\frac{1}{h_n^d} E I(h_ns<X_i<h_n(s+t)) \|.
\end{align*}
By the mean value theorem,
$\bar m_j(\theta_0+a_n,x)-\bar m_j(\theta_0,x)
=\bar m_{j,\theta}(\theta^*(a_n),x)a_n$ for some $\theta^*(a_n)$ between
$\theta_0$ and $\theta_0+a_n$.  By continuity of the derivative as a
function of $(\theta,x)$, for small enough $\eta$ and $n$ large enough,
$\bar m_{j,\theta}(\theta^*(a_n),x)$ is bounded from above, so that
$\|\frac{1}{h_n^2}[\bar m_j(\theta_0+a_n,x)-\bar m_j(\theta_0,x)]\|$ is
bounded by a constant times $\|a_n\|/h_n^2=\|a\|$.
By continuity of $f_X$ at $x_k$,
$\|\frac{1}{h_n^d} E I(h_ns<X_i<h_n(s+t)) \|$ is bounded by some
constant %
times $\prod_i t_i$ for $\|(s,t)\|\le h_n^{-1}\eta$.  Thus,
for $M\le \|(s,t)\|\le h_n^{-1}\eta$ for the appropriate $M$ and $\eta$,
we have, for some constant $C_1$,
\begin{align*}
&g_{P,x_k,j,a}(s,t)
\ge g_{P,x_k,j}(s,t)-C_1\prod_i t_i
\ge C\|(s,t)\|^2\prod_i t_i-C_1\prod_i t_i  \\
&= \|(s,t)\|^2[C-C_1/\|(s,t)\|^2]\prod_i t_i
\end{align*}
where the second inequality uses the bound from the original proof.  For
$M$ large enough, this gives the desired bound with the constant equal to
$C-C_1/M>0$.

In verifying the conditions of Lemma \ref{inf_dist_lemma_multi}, we also
need to make sure the argument in Lemma \ref{tail_bound_Gn} still
goes through when $m(W_i,\theta_0)$ is replaced by $m(W_i,\theta_0+a_n)$.
To get the lemma to hold (with the constant $C$ depending only on the
distribution of $X$ and the $\overline Y$ in Assumption
\ref{bdd_y_assump_local}), we can use the same proof, but with the classes
of functions $\mathcal{F}_n$
defined to be
$\mathcal{F}_n
=\{(x,w)\mapsto m_j(w,\theta_0+a_n)
  I(h_ns_0<x-x_k<h_n(s_0+t))|t\le t_0\}$
($J(1,\mathcal{F}_n,L^2)$ is bounded uniformly for these classes because
the covering number of each $\mathcal{F}_n$ is bounded by the same
polynomial),
and using the envelope function
$F_n(x,w)=\overline Y I(h_ns_0<x-x_k<h_n(s_0+t_0))$ when applying Theorem
2.14.1 in \citet{van_der_vaart_weak_1996}.

\end{proof}

\begin{proof}[proof of Theorem \ref{local_alt_degen_thm}]
First, note that, for any neighborhoods $B(x_k)$ of the elements of
$\mathcal{X}_0$,
$\sqrt{n} \inf_{s,t} E_nm_j(W_i,\theta_0+a_n)I(s<X<s+t)
=\sqrt{n} \inf_{(s,s+t)\in \cup_{k \text{ s.t. } j\in J(k)} B(x_k)}
E_nm_j(W_i,\theta_0+a_n)I(s<X_i<s+t)+o_{p}(1)$
since, if these neighborhoods are made small enough, we
will have, for any $(s,s+t)$ not in one of these neighborhoods,
$Em_j(W_i,\theta_0+a_n)I(s<X_i<s+t)
\ge \underline B P(s<X_i<s+t)$ by an argument similar to the one in Lemma
\ref{large_s_lemma1_multi}, so that an argument similar to the one in
Lemma \ref{large_s_lemma2_multi} will show that
$\inf_{(s,s+t)\in \cup_{k \text{ s.t. } j\notin J(k)} B(x_k)}
E_nm_j(W_i,\theta_0+a_n)I(s<X_i<s+t)$ converges to zero at a faster than
$\sqrt{n}$ rate (Assumption \ref{diff_m_assump} guarantees
that $E[m_j(W_i,\theta_0+a_n)|X]$ is eventually bounded away from zero
outside of any neighborhood of $\mathcal{X}_0$ so that a similar argument
applies).

Thus, the result will follow once we show that, for each $j$ and $k$ such
that $j\in J(k)$,
\begin{align*}
&\sqrt{n} \inf_{(s,s+t)\in B(x_k)} E_nm_j(W_i,\theta_0+a_n)I(s<X_i<s+t)
  \\
&\stackrel{p}{\to} \inf_{s,t}
f_X(x_k)\int_{s<x<s+t}
\left(\frac{1}{2}x'Vx
+\overline m_{\theta,j}(\theta_0,x_k)a\right) \, dx.
\end{align*}
With this in mind, fix $j$ and $k$ with $j\in J(k)$.

Let $(s^*_n,t^*_n)$ minimize $E_nm_j(W_i,\theta_0+a_n)I(s<X<s+t)$ over
$B(x_k)^2$ (and be chosen from the set of minimizers in a measurable way).
First, I show
that $\rho(0,(s^*_n,t^*_n))\stackrel{p}{\to} 0$ where $\rho$ is
the covariance semimetric
$\rho((s,t),(s',t'))
=var(m_j(W_i,\theta_0)I(s<x<s+t)-m_j(W_i,\theta_0)I(s'<x<s'+t'))$.
To show
this, note that, for any $\varepsilon>0$,
$Em_j(W_i,\theta_0+a_n)I(s<X_i<s+t)$ is
bounded from below away from zero for $\rho(0,(s,t))\ge \varepsilon$
for large enough $n$.  To see this, note that, for $\rho(0,(s,t))\ge
\varepsilon$, $\prod_i t_i\ge K$ for some constant $K$, so that
$\|(s,t)\|\ge K^{1/d}$ and, for some constant $C$ and a bound $\overline
f$ for $f_X$ on $B(x_k)$,
\begin{align*}
&Em_j(W_i,\theta_0+a_n)I(s<X_i<s+t)  \\
&=Em_j(W_i,\theta_0)I(s<X_i<s+t)
+E[\bar m_j(\theta_0+a_n,X_i)-\bar m_j(\theta_0,X_i)]I(s<X_i<s+t)  \\
&\ge C_1\|(s,t)\|^2\left(\prod_i t_i\right)
-\sup_{x\in B(x_k)}\|\bar m_j(\theta_0+a_n,x)-\bar m_j(\theta_0,x)\|
\overline f \left(\prod_i t_i\right)  \\
&\ge \left[C_1\|(s,t)\|^2
-\sup_{x\in B(x_k)}\|\bar m_j(\theta_0+a_n,x)-\bar m_j(\theta_0,x)\|
\overline f \right] K.
\end{align*}
By Assumption \ref{m2_assump}, $\sup_{x\in B(x_k)}\|\bar
m_j(\theta_0+a_n,x)-\bar m_j(\theta_0,x)\|$ converges to zero, so
the last term in this display will be positive and bounded away from zero
for large enough $n$.
Thus, we can write $\sqrt{n}E_nm_j(W_i,\theta_0+a_n)I(s<X_i<s+t)$ as the sum of
$\sqrt{n}(E_n-E)m_j(W_i,\theta_0+a_n)I(s<X_i<s+t)$, which is
$\mathcal{O}_p(1)$
uniformly in $(s,t)$, and $\sqrt{n}Em_j(W_i,\theta_0+a_n)I(s<X<s+t)$,
which is bounded from below uniformly in $\rho(0,(s,t))\ge \varepsilon$ by
a sequence of constants that go to infinity.  Thus,
$\inf_{\rho(0,(s,t))\ge \varepsilon}
\sqrt{n}E_nm_j(W_i,\theta_0+a_n)I(s<X<s+t)$ is greater than zero with
probability approaching one, so
$\rho(0,(s^*,t^*))\stackrel{p}{\to} 0$.

Thus, for some sequence of random variables
$\varepsilon_n\stackrel{p}{\to} 0$,
\begin{align*}
&\sqrt{n}\inf_{s,t} E_nm_j(W_i,\theta_0+a_n)I(s<X<s+t)  \\
&=\sqrt{n}\inf_{\rho(0,(s^*,t^*))\le \varepsilon_n, (s,s+t)\in B(x_k)}
E_nm_j(W_i,\theta_0+a_n)I(s<X<s+t).
\end{align*}
This is equal to $\sqrt{n}\inf_{\rho(0,(s^*,t^*))\le \varepsilon_n,
  (s,s+t)\in B(x_k)}
Em_j(W_i,\theta_0+a_n)I(s<X<s+t)$ plus a term that is bounded by
$\sqrt{n}\sup_{\rho(0,(s^*,t^*))\le \varepsilon_n, (s,s+t)\in B(x_k)}
|(E_n-E)E_nm_j(W_i,\theta_0+a_n)I(s<X<s+t)|$.
By Assumption \ref{m2_assump} and an argument using the maximal inequality
in Theorem 2.14.1 in \citet{van_der_vaart_weak_1996},
$\sqrt{n}\sup_{(s,s+t)\in B(x_k)}
|(E_n-E)[m_j(W_i,\theta_0+a_n)-m_j(W_i,\theta_0)]I(s<X_i<s+t)|$ converges
in probability to zero.
$\sqrt{n}(E_n-E)m_j(W_i,\theta_0)I(s<X_i<s+t)$ converges 
in distribution under the supremum norm to a mean zero
Gaussian process
$\mathbb{H}(s,t)$ with covariance kernel
$cov(\mathbb{H}(s,t),\mathbb{H}(s',t'))
=cov(m_j(W_i,\theta_0)I(s<X_i<s+t),m_j(W_i,\theta_0)I(s'<X_i<s'+t'))$
and almost sure $\rho$ continuous sample paths.  Since
$(z,\varepsilon)\mapsto \sup_{\rho(0,(s,t))\le \varepsilon} |z(s,t)|$ is
continuous in $C(\mathbb{R}^{2d_X},\rho)\times \mathbb{R}$ (where
$C(\mathbb{R}^{2d_X},\rho)$ is the space of $\rho$ continuous functions
on $\mathbb{R}^{2d}$) under the product norm of the supremum norm and
the Euclidean norm, by the continuous mapping theorem,
$\sup_{\rho(0,(s,t))\le \varepsilon_n}
|\sqrt{n}(E_n-E)m_j(W_i,\theta_0)I(s<X_i<s+t)|
\stackrel{d}{\to} \sup_{\rho(0,(s,t))\le 0} \mathbb{H}(s,t) = 0$
(the last step follows since $var(\mathbb{H}(s,t))=0$ whenever
$\rho(0,(s,t))=0$).

Thus,
\begin{align*}
&\sqrt{n}\inf_{(s,s+t)\in B(x_k)}E_nm_j(W_i,\theta_0+a_n)I(s<X_i<s+t)  \\
&=\sqrt{n}\inf_{\rho(0,(s,t))<\varepsilon_n, (s,s+t)\in B(x_k)}
Em_j(W_i,\theta_0+a_n)I(s<X_i<s+t)+o_p(1)  \\
&=\sqrt{n}\inf_{\rho(0,(s,t))<\varepsilon_n, (s,s+t)\in B(x_k)}
\int_{s<x<s+t} \bar m_j(\theta_0+a_n,x) f_X(x) \, dx +o_p(1).
\end{align*}
By Assumption \ref{diff_m_assump}, the integrand is positive eventually
for $\|(s-x_k,t)\|\ge\eta$ for any $\eta>0$, and once this holds, the
infimum will be achieved on $\|(s-x_k,t)\|<\eta$.  Using a first order
Taylor expansion in the first argument of $\bar m_j(\theta_0+a_n,x)$ and a
second order Taylor expansion in the second argument the integrand is
equal to
\begin{align*}
\left[\frac{1}{2}(x-x_k)V(x^*(x))(x-x_k)'
+\bar m_{\theta,j}(\theta^*(a_n),x)a_n\right] f_X(x)
\end{align*}
for some $x^*(x)$ between $x$ and $x_k$ and $\theta^*(a_n)$ between
$\theta_0$ and $\theta_0+a_n$.  For $\eta$ small enough, continuity of the
derivatives at $(\theta_0,x_k)$ guarantees that this is bounded from below
by $C_1\|x-x_k\|^2-C_2a_n$ for some constants $C_1$ and $C_2$, so the
integrand is positive for $x$ greater than $C \sqrt{\|a_n\|}$ for some large
$C$, so that the infimum will be taken on $\|(s,s+t)\|< C\sqrt{\|a_n\|}$.
Thus, we have
\begin{align*}
&\sqrt{n}\inf_{(s,s+t)\in B(x_k)}E_nm_j(W_i,\theta_0+a_n)I(s<X_i<s+t)  \\
&=\sqrt{n}\inf_{\rho(0,(s,t))<\varepsilon_n, \|(s-x_k,t)\|<C \sqrt{\|a_n\|}}
\int_{s<x<s+t} \bar m_j(\theta_0+a_n,x) f_X(x) \, dx +o_p(1).
\end{align*}
This will be equal up to $o(1)$ to the infimum of
\begin{align*}
\sqrt{n} \int_{s<x<s+t} \left[\frac{1}{2}(x-x_k)V_j(x_k)(x-x_k)'
+\bar m_{\theta,j}(\theta_0,x_k)a_n\right]
f_X(x_k) \, dx
\end{align*}
once we show that the difference between this expression and
$\sqrt{n}\int_{s<x<s+t} \bar m_j(\theta_0+a_n,x) f_X(x) \, dx$ goes to
zero uniformly over $\|(s-x_k,t)\|\le C \sqrt{\|a_n\|}$ (the infimum of this
last display will be taken at a sequence where $\|(s-x_k,t)\|\le C
\sqrt{\|a_n\|}$ anyway, so that the infimum can be taken over all of
$\mathbb{R}^{2d}$).

The difference between these terms is
\begin{align*}
&\sqrt{n} \int_{s<x<s+t} \left[\frac{1}{2}(x-x_k)V_j(x_k)(x-x_k)'
+\bar m_{\theta,j}(\theta_0,x_k)a_n\right]
[f_X(x)-f_X(x_k)] \, dx  \\
&+\sqrt{n} \int_{s<x<s+t} \frac{1}{2}\left[(x-x_k)V_j(x^*(x))(x-x_k)'
-(x-x_k)V_j(x_k)(x-x_k)'\right]
f_X(x) \, dx  \\
&+\sqrt{n} \int_{s<x<s+t}\left[\bar m_{\theta,j}(\theta^*(a_n),x)
-\bar m_{\theta,j}(\theta_0,x_k)\right]a_n
f_X(x) \, dx.
\end{align*}
These can all be bounded using the change of variables
$u=(x-x_k) n^{1/(2(d+2))}$
and the continuity of densities, conditional means, and their derivatives.
The first term is
\begin{align*}
&\sqrt{n} \int_{n^{1/(2(d+2))}(s-x_k)<u<(s+t-x_k)n^{1/(2(d+2))}}
\left[\frac{1}{2}uV_j(x_k)u'n^{-1/(d+2)}
+\bar m_{\theta,j}(\theta_0,x_k)an^{-1/(d+2)}\right]  \\
&\times[f_X(n^{-1/(2(d+2))}u+x_k)-f_X(x_k)] n^{-d/(2(d+2))}\, du  \\
&=\int_{n^{1/(2(d+2))}(s-x_k)<u<(s+t-x_k)n^{1/(2(d+2))}}
\left[\frac{1}{2}uV_j(x_k)u'
+\bar m_{\theta,j}(\theta_0,x_k)a\right]  \\
&\times[f_X(n^{-1/(2(d+2))}u+x_k)-f_X(x_k)]\, du.
\end{align*}
The integrand converges to zero uniformly over $u$ in any bounded set
by the continuity of $f_X$ at $x_k$, and the area of integration is
bounded by $\|u\|\le 2n^{1/(2(d+2))}\|(s-x_k,t)\|\le
2Cn^{1/(2(d+2))}\sqrt{\|a\|}n^{-1/(2(d+2))}=2C \sqrt{\|a\|}$ on
$\|(s-x_k,t)\|\le C\sqrt{\|a_n\|}$.
Using the same change of variables, the second term is bounded by the
integral of
\begin{align*}
\frac{1}{2}\left[u'V_j(x^*(n^{-1/(2(d+2))}u+x_k))u'
-uV_j(x_k)u'\right]
f_X(n^{-1/(2(d+2))}u+x_k)
\end{align*}
over a bounded region, and this converges to zero uniformly in any bounded
region by continuity of the second derivative matrix.  The last term is,
by the same change of variables, bounded by the integral of
\begin{align*}
\left[\bar m_{\theta,j}(\theta^*(a_n),n^{-1/(2(d+2))}u+x_k)
-\bar m_{\theta,j}(\theta_0,x_k)\right]a
f_X(n^{-1/(2(d+2))}u+x_k)
\end{align*}
over a bounded region, and this converges to zero by continuity of
$m_{\theta,j}(\theta,x)$ at $(\theta_0,x_k)$.

Thus,
\begin{align*}
&\sqrt{n}\inf_{(s,s+t)\in B(x_k)}E_nm_j(W_i,\theta_0+a_n)I(s<X_i<s+t)  \\
&=\inf_{\|(s-k,t)\|\le C \sqrt{\|a_n\|}} \sqrt{n} \int_{s<x<s+t}
  \left[\frac{1}{2}(x-x_k)V_j(x_k)(x-x_k)'
+\bar m_{\theta,j}(\theta_0,x_k)a_n\right]
f_X(x_k) \, dx + o_p(1)  \\
&=\inf_{\|(s-x_k,t)\|\le C \sqrt{\|a\|}}
\int_{(s-x_k) <u<(s-x_k+t)}
  \left[\frac{1}{2}uV_j(x_k)u'
+\bar m_{\theta,j}(\theta_0,x_k)a_n\right]
f_X(x_k) \, du + o_p(1)
\end{align*}
where the last equality follows from the same change of variables and a
change of coordinates in $(s,t)$.  The result follows since, for large
enough $C$, the unconstrained infimum is taken on $\|(s-x_k,t)\|\le C
\sqrt{\|a\|}$, and $C$ can be chosen arbitrarily large.
\end{proof}

\bibliography{library}

\newpage

\begin{figure}[h]
  \centering
  \caption{Case with faster than root-$n$ convergence of KS statistic}
  \includegraphics[height=2.8in]{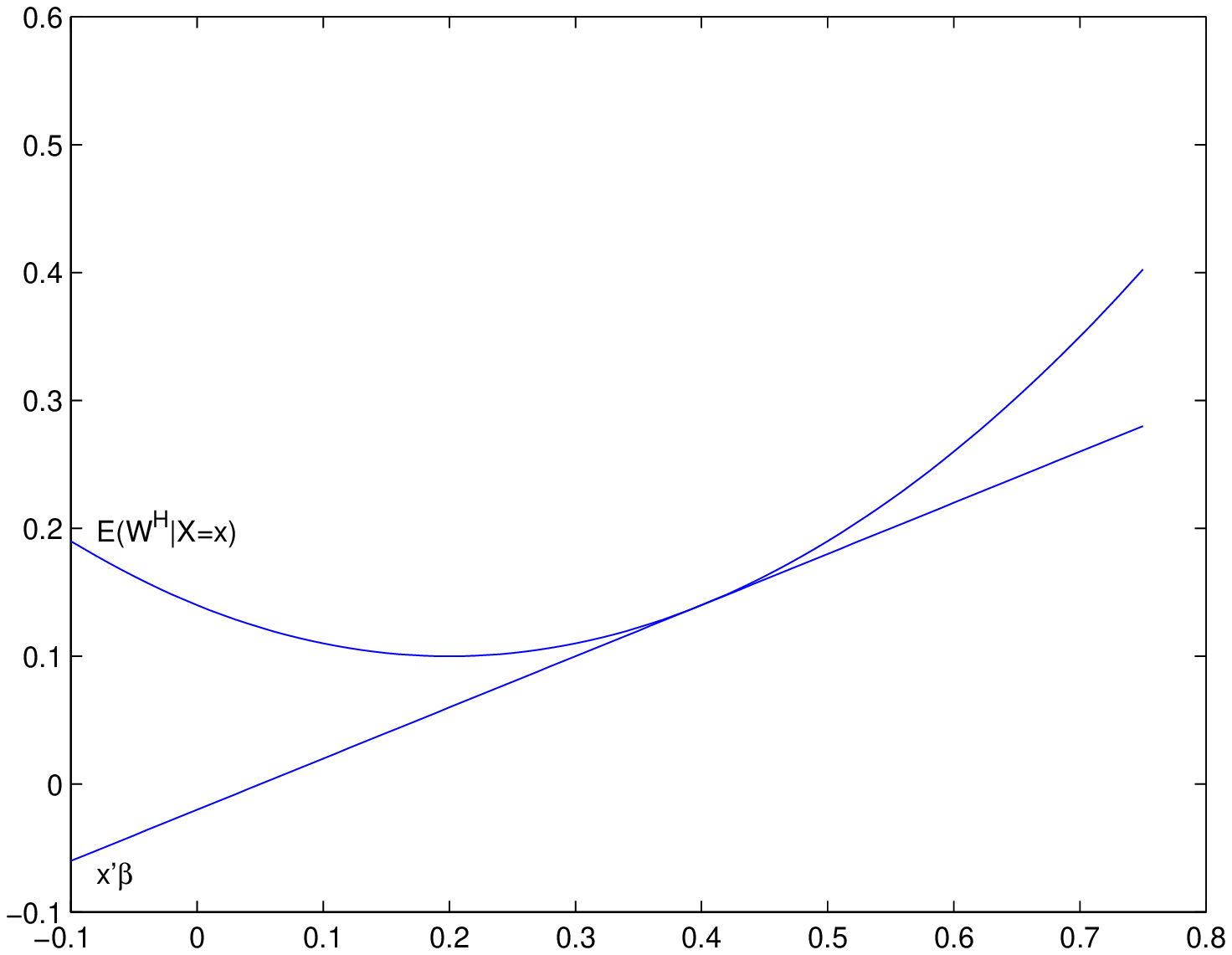}\label{int_reg_smooth_fig}
\end{figure}

\begin{figure}[h]
  \centering
  \caption{Cases with root-$n$ convergence of KS statistic ($\beta_1$) and
  faster rates ($\beta_2$)}
  \includegraphics[height=2.8in]{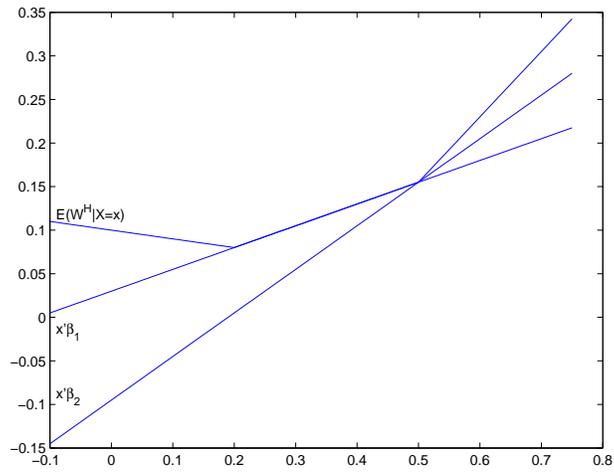}\label{int_reg_rootn_fig}
\end{figure}

\clearpage

\begin{figure}[h]
  \centering
  \caption{Conditional Means of $W_i^H$ and $W_i^L$ for Design 1}
  \includegraphics[height=2.8in]{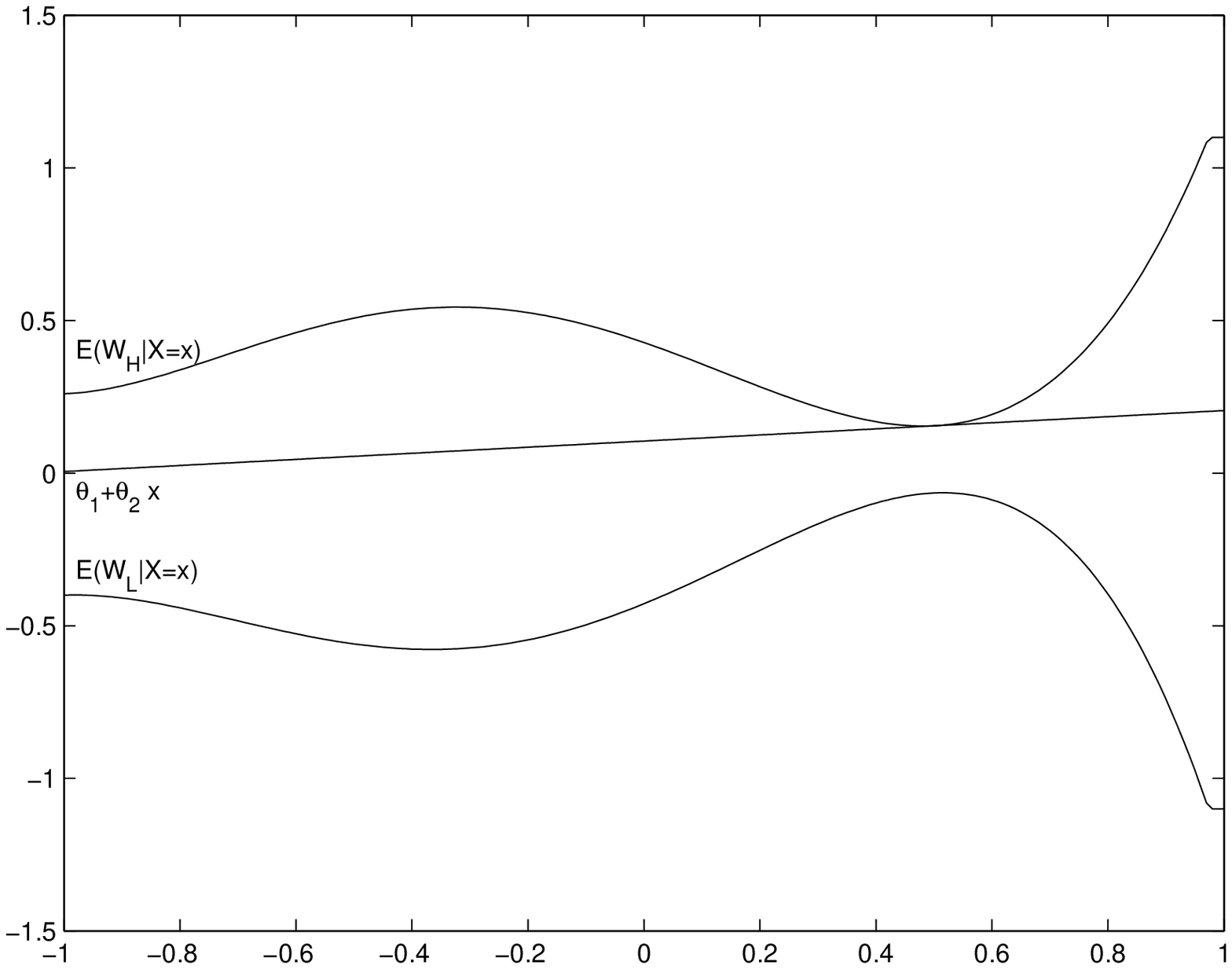}
    \label{cond_means_d1_fig}
\end{figure}

\begin{figure}[h]
  \centering
  \caption{Conditional Means of $W_i^H$ and $W_i^L$ for Design 2}
  \includegraphics[height=2.8in]{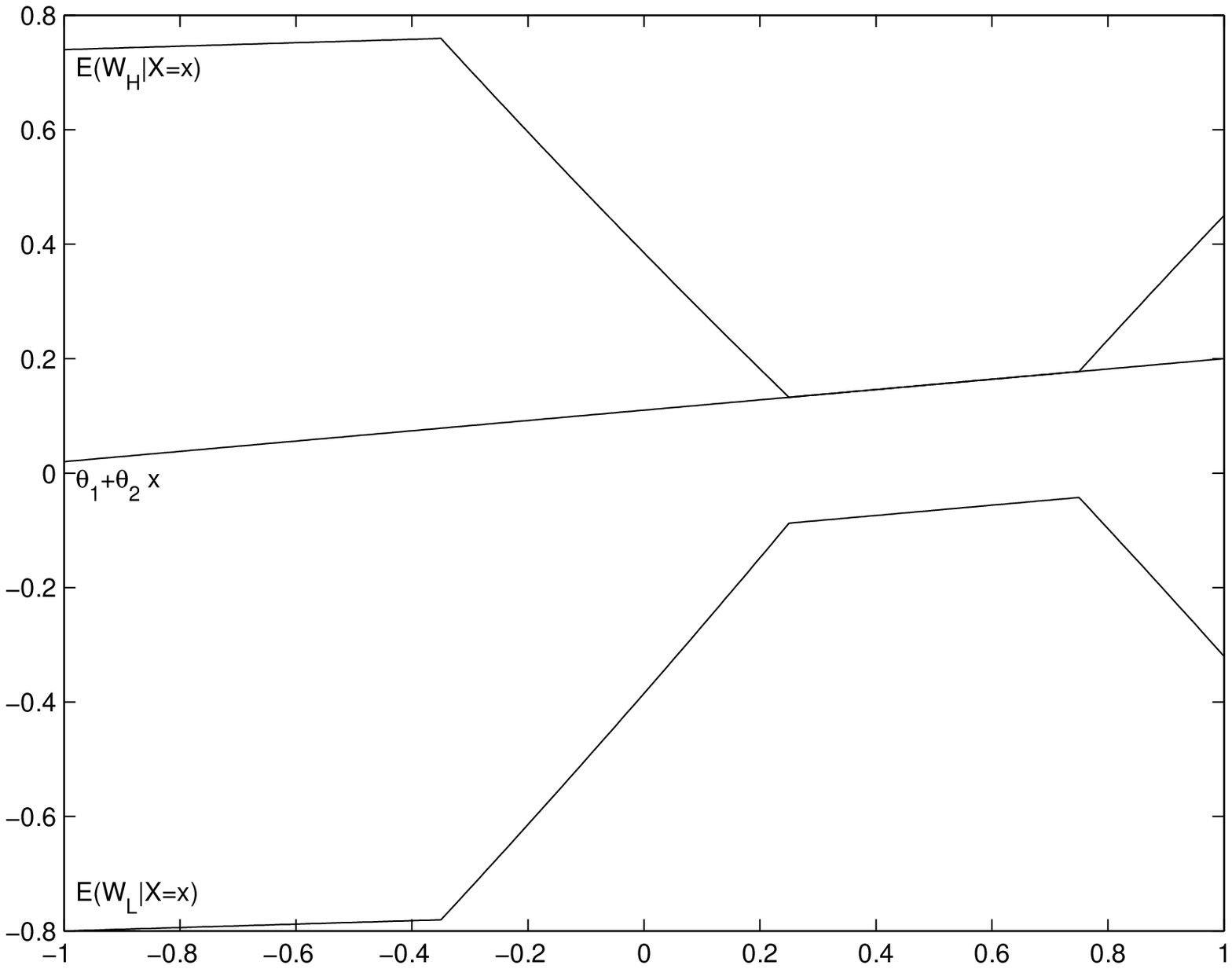}
    \label{cond_means_d3_fig}
\end{figure}

\begin{figure}[h]
  \caption{Histograms for $n^{3/5}S(T_n(\theta))$ for Design 1
    ($n^{3/5}$ Convergence)}
  \includegraphics[width=7in]
    {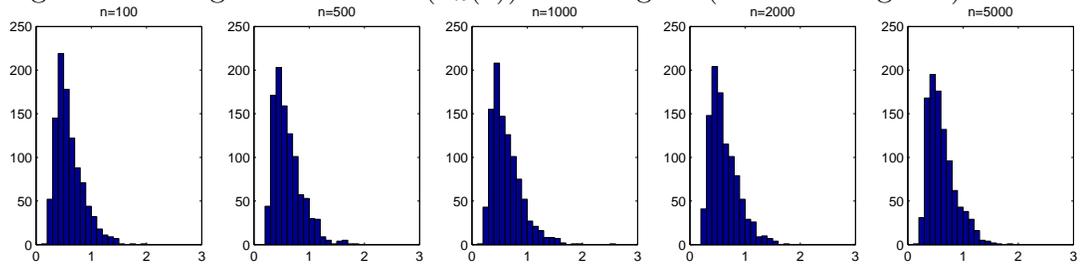}
    \label{ks_stat_hist_d1_fig}
\end{figure}

\begin{figure}[h]
  \caption{Histograms for $n^{1/2}S(T_n(\theta))$ for Design 2
    ($n^{1/2}$ Convergence)}
  \includegraphics[width=7in]
    {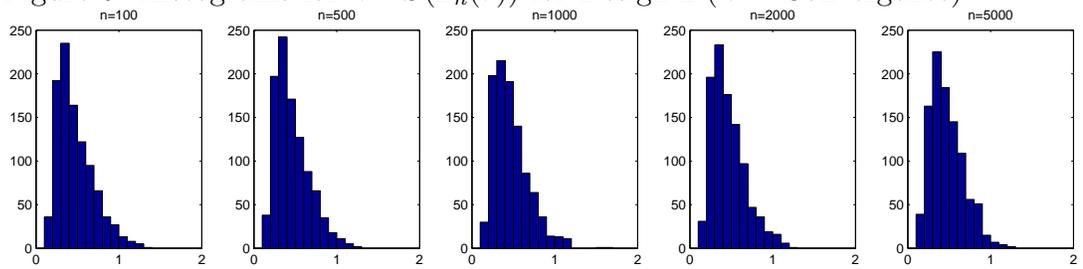}
    \label{ks_stat_hist_d3_fig}
\end{figure}

\begin{figure}[h]
  \centering
  \caption{Data for Empirical Illustration}
  \includegraphics{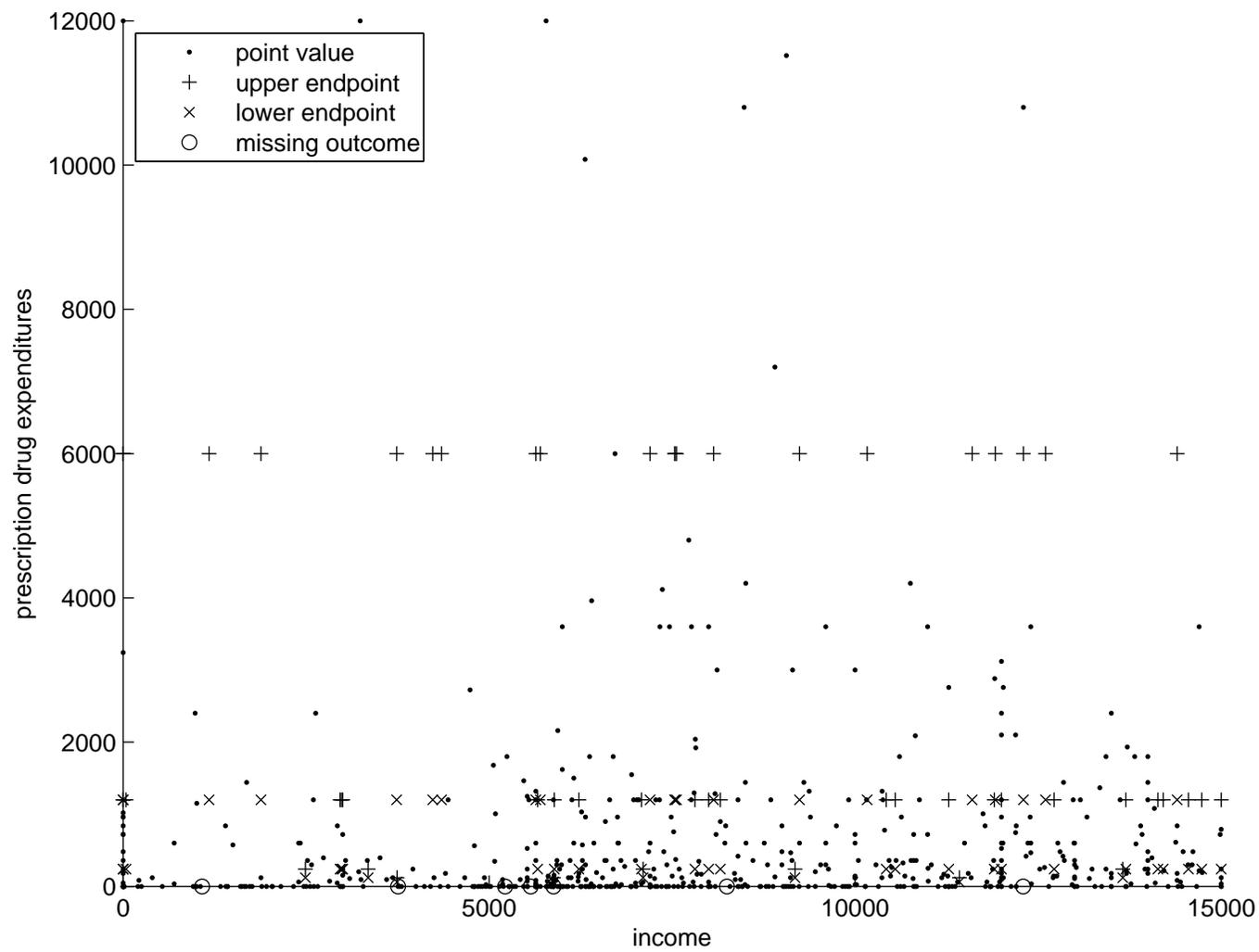}
    \label{data_plot_fig}
\end{figure}

\begin{figure}[h]
  \centering
  \caption{95\% Confidence Region Using Estimated Rate}
  \includegraphics[height=2.8in]{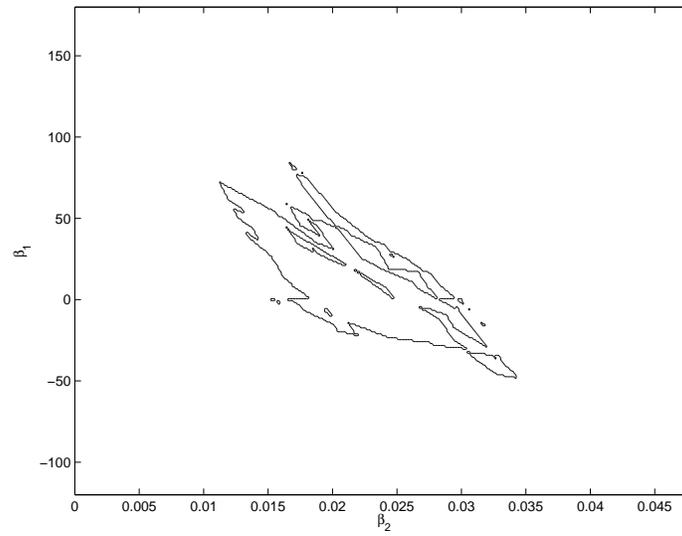}
    \label{cr95_est_fig}
\end{figure}

\begin{figure}[h]
  \centering
  \caption{95\% Confidence Region Using Conservative Rate}
  \includegraphics[height=2.8in]{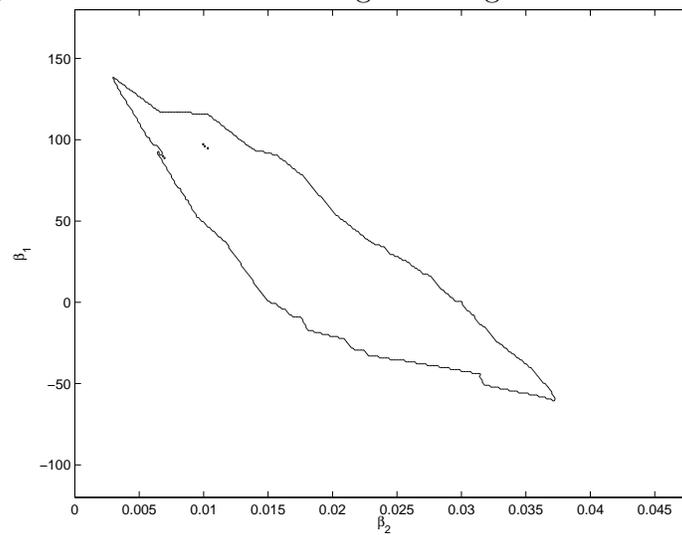}
    \label{cr95_cons_fig}
\end{figure}

\begin{figure}[h]
  \centering
  \caption{95\% Confidence Region Using LAD with Points}
  \includegraphics[height=2.8in]{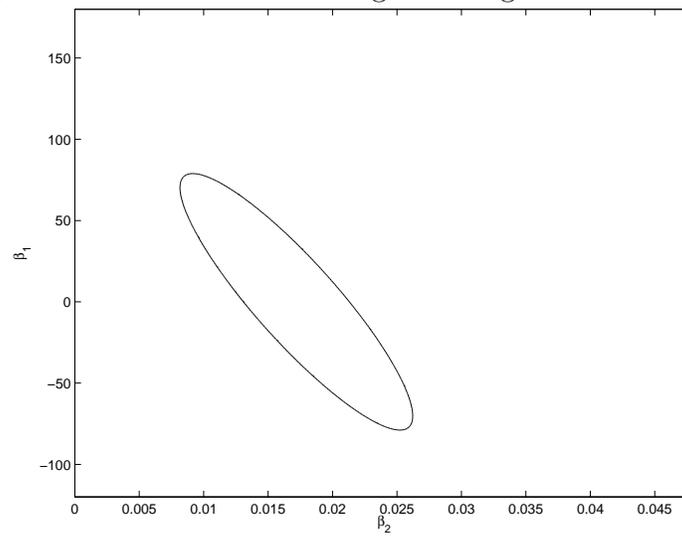}
    \label{cr95_wald_fig}
\end{figure}

\clearpage

\begin{table}[h]
\centering
\begin{tabular}{l|ccccc}
&$n=100$&$n=500$&$n=1000$&$n=2000$&$n=5000$\\\hline
\multicolumn{6}{c}{nominal 90\% coverage} \\\hline
estimated rate&0.873&0.890&0.897&0.889&0.879\\
conservative rate ($n^{1/2}$)&0.991&0.987&0.987&0.995&0.996\\
(infeasible) exact rate ($n^{3/5}$)&0.921&0.909&0.905&0.903&0.890\\\hline
\multicolumn{6}{c}{nominal 95\% coverage} \\\hline
estimated rate&0.940&0.943&0.954&0.947&0.934\\
conservative rate ($n^{1/2}$)&0.998&1.000&0.998&1.000&0.999\\
(infeasible) exact rate
  ($n^{3/5}$)&0.976&0.965&0.949&0.956&0.953\\\hline
\end{tabular}

\caption{Coverage Probabilities for Design 1}
\label{cov_prob_table_d1}
\end{table}

\begin{table}[h]
\centering
\begin{tabular}{l|ccccc}
&$n=100$&$n=500$&$n=1000$&$n=2000$&$n=5000$\\\hline
\multicolumn{6}{c}{nominal 90\% coverage} \\\hline
estimated rate&0.780&0.910&0.928&0.925&0.924\\
conservative rate ($n^{1/2}$)&0.949&0.947&0.938&0.932&0.924\\
(infeasible) exact rate ($n^{1/2}$)&0.949&0.947&0.938&0.932&0.924\\\hline
\multicolumn{6}{c}{nominal 95\% coverage} \\\hline
estimated rate&0.885&0.945&0.966&0.971&0.979\\
conservative rate ($n^{1/2}$)&0.991&0.982&0.975&0.974&0.979\\
(infeasible) exact rate ($n^{1/2}$)&0.991&0.982&0.975&0.974&0.979\\\hline
\end{tabular}

\caption{Coverage Probabilities for Design 2}
\label{cov_prob_table_d3}
\end{table}

\begin{table}[h]
\centering
\begin{tabular}{l|ccccc}
&$n=100$&$n=500$&$n=1000$&$n=2000$&$n=5000$\\\hline
\multicolumn{6}{c}{nominal 90\% coverage} \\\hline
estimated rate&0.26&0.13&0.08&0.06&0.03\\
conservative rate ($n^{1/2}$)&0.33&0.17&0.12&0.09&0.06\\
(infeasible) exact rate
  ($n^{3/5}$)&0.21&0.10&0.07&0.05&0.03\\\hline
\multicolumn{6}{c}{nominal 95\% coverage} \\\hline
estimated rate&0.35&0.17&0.11&0.07&0.05\\
conservative rate ($n^{1/2}$)&0.39&0.22&0.15&0.11&0.07\\
(infeasible) exact rate ($n^{3/5}$)&0.29&0.13&0.09&0.06&0.04\\\hline
\end{tabular}

\caption{Mean of $\hat u_{1-\alpha}-\theta_{1,D1}$ for Design 1}
\label{ci_length_table_d1}
\end{table}

\begin{table}[h]
\centering
\begin{tabular}{l|ccccc}
&$n=100$&$n=500$&$n=1000$&$n=2000$&$n=5000$\\\hline
\multicolumn{6}{c}{nominal 90\% coverage} \\\hline
estimated rate&0.11&0.08&0.06&0.04&0.02\\
conservative rate ($n^{1/2}$)&0.20&0.09&0.06&0.04&0.02\\
(infeasible) exact rate
  ($n^{1/2}$)&0.20&0.09&0.06&0.04&0.02\\\hline
\multicolumn{6}{c}{nominal 95\% coverage} \\\hline
estimated rate&0.18&0.10&0.07&0.05&0.03\\
conservative rate ($n^{1/2}$)&0.27&0.11&0.08&0.05&0.03\\
(infeasible) exact rate ($n^{1/2}$)&0.27&0.11&0.08&0.05&0.03\\\hline
\end{tabular}

\caption{Mean of $\hat u_{1-\alpha}-\theta_{2,D2}$ for Design 2}
\label{ci_length_table_d3}
\end{table}

\begin{table}
\centering
\begin{tabular}{c|c|c}
 & $\theta_1$ & $\theta_2$ \\\hline
Estimated Rate & $[-48, 84]$ & $[0.0113, 0.0342]$  \\
Conservative Rate & $[-60, 138]$ & $[0.0030, 0.0372]$  \\
LAD with Points & $[-63, 63]$ & $[0.0100, 0.0244]$  \\\hline
\end{tabular}
\caption{95\% Confidence Intervals for Components of $\theta$}
\label{ci_table}
\end{table}

\end{document}